\pgfplotsset{compat = newest}
\newtheorem{theorem}{Theorem}
\newtheorem{proposition}{Proposition}
\newtheorem{lemma}{Lemma}
\newtheorem{corollary}{Corollary}
\theoremstyle{remark}
\newtheorem{remark}{Remark}
\newcommand{\ul}[1]{\underline{#1}}
\newcommand{\set}[1]{\mathcal{#1}}
\newcommand{\E}[1]{{\rm E}\left[{#1}\right]}
\newcommand{\Var}[1]{{\rm Var}\left[{#1}\right]}
\renewcommand{\Pr}[1]{{\rm Pr}\left[{#1}\right]}
\newcommand{\tr}[1]{{\rm tr}\left({#1}\right)}
\newcommand{\Gammafun}[1]{\Gamma\left({#1}\right)}
\newcommand{\at}{\makeatletter @\makeatother}
\newcommand{\eps}{\epsilon}
\newcommand{\epsb}{\bar{\epsilon}}
\newcommand{\ulU}{\underline{U}}
\newcommand{\Xv}{\underline{X}}
\newcommand{\Yv}{\underline{Y}}
\newcommand{\Zv}{\underline{Z}}
\newcommand{\Yt}{\tilde{Y}}
\newcommand{\Zt}{\tilde{Z}}
\newcommand{\yt}{\tilde{y}}
\begin{document}
\title{Information Rates for Channels with Fading, \\ Side Information and Adaptive Codewords}
\author{Gerhard Kramer
\thanks{Date  of  current  version  \today. This work was supported by the 6G Future Lab Bavaria funded by the Bavarian State Ministry of Science and the Arts, the project 6G-life funded by the Germany Federal Ministry for Education and Research (BMBF), and by the German Research Foundation (DFG) through projects 390777439 and 509917421.}
%\thanks{}}
\thanks{G.~Kramer is with the Institute for Communications Engineering of the School of Computation, Information, and Technology at the Technical University of Munich, 80333 Munich, Germany (e-mail: gerhard.kramer@tum.de).}}

\maketitle

%================================================
\begin{abstract}
Generalized mutual information (GMI) is used to compute achievable rates for fading channels with various types of channel state information at the transmitter (CSIT) and receiver (CSIR). The GMI is based on variations of auxiliary channel models with additive white Gaussian noise (AWGN) and circularly-symmetric complex Gaussian inputs. One variation uses reverse channel models with minimum mean square error (MMSE) estimates that give the largest rates but are challenging to optimize. A second variation uses forward channel models with linear MMSE estimates that are easier to optimize. Both model classes are applied to channels where the receiver is unaware of the CSIT and for which adaptive codewords achieve capacity. The forward model inputs are chosen as linear functions of the adaptive codeword's entries to simplify the analysis. For scalar channels, the maximum GMI is then achieved by a conventional codebook, where the amplitude and phase of each channel symbol are modified based on the CSIT. The GMI increases by partitioning the channel output alphabet and using a different auxiliary model for each partition subset. The partitioning also helps to determine the capacity scaling at high and low signal-to-noise ratios. A class of power control policies is described for partial CSIR, including a MMSE policy for full CSIT. Several examples of fading channels with AWGN illustrate the theory, focusing on on-off fading and Rayleigh fading. The capacity results generalize to block fading channels with in-block feedback, including capacity expressions in terms of mutual and directed information.
\end{abstract}

\begin{IEEEkeywords}
Capacity, channel state information, directed information, fading, feedback, generalized mutual information, side information
\end{IEEEkeywords}

%================================================
\section{Introduction}
\label{sec:introduction}
The capacity of fading channels is a topic of interest in wireless communications~\cite{OzarowShamaiWyner94,Biglieri-Proakis-Shamai-IT98,Love-etal-JSAC08,Kim-Kramer-22}. Fading refers to model variations over time, frequency, and space. A common approach to track fading is to insert pilot symbols into transmit symbol strings, have receivers estimate fading parameters via the pilot symbols, and have the receivers share their estimated channel state information (CSI) with the transmitters. The CSI available at the receiver (CSIR) and transmitter (CSIT) may be different and imperfect.

Information-theoretic studies on fading channels distinguish between average (ergodic) and outage capacity, causal and non-causal CSI, symbol and rate-limited CSI, and different qualities of CSIR and CSIT that are coarsely categorized as no, perfect, or partial. We refer to~\cite{Keshet-Steinberg-Merhav-08} for a review of the literature up to 2008. We here focus exclusively on average capacity and causal CSIT as introduced in~\cite{Shannon58}. Codes for such CSIT, or more generally for noisy feedback~\cite{Shannon60}, are based on \emph{Shannon strategies}, also called \emph{codetrees}~\cite[Ch.~9.4]{Blahut87}, or \emph{adaptive codewords}~\cite[Sec.~4.1]{Kramer98}.\footnote{The term ``adaptive codeword'' was suggested to the author by J. L. Massey.} Adaptive codewords are usually implemented by a conventional codebook and by modifying the codeword symbols as a function of the CSIT. This approach is optimal for some channels~\cite{Caire-Shamai-IT99} and will be our main interest. 

%================================================
\subsection{Block Fading}
\label{subsec:block-fading}

A model that accounts for the different time scales of data transmission (e.g., nanoseconds) and channel variations (e.g., milliseconds) is block fading~\cite{McElieceStark84,Stark-McEliece-IT88}. Such fading has the channel parameters constant within blocks of $L$ symbols and varying across blocks.
A basic setup is as follows.
\begin{itemize}
\item The fading is described by a state process $S_{H1},S_{H2},\dots$ independent of the transmitter messages and channel noise. The subscript ``$H$'' emphasizes that the states $S_{Hi}$ may be hidden from the transceivers. 
\item Each receiver sees a state process $S_{R1},S_{R2},\dots$ where $S_{Ri}$ is a noisy function of $S_{Hi}$ for all $i$.
\item Each transmitter sees a state process $S_{T1},S_{T2},\dots$ where $S_{Ti}$ is a noisy function of $S_{Hi}$ for all $i$.
\end{itemize}
The state processes may be modeled as memoryless~\cite{McElieceStark84,Stark-McEliece-IT88} or governed by a Markov chain~\cite{Wang-Moayeri-VT95,Wang-Chang-VT96,Viswanathan99,Zhang-Kassam-COMM99,Tan-Beaulieu-COMM00,Medard-IT00,Riediger-Shwedyk-IT02,Agarwal-Honig-IT12,Ezzine-ISIT22}. The memoryless models are particular cases of Shannon's model~\cite{Shannon58}.
For scalar channels, $S_{Hi}$ is usually a complex number $H_i$.  Similarly, for vector or multi-input, multi-output (MIMO) channels with $M$- and $N$-dimensional inputs and outputs, respectively, $S_{Hi}$ is a $N\times M$ matrix ${\bf H}_i$. 

Consider, for example, a point-to-point channel with block-fading and complex-alphabet inputs $X_{i\ell}$ and outputs
\begin{align}
   Y_{i\ell}=H_i X_{i\ell} + Z_{i\ell}
   \label{eq:block-fading}
\end{align}
where the index $i$, $i=1,\dots,n$, enumerates the blocks and the index $\ell$, $\ell=1,\dots,L$, enumerates the symbols of each block. The additive white Gaussian noise (AWGN) $Z_{11},Z_{12},\dots$ is a sequence of independent and identically distributed (i.i.d.) random variables that have a common circularly-symmetric complex Gaussian (CSCG) distribution.

%================================================
\subsection{CSI and In-Block Feedback}
\label{subsec:feedback}

The motivation for modeling CSI as independent of the messages is simplicity. If one uses only pilot symbols to estimate the $H_i$ in \eqref{eq:block-fading}, for example, then the independence is valid, and the capacity analysis may be tractable. However, to improve performance, one can implement data and parameter estimation jointly, and one can actively adjust the transmit symbols $X_{i\ell}$ using past received symbols $Y_{ik}$, $k=1,\dots,\ell-1$, if \emph{in-block feedback} is available.\footnote{Across-block feedback does not increase capacity if the state processes are memoryless; see~\cite[Remark~16]{Kramer14}.} An information theory for such feedback was developed in~\cite{Kramer14}, where a challenge is that code design is based on adaptive codewords that are more sophisticated than conventional codewords.

For example, suppose the CSIR is $S_{Ri}=H_i$. Then, one might expect that CSCG signaling is optimal, and the capacity is an average of $\log(1+\text{SNR})$ terms, where SNR is a signal-to-noise ratio. However, this simplification is based on constraints, e.g., that the CSIT is a function of the CSIR and that the $X_{i\ell}$ cannot influence the CSIT. The former constraint can be realistic, e.g., if the receiver quantizes a pilot-based estimate of $H_i$ and sends the quantization bits to the transmitter via a low-latency and reliable feedback link. On the other hand, the latter constraint is unrealistic in general.

%================================================
\subsection{Auxiliary Models}
\label{subsec:intro-aux}

This paper's primary motivation is to further develop information theory for adaptive codewords. To gain insight, it is helpful to have achievable rates with $\log(1+\text{SNR})$ terms. A common approach to obtain such expressions is to lower bound the channel mutual information $I(X;Y)$ as follows.

Suppose $X$ is continuous and consider two conditional densities: the density $p(x|y)$ and an auxiliary density $q(x|y)$. We will refer to such densities as \emph{reverse} models; similarly, $p(y|x)$ and $q(y|x)$ are called \emph{forward} models. One may write the differential entropy of $X$ given $Y$ as
\begin{align}
   & h(X|Y) = \E{-\log p(X|Y)} \nonumber \\
   & = \underbrace{\E{-\log q(X|Y)}}_{\textstyle \textrm{average cross-entropy}} - \underbrace{\E{\log \frac{p(X|Y)}{q(X|Y)}}}_{\textstyle \textrm{average divergence} \ge 0}
   \label{eq:h-expansion}
\end{align}
where the first expectation in~\eqref{eq:h-expansion} is an average cross-entropy, and the second is an average informational divergence, which is non-negative. Several criteria affect the choice of $q(x|y)$: the cross-entropy should be simple enough to admit theoretical or numerical analysis, e.g., by Monte Carlo simulation; the cross-entropy should be close to $h(X|Y)$; and the cross-entropy should suggest suitable transmitter and receiver structures.

We illustrate how reverse and forward auxiliary models have been applied to bound mutual information. Assume that $\E{X}=\E{Y}=0$ for simplicity.

%----------------
\subsubsection{Reverse Model}
Consider the reverse density that models $X,Y$ as jointly CSCG:
\begin{align}
    q(x|y) = \frac{1}{\pi\sigma_L^2} \exp\left(-\left| x - \hat x_L \right|^2 \big/ \sigma_L^2 \right)
\end{align}
where $\hat X_L = \left(\E{X\,Y^\text{*}} /\E{|Y|^2}\right) Y$ and
\begin{align}
    \sigma_L^2
    = \E{\left| X - \hat X_L \right|^2}
    = \E{|X|^2} - \frac{|\E{X Y^\text{*}}|^2}{\E{|Y|^2}}
    \label{eq:VarL}
\end{align}
is the mean square error (MSE) of the estimate $\hat X_L$. In fact, $\hat X_L$ is the linear estimate with the minimum MSE (MMSE), and $\sigma_L^2$ is the linear MMSE (LMMSE) which is independent of $Y=y$; see Sec.~\ref{subsec:LMMSE}. The bound in~\eqref{eq:h-expansion} gives
\begin{align}
    h(X|Y) \le \log\left(\pi e \, \sigma_L^2 \right) .
    \label{eq:h-UB}
\end{align}
Thus, if $X$ is CSCG, then we have the desired form
\begin{align}
    I(X;Y) = h(X)-h(X|Y)
    \ge \log\left( 1 + \frac{|h|^2 \E{|X|^2}}{\sigma^2} \right)
    \label{eq:I-LB}
\end{align}
where the parameters $h$ and $\sigma^2$ are
\begin{align}
    h = \frac{\E{Y X^\text{*}}}{\E{|X|^2}},\quad
    \sigma^2 = \E{| Y - h X |^2}.
    \label{eq:I-LB-parameters}
\end{align}
The bound~\eqref{eq:I-LB} is apparently due to Pinsker~\cite{Pinsker:56,Ihara-78,Pinsker-IT95} and is widely used in the literature; see e.g.~\cite{Shamai-COMM90,Kalet-Shamai-COMM90,Medard-IT00,Diggavi-IT01,Klein-Gallager-ISIT01,Bhashyam-COMM02,Hassibi-Hochwald-IT03,Yoo-Goldsmith-IT06,Agarwal-Honig-IT10,Soysal-Ulukus-WCOM10,marzetta-book,Li-Tao-SML-COMML17,Caire-WCOM18,Noam-Zaidel-SP19}. The bound is usually related to channels $p(y|x)$ with additive noise but \eqref{eq:h-expansion}--\eqref{eq:I-LB} show that it applies generally. The extension to vector channels is given in Sec.~\ref{subsec:max-entropy} below.

%----------------
\subsubsection{Forward Model}
A more flexible approach is to choose the reverse density as
\begin{align}
    q(x|y) = \frac{p(x) q(y|x)^s}{q(y)}
    \label{eq:qchoice2}
\end{align}
where $q(y|x)$ is a forward auxiliary model (not necessarily a density), $s\ge 0$ is a parameter to be optimized, and
\begin{align}
    q(y) = \int_{\mathbb C} p(x) \, q(y|x)^s \, dx .
    \label{eq:qchoice2a}
\end{align}
Inserting~\eqref{eq:qchoice2} into \eqref{eq:h-expansion} we compute
\begin{align}
    I(X;Y) \ge \max_{s\ge 0} \E{\log \frac{q(Y|X)^s}{q(Y)}} .
    \label{eq:I-LB-GMI}
\end{align}
The right-hand side (RHS) of \eqref{eq:I-LB-GMI} is called a \emph{generalized mutual information} (GMI)~\cite{Kaplan-Shamai-A93,Merhav-IT94} and has been applied to problems in  information theory~\cite{Scarlett-FnT-20}, wireless communications~\cite{Lapidoth-IT96,Lapidoth-Shamai-IT02,Weingarten-IT04,Asyhari-IT14,Oestman-WCOMM21,Zhang-COMM12,Zhang-WSL-JSAC19,Pang-Zhang-ITW21,Wang-Zhang-IT22,Nedelcu-Steiner-Kramer-E22}, and fiber-optic communications~\cite{essiambre2010jlt,Dar:14,Secondini-JLT19,garcia2020mismatched,garcia2021mismatched,garcia2022sdm,Secondini-JLT22,Shtaif-P22,mecozzi_capacity_amdd_2018,plabst2022achievable}. 
For example, the bounds \eqref{eq:I-LB} and \eqref{eq:I-LB-GMI} are the same if $s=1$ and
\begin{align}
    q(y|x) = \exp\left( -|y - h x|^2 \big/ \sigma^2 \right)
    \label{eq:I-LB-q}
\end{align}
where $h$ and $\sigma^2$ are given by~\eqref{eq:I-LB-parameters}. Note that~\eqref{eq:I-LB-q} is not a density unless $\sigma^2=1/\pi$ but $q(x|y)$ is a density.\footnote{We require $q(x|y)$ to be a density to apply the divergence bound in~\eqref{eq:h-expansion}.} 

We compare the two approaches. The bound~\eqref{eq:h-UB} is simple to apply and works well since the choices \eqref{eq:I-LB-parameters} give the maximal GMI for CSCG $X$; see Proposition~\ref{prop:GMI} below. However, there are limitations: one must use continuous $X$, the auxiliary model $q(y|x)$ is fixed as \eqref{eq:I-LB-q}, and the bound does not show how to design the receiver. Instead, the GMI applies to continuous/discrete/mixed $X$ and has an operational interpretation: the receiver uses $q(y|x)$ rather than $p(y|x)$ to decode. The framework of such \emph{mismatched} receivers appeared in~\cite[Ex.~5.22]{Gallager68}; see also~\cite{Divsalar78}.  

%================================================
\subsection{Refined Auxiliary Models}
\label{subsec:refined-models}

The two approaches above can be refined in several ways, and we review selected variations in the literature.

%----------------
\subsubsection{Reverse Models}
The model $q(x|y)$ can be different for each $Y=y$, e.g., on may choose $X$ as Gaussian with mean $\E{X|Y=y}$ and variance
\begin{align}
    \Var{X|Y=y}=\E{|X|^2\big|Y=y} - \big| \E{X|Y=y} \big|^2
\end{align}
and where the density $q(x|y)$ is
\begin{align}
    \frac{1}{\pi \Var{X|Y=y}} \exp\left(- \frac{\left|x - \E{X|Y=y}\right|^2}{\Var{X|Y=y}} \right) .
    \label{eq:I-LB-q2r}
\end{align}
Inserting~\eqref{eq:I-LB-q2r} in~\eqref{eq:h-expansion} we have the bound
\begin{align}
    h(X|Y) \le \E{\log\left(\pi e \, \Var{X|Y} \right)}
    \label{eq:h-UB2}
\end{align}
which improves~\eqref{eq:h-UB} in general, since $\Var{X|Y=y}$ is the MMSE of $X$ given the event $Y=y$. In other words, we have $\Var{X|Y=y} \le \sigma_L^2$ for all $Y=y$ and the following bound improves~\eqref{eq:I-LB} for CSCG $X$:
\begin{align}
    I(X;Y) \ge \E{
    \log \frac{\E{|X|^2}}{\Var{X|Y}}}.
    \label{eq:I-LB2}
\end{align}

In fact, the bound~\eqref{eq:I-LB2} was derived in~\cite[Sec.~III.B]{Wang-Zhang-IT22} by optimizing the GMI in~\eqref{eq:I-LB-GMI} over all forward models
\begin{align}
    q(y|x) = \exp\left( -\left|g_y - f_y\, x\right|^2 \right)
    \label{eq:I-LB-q3}
\end{align}
where $f_y$, $g_y$ may depend on $y$; see also~\cite{Zhang-COMM12,Zhang-WSL-JSAC19,Pang-Zhang-ITW21}. We provide a simple proof. By inserting~\eqref{eq:I-LB-q3} into~\eqref{eq:qchoice2}--\eqref{eq:qchoice2a} and completing squares,\footnote{Observe that the $s$ parameter can be absorbed in $f_y$ and $g_y$.} one can equivalently optimize over all reverse Gaussian densities
\begin{align}
    & q(x|y) = \frac{1}{\pi \sigma_y^2} \exp\left( -\frac{\left|x - h_y \right|^2}{\sigma_y^2} \right) . 
    \label{eq:I-LB-q4}
\end{align}
We next bound the cross-entropy as
\begin{align}
    & \E{\left. -\log q(X|Y) \right| Y=y} \nonumber \\
    & \quad = \frac{1}{\sigma_y^2} \E{\left. \left|X - h_y\right|^2 \right| Y=y} + \log\left( \pi \sigma_y^2 \right) \nonumber \\
    & \quad \ge \frac{1}{\sigma_y^2} \Var{X|Y=y} +  \log\left( \pi \sigma_y^2 \right)
    \label{eq:I-LB-q5}
\end{align}
with equality if $h_y=\E{X|Y=y}$; see Sec.~\ref{subsec:LMMSE}. The RHS of~\eqref{eq:I-LB-q5} is minimized by $\sigma_y^2=\Var{X|Y=y}$, so the best choice for $h_y$, $\sigma_y^2$ gives the bound~\eqref{eq:h-UB2}.

\begin{remark}
The model~\eqref{eq:I-LB-q3} uses \emph{generalized nearest-neighbor decoding}, improving the rules proposed in~\cite{Lapidoth-IT96,Lapidoth-Shamai-IT02,Weingarten-IT04}. The authors of~\cite{Wang-Zhang-IT22} pointed out that~\eqref{eq:I-LB} and ~\eqref{eq:I-LB2} use the LMMSE and MMSE, respectively; see~\cite[Eq.~(87)]{Wang-Zhang-IT22}. 
\end{remark}

\begin{remark}
A corresponding forward model can be based on~\eqref{eq:qchoice2} and~\eqref{eq:I-LB-q2r}, namely
\begin{align}
    q(y|x)^s = \frac{q(x|y)}{p(x)}
    \quad \Rightarrow \quad q(y) = 1.
    \label{eq:I-LB-q2}
\end{align}
\end{remark}

\begin{remark}
The RHS of~\eqref{eq:I-LB2} has a more complicated form than the RHS of~\eqref{eq:I-LB} due to the outer expectation and conditional variance, and this makes optimizing $X$ challenging when there is CSIR and CSIT. Also, if $p(y|x)$ is known, then it seems sensible to numerically compute $p(y)$ and $I(X;Y)$ directly, e.g., via Monte Carlo or numerical integration.
\end{remark}

\begin{remark}
Decoding rules for discrete $X$ can be based on decision theory as well as estimation theory; see~\cite[Eq.~(11)]{Ozarow-Wyner-IT90}.
\end{remark}

%----------------
\subsubsection{Forward Models}
Refinements of~\eqref{eq:I-LB-q} appear in the optical fiber literature where the non-linear Schr\"odinger equation describes wave propagation~\cite{essiambre2010jlt}. Such channels exhibit complicated interactions of attenuation, dispersion, nonlinearity, and noise, and the channel density is too challenging to compute. One thus resorts to capacity lower bounds based on GMI and Monte Carlo simulation. The simplest models are memoryless, and they work well if chosen carefully. For example, the paper~\cite{essiambre2010jlt} used auxiliary models of the form
\begin{align}
    q(y|x) = \exp\left( -|y - h x|^2 \big/ \sigma_{|x|}^2 \right)
    \label{eq:q-optical}
\end{align}
where $h$ accounts for attenuation and self-phase modulation, and where the noise variance $\sigma_{|x|}^2$ depends on $|x|$. Also, $X$ was chosen to have concentric rings rather than a CSCG density. Subsequent papers applied progressively more sophisticated models with memory to better approximate the actual channel; see~\cite{Dar:14,Secondini-JLT19,garcia2020mismatched,garcia2021mismatched,garcia2022sdm,Secondini-JLT22,Shtaif-P22}. However, the rate gains over the model~\eqref{eq:q-optical} are minor ($\approx$12\%) for 1000 km links, and the newer models do not suggest practical receiver structures.

A related application is short-reach fiber-optic systems that use direct detection (DD) receivers~\cite{chagnon_optical_comms_short_reach_2019} with photodiodes. The paper~\cite{mecozzi_capacity_amdd_2018} showed that sampling faster than the symbol rate increases the DD capacity. However, spectrally efficient filtering gives the channel a long memory, motivating auxiliary models $q(y|x)$ with reduced memory to simplify GMI computations~\cite{Arnold-etal-IT06,plabst2022achievable}. More generally, one may use channel-shortening filters~\cite{Abou-Faycal00,Rusek-WCOM12,Hu-Rusek-IA18} to increase the GMI.

\begin{remark} 
\label{remark:ultimate}
The ultimate GMI is $I(X;Y)$, and one can compute this quantity numerically for the channels considered in this paper. We are motivated to focus on forward auxiliary models $q(y|x)$ to understand how to improve information rates for more complex channels. For instance, simple $q(y|x)$ let one understand properties of optimal codes, see Lemma~\ref{lemma:AGMI-max}, and they suggest explicit power control policies, see Theorem~\ref{theorem:Partial-CSIR-Full-CSIT}.
\end{remark}

\begin{remark} 
\label{remark:caire}
The paper~\cite{Caire-WCOM18} (see also~\cite[Eq.~(3.3.45)]{Biglieri-Proakis-Shamai-IT98} and~\cite[eq.~(6)]{Mezghani-Nossek-ISIT02}) derives two capacity lower bounds for massive MIMO channels. These bounds are designed for problems where the fading parameters have small variance so that, in effect, $\sigma^2$ in \eqref{eq:I-LB-parameters} is small. We will instead encounter cases where $\sigma^2$ grows in proportion to $\E{|X|^2}$ and the RHS of~\eqref{eq:I-LB} quickly saturates as $\E{|X|^2}$ grows; see Remark~\ref{remark:caire2}.
\end{remark}

%================================================
\subsection{Organization}
\label{subsec:prelim-organization}
This paper is organized as follows.
Sec.~\ref{sec:prelim} defines notation and reviews basic results.
Sec.~\ref{sec:GMI} develops two results for the GMI of scalar auxiliary models with AWGN: 
\begin{itemize}
\item Proposition~\ref{prop:GMI} in Sec.~\ref{subsec:GMI-Gauss} states a known result, namely that the RHS of \eqref{eq:I-LB} is the maximum GMI for the AWGN auxiliary model~\eqref{eq:I-LB-q} and a CSCG $X$.
\item Lemma~\ref{lemma:GMI-part} in Sec.~\ref{subsec:GMI-generalizations} generalizes Proposition~\ref{prop:GMI} by partitioning the channel output alphabet into $K$ subsets, $K\ge1$. We use $K=2$ to establish capacity properties at high and low SNR.
\end{itemize}
Sec.~\ref{sec:CSIT}--\ref{sec:CSIRT} apply the GMI to channels with CSIT and CSIR.
\begin{itemize}
\item Sec.~\ref{subsec:CSIT-opt} treats adaptive codewords and develops structural properties of their optimal distribution. 
\item Lemma~\ref{lemma:AGMI} in Sec.~\ref{subsec:CSIT-GMI} generalizes Proposition~\ref{prop:GMI} to MIMO channels and adaptive codewords. The receiver models each transmit symbol as a weighted sum of the entries of the corresponding adaptive symbol.
\item Lemma~\ref{lemma:AGMI-max} in Sec.~\ref{subsec:CSIT-codebook-structure-GMI} states that the maximum GMI for scalar channels, an AWGN auxiliary model, adaptive codewords with jointly CSCG entries, and $K=1$ is achieved by using a conventional codebook where each symbol is modified based on the CSIT.
\item Lemma~\ref{lemma:AGMI-max-MIMO} in Sec.~\ref{subsec:CSIT-GMI-MIMO} extends Lemma~\ref{lemma:AGMI-max} to MIMO channels, including diagonal or parallel channels.
\item Theorem~\ref{theorem:AGMI-SR-max} in Sec.~\ref{subsec:CSIRT-capacity} generalizes Lemma~\ref{lemma:AGMI-max} to include CSIR; we use this result several times in Sec.~\ref{sec:Gauss-fading}.
\item Lemma~\ref{lemma:AGMI-part} in Sec.~\ref{subsec:GMI-generalizations2} generalizes Lemmas~\ref{lemma:GMI-part} and~\ref{lemma:AGMI} by partitioning the channel output alphabet.
\end{itemize}
%
%------------------------
Sec.~\ref{sec:Gauss-fading}--\ref{sec:Rayleigh-Fading} apply the GMI to fading channels with AWGN and illustrate the theory for on-off and Rayleigh fading.
\begin{itemize}
\item Lemma~\ref{lemma:Gauss-fading-C-UB} in Sec.~\ref{sec:Gauss-fading} gives a general capacity upper bound.
\item Sec.~\ref{subsec:Partial-CSIR-fullCSIT} introduces a class of power control policies for full CSIT. Theorem~\ref{theorem:Partial-CSIR-Full-CSIT} develops the optimal policy with an MMSE form.
\item Theorem~\ref{theorem:partial-CSIR} in Sec.~\ref{subsec:Partial-CSIR-CSITatR} provides a quadratic waterfilling expression for the GMI with partial CSIR.
\end{itemize}
%------------------------
Sec.~\ref{sec:ibf-CSIT} develops theory for block fading channels with in-block feedback (or in-block CSIT) that is a function of the CSIR and past channel inputs and outputs.
\begin{itemize}
\item Theorem~\ref{theorem:AGMI-SR-max-block-fading} in Sec.~\ref{subsec:ibf-GMI-scalar} generalizes Lemma~\ref{lemma:AGMI-max-MIMO} to MIMO block fading channels;
\item Sec.~\ref{subsec:ibf-CSITatR} develops capacity expressions in terms of directed information;
\item Sec.~\ref{subsec:ibf-fading-AWGN} specializes the capacity to fading channels with AWGN and delayed CSIR;
\item Proposition~\ref{proposition:part-CSIT2-block-fading} generalizes Proposition~\ref{proposition:part-CSIT2} to channels with special CSIR and CSIT.
\end{itemize}
%------------------------
Sec.~\ref{sec:Conclusion} concludes the paper. Finally, Appendices~\ref{appendix:special-functions}--\ref{appendix:proof-lemma-AGMI-max-vec} provide results on special functions, GMI calculations, and proofs.

%================================================
\section{Preliminaries}
\label{sec:prelim}
%================================================
\subsection{Basic Notation}
Let $1(\cdot)$ be the indicator function that takes on the value 1 if its argument is true and 0 otherwise. Let $\delta(.)$ be the Dirac generalized function with $\int_\set{X} \delta(x) f(x) dx= f(0)\cdot1(0\in\set{X})$. For $x\in\mathbb R$, define $(x)^+=\max(0,x)$. The complex-conjugate, absolute value, and phase of $x\in\mathbb C$ are written as $x^\text{*}$, $|x|$, and $\arg(x)$, respectively. We write $j=\sqrt{-1}$ and $\epsb=1-\epsilon$.

Sets are written with calligraphic font, e.g., $\set S=\{1,\dots,n\}$ and the cardinality of $\set S$ is $|\set S|$. The complement of $\set S$ in $\set T$ is $\set S^c$ where $\set T$ is understood from the context.

%================================================
\subsection{Vectors and Matrices}
Column vectors are written as $\ul x = [x_1,\dots,x_M]^T$ where $M$ is the dimension, and $T$ denotes transposition. The complex-conjugate transpose (or Hermitian) of $\ul x$ is written as $\ul x^\dag$. The Euclidean norm of $\ul x$ is $\| \ul x \|$.
Matrices are written with bold letters such as $\bf A$. The letter $\bf I$ denotes the identity matrix. The determinant and trace of a square matrix $\bf A$ are written as $\det{\bf A}$  and ${\rm tr}\,{\bf A}$, respectively.

A singular value decomposition (SVD) is ${\bf A} = {\bf U} {\bf \Sigma} {\bf V}^\dag$ where $\bf U$ and $\bf V$ are unitary matrices and $\bf \Sigma$ is a rectangular diagonal matrix with the singular values of $\bf A$ on the diagonal. The square matrix $\bf A$ is positive semi-definite if $\ul x^\dag {\bf A} \ul x \ge 0$ for all $\ul x$. The notation ${\bf A} \preceq {\bf B}$ means that ${\bf B}-{\bf A}$ is positive semi-definite. Similarly, $\bf A$ is positive definite if $\ul x^\dag {\bf A} \ul x > 0$ for all $\ul x$, and we write ${\bf A} \prec {\bf B}$ if ${\bf B}-{\bf A}$ is positive definite.

%================================================
\subsection{Random Variables}
Random variables are written with uppercase letters, such as $X$, and their realizations with lowercase letters, such as $x$. We write the distribution of discrete $X$ with alphabet $\set X=\{0,\dots,n-1\}$ as $P_X=[P_X(0),\dots,P_X(n-1)]$. The density of a real- or complex-valued $X$ is written as $p_X$. Mixed discrete-continuous distributions are written using mixtures of densities and Dirac-$\delta$ functions.

Conditional distributions and densities are written as $P_{X|Y}$ and $p_{X|Y}$, respectively. We usually drop subscripts if the argument is a lowercase version of the random variable, e.g., we write $p(y|x)$ for $p_{Y|X}(y|x)$. One exception is that we consistently write the distributions $P_{S_R}(.)$ and $P_{S_T}(.)$ of the CSIR and CSIT with the subscript to avoid confusion with power notation.

%================================================
\subsection{Second-Order Statistics}
\label{subsec:prelim-order2}
The expectation and variance of the complex-valued random variable $X$ are $\E{X}$ and $\Var{X}=\E{|X-\E{X}|^2}$, respectively. The correlation coefficient of $X_1$ and $X_2$ is $\rho = \E{U_1 U_2^\text{*}}$ where
\begin{align*}
    U_i=(X_i-\E{X_i})/\sqrt{\Var{X_i}}
\end{align*}
for $i=1,2$. We say that $X_1$ and $X_2$ are \emph{fully correlated} if $\rho = e^{j\phi}$ for some real $\phi$. Conditional expectation and variance are written as $\E{X|A=a}$ and
\begin{align*}
    \Var{X|A=a}=\E{(X-\E{X}) (X-\E{X})^\text{*} | A=a}.
\end{align*}
The expressions $\E{X|A}$, $\Var{X|A}$ are random variables that take on the values $\E{X|A=a}$, $\Var{X|A=a}$ if $A=a$.

We slightly simplify and abuse notation by carrying explicit conditioning across expectations:
\begin{align*}
    \E{\E{X|Y,Z=z}} := \E{\E{X|Y,Z}|Z=z}.
\end{align*}
For instance, with this convention, we could have written the left-hand side of~\eqref{eq:I-LB-q5} as $\E{-\log q(X|Y=y)}$.

The expectation and covariance matrix of the random column vector $\ul X = [X_1,\dots,X_M]^T$ are $\E{\ul X}$ and ${\bf Q}_{\ul X}=\E{(\ul X-\E{\ul X}) (\ul X-\E{\ul X})^\dag}$, respectively. We write ${\bf Q}_{\ul X,\ul Y}$ for the covariance matrix of the stacked vector $[\ul X^T \ul Y^T]^T$. We write ${\bf Q}_{\ul X|\ul Y=\ul y}$ for the covariance matrix of $\ul X$ conditioned on the event $\ul Y = \ul y$. ${\bf Q}_{\ul X|\ul Y}$ is a random matrix that takes on the value ${\bf Q}_{\ul X|\ul Y=\ul y}$ when $\ul Y = \ul y$.

We often consider CSCG random variables and vectors. A CSCG $\ul X$ has density
\begin{align*}
  p(\ul x) = \frac{\exp\left(- {\ul x}^\dag \, {\bf Q}_{\ul X}^{-1} \, \ul{x}\right)}{\pi^M \det {\bf Q}_{\ul X}} 
\end{align*}
and we write $\ul X \sim \mathcal{CN}(\ul 0, {\bf Q}_{\ul X})$.

%================================================
\subsection{MMSE and LMMSE Estimation}
\label{subsec:LMMSE}
Assume that $\E{\Xv}=\E{\Yv}=\ul 0$. The MMSE estimate of $\ul X$ given the event $\ul Y=\ul y$ is the vector $\hat{\ul X}(\ul y)$ that minimizes
\begin{align*}
    \E{\left. \left\|\ul X-\hat{\ul X}(\ul y) \right\|^2 \right| \ul Y=\ul y}.
\end{align*}
Direct analysis gives~\cite[Ch.~4]{papoulis-book}
\begin{align}
    & \hat{\ul X}(\ul y) = \E{\ul X | \ul Y=\ul y}
    \label{eq:MMSE-est} \\
    & \E{\big\|\ul X-\hat{\ul X}\big\|^2} 
    = \E{\|{\ul X}\|^2} - \E{\big\|\hat{\ul X}\big\|^2}
    \label{eq:MMSE-error} \\
    %& \E{\left(\ul X-\hat{\ul X}\right)\left(\ul X-\hat{\ul X}\right)^\dag}
    & {\bf Q}_{\ul X - \hat{\ul X}}
    = {\bf Q}_{\ul X} - {\bf Q}_{\hat{\ul X}} 
    \label{eq:MMSE-var} \\
    & \E{\left( \ul X-\hat{\ul X} \right) \, \ul Y^\dag} = {\bf 0}
    \label{eq:MMSE-op}
\end{align}
where the last identity is called the \emph{orthogonality principle}.

The LMMSE estimate of $\ul X$ given $\ul Y$ with invertible ${\bf Q}_{\ul Y}$ is the vector $\hat{\ul X}_L = {\bf C} \, \ul Y$ where $\bf C$ is chosen to minimize $\E{\|\ul X-\hat{\ul X}_L\|^2}$. We compute
\begin{align}
    & \hat{\ul X}_L = \E{\ul X \, \ul Y^\dag} {\bf Q}_{\ul Y}^{-1} \, \ul Y
    \label{eq:LMMSE-est} %\\
%    & {\bf Q}_{\hat{\ul X}_L} = \E{\ul X \, \ul Y^\dag} {\bf Q}_{\ul Y}^{-1} \E{\ul X \, \ul Y^\dag}^\dag 
%    \label{eq:LMMSE-est-var}
\end{align}
and we also have the properties \eqref{eq:MMSE-error}--\eqref{eq:MMSE-op} with $\hat{\ul X}$ replaced by $\hat{\ul X}_L$. Moreover, if $\ul X$ and $\ul Y$ are jointly CSCG, then the MMSE and LMMSE estimators coincide, and~\eqref{eq:MMSE-op} implies that the error $\ul X - \hat{\ul X}$ is independent of $\ul Y$, i.e., we have
\begin{align}
   & \E{\left. \left(\ul X-\hat{\ul X}\right)\left(\ul X-\hat{\ul X}\right)^\dag \right| \ul Y = \ul y} \nonumber \\
   & = \E{\left. \ul X \, \ul X^\dag \right| \ul Y = \ul y} - \E{\ul X \, \ul Y^\dag} {\bf Q}_{\ul Y}^{-1} \, \ul y \, \ul y^\dag {\bf Q}_{\ul Y}^{-1} \E{\ul X \, \ul Y^\dag}^\dag
   \nonumber \\
   & = {\bf Q}_{\ul X} - {\bf Q}_{\hat{\ul X}} .\label{eq:LMMSE-indep} 
\end{align}

%================================================
\subsection{Entropy, Divergence, and Information}
\label{subsec:prelim-IT}
Entropies of random vectors with densities $p$ are written as
\begin{align*}
   h({\ul X}) & = \E{-\log p({\ul X})} \\
   h({\ul X} | {\ul Y}) & = \E{-\log p({\ul X} | {\ul Y})}
\end{align*}
where we use logarithms to the base $e$ for analysis. The informational divergence of the densities $p$ and $q$ is
\begin{align*}
   D\left( p \| q \right) = \E{\log \frac{p({\ul X})}{q({\ul X})}}
\end{align*}
and $D(p \| q) \ge 0$ with equality if and only if $p=q$ almost everywhere. The mutual information of $\ul X$ and $\ul Y$ is
\begin{align*}
   I({\ul X};{\ul Y}) & = D\left( p({\ul X},{\ul Y}) \,\|\, p({\ul X}) \, p({\ul Y}) \right) \nonumber \\
   & = \E{\log \frac{p({\ul Y}|{\ul X})}{p({\ul Y})}}.
\end{align*}
The average mutual information of $\ul X$ and $\ul Y$ conditioned on $\ul Z$ is $I({\ul X};{\ul Y}|{\ul Z})$. We write strings as $X^L=(X_1,X_2,\ldots,X_L)$ and use the directed information notation (see~\cite{Kramer98,Kramer03})
\begin{align}
   & I(X^L \rightarrow Y^L | Z) = \sum_{\ell=1}^L I(X^\ell ; Y_\ell | Y^{\ell-1}, Z) \label{eq:DI} \\
   & I(X^L \rightarrow Y^L \| Z^L | W) = \sum_{\ell=1}^L I(X^\ell ; Y_\ell | Y^{\ell-1}, Z^\ell, W) \label{eq:ccDI}
\end{align}
where $Y_0=0$.

%================================================
\subsection{Entropy and Information Bounds}
\label{subsec:max-entropy}
The expression \eqref{eq:h-expansion} applies to random vectors. Choosing $q(\ul x | \ul y)$ as the conditional density where the $\ul X,\ul Y$ are modeled as jointly CSCG we obtain a generalization of \eqref{eq:h-UB}:
\begin{align}
   & h({\ul X} | {\ul Y}) \le \log \frac{\det\left(\pi e \, {\bf Q}_{\ul X,\ul Y}\right)}{\det\left(\pi e \, {\bf Q}_{\ul Y}\right)} \nonumber \\
   & = \log\det\left( \pi e \left\{ {\bf Q}_{\ul X} - \E{\ul X \, \ul Y^\dag} {\bf Q}_{\ul Y}^{-1} \E{\ul Y \, \ul X^\dag} \right\} \right).
   \label{eq:max-ent2}
\end{align}
The vector generalization of~\eqref{eq:I-LB} for CSCG $\ul X$ is
\begin{align}
   & I({\ul X};{\ul Y}) = h({\ul X}) - h({\ul X} | {\ul Y}) \nonumber \\
   & \ge \log\det\left( \left( {\bf Q}_{\ul X} - \E{\ul X \, \ul Y^\dag} {\bf Q}_{\ul Y}^{-1} \E{\ul Y \, \ul X^\dag} \right)^{-1} {\bf Q}_{\ul X} \right)
   \nonumber \\
   & \overset{(a)}{=} \log\det\left( {\bf I} + {\bf Q}_{\ul Z}^{-1} {\bf H} {\bf Q}_{\ul X}^{-1} {\bf H}^\dag \right)
   \label{eq:I-LB-vector}
\end{align}
where (cf.~\eqref{eq:I-LB-parameters})
\begin{align}
   {\bf H} = \E{\ul Y \, {\ul X}^\dag} {\bf Q}_{\ul{\bar X}}^{-1} ,\quad
   {\bf Q}_{\ul{Z}} = {\bf Q}_{\ul Y} - {\bf H} {\bf Q}_{\ul X} {\bf H}^\dag
   \label{eq:I-LB-vector-parameters}
\end{align}
and step $(a)$ in~\eqref{eq:I-LB-vector} follows by the Woodbury identity
\begin{align}
    & \left( {\bf A} + {\bf B}  {\bf C}  {\bf D} \right)^{-1} \nonumber \\
    & \quad = {\bf A}^{-1} - {\bf A}^{-1} {\bf B} \left( {\bf C}^{-1} + {\bf D}  {\bf A}^{-1} {\bf B} \right)^{-1} {\bf D} {\bf A}^{-1}
    \label{eq:Woodbury}
\end{align}
and the Sylvester identity
\begin{align}
    & \det\left( {\bf I} + {\bf A}  {\bf B} \right)
    = \det\left( {\bf I} + {\bf B}  {\bf A} \right).
    \label{eq:Sylvester}
\end{align}
We also have vector generalizations of~\eqref{eq:h-UB2}--\eqref{eq:I-LB2}:
\begin{align}
    h(\ul X | \ul Y) & \le \E{\log \det\left( \pi e \, {\bf Q}_{\ul X | \ul Y}  \right)}
    \label{eq:h-UB3}\\
   I({\ul X};{\ul Y}) & \ge \E{\log \frac{\det{\bf Q}_{\ul X}}{\det{\bf Q}_{\ul X| \ul Y}}}, \;\; \text{for CSCG $\ul X$}.
   \label{eq:I-LB-vector2}
\end{align}

%================================================
\subsection{Capacity and Wideband Rates}
\label{subsec:wideband}
Consider the complex-alphabet AWGN channel with output $Y=X+Z$ and noise $Z\sim\mathcal{CN}(0,1)$. The capacity with the block power constraint
$\frac{1}{n} \sum_{i=1}^n |X_i|^2 \le P$ is
\begin{align}
   C(P) = \max_{\E{|X|^2}\le P} I(X;Y) = \log(1+P).
   \label{eq:AWGN-capacity}
\end{align}

The low SNR regime (small $P$) is known as the wideband regime~\cite{Verdu02}. For well-behaved channels such as AWGN channels, the minimum $E_b/N_0$ and the slope $S$ of the capacity vs. $E_b/N_0$ in bits/(3 dB) at the minimum $E_b/N_0$ are (see~\cite[Eq. (35)]{Verdu02} and~\cite[Thm.~9]{Verdu02})
\begin{align}
   \left.\frac{E_b}{N_0}\right|_{\text{min}} = \frac{\log 2}{C'(0)}, \quad
   S = \frac{2 [C'(0)]^2}{- C''(0)}
   \label{eq:wideband}
\end{align}
where $C'(P)$ and $C''(P)$ are the first and second derivatives of $C(P)$ (measured in nats) with respect to $P$, respectively. For example, the wideband derivatives for \eqref{eq:AWGN-capacity} are $C'(0)=1$ and $C''(0)=-1$ so that the wideband values \eqref{eq:wideband} are
\begin{align}
   \left.\frac{E_b}{N_0}\right|_{\text{min}} = \log 2, \quad S = 2 .
   \label{eq:wideband-fullCSIR-noCSIT-awgn}
\end{align}
The minimal $E_b/N_0$ is usually stated in decibels, for example $10 \log_{10}(\log 2)=-1.59$ dB. An extension of the theory to general channels is described in~\cite[Sec.~III]{kramer_SE_2003}.

\begin{remark} \label{remark:flash}
A useful method is \emph{flash} signaling, where one sends with zero energy most of the time. In particular, we will consider the CSCG flash density\footnote{Flash signaling is defined in~\cite[Def.~2]{Verdu02} as a family of distributions satisfying a particular property as $P\rightarrow0$. We use the terminology informally.}
\begin{align}
   p(x) = (1-p) \, \delta(x) + p \, \frac{e^{-|x|^2 / (P/p)}}{\pi (P/p)}
   \label{eq:flash-Gauss}
\end{align}
where $0 < p \le 1$ so that the average power is $\E{|X|^2}=P$.
\end{remark}

%================================================
\subsection{Uniformly-Spaced Quantizer}
\label{subsec:quantizer}
Consider a uniformly-spaced scalar quantizer $q_u(.)$ with $B$ bits, domain $[0,\infty)$, and reconstruction points
\begin{align*}
   s  \in \{\Delta/2,3\Delta/2,\ldots,\Delta/2+(2^B-1)\Delta\}
\end{align*}
where $\Delta>0$. The quantization intervals are
\begin{align*}
   \set{I}(s) = \left\{ \begin{array}{ll}
   \left[ s-\frac{\Delta}{2},  s+\frac{\Delta}{2} \right), & s \ne s_{\rm max} \\
   \left[ s-\frac{\Delta}{2},  \infty \right), & s = s_{\rm max}
   \end{array} \right.
\end{align*}
where $s_{\rm max}=\Delta/2+(2^B-1)\Delta$.  We will consider $B=0,1,\infty$. For $B=\infty$ we choose $q_u(x)=x$.

Suppose one applies the quantizer to the non-negative random variable $G$ with density $p(g)$ to obtain $S_T=q_u(G)$. Let $P_{S_T}$ and $P_{S_T|G}$ be the probability mass functions of $S_T$ without and with conditioning on $G$, respectively. We have
\begin{align}
   & P_{S_T|G}(s|g) = 1\left(g\in\set{I}(s)\right) \nonumber \\
   & P_{S_T}(s) = \int_{g\in\set{I}(s)} p(g) \; dg
   \label{eq:PST}
\end{align}
and using Bayes' rule, we obtain
\begin{align}
   p(g | s)
   & = \left\{ \begin{array}{ll}
   p(g)/P_{S_T}(s), & g \in \set{I}(s) \\
   0, & \text{else} .
   \end{array} \right.
   \label{eq:pgs2}
\end{align}
  
%================================================
\section{Generalized Mutual Information}
\label{sec:GMI}
We re-derive the GMI in the usual way, where one starts with the forward model $q(y|x)$ rather than the reverse density $q(x|y)$ in~\eqref{eq:qchoice2}. Consider the joint density $p(x,y)$ and define $q(y)$ as in~\eqref{eq:qchoice2a} for $s\ge 0$. Note that neither $q(y|x)$ nor $q(y)$ must be densities. The GMI is defined in~\cite{Kaplan-Shamai-A93} to be $\max_{s\ge0} I_s(X;Y)$ where (see the RHS of~\eqref{eq:I-LB-GMI})
\begin{align}
   I_s(X;Y) = \E{ \log \frac{q(Y|X)^s}{q(Y)} }
   \label{eq:GMI}
\end{align}
and where the expectation is with respect to $p(x,y)$. The GMI is a lower bound on the mutual information since
\begin{align}
   I_s(X;Y) = I(X;Y) - D\left( \left. p_{X,Y} \right\| p_Y \, q_{X|Y} \right) .
   \label{eq:GMI-identity}
\end{align}
Moreover, by using Gallager's derivation of error exponents, but without modifying his ``$s$'' variable, the GMI $I_s(X;Y)$ is achievable with a mismatched decoder that uses $q(y|x)$ for its decoding metric~\cite{Kaplan-Shamai-A93}.

%================================================
\subsection{AWGN Forward Model with CSCG Inputs}
\label{subsec:GMI-Gauss}
A natural metric is based on the AWGN auxiliary channel $Y_a=h X+Z$ where $h$ is a channel parameter and $Z\sim\mathcal{CN}(0,\sigma^2)$ is independent of $X$, i.e., we have the auxiliary model (here a density)
\begin{align}
   q(y|x) = \frac{1}{\pi \sigma^2} \exp\left(-|y - h x |^2 /\sigma^2 \right)
   \label{eq:GMI-qyx}
\end{align}
where $h$ and $\sigma^2$ are to be optimized. A natural input is $X\sim\mathcal{CN}(0,P)$ so that~\eqref{eq:qchoice2a} is
\begin{align}
   q(y) = \frac{\pi \sigma^2/s}{(\pi \sigma^2)^s} \cdot
   \frac{\exp\left( \frac{-|y|^2}{\sigma^2/s + |h|^2P} \right)}{\pi(\sigma^2/s + |h|^2 P)} .
   \label{eq:GMI-qy}
\end{align}
We have the following result, see~\cite{Lapidoth-Shamai-IT02} that considers channels of the form~\eqref{eq:block-fading} and~\cite[Prop.~1]{Zhang-COMM12} that considers general $p(y|x)$.

\begin{proposition} \label{prop:GMI}
The maximum GMI \eqref{eq:GMI} for the channel $p(y|x)$, a CSCG input $X$ with variance $P>0$, and the auxiliary model \eqref{eq:GMI-qyx} with $\sigma^2>0$ is
\begin{align}
   I_1(X;Y) = \log\left( 1 + \frac{|\tilde h|^2 P}{\tilde \sigma^2} \right)
   \label{eq:GMI-2}
\end{align}
where $s=1$ and (cf.~\eqref{eq:I-LB-parameters})
\begin{align}
   \tilde h & = \E{Y X^\text{*}} \big/ P
   \label{eq:GMI-parameters-a} \\
   \tilde \sigma^2 & = \E{|Y - \tilde h \,X|^2}
   = \E{|Y|^2} - |\tilde h|^2 P .
   \label{eq:GMI-parameters-b}
\end{align}
The expectations are with respect to the actual density $p(x,y)$.
\end{proposition}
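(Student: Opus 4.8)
The plan is to convert the GMI into a cross-entropy minimization over Gaussian reverse models, which makes global (not merely stationary) optimality transparent. First I would record the reverse-model identity underlying~\eqref{eq:qchoice2}: with $q(x|y)=p(x)\,q(y|x)^s/q(y)$ one has $q(y|x)^s/q(y)=q(x|y)/p(x)$, so~\eqref{eq:GMI} and~\eqref{eq:h-expansion} give
\begin{align}
   I_s(X;Y) = h(X) - \E{-\log q(X|Y)}, \qquad h(X)=\log(\pi e P).
\end{align}
Hence maximizing the GMI is equivalent to minimizing the average cross entropy $\E{-\log q(X|Y)}$ over the reverse models induced by the admissible triples $(h,\sigma^2,s)$. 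I would also note a scale invariance: the ratio $q(y|x)^s/q(y)$ is unchanged if $q(y|x)^s$ is rescaled by a constant, and since $q(y|x)^s\propto\exp(-s|y-hx|^2/\sigma^2)$ the pair $(\sigma^2,s)$ enters only through $\gamma:=s/\sigma^2$. One may therefore set $s=1$ without loss and optimize over $(h,\sigma^2)$, which already explains the appearance of $s=1$ in the statement.

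The key computation is to complete the square in $x$. Because $p(x)\propto e^{-|x|^2/P}$ and $q(y|x)^s\propto e^{-\gamma|y-hx|^2}$ are both Gaussian in $x$, the reverse model is CSCG,
\begin{align}
   q(x|y)=\mathcal{CN}\!\left(x;\,c\,y,\,v\right), \qquad v=\frac{P}{1+\gamma|h|^2 P}, \qquad c=\gamma\,v\,h^{*},
\end{align}
so the cross entropy takes the explicit form $\E{-\log q(X|Y)}=\log(\pi v)+\E{|X-cY|^2}/v$. I would minimize this in two stages over the realizable $(c,v)$. For fixed $v$, the quantity $\E{|X-cY|^2}$ is a strictly convex quadratic in $c$, minimized at $c_\star=\E{XY^{*}}/\E{|Y|^2}$ (the through-origin LMMSE coefficient, cf.~\eqref{eq:LMMSE-est}), with minimal value $v_\star=P-|\E{XY^{*}}|^2/\E{|Y|^2}$; then $\log(\pi v)+v_\star/v$ has a unique interior minimum at $v=v_\star$. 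Thus the unconstrained minimizer over all $\mathcal{CN}(c y,v)$ reverse models is $(c_\star,v_\star)$.

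The final step, and the one needing care, is to confirm that this unconstrained optimum is actually attainable by a valid triple, so that the minimum over the a priori smaller realizable family coincides with it. Solving the pair $v=v_\star$, $c=c_\star$ for the parameters, I expect to recover precisely $h=\tilde h=\E{YX^{*}}/P$, $\sigma^2=\tilde\sigma^2=\E{|Y|^2}-|\tilde h|^2 P$, and $\gamma=s/\sigma^2$ with $s=1$ (cf.~\eqref{eq:GMI-parameters-a},~\eqref{eq:GMI-parameters-b}); substituting back gives $v_\star=\tilde\sigma^2 P/\E{|Y|^2}$ and hence
\begin{align}
   \max_{s\ge 0} I_s(X;Y)=\log\frac{P}{v_\star}=\log\frac{\E{|Y|^2}}{\tilde\sigma^2}=\log\!\left(1+\frac{|\tilde h|^2 P}{\tilde\sigma^2}\right),
\end{align}
which is~\eqref{eq:GMI-2}. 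Since the realizable family is a subset of all CSCG reverse models, its cross-entropy minimum is no smaller than the unconstrained one; attainability upgrades this to equality and yields global optimality. The main obstacle is thus purely the reachability/parameter-matching check, i.e.\ verifying that $(c_\star,v_\star)$ lies in the image of the map $(h,\sigma^2,s)\mapsto(c,v)$. A more pedestrian alternative is to substitute~\eqref{eq:GMI-qy} directly into~\eqref{eq:GMI}, obtaining $I_s=\log(1+s|h|^2P/\sigma^2)-s\,\E{|Y-hX|^2}/\sigma^2+s\,\E{|Y|^2}/(\sigma^2+s|h|^2P)$, and to set the Wirtinger derivative in $h$ and the derivatives in $\sigma^2$ and $s$ to zero; that route reaches the same stationary point but then demands a separate second-order argument to rule out other extrema, which the reverse-model approach avoids.
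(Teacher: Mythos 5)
Your proof is correct, but it takes a genuinely different route from the paper's. The paper substitutes the auxiliary model into the GMI to obtain the explicit expression \eqref{eq:GMI-3}, observes that only the ratio $\sigma^2/s$ matters, evaluates at $(\tilde h,\tilde\sigma^2)$, and then establishes \emph{global} optimality by a forward-model device: it introduces the surrogate channel $Y_a=\tilde h X+\tilde Z$ whose second-order statistics match those of the true $Y$, notes that \eqref{eq:GMI-3} depends only on those statistics so that $I_s(X;Y)=I_s(X;Y_a)$ for every $(s,h,\sigma^2)$, and closes with $I_s(X;Y_a)\le I(X;Y_a)=\log\bigl(1+|\tilde h|^2P/\tilde\sigma^2\bigr)$ via \eqref{eq:GMI-identity}. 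You instead work entirely on the reverse side: you write $I_s=h(X)-\E{-\log q(X|Y)}$, recognize that every admissible triple $(h,\sigma^2,s)$ induces a CSCG reverse density $\mathcal{CN}(cy,v)$, minimize the cross entropy globally over $(c,v)$ by a two-stage convex argument, and finish with the attainability check on the map $(h,s/\sigma^2)\mapsto(c,v)$. That check, which you rightly flag as the delicate step, does go through: one finds $v_\star=P\tilde\sigma^2/\E{|Y|^2}$ and $c_\star=\tilde h^{\text{*}}P/\E{|Y|^2}$, realized exactly by $h=\tilde h$, $\sigma^2=\tilde\sigma^2$, $s=1$ (and the degenerate case $\E{XY^{\text{*}}}=0$ is covered by $h=0$). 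Your route is closer in spirit to the reverse-model optimization the paper performs in \eqref{eq:I-LB-q4}--\eqref{eq:I-LB-q5}, and it buys a cleaner global-optimality argument via convexity rather than a matched-statistics trick; the paper's surrogate-channel argument, on the other hand, transfers essentially verbatim to the vector and adaptive-codeword settings of Lemma~\ref{lemma:AGMI} in Appendix~\ref{appendix:proof-lemma-AGMI}, where an explicit parametrization of the reverse density would be more cumbersome.
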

\begin{proof}
The GMI \eqref{eq:GMI} for the model~\eqref{eq:GMI-qyx} is
\begin{align}
   I_s(X;Y) & = \log\left( 1 + \frac{|h|^2 P}{\sigma^2/s} \right) \nonumber \\
   & \quad + \frac{\E{|Y|^2}}{\sigma^2/s + |h|^2 P} - \frac{\E{|Y - h X|^2}}{\sigma^2/s}.
   \label{eq:GMI-3}
\end{align}
Since \eqref{eq:GMI-3} depends only on the ratio $\sigma^2/s$ one may as well set $s=1$. Thus, choosing $h=\tilde h$ and $\sigma^2=\tilde \sigma^2$ gives \eqref{eq:GMI-2}.

Next, consider $Y_a = \tilde h X + \Zt$ where $\Zt\sim\mathcal{CN}(0,\tilde \sigma^2)$ is independent of $X$. We have
\begin{align}
    & \E{\big|Y_a\big|^2}=\E{|Y|^2}
    \label{eq:aux-equiv1} \\
    & \E{\big|Y_a - \tilde h X\big|^2} = \E{\big|Y - \tilde h X\big|^2}.
    \label{eq:aux-equiv2}
\end{align}
In other words, the second-order statistics for the two channels with outputs $Y$ (the actual channel output) and $Y_a $ are the same. But the GMI \eqref{eq:GMI-2} is the mutual information $I(X;Y_a)$. Using \eqref{eq:GMI-identity} and \eqref{eq:GMI-3}, for any $s$, $h$ and $\sigma^2$ we have
\begin{align}
   I(X;Y_a) & = \log\left( 1 + \frac{|\tilde h|^2 P}{\tilde \sigma^2} \right) \nonumber \\
   & \ge I_s(X;Y_a) = I_s(X;Y) 
   \label{eq:GMI-bound}
\end{align}
and equality holds if $h=\tilde h$ and $\sigma^2/s=\tilde \sigma^2$.
\end{proof} 

\begin{remark}
The rate~\eqref{eq:GMI-2} is the same as the RHS of~\eqref{eq:I-LB}.
\end{remark}

\begin{remark}
Proposition~\ref{prop:GMI} generalizes to vector models and adaptive input symbols; see Sec.~\ref{subsec:CSIT-GMI}.
\end{remark}

\begin{remark}
The estimate $\tilde h$ is the MMSE estimate of $h$:
\begin{align}
  \tilde h = \arg \min_h \E{| Y - h X |^2}
  \label{eq:h-MMSE-est}
\end{align}
and $\tilde \sigma^2$ is the variance of the error. To see this, expand
\begin{align}
  \E{| Y - h X |^2} & = \E{| (Y - \tilde h X) + (\tilde h - h) X |^2} \nonumber \\
  & = \tilde \sigma^2 + |\tilde h - h|^2 P
  \label{eq:MMSE-proof}
\end{align}
where the final step follows by the definition of $\tilde h$ in \eqref{eq:GMI-parameters-a}.
\end{remark}

\begin{remark} \label{remark:other-sigma2}
Suppose $h$ is an estimate other than \eqref{eq:h-MMSE-est}. Then if $\E{|Y|^2}>\E{\left| Y - h \,X \right|^2}$ we may choose
\begin{align}
   \sigma^2/s = |h|^2 P \cdot
   \frac{\E{\left| Y - h \,X \right|^2}}
   {\E{|Y|^2} - \E{\left| Y - h \,X \right|^2}}
   \label{eq:GMI-parameters-sigma2}
\end{align}
and the GMI \eqref{eq:GMI-3} simplifies to
\begin{align}
   I_s(X;Y) & = \log\left( \frac{\E{|Y|^2}}{\E{\left| Y - h \,X \right|^2}} \right).
   \label{eq:GMI-3h}
\end{align}
\end{remark}

\begin{remark}
The \emph{LM rate} (for ``lower bound to the mismatch capacity'') improves the GMI for some $q(y|x)$~\cite{Hui:83,Merhav-IT94}. The LM rate replaces $q(y|x)$ with $q(y|x) e^{t(x)/s}$ for some function $t(.)$ and permits optimizing $s$ and $t(.)$; see~\cite[Sec.~2.3.2]{Scarlett-FnT-20}. For example, if $p(y|x)$ has the form  $q(y|x)^s e^{t(x)}$ then the LM rate can be larger than the GMI; see~\cite{Scarlett-IT14,Kangarshahi-IT21}.
\end{remark}

%================================================
\subsection{CSIR and $K$-Partitions}
\label{subsec:GMI-generalizations}
We consider two generalizations of Proposition~\ref{prop:GMI}. The first is for channels with a state $S_R$ known at the receiver but not at the transmitter. The second expands the class of CSCG auxiliary models. The motivation is to obtain more precise models under partial CSIR, especially to better deal with channels at high SNR and with high rates. We here consider discrete $S_R$ and later extend to continuous $S_R$.

%------------------
\subsubsection{CSIR}
Consider the average GMI
\begin{align}
   I_1(X;Y | S_R) = \sum_{s_R} P_{S_R}(s_R) \, I_1(X;Y | S_R = s_R)
   \label{eq:GMI-2-avg}
\end{align}
where $I_1(X;Y | S_R = s_R)$ is the usual GMI where all densities are conditioned on $S_R = s_R$. The parameters \eqref{eq:GMI-parameters-a}--\eqref{eq:GMI-parameters-b} for the event $S_R=s_R$ are now
\begin{align}
   \tilde h(s_R) & = \frac{\E{\left. Y X^\text{*} \right| S_R = s_R}}{\E{\left. |X|^2 \right| S_R = s_R}}
   \label{eq:GMI-parameters-c1} \\
   \tilde \sigma^2 (s_R) & = \E{\left. |Y - \tilde h(s_R) \,X|^2 \right| S_R = s_R}.
   \label{eq:GMI-parameters-c2}
\end{align}
The GMI \eqref{eq:GMI-2-avg} is thus
\begin{align}
    I_1(X;Y | S_R) = \sum_{s_R} P_{S_R}(s_R) \, \log\left( 1 + \frac{|\tilde h(s_R)|^2 P}{\tilde \sigma(s_R)^2} \right).
    \label{eq:GMI-2-avg-a}
\end{align}

%------------------
\subsubsection{$K$-Partitions}
Let $\{\set Y_k: k=1,\dots,K\}$  be a $K$-partition of $\set Y$ and define the auxiliary model
\begin{align}
   q(y | x) = \frac{1}{\pi \sigma_k^2} e^{-|y - h_k x |^2 /\sigma_k^2}, \quad y \in \set Y_k.
   \label{eq:GMI-qyx-part}
\end{align}
Observe that $q(y|x)$ is not necessarily a density. We choose $X\sim\mathcal{CN}(0,P)$ so that~\eqref{eq:qchoice2a} becomes (cf.~\eqref{eq:GMI-qy})
\begin{align}
   q(y) = \frac{\pi \sigma_k^2/s}{(\pi \sigma_k^2)^s} \cdot
   \frac{\exp\left( \frac{-|y|^2}{\sigma_k^2/s + |h_k|^2P} \right)}{\pi(\sigma_k^2/s + |h_k|^2 P)}, \quad y \in \set Y_k.
   \label{eq:GMI-qy-part}
\end{align}
Define the events $\set E_k = \{Y \in \set Y_k\}$ for $k=1,\dots,K$. We have
\begin{align}
      I_s(X;Y)
      & = \sum_{k=1}^K \Pr{\set E_k} \cdot \E{ \left. \log \frac{q(Y|X)^s}{q(Y)} \right| \set E_k}
   \label{eq:GMI-2-part0}
\end{align}
and inserting~\eqref{eq:GMI-qyx-part} and~\eqref{eq:GMI-qy-part} we have the following lemma.

%---------------------------
\begin{lemma} \label{lemma:GMI-part}
The GMI \eqref{eq:GMI} for the channel $p(y|x)$, $s=1$, a CSCG input $X$ with variance $P$, and the auxiliary model \eqref{eq:GMI-qyx-part} is (see~\eqref{eq:GMI-3})
\begin{align}
   I_1(X;Y) & = \sum_{k=1}^K \Pr{\set E_k} \left[ \log\left( 1 + \frac{|h_k|^2 P}{\sigma_k^2} \right) \right. \nonumber \\
   & \quad \left. + \frac{\E{|Y|^2 | \set{E}_k}}{\sigma_k^2 + |h_k|^2P} - \frac{\E{|Y - h_k X|^2 | \set{E}_k}}{\sigma_k^2} \right] .
   \label{eq:GMI-2-part}
\end{align}
\end{lemma}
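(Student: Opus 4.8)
The plan is to reduce everything to the single-model computation already carried out in the proof of Proposition~\ref{prop:GMI}. The starting point is the decomposition~\eqref{eq:GMI-2-part0}, which is just the law of total expectation applied to the events $\set E_k=\{Y\in\set Y_k\}$. On each event the auxiliary model~\eqref{eq:GMI-qyx-part} and the marginal~\eqref{eq:GMI-qy-part} take exactly the Gaussian forms used in Proposition~\ref{prop:GMI}, but with the subscripted parameters $h_k,\sigma_k^2$. The essential observation is that for any fixed $y\in\set Y_k$ the function $x\mapsto q(y|x)^s$ equals the $k$-th Gaussian branch for \emph{every} $x$, since the branch is selected by the value of $y$ rather than by $x$; hence the defining integral~\eqref{eq:qchoice2a} reproduces precisely~\eqref{eq:GMI-qy-part} with parameters $h_k,\sigma_k^2$, and no coupling between partition cells occurs.

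Given this, I would compute the per-symbol log-ratio $\log\bigl(q(y|x)^s/q(y)\bigr)$ for $y\in\set Y_k$ exactly as in the derivation of~\eqref{eq:GMI-3}: the $\pi\sigma_k^2$ normalizations and the $s$-dependent prefactor in~\eqref{eq:GMI-qy-part} cancel against the corresponding factors in $q(y|x)^s$, leaving the same closed form as in~\eqref{eq:GMI-3} but with $h,\sigma^2$ replaced by $h_k,\sigma_k^2$. Setting $s=1$ and taking the conditional expectation given $\set E_k$ then replaces $|y-h_k x|^2$ and $|y|^2$ by $\E{|Y-h_k X|^2 | \set E_k}$ and $\E{|Y|^2 | \set E_k}$, respectively, which is exactly the bracketed expression in~\eqref{eq:GMI-2-part}. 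Weighting each term by $\Pr{\set E_k}$ and summing over $k$ completes the identity.

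The calculation is entirely routine, so the only point that genuinely requires care is confirming that the piecewise model~\eqref{eq:GMI-qyx-part} does not entangle the partition cells when forming $q(y)$. Because $q(y)$ is evaluated pointwise in $y$ and the branch index $k$ is fixed once $y$ is fixed, the integral over $x$ sees a single Gaussian kernel, so~\eqref{eq:GMI-qy-part} is correct cell by cell and the single-model algebra transfers verbatim with subscripted parameters. Note that no optimization over $s$ is needed here: the lemma asserts the value of $I_1$ at $s=1$ rather than the maximal GMI, so fixing $s=1$ after the cancellation step is all that is required.
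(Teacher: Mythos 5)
Your proposal is correct and follows essentially the same route as the paper, which simply inserts the piecewise model \eqref{eq:GMI-qyx-part} and the induced \eqref{eq:GMI-qy-part} into the decomposition \eqref{eq:GMI-2-part0} and repeats the algebra of \eqref{eq:GMI-3} cell by cell with subscripted parameters. Your explicit remark that the branch index is fixed by $y$, so the integral over $x$ in \eqref{eq:qchoice2a} sees a single Gaussian kernel and the cells do not couple, is exactly the (unstated) observation the paper relies on.
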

%---------------------------

\begin{remark} \label{remark:GMI-part1}
$K$-partitioning formally includes \eqref{eq:GMI-2-avg} as a special case by including $S_R$ as part of the receiver's ``overall'' channel output $\tilde Y = [Y,S_R]$. For example, one can partition $\tilde{\set Y}$ as $\{\tilde {\set Y}_{s_R}: s_R \in \set S_R\}$ where $\tilde{\set Y}_{s_R} = \set Y \times \{s_R\}$.
\end{remark}

\begin{remark} \label{remark:GMI-part4}
The models~\eqref{eq:I-LB-q3} and~\eqref{eq:GMI-qyx-part} suggest building receivers based on adaptive Gaussian statistics. However, we are motivated to introduce \eqref{eq:GMI-qyx-part} to prove capacity scaling results. For this purpose, we will use $K=2$ with the partition
\begin{align}
   \set E_1=\{|Y|^2 < t_R\}, \quad \set E_2=\{|Y|^2 \ge t_R\}
   \label{eq:2partition}
\end{align}
and $h_1=0$, $\sigma_1^2=1$. The GMI \eqref{eq:GMI-2-part} thus has only the $k=2$ term and it remains to choose $h_2$, $\sigma_2^2$, and $t_R$.
\end{remark}

\begin{remark} \label{remark:GMI-part6}
One can generalize Lemma~\ref{lemma:GMI-part} and partition $\set X \times \set Y$ rather than $\set Y$ only. However, the $q(y)$ in \eqref{eq:GMI-qy-part}  might not have a CSCG form.
\end{remark}

\begin{remark}
Define $P_k=\E{|X|^2 | \set{E}_k}$ and choose the LMMSE auxiliary models with
\begin{align}
    h_k & = \E{\left. Y X^\text{*} \right| \set E_k} \big/ P_k \label{eq:GMI-part-h} \\
   \sigma_k^2 & = \E{\left. |Y - h_k \,X|^2 \right| \set E_k}
   = \E{\left. |Y|^2 \right| \set E_k} - |h_k|^2 P_k
   \label{eq:GMI-part-sigma2}
\end{align}
for $k=1,\dots,K$. The expression \eqref{eq:GMI-2-part} is then
\begin{align}
   \sum_{k=1}^K \Pr{\set E_k} & \left[ \log\left( 1 + \frac{|h_k|^2 P}{\E{|Y|^2 | \set{E}_k} - |h_k|^2 P_k } \right) \right. \nonumber \\
   & \quad \left. - \; \frac{|h_k|^2 (P-P_k) }{\E{|Y|^2 | \set{E}_k}+|h_k|^2 (P-P_k)} \right] .
   \label{eq:GMI-3-part}
\end{align}
\end{remark}

\begin{remark} \label{remark:GMI-part2}
The LMMSE-based GMI \eqref{eq:GMI-3-part} reduces to the GMI of Proposition~\ref{prop:GMI} by choosing the trivial partition with $K=1$ and $\set Y_1=\set Y$. However, the GMI~\eqref{eq:GMI-3-part} may not be optimal for $K \ge 2$. What can be said is that the phase of $h_k$ in \eqref{eq:GMI-2-part} should be the same as the phase of $\E{YX^\text{*} | \set E_k}$ for all $k$. We thus have $K$ two-dimensional optimization problems, one for each pair $(|h_k|,\sigma_k^2)$, $k=1,\dots,K$.
\end{remark}

\begin{remark} \label{remark:large-K}
Suppose we choose a different auxiliary model for each $Y=y$, i.e., consider $K \rightarrow \infty$. The reverse density GMI uses the auxiliary model~\eqref{eq:I-LB-q2} which gives the RHS of~\eqref{eq:I-LB2}:
\begin{align}
    I_1(X;Y) = \int_{\mathbb C} \, p(y) \log \frac{P}{\Var{X|Y=y}} \, dy .
    \label{eq:I-LB2a}
\end{align}
Instead, the suboptimal \eqref{eq:GMI-3-part} is the complicated expression
\begin{align}
   \int_{\mathbb C} \, p(y) & \left[ \log\left( 1 + \frac{|\E{X|Y=y}|^2 (P / P_y)}{\Var{X|Y=y}} \right) \right. \nonumber \\
   & \; \left. - \; \frac{|\E{X|Y=y}|^2 (P/P_y-1) }{\Var{X|Y=y} + |\E{X|Y=y}|^2 (P/P_y)} \right] \, dy.
   \label{eq:GMI-5-part}
\end{align}
where $P_y = \E{|X|^2|Y=y}$. We show how to compute these GMIs in Appendix~\ref{appendix:c-second-order-statistics}.
\end{remark}

%================================================
\subsection{Example: On-Off Fading}
\label{subsec:GMI-oof}
Consider the channel $Y=HX+Z$ where $H, X, Z$ are mutually independent, $P_H(0)=P_H(\sqrt{2})=1/2$, and $Z \sim \mathcal{CN}(0,1)$. The channel exhibits particularly simple fading, giving basic insight into more realistic fading models. We consider two basic scenarios: full CSIR and no CSIR.

%------------------------------------------------
\subsubsection{Full CSIR}
Suppose $S_R=H$ and
\begin{align}
   q(y|x,h) = p(y|x,h) = \frac{1}{\pi \sigma^2} e^{-|y - h x |^2 /\sigma^2}
   \label{eq:GMI-qyxh-oof}
\end{align}
which corresponds to having \eqref{eq:GMI-parameters-c1}--\eqref{eq:GMI-parameters-c2} as 
\begin{align}
    \tilde h(0)=0, \quad
    \tilde h\left(\sqrt{2}\right)=\sqrt{2}, \quad 
    \tilde \sigma^2(0)= \sigma^2\left(\sqrt{2}\right)=1.
\end{align}
The GMI~\eqref{eq:GMI-2-avg-a} with $X\sim\mathcal{CN}(0,P)$ thus gives the capacity
\begin{align}
   C(P) = \frac{1}{2} \log\left( 1 + 2P \right).
   \label{eq:GMI-2-avg-oof}
\end{align}
The wideband values \eqref{eq:wideband} are
\begin{align}
   \left.\frac{E_b}{N_0}\right|_{\text{min}} = \log 2, \quad S = 1 .
   \label{eq:wideband-fullCSIR-noCSIT-oof}
\end{align}
Compared with \eqref{eq:wideband-fullCSIR-noCSIT-awgn},
the minimal $E_b/N_0$ is the same as without fading, namely $-1.59$ dB. However, fading reduces the capacity slope $S$; see the dashed curve in Fig.~\ref{fig:oof1}.

\begin{figure}[t]
      \centering
      \includegraphics[width=0.95\columnwidth]{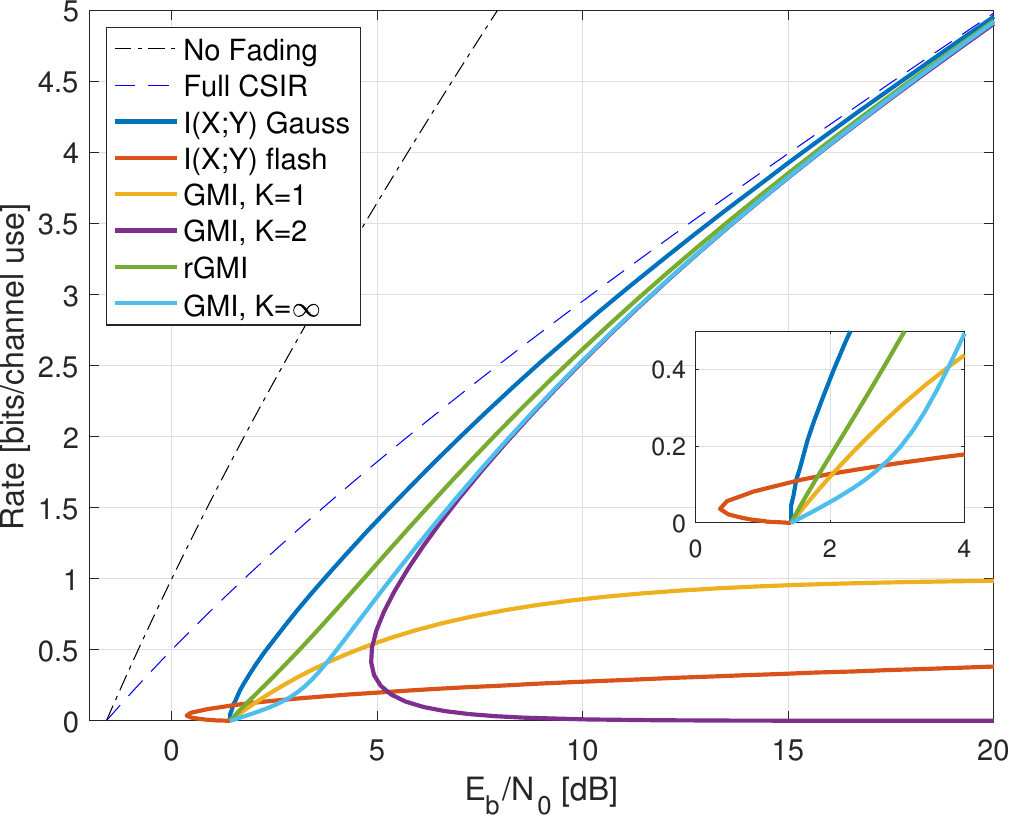}
      \caption{Rates for on-off fading with $S_R=0$. The curve "Full CSIR" refers to $S_R=H$ and is a capacity upper bound. Flash signaling uses $p=0.05$; the GMI for the $K=2$ partition uses the threshold $t_R=P^{0.4}+3$.}
      \label{fig:oof1}
\end{figure}

%------------------------------------------------
\subsubsection{No CSIR}
Suppose $S_R=0$ and $X\sim\mathcal{CN}(0,P)$ and consider the densities
\begin{align}
   & p(y|x) = \frac{e^{-|y |^2}}{2\pi} + \frac{e^{-|y - \sqrt{2} x |^2}}{2\pi}
   \label{eq:GMI-example1} \\
   & p(y) = \frac{e^{-|y |^2}}{2\pi} + \frac{e^{-|y|^2/(1+2P)}}{2\pi(1 + 2P)} .
     \label{eq:GMI-example2}
\end{align}
The mutual information can be computed by numerical integration or by Monte Carlo integration:
\begin{align}
   I(X;Y) \approx \frac{1}{N} \sum_{i=1}^N \log \frac{p_{Y|X}(y_i|x_i)}{p_Y(y_i)}
   \label{eq:MI-MC}
\end{align}
where the RHS of \eqref{eq:MI-MC} converges to $I(X;Y)$ for long strings $x^N,y^N$ sampled from $p(x,y)$. The results for $X\sim\mathcal{CN}(0,P)$ are shown in Fig.~\ref{fig:oof1} as the curve labeled ``$I(X;Y)$ Gauss''.

Next, Proposition~\ref{prop:GMI} gives $h=1/\sqrt{2}$, $\sigma^2=1+P/2$, and
\begin{align}
   I_1(X;Y) = \log\left( 1 + \frac{P}{2+P} \right).
   \label{eq:GMI-2-example}
\end{align}
The wideband values \eqref{eq:wideband} are
\begin{align}
   \left.\frac{E_b}{N_0}\right|_{\text{min}} = \log 4, \quad S = 2/3
   \label{eq:wideband-fullCSIR-noCSIT-oof2}
\end{align}
so the minimal $E_b/N_0$ is 1.42 dB and the capacity slope $S$ has decreased further. Moreover, the rate saturates at large SNR at 1 bit per channel use.

The ``$I(X;Y)$ Gauss'' curve in Fig.~\ref{fig:oof1} suggests that the no-CSIR capacity approaches the full-CSIR capacity for large SNR. To prove this, consider the $K=2$ partition specified in Remark~\ref{remark:GMI-part4} with $h_1=0$, $h_2=\sqrt{2}$, and $\sigma_2^2=1$. Since we are not using LMMSE auxiliary models, we must compute the GMI using the general expression~\eqref{eq:GMI-2-part}, which is
\begin{align}
   I_1(X;Y) = & \Pr{\set E_2} \bigg[ \log( 1 + 2 P) \nonumber \\
   & \left. + \frac{\E{|Y|^2 | \set{E}_2}}{1 + 2 P} - \E{\left|Y - \sqrt{2} X\right|^2 | \set{E}_2} \right] .
   \label{eq:GMI-2-part-2}
\end{align}
In Appendix~\ref{appendix:gmi-oof}, we show that choosing $t_R = P^{\lambda_R}+b$ where $0<\lambda_R<1$ and $b$ is a real constant makes all terms behave as desired as $P$ increases:
\begin{align}
\begin{array}{l}
   \Pr{\set E_2} \rightarrow 1/2  \vspace{0.1cm} \\
   \E{\left. |Y|^2 \right| \set E_2}/(1+2P) \rightarrow 1  \vspace{0.1cm} \\
   \E{\left. \left|Y - \sqrt{2} X \right|^2 \right| \set E_2} \rightarrow 1.
\end{array}
\label{eq:gmi-scaling-terms}
\end{align}
The GMI~\eqref{eq:GMI-2-part-2} of Lemma~\ref{lemma:GMI-part} thus gives the maximal value \eqref{eq:GMI-2-avg-oof} for large $P$:
\begin{align}
   \lim_{P \rightarrow \infty} \left[ \frac{1}{2} \log(1+2 P) - I_1(X;Y) \right] = 0 .
   \label{eq:GMI-2-example-limit}
\end{align}
Fig.~\ref{fig:oof1} shows the behavior of $I_1(X;Y)$ for $K=2$, $\lambda_R=0.4$, and $b=3$. Effectively, at large SNR, the receiver can estimate $H$ accurately, and one approaches the full-CSIR capacity.

\begin{remark} \label{remark:GMI-part0}
For on-off fading, one may compute $I(X;Y)$ directly and use the densities \eqref{eq:GMI-example1}--\eqref{eq:GMI-example2} to decode. Nevertheless, the partitioning of Lemma~\ref{lemma:GMI-part} helps prove the capacity scaling~\eqref{eq:GMI-2-example-limit}.
\end{remark}

Consider next the reverse density GMI~\eqref{eq:I-LB2a} and the forward model GMI~\eqref{eq:GMI-5-part}. Appendix~\ref{appendix:csos-1} shows how to compute $\E{X|Y=y}$, $\E{|X|^2\big|Y=y}$, and $\Var{X|Y=y}$, and Fig.~\ref{fig:oof1} plots the GMIs as the curves labeled ``rGMI'' and ``GMI, K=$\infty$'', respectively. The rGMI curve gives the best possible rates for AWGN auxiliary models, as shown in Sec.~\ref{subsec:refined-models}. The results also show that the large-$K$ GMI \eqref{eq:GMI-5-part} is worse than the $K=1$ GMI at low SNR but better than the $K=2$ GMI of Remark~\ref{remark:GMI-part4}. See Fig.~\ref{fig:oof3} below for similar results.

Finally, the curve labeled ``$I(X;Y)$ Gauss'' in Fig.~\ref{fig:oof1} suggests that the minimal $E_b/N_0$ is 1.42 dB even for the capacity-achieving distribution. However, we know from~\cite[Thm.~1]{Verdu02} that flash signaling \eqref{eq:flash-Gauss} can approach the minimal $E_b/N_0$ of $-1.59$ dB. For example, the flash rates $I(X;Y)$ with $p=0.05$ are plotted in Fig.~\ref{fig:oof1}. Unfortunately, the wideband slope is $S=0$ ~\cite[Thm.~17]{Verdu02}, and one requires very large flash powers (very small $p$) to approach $-1.59$ dB.

\begin{remark} \label{remark:caire2}
As stated in Remark~\ref{remark:caire}, the paper~\cite{Caire-WCOM18} (see also~\cite{Biglieri-Proakis-Shamai-IT98,Mezghani-Nossek-ISIT02}) derives two capacity lower bounds. These bounds are the same for our problem, and they are derived using the following steps (see~\cite[Lemmas~3 and 4]{Caire-WCOM18}):
\begin{align}
    I(X;Y) & = I(X,S_H;Y) - I(S_H;Y|X) \nonumber \\
    & \ge I(X;Y|S_H) - I(S_H;Y|X). \label{eq:caire-steps}
\end{align}
Now consider $Y = H X + Z$ where $H,X,Z$ are mutually independent, $S_H=H$, $\Var{Z}=1$, and $X\sim\mathcal{CN}(0,P)$. We have
\begin{align}
    I(X;Y|H) & \ge \E{\log( 1 +|H|^2 P )}
    \label{eq:caire-identity} \\
    I(H;Y|X) & = h(Y|X) - h(Z) \nonumber \\
    & \le \log\left(\pi e (1 + \Var{H} P)\right) - h(Z)
    \label{eq:caire-bound2}
\end{align}
where \eqref{eq:caire-identity}--\eqref{eq:caire-bound2} follow by~\eqref{eq:h-UB}, in the latter case with the roles of $X$ and $Y$ reversed.
The bound~\eqref{eq:caire-bound2} works well if $\Var{H}$ is small, as for massive MIMO with ``channel hardening''. However, for our on-off fading model, the bound \eqref{eq:caire-steps} is
\begin{align}
  I(X;Y) & \ge \E{\log\left( 1 +|H|^2 P \right)} - \log(1 + \Var{H} P) \nonumber \\
  & = \frac{1}{2} \log(1 + 2P) - \log(1 + P/2)
  \label{eq:caire}
\end{align}
which is worse than the $K=1$ and $K=\infty$ GMIs and is not shown in Fig.~\ref{fig:oof1}.
\end{remark}

%================================================
\section{Channels with CSIT}
\label{sec:CSIT}
This section studies Shannon's channel with side information, or state, known causally at the transmitter~\cite{Shannon58,Keshet-Steinberg-Merhav-08}. We begin by treating general channels and then focus mainly on complex-alphabet channels. The capacity expression has a random variable $A$ that is either a list (for discrete-alphabet states) or a function (for continuous-alphabet states). We refer to $A$ as an adaptive symbol of an adaptive codeword.

%================================================
\subsection{Model}
\label{subsec:CSIT-model}
The problem is specified by the functional dependence graph (FDG) in Fig.~\ref{fig:Shannon-iBM}. The model has a message $M$, a CSIT string $S_T^n$, and a noise string $Z^n$. The variables $M$, $S_T^n$, $Z^n$ are mutually statistically independent, and $S_T^n$ and $Z^n$ are strings of i.i.d.\ random variables with the same distributions as $S_T$ and $Z$, respectively. $S_T^n$ is available \emph{causally}  at the transmitter, i.e., the channel input $X_i$, $i=1,\ldots,n$, is a function of $M$ and the sub-string $S_T^i$. The receiver sees the channel outputs
\begin{align}
   Y_{i} = f(X_i,S_{Ti},Z_i) 
   \label{eq:Shannon-model}
\end{align}
for some function $f(.)$ and $i=1,2,\ldots,n$.

\begin{figure}[t]
      \centering
      \includegraphics[width=0.9\columnwidth]{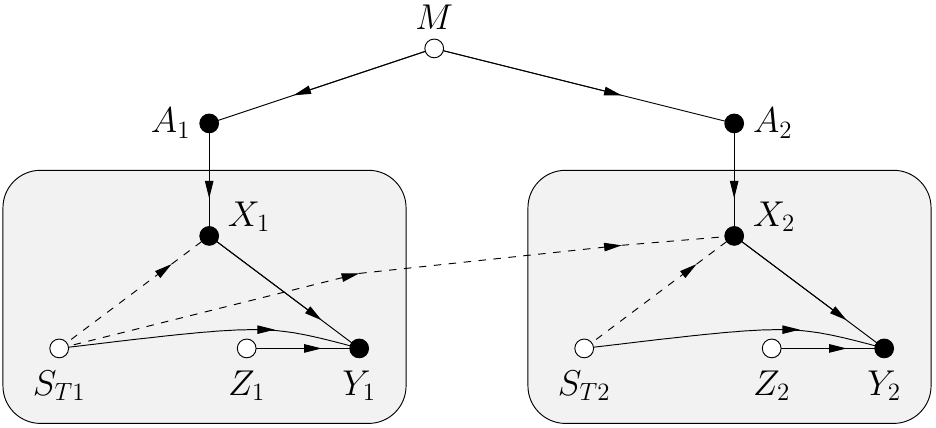}
      \caption{FDG for $n=2$ uses of a channel with CSIT. Open nodes represent statistically independent random variables, and filled nodes represent random variables that are functions of their parent variables. Dashed lines represent the CSIT
      influence on $X^n$.}
      \label{fig:Shannon-iBM}
\end{figure}

Each $A_i$ represents a list of possible choices of $X_i$ at time $i$. For example, suppose $S_T$ has alphabet $\set{S}_T=\{0,1,\ldots,\nu-1\}$ and define the adaptive symbol
\begin{align*}
   A = \big( X(0),\ldots,X(\nu-1) \big)
   %\label{eq:adaptive-codeword}
\end{align*}
whose entries have alphabet $\set{X}$. Here $S_T=s_T$ means that $X(s_T)$ is transmitted, i.e., we have $X=X(S_T)$. If $S_T$ has a continuous alphabet, we make $A$ a function rather than a list, and we may again write $X=X(S_T)$. Some authors therefore write $A$ as $X(.)$.\footnote{Shannon in~\cite{Shannon58} denoted our $A$ and $X$ as the respective $X$ and $x$.}

\begin{remark} \label{remark:conventional-codebook}
The conventional choice for $A$ if $\set X = \mathbb C$ is
\begin{align}
   A = \left( \sqrt{P(0)}\, e^{j\phi(0)}, \ldots, \sqrt{P(\nu-1)}\, e^{j\phi(\nu-1)} \right) \cdot U
   \label{eq:adaptive-codeword-conventional}
\end{align}
where $U$ has $\E{|U|^2}=1$, $P(s_T)=\E{|X(s_T)|^2}$, and $\phi(s_T)$ is a phase shift. The interpretation is that $U$ represents the symbol of a conventional codebook without CSIT, and these symbols are scaled and rotated. In other words, one separates the message-carrying $U$ from an adaptation due to $S_T$ via
\begin{align}
   X = \sqrt{P(S_T)} \, e^{j\phi(S_T)} \, U .
   \label{eq:adaptive-codeword-conventional2}
\end{align}
\end{remark}

\begin{remark}
One may define the channel by the functional relation~\eqref{eq:Shannon-model}, by $p(y|a)$, or by $p(y|x,s_T)$; see Shannon's emphasis in~\cite[Theorem]{Shannon58} (cf.~\cite[Remark~3]{Kramer14}). We generally prefer to use $p(y|a)$ since we interpret $A$ as a channel input.
\end{remark}

\begin{remark}
One can add feedback and let $X_i$ be a function of $(M,S_T^i,Y^{i-1})$, but feedback does not increase the capacity if the state and noise processes are memoryless~\cite[Sec.~V]{Kramer14}.
\end{remark}

\begin{remark} \label{remark:basic-block-fading}
The model \eqref{eq:Shannon-model} permits block fading and MIMO transmission by choosing $X_i$ and $Y_i$ as vectors~\cite{McElieceStark84,Lau-Liu-Chen-IT04}.
\end{remark}

%================================================
\subsection{Capacity}
\label{subsec:CSIT-memoryless-capacity}
The capacity of the model under study is (see~\cite{Shannon58})
\begin{align} 
   C = \max_{A} \, I(A ; Y )
   \label{eq:Shannon-cap}
\end{align}
where $A-[S_T,X]-Y$ forms a Markov chain. One may limit attention to $A$ with cardinality $|\set{A}|$ satisfying (see~\cite[Eq.~(56)]{Kramer14},~\cite{Shannon57},~\cite[Thm.~1]{Farmanbar09})
\begin{align}
  |\set{A}| \le \min\left(|\set{Y}|,1+|\set{S}_T|(|\set{X}|-1)\right).
  \label{eq:cardinality}
\end{align}
As usual, for the cost function $c(x,y)$ and the average block cost constraint
\begin{align} 
   \frac{1}{n} \sum_{i=1}^n \E{c(X_i,Y_i)} \le P \label{eq:cost}
\end{align}
the unconstrained maximization in \eqref{eq:Shannon-cap} becomes a constrained maximization over the $A$ for which $\E{c(X,Y)}\le P$. Also, a simple upper bound on the capacity is
\begin{align} 
   C(P) & \le \max_{A:\,\E{c(X,Y)}\le P} \,
   I(A ; Y, S_T) \nonumber \\
   & \overset{(a)}{=} \max_{X(S_T):\,\E{c(X(S_T),Y)}\le P} \,
   I(X ; Y | S_T)
\end{align}
where step $(a)$ follows by the independence of $A$ and $S_T$. This bound is tight if the receiver knows $S_T$.

\begin{remark} \label{remark:MLC-MSD}
The chain rule for mutual information gives
\begin{align}
    I(A;Y) & = I\left( X(0) \dots X(\nu-1) ; Y \right) \label{eq:MISO-expression} \\
    & = \sum_{s_T=0}^{\nu-1} I\left( X(s_T) ; Y | X(0),\dots,X(s_T-1) \right).
    \label{eq:chain-rule}
\end{align}
The RHS of~\eqref{eq:MISO-expression} suggests treating the channel as a multi-input, single-output (MISO) channel, and the expression~\eqref{eq:chain-rule} suggests using multi-level coding with multi-stage decoding~\cite{Wachsmann-Huber-IT99}. For example, one may use polar coded modulation~\cite{Stolte02,Arikan:09,Seidl-IT13} with Honda-Yamamoto shaping~\cite{Honda-Yamamoto-IT13,Runge-ISIT22}.
\end{remark}

\begin{remark}
For $\set X = \mathbb C$ and the conventional adaptive symbol~\eqref{eq:adaptive-codeword-conventional}, we compute $I(A;Y)=I(U;Y)$ and
\begin{align}
   C(P) = \max_{P(S_T),\phi(S_T):\,\E{c(X(S_T),Y)}\le P} \,
   I(U ; Y).
   \label{eq:Shannon-cap2}
\end{align}
\end{remark}

%================================================
\subsection{Structure of the Optimal Input Distribution}
\label{subsec:CSIT-opt}
Let $\set A$ be the alphabet of $A$ and let $\set X={\mathbb C}$, i.e., we have $\set{A}={\mathbb C}^\nu$ for discrete $S_T$. Consider the expansions
\begin{align}
    & p(y | a) = \sum_{s_T} P_{S_T}(s_T) \, p( y | x(s_T), s_T) \\
    & p(y) = \int_\set{A} p(a) \, p(y | a) \, da \nonumber \\
    & =  \sum_{s_T} P_{S_T}(s_T) \int_{\mathbb C} p(x(s_T)) \, p( y | x(s_T), s_T) \, dx(s_T).
    \label{eq:py}
\end{align}
Observe that $p(y)$, and hence $h(Y)$, depends only on the marginals $p(x(s_T))$ of $A$; see~\cite[Sec.~III]{Farmanbar09}. So define the set of densities having the same marginals as $A$:
\begin{align*}
   \set P(A) = \left\{ p(\tilde a): p(\tilde x(s_T)) = p(x(s_T)) \text{ for all } s_T \in \set S_T \right\}.
\end{align*}
This set is convex, since for any $p^{(1)}(a),p^{(2)}(a) \in \set P(A)$ and $0\le\lambda\le1$ we have
\begin{align}
   \lambda p^{(1)}(a) + (1-\lambda) p^{(2)}(a) \in \set P(A).
\end{align}
Moreover, for fixed $p(y)$, the expression $I(A;Y)$ is a convex function of $p(a|y)$, and $p(a|y)=p(a) p(y|a)/p(y)$ is a linear function of $p(a)$. Maximizing $I(A;Y)$ over $\set P(A)$ is thus the same as minimizing the concave function $h(Y|A)$ over the convex set $\set P(A)$. An optimal $p(a)$ is thus an extreme of $\set P(A)$. Some properties of such extremes are developed in~\cite{Parthasarathy-PMS07,Nadkarni-A08}.

For example, consider $|\set S_T|=2$ and $\set X = \set S_T = \{0,1\}$, for which \eqref{eq:cardinality} states that at most $|\set{A}|=3$ adaptive symbols need have positive probability (and at most $|\set{A}|=2$ adaptive symbols if $|\set Y|=2$). Suppose the marginals have $P_{X(0)}(0)=1/2$, $P_{X(1)}(0)=3/4$ and consider the matrix notation
\begin{align*}
  P_A = \begin{bmatrix} P_A(0,0) & P_A(0,1) \\ P_A(1,0)& P_A(1,1) \end{bmatrix}
\end{align*}
where we write $P_A(x_1,x_2)$ for $P_A([x_1,x_2])$. The optimal $P_A$ must then be one of the two extremes
\begin{align}
    \begin{matrix} 
   P_A = \begin{bmatrix} 1/2 & 0 \\ 1/4 & 1/4 \end{bmatrix}, \quad 
   P_A = \begin{bmatrix} 1/4 & 1/4 \\ 1/2 & 0 \end{bmatrix}.
    \end{matrix} 
\end{align}
For the first $P_A$, the codebook has the property that if $X(0)=0$ then $X(1)=0$ while if $X(0)=1$ then $X(1)$ is uniformly distributed over $\set X=\{0,1\}$.

Next, consider $|\set S_T|=2$ and marginals $P_{X(0)}$, $P_{X(1)}$ that are uniform over $\set X=\{0,1,\dots,|\set{X}|-1\}$. This case was treated in detail in~\cite[Sec.~VI.A]{Farmanbar09}, see also~\cite{Farmanbar10}, and we provide a different perspective. A classic theorem of Birkhoff~\cite{Birkhoff-46} ensures that the extremes of $\set P(A)$ are the $|\set X|!$ distributions $P_A$ for which the $|\set X| \times |\set X|$ matrix
\begin{align*}
  P_A = \begin{bmatrix}
  P_A(0,0) & \dots & P_A(0,|\set X|-1) \\
  \vdots & \ddots & \vdots \\
  P_A(|\set X|-1,0) & \dots & P_A(|\set X|-1,|\set X|-1) \end{bmatrix}.
  %\label{eq:permutation}
\end{align*}
is a permutation matrix multiplied by $1/|\set X|$. For example, for $|\set X|=2$ we have the two extremes
\begin{align}
    \begin{matrix} 
   P_A = \frac{1}{2} \begin{bmatrix} 1 & 0 \\ 0 & 1 \end{bmatrix}, \quad 
   P_A = \frac{1}{2} \begin{bmatrix} 0 & 1 \\ 1 & 0 \end{bmatrix}.
    \end{matrix} 
    \label{eq:extreme2}
\end{align}
The permutation property means that $X(s_T)$ is a function of $X(0)$, i.e., the encoding simplifies to a conventional codebook as in Remark~\ref{remark:conventional-codebook} with uniformly-distributed $U$ and a permutation $\pi_{s_T}(.)$ indexed by $s_T$ such that $X(S_T)=\pi_{S_T}(U)$. For example, for the first $P_A$ in \eqref{eq:extreme2} we may choose $X(S_T)=U$, which is independent of $S_T$. On the other hand, for the second $P_A$ in \eqref{eq:extreme2} we may choose $X(S_T)=U\oplus S_T$ where $\oplus$ denotes addition modulo-2.

For $|\set S_T|>2$, the geometry of $\set P(A)$ is more complicated; see~\cite[Sec.~VI.B]{Farmanbar09}. For example, consider $\set X=\{0,1\}$ and suppose the marginals $P_{X(s_T)}$, $s_T \in \set S_T$, are all uniform. Then the extremes include $P_A$ related to linear codes and their cosets, e.g., two extremes for $|\set S_T|=3$ are related to the repetition code and single parity check code:
\begin{align*}
    \begin{array}{l}
   P_A(a) = 1/2, \; a \in \{ [0,0,0], [1,1,1] \} \\
   P_A(a) = 1/4, \; a \in \{ [0,0,0], [0,1,1], [1,0,1], [1,1,0] \}.
    \end{array}
\end{align*}
This observation motivates concatenated coding, where the message is first encoded by an outer encoder followed by an inner code that is the coset of a linear code. The transmitter then sends the entries at position $S_T$ of the inner codewords, which are vectors of dimension $|\set S_T|$. We do not know if there are channels for which such codes are helpful.

%================================================
\subsection{Generalized Mutual Information}
\label{subsec:CSIT-GMI}
Consider the vector channel $p(\ul y | \ul x)$ with input set $\set X = \mathbb C^M$ and output set $\set Y = \mathbb C^N$. The GMI for adaptive symbols is $\max_{s\ge0} I_s(A;\ul Y)$ where
\begin{align}
   I_s(A;\ul Y) = \E{ \log \frac{q(\ul Y|A)^s}{q(\ul Y)} }
   \label{eq:AGMI}
\end{align}
and the expectation is with respect to $p(a,\ul y)$. Suppose the auxiliary model is $q(\ul y|a)$ and define
\begin{align}
   q(\ul y) = \int_\set{A} p(a) q(\ul y|a)^s \, da .
   \label{eq:AGMI-qy-scalar}
\end{align}
The GMI again provides a lower bound on the mutual information since (cf.~\eqref{eq:GMI-identity})
\begin{align}
   I_s(A;\ul Y) = I(A;\ul Y) - D\left( \left. p_{A,\ul Y} \right\| p_{\ul Y} \, q_{A|\ul Y} \right)
   \label{eq:AGMI-identity}
\end{align}
where $q(a|\ul y) = p(a) q(\ul y|a)^s/q(\ul y)$ is a reverse channel density.

We next study reverse and forward models as in Sec.~\ref{subsec:intro-aux} and Sec.~\ref{subsec:refined-models}. Suppose the entries $\ul X(s_T)$ of $A$ are jointly CSCG. 

%------------------
\subsubsection{Reverse Model}
We write $\ul A$ when we consider $A$ to be a column vector that stacks the $\ul X(s_T)$. Consider the following reverse density $q(\ul a |\ul y)$ motivated by~\eqref{eq:I-LB-q2r}:
\begin{align}
    \frac{\exp\left(- (\ul a - \E{\ul A| \ul Y = \ul y})^\dag {\bf Q}_{\ul A | \ul Y = \ul y}^{-1} (\ul a - \E{\ul A| \ul Y = \ul y}) \right)}{\pi^{\nu M} \det {\bf Q}_{\ul A | \ul Y = \ul y}}.
    \label{eq:I-LB-q2-v}
\end{align}
A corresponding forward model is $q\big(\ul y|a\big)=q\big(a|\ul y\big)/p(a)$ and the GMI with $s=1$ becomes (cf.~\eqref{eq:I-LB-vector2})
\begin{align}
   I_1(A;\ul Y) = \E{ \log \frac{\det {\bf Q}_{\ul A}}{\det {\bf Q}_{\ul A | \ul Y}} }.
   \label{eq:AGMI-r}
\end{align}
To simplify, one may focus on adaptive symbols as in~\eqref{eq:adaptive-codeword-conventional2}:
\begin{align}
   \ul X = {\bf Q}_{\ul X(S_T)}^{1/2} \cdot \ul U
   \label{eq:adaptive-codeword-conventional2-v}
\end{align}
where $\ul U \sim \mathcal{CN}(\ul 0, {\bf I})$ and the ${\bf Q}_{\ul X(s_T)}$ are covariance matrices. We thus have $I(A;\ul Y) = I(\ul U ; \ul Y)$ (cf.~\eqref{eq:Shannon-cap2}) and using~\eqref{eq:I-LB-q2-v} but with $\ul A$ replaced with $\ul U$ we obtain 
\begin{align}
   I_1(A;\ul Y) = \E{ - \log \det {\bf Q}_{\ul U | \ul Y} }.
   \label{eq:AGMI-r2}
\end{align}

%------------------
\subsubsection{Forward Model}
Perhaps the simplest forward model is $q(\ul y | a) = p(\ul y | \ul x(s_T))$ for some fixed value $s_T \in \set{S}_T$. One may interpret this model as having the receiver assume that $S_T=s_T$. A natural generalization of this idea is as follows: define the auxiliary vector
\begin{align}
   \ul{\bar X} = \sum_{s_T} {\bf W}(s_T) \, \ul X(s_T)
   \label{eq:xbar}
\end{align}
where the ${\bf W}(s_T)$ are $M \times M$ complex matrices, i.e., $\ul{\bar X}$ is a linear function of the entries of $A=[\ul X(s_T): s_T \in\set{S}_T]$. For example, the matrices might be chosen based on $P_{S_T}(.)$. However, observe that $\ul{\bar X}$ is \emph{independent} of $S_T$. Now define the auxiliary model
\begin{align*}
   q(\ul y | a) = q(\ul{y} | \ul{\bar x} )
\end{align*}
where we abuse notation by using the same $q(.)$. The expression~\eqref{eq:AGMI-qy-scalar} becomes
\begin{align}
  q(\ul y) &= \int_\set{A} p(a) \, q(\ul y | a)^s \, da
  = \int_{\mathbb C} p(\ul{\bar x}) \, q(\ul y | \ul{\bar x})^s \, d\ul{\bar x}.
  \label{eq:AGMI-qy}
\end{align}

\begin{remark}
We often consider $\set S_T$ to be a discrete set, but for CSCG channels we also consider $\set S_T=\mathbb C$ so that the sum over $\set S_T$ in~\eqref{eq:xbar} is replaced by an integral over $\mathbb C$.
\end{remark}

We now specialize further by choosing the auxiliary channel
$\ul Y_a = {\bf H}\, \ul{\bar X} + \ul Z$
where $\bf H$ is an $N\times M$ complex matrix, $\ul Z$ is an $N$-dimensional CSCG vector that is independent of $\ul{\bar X}$ and has invertible covariance matrix ${\bf Q}_{\ul Z}$, and $\bf H$ and ${\bf Q}_{\ul Z}$ are to be optimized. Further choose $A=[\ul{X}(s_T): s_T \in \set{S}_T]$ whose entries are jointly CSCG
with correlation matrices
\begin{align*}
  {\bf R}(s_{T1},s_{T2}) = \E{\ul X(s_{T1}) \ul X(s_{T2})^\dag}.
  %\label{eq:AGMI-corr}
\end{align*}
Since $\ul{\bar X}$ in~\eqref{eq:xbar} is independent of $S_T$, we have
\begin{align}
   q(\ul y | a) = \frac{\exp
   \left(- \left( \ul y - {\bf H}\, \ul{\bar x} \right)^\dag {\bf Q}_{\ul{Z}}^{-1} \left( \ul y - {\bf H}\, \ul{\bar x} \right) \right)}
   {\pi^N \det {\bf Q}_{\ul Z}} .
   \label{eq:AGMI-qya}
\end{align}
Moreover, $\ul{\bar X}$ is CSCG so $q(\ul y)$ in \eqref{eq:AGMI-qy} is
\begin{align*}
   \frac{\pi^N \det\left({\bf Q}_{\ul Z} /s \right)}{\left(\pi^N \det {\bf Q}_{\ul Z}\right)^s} \cdot
   \frac{\exp\left( - {\ul y}^\dag \left( {\bf Q}_{\ul Z}/s + {\bf H} {\bf Q}_{\ul{\bar X}} {\bf H}^\dag \right)^{-1} \ul y  \right)}
   {\pi^N \det \left( {\bf Q}_{\ul Z}/s + {\bf H} {\bf Q}_{\ul{\bar X}} {\bf H}^\dag \right) }
   %\label{eq:AGMI-qy2}
\end{align*}
where
\begin{align*}
  {\bf Q}_{\ul{\bar X}}
  & =  \sum_{s_{T1},s_{T_2}} {\bf W}(s_{T1})  {\bf R}(s_{T1},s_{T2}) {\bf W}(s_{T2})^\dag.
  %\label{eq:Qbar}
\end{align*}
We have the following generalization of Proposition~\ref{prop:GMI}.

%---------------------
\begin{lemma} \label{lemma:AGMI}
The maximum GMI \eqref{eq:AGMI} for the channel $p(\ul{y} | a)$, an adaptive vector $A=[\ul X(s_T): s_T \in \set{S}_T]$ that has jointly CSCG entries, an $\ul{\bar X}$ as in \eqref{eq:xbar} with ${\bf Q}_{\ul{\bar X}}\succ {\bf 0}$, and the auxiliary model \eqref{eq:AGMI-qya} with ${\bf Q}_{\ul Z} 
\succ {\bf 0}$ is
\begin{align}
   & I_1(A;\ul Y) = \log\det \left( {\bf I} + {\bf Q}_{\ul{\Zt}}^{-1} \, \tilde {\bf H}
   {\bf Q}_{\ul{\bar X}} \tilde {\bf H}^\dag \right)
   \label{eq:AGMI-3}
\end{align}
where (cf.~\eqref{eq:I-LB-vector-parameters})
\begin{align}
   \tilde {\bf H} & = \E{\ul Y \, \ul {\bar X}^\dag} {\bf Q}_{\ul{\bar X}}^{-1} \label{eq:AGMI-parameters-2a} \\
   {\bf Q}_{\ul{\Zt}} & = {\bf Q}_{\ul Y} - \tilde {\bf H} {\bf Q}_{\ul{\bar X}} \tilde {\bf H}^\dag.
   \label{eq:AGMI-parameters-2b}
\end{align}
The expectation is with respect to the actual channel with joint distribution/density $p(a,\ul y)$.
\end{lemma}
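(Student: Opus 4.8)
The plan is to follow the proof of Proposition~\ref{prop:GMI} almost verbatim, after first collapsing the adaptive-symbol GMI onto the CSCG surrogate input $\ul{\bar X}$. The key preliminary observation is that the auxiliary model~\eqref{eq:AGMI-qya} depends on $a$ only through $\ul{\bar x}$, and the normalizer $q(\ul y)$ in~\eqref{eq:AGMI-qy} is built from the induced CSCG marginal $p(\ul{\bar x})$. Hence the integrand in~\eqref{eq:AGMI} depends on $(a,\ul y)$ only through $(\ul{\bar x},\ul y)$, and taking the expectation with respect to $p(a,\ul y)$ and marginalizing $a$ to $\ul{\bar x}$ gives $I_s(A;\ul Y)=I_s(\ul{\bar X};\ul Y)$, the GMI for the CSCG input $\ul{\bar X}$ against the Gaussian auxiliary channel $\ul Y_a={\bf H}\,\ul{\bar X}+\ul Z$. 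This reduces Lemma~\ref{lemma:AGMI} to the vector version of Proposition~\ref{prop:GMI}.

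First I would evaluate the GMI in closed form. Inserting~\eqref{eq:AGMI-qya} and the CSCG expression for $q(\ul y)$ stated before the lemma, the $-s\log(\pi^N\det{\bf Q}_{\ul Z})$ terms cancel, and the Gaussian expectations reduce to traces via $\E{\ul v^\dag {\bf M}\ul v}=\tr{{\bf M}\,\E{\ul v\ul v^\dag}}$ for zero-mean $\ul v$. This yields the vector analogue of~\eqref{eq:GMI-3},
\begin{align*}
   I_s(\ul{\bar X};\ul Y)
   & = \log\det\left({\bf I} + ({\bf Q}_{\ul Z}/s)^{-1}{\bf H}{\bf Q}_{\ul{\bar X}}{\bf H}^\dag\right) \\
   & \quad + \tr{\left({\bf Q}_{\ul Z}/s + {\bf H}{\bf Q}_{\ul{\bar X}}{\bf H}^\dag\right)^{-1}{\bf Q}_{\ul Y}} \\
   & \quad - \tr{({\bf Q}_{\ul Z}/s)^{-1}{\bf M}}
\end{align*}
with ${\bf M}=\E{(\ul Y-{\bf H}\ul{\bar X})(\ul Y-{\bf H}\ul{\bar X})^\dag}$. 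Two structural facts then drive the argument: (i) the expression depends on $s$ and ${\bf Q}_{\ul Z}$ only through ${\bf Q}_{\ul Z}/s$, so one may set $s=1$ without loss; and (ii) expanding ${\bf M}$ shows that the dependence on the \emph{true} model is entirely through the second-order quantities ${\bf Q}_{\ul Y}$ and $\E{\ul Y\,\ul{\bar X}^\dag}$, with ${\bf Q}_{\ul{\bar X}}\succ{\bf 0}$ fixed by the input.

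For optimality I would introduce the surrogate Gaussian channel $\ul Y_a=\tilde{\bf H}\,\ul{\bar X}+\ul{\Zt}$, with $\ul{\Zt}\sim\mathcal{CN}(\ul 0,{\bf Q}_{\ul{\Zt}})$ independent of $\ul{\bar X}$ and the parameters~\eqref{eq:AGMI-parameters-2a}--\eqref{eq:AGMI-parameters-2b}. Since $\tilde{\bf H}$ is the LMMSE coefficient, $\E{\ul Y_a\,\ul{\bar X}^\dag}=\tilde{\bf H}{\bf Q}_{\ul{\bar X}}=\E{\ul Y\,\ul{\bar X}^\dag}$, and ${\bf Q}_{\ul Y_a}=\tilde{\bf H}{\bf Q}_{\ul{\bar X}}\tilde{\bf H}^\dag+{\bf Q}_{\ul{\Zt}}={\bf Q}_{\ul Y}$; thus $\ul Y_a$ matches $\ul Y$ in exactly the statistics the GMI sees, giving $I_s(\ul{\bar X};\ul Y)=I_s(\ul{\bar X};\ul Y_a)$ for every $s,{\bf H},{\bf Q}_{\ul Z}$. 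Applying the GMI identity~\eqref{eq:AGMI-identity} to the channel $\ul Y_a$ gives $I_s(\ul{\bar X};\ul Y_a)\le I(\ul{\bar X};\ul Y_a)$, and because $\ul Y_a$ is genuinely Gaussian the auxiliary model~\eqref{eq:AGMI-qya} with $s=1$, ${\bf H}=\tilde{\bf H}$, ${\bf Q}_{\ul Z}={\bf Q}_{\ul{\Zt}}$ is the matched density and attains equality. The Gaussian mutual information $I(\ul{\bar X};\ul Y_a)=\log\det\left({\bf I}+{\bf Q}_{\ul{\Zt}}^{-1}\tilde{\bf H}{\bf Q}_{\ul{\bar X}}\tilde{\bf H}^\dag\right)$ is then exactly~\eqref{eq:AGMI-3}.

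The closed-form evaluation and the bookkeeping of the $s$ factors are routine but require care; the conceptual crux, as in Proposition~\ref{prop:GMI}, is recognizing that the GMI depends on the true channel only through its second-order statistics, which licenses replacing $\ul Y$ by the matched Gaussian surrogate $\ul Y_a$. The only point specific to the adaptive setting is the reduction $I_s(A;\ul Y)=I_s(\ul{\bar X};\ul Y)$, together with the assumptions ${\bf Q}_{\ul{\bar X}}\succ{\bf 0}$ and ${\bf Q}_{\ul Z}\succ{\bf 0}$ that keep the LMMSE parameters and all determinants well defined.
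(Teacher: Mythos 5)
Your proposal is correct and follows essentially the same route as the paper's proof in Appendix~\ref{appendix:proof-lemma-AGMI}: evaluate the GMI in closed form, absorb $s$ into ${\bf Q}_{\ul Z}/s$, introduce the matched surrogate channel $\ul Y_a=\tilde{\bf H}\,\ul{\bar X}+\ul{\Zt}$ whose second-order statistics agree with those of $\ul Y$, and invoke the identity~\eqref{eq:AGMI-identity} to show the matched choice attains the Gaussian mutual information. Your explicit up-front reduction $I_s(A;\ul Y)=I_s(\ul{\bar X};\ul Y)$ is a cosmetic reorganization of what the paper does implicitly (and notes in the remark following the lemma), not a different argument.
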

%---------------------
\begin{proof}
See Appendix~\ref{appendix:proof-lemma-AGMI}.
\end{proof}

\begin{remark}
Since $\ul{\bar X}$ is a function of $A$, the rate~\eqref{eq:AGMI-3} can alternatively be derived by using $I(A ; \ul{Y}) \ge I(\ul {\bar X} ; \ul{Y})$ and applying the bound \eqref{eq:I-LB-vector} with $\ul X$ replaced with $\ul {\bar X}$.
\end{remark}

\begin{remark}
The estimate $\tilde {\bf H}$ is the MMSE estimate of $\bf H$:
\begin{align}
  \tilde {\bf H} = \arg \min_{\bf H} \E{\| \ul Y - {\bf H} \ul{\bar X} \|^2}
  \label{eq:h-MMSE-est2}
\end{align}
and ${\bf Q}_{\tilde{\ul Z}}$ is the resulting covariance matrix of the error. To see this, expand (cf.~\eqref{eq:MMSE-proof})
\begin{align}
  & \E{\| \ul Y - {\bf H} \ul{\bar X} \|^2} = \E{\| (\ul Y - \tilde {\bf H} \ul{\bar X}) + (\tilde {\bf H} - {\bf H}) \ul{\bar X} \|^2} \nonumber \\
  & = \E{\| \ul Y - \tilde {\bf H} \ul{\bar X} \|^2} + \tr{ (\tilde {\bf H} - {\bf H}) {\bf Q}_{\ul{\bar X}} (\tilde {\bf H} - {\bf H})^\dagger }
\end{align}
where the final step follows by the definition of $\tilde {\bf H}$ in \eqref{eq:AGMI-parameters-2a}.
\end{remark}

\begin{remark} \label{remark:other-H}
Suppose ${\bf H}$ is an estimate other than \eqref{eq:h-MMSE-est2}. Generalizing~\eqref{eq:GMI-parameters-sigma2}, if ${\bf Q}_{\ul Y} \succ {\bf Q}_{\ul{\bar Z}}$ we may choose
\begin{align}
   {\bf Q}_{\ul{Z}}\big/s & = \left( {\bf H} {\bf Q}_{\ul{\bar X}} {\bf H}^\dag \right)^{1/2}
   \left( {\bf Q}_{\ul Y}
    - {\bf Q}_{\ul{\bar Z}} \right)^{-1/2} {\bf Q}_{\ul{\bar Z}} \nonumber \\
   & \quad \cdot \left( {\bf Q}_{\ul Y} 
    - {\bf Q}_{\ul{\bar Z}} \right)^{-1/2}
   \left( {\bf H} {\bf Q}_{\ul{\bar X}} {\bf H}^\dag \right)^{1/2} 
   \label{eq:AGMI-matrix-parameters-part}
\end{align}
where
\begin{align}
   {\bf Q}_{\ul{\bar Z}} & = \E{\left( \ul Y - {\bf H} \ul{\bar X} \right) \left( \ul Y - {\bf H} \ul{\bar X} \right)^\dag }.
   \label{eq:AGMI-matrix-parameters-part2}
\end{align}
Appendix~\ref{appendix:proof-lemma-AGMI} shows that
\eqref{eq:AGMI} then simplifies to (cf.~\eqref{eq:GMI-3h})
\begin{align}
   I_s(A;\ul Y) = \log\det \left( 
   {\bf Q}_{\ul{\bar Z}}^{-1} {\bf Q}_{\ul Y}
   \right) .
   \label{eq:AGMI-matrix-3-part}
\end{align}
\end{remark}

\begin{remark} \label{remark:xbar-scaling}
The GMI \eqref{eq:AGMI-3} does not depend on the scaling of $\ul{\bar X}$ since this is absorbed in $\tilde {\bf H}$. For example, one can choose the weighting matrices in \eqref{eq:xbar} so that $\E{\|\ul{\bar X}\|^2}=P$.
\end{remark}

%================================================
\subsection{Optimal Codebooks for CSCG Forward Models}
\label{subsec:CSIT-codebook-structure-GMI}
The following Lemma maximizes the GMI for scalar channels and $A$ with CSCG entries without requiring $A$ to have the form~\eqref{eq:adaptive-codeword-conventional2}. Nevertheless, this form is optimal, and we refer to~\cite[p.~2013]{Caire-Shamai-IT99} and Sec.~\ref{subsec:Full-CSIR-part-CSIT} for similar results. In the following, let $U(s_T)\sim\mathcal{CN}(0,1)$ for all $s_T$.

%----------------------
\begin{lemma} \label{lemma:AGMI-max}
The maximum GMI \eqref{eq:AGMI} for the channel $p(y|a)$, an adaptive symbol $A$ with jointly CSCG entries, the forward model \eqref{eq:AGMI-qya}, and with fixed $P(s_T)=\E{|X(s_T)|^2}$ is
\begin{align}
   I_1(A;Y) = \log\left( 1 + \frac{\tilde P}{\E{|Y|^2} - \tilde P} \right)
   \label{eq:AGMI-2}
\end{align}
where, writing $X(s_T)=\sqrt{P(s_T)} \, U(s_T)$ for all $s_T$, we have
\begin{align}
  \tilde P & = \E{ \, \left| \E{\left. Y U(S_T)^\text{*} \right| S_T} \right| \, }^2 .
  \label{eq:AGMI-P}
\end{align}
This GMI is achieved by choosing fully-correlated\,\footnote{If $P(s_T)=0$, then the correlation coefficients involving $X(s_T)$ are undefined. However, as long as all $X(s_T)$ with $P(s_T)>0$ are fully correlated we say that all symbols are ``fully correlated''.} symbols:
\begin{align}
   X(s_T) = \sqrt{P(s_T)} \, e^{j\phi(s_T)} \, U
   \label{eq:barX-max}
\end{align}
and $\bar X=c\, U$ for some non-zero constant $c$ and a common $U\sim\mathcal{CN}(0,1)$,
and where
\begin{align}
   \phi(s_T) = - \arg\left( \E{\left. Y \, U(s_T)^\text{*} \right| S_T=s_T} \right).
   \label{eq:phi-sT}
\end{align}
\end{lemma}
%----------------------
\begin{proof}
See Appendix~\ref{appendix:proof-lemma-AGMI-max}.
\end{proof} 

\begin{remark}
The expression \eqref{eq:AGMI-P} is based on~\eqref{eq:AGMI-parameters-RkF} in Appendix~\ref{appendix:proof-lemma-AGMI-max} and can alternatively be written as $\tilde P = \big|\tilde h\big|^2 \bar P$ where
\begin{align*}
  \tilde h & = \E{ Y \bar X^\text{*} } \big/ \bar P.
  %\label{eq:AGMI-ht}
\end{align*}
\end{remark}

\begin{remark}
The power levels $P(s_T)$ may be optimized, usually under a constraint such as $\E{P(S_T)} \le P$. 
\end{remark}
 
\begin{remark}
By the Cauchy-Schwarz inequality, we have
\begin{align*}
  \E{ \, \left| \E{\left. Y U(S_T)^\text{*} \right| S_T} \right| \, }^2 \le \E{|Y|^2} .
\end{align*}
Furthermore, equality holds if and only if $\big|Y U(s_T)^\text{*}\big|$ is a constant for each $s_T$, but this case is not interesting.
\end{remark}

%================================================
\subsection{Forward Model GMI for MIMO Channels}
\label{subsec:CSIT-GMI-MIMO}
The following lemma generalizes Lemma~\ref{lemma:AGMI-max} to MIMO channels without claiming a closed-form expression for the optimal GMI. Let $\ul U(s_T)\sim \mathcal{CN}(\ul 0,{\bf I})$ for all $s_T$. 

%----------------------
\begin{lemma} \label{lemma:AGMI-max-MIMO}
A GMI \eqref{eq:AGMI} for the channel $p(\ul y | a)$, an adaptive vector $A$ with jointly CSCG entries, the auxiliary model~\eqref{eq:AGMI-qya}, and with fixed ${\bf Q}_{\ul X(s_T)}$ is given by \eqref{eq:AGMI-3} that we write as
\begin{align}
   I_1(A;\ul Y) = \log \left( \frac{\det{\bf Q}_{\ul Y}}
   {\det\left( {\bf Q}_{\ul Y} - \tilde {\bf D} \, \tilde {\bf D}^\dag \right)} \right).
   \label{eq:AGMI-3-I1}
\end{align}
where for $M \times M$ unitary ${\bf V}_R(s_T)$ we have
\begin{align}
  \tilde {\bf D} = 
  \E{ {\bf U}_T(S_T) \, {\bf \Sigma}(S_T) \, {\bf V}_R(S_T)^\dag}
   \label{eq:AGMI-2-MIMO}
\end{align}
and ${\bf U}_T(s_T)$ and ${\bf \Sigma}(s_T)$ are $N\times N$ unitary and $N\times M$ rectangular diagonal matrices, respectively, of the SVD
\begin{align}
   \E{\left. \ul Y \, {\ul U(s_T)}^\dag \right| S_T=s_T} =  {\bf U}_T(s_T) \, {\bf \Sigma}(s_T) \, {\bf V}_T(s_T)^\dag
   \label{eq:CSIT-SVD-decomposition}
\end{align}
for all $s_T$, and the ${\bf V}_T(s_T)$ are $M\times M$ unitary matrices. The GMI \eqref{eq:AGMI-3-I1} is achieved by choosing the symbols (cf.~\eqref{eq:barX-max} and \eqref{eq:UsT-max-vec} below):
\begin{align}
   \ul X(s_T) = {\bf Q}_{\ul X(s_T)}^{1/2} \, {\bf V}_T(s_T) \, \ul U
   \label{eq:barX-max-vec}
\end{align}
and $\ul{\bar X} = {\bf C} \, \ul U$ for some invertible $M\times M$ matrix $\bf C$ and a common $M$-dimensional vector $\ul U \sim\mathcal{CN}(\ul 0,{\bf I})$. One may maximize \eqref{eq:AGMI-3-I1} over the unitary ${\bf V}_R(s_T)$. 
\end{lemma}
%----------------------
\begin{proof}
See Appendix~\ref{appendix:proof-lemma-AGMI-max-vec}.
\end{proof}

Using Lemma~\ref{lemma:AGMI-max-MIMO}, the theory for MISO channels with $N=1$ is similar to the scalar case of Lemma~\ref{lemma:AGMI-max}; see Remark~\ref{remark:AGMI-N1} below. However, optimizing the GMI is more difficult for $N>1$ because one must optimize over the unitary matrices ${\bf V}_R(s_T)$ in~\eqref{eq:AGMI-2-MIMO}; see Remark~\ref{remark:AGMI-M1} below.

\begin{remark} \label{remark:AGMI-N1}
Consider $N=1$ in which case one may set ${\bf U}_T(s_T)=1$ and \eqref{eq:CSIT-SVD-decomposition} is
a $1 \times M$ vector where ${\bf \Sigma}(s_T)$ has as the only non-zero singular value
\begin{align}
   \sigma(s_T)
   & = \left\| \E{\left. Y \, \ul U(s_T)^\dag \right| S_T=s_T} \right\| \nonumber \\
   & = \left( \sum_{m=1}^M \left| \E{\left. Y \, U_m(s_T)^\text{*} \right| S_T=s_T} \right|^2 \right)^{1/2}.
\end{align}
The absolute value of the scalar~\eqref{eq:AGMI-2-MIMO} is maximized by choosing ${\bf V}_R(s_T)={\bf I}$ for all $s_T$ to obtain (cf.~\eqref{eq:AGMI-P})
\begin{align}
   \tilde {\bf D} \, \tilde {\bf D}^\dag = \E{\sigma(S_T)}^2 . 
\end{align}
\end{remark}

\begin{remark} \label{remark:AGMI-M1}
Consider $M=1$ in which case one may set ${\bf V}_T(s_T)=1$ and \eqref{eq:CSIT-SVD-decomposition} is a $N \times 1$ vector where ${\bf \Sigma}(s_T)$ has as the only non-zero singular value
\begin{align}
    \sigma(s_T)
    & = \left\| \E{\left. \ul Y \, U(s_T)^\dag \right| S_T=s_T} \right\| \nonumber \\
    & = \left( \sum_{n=1}^N \left| \E{\left. Y_n \, U(s_T)^\text{*} \right| S_T=s_T} \right|^2 \right)^{1/2}.
    \label{eq:Lemma-AGMI-Dtilde2-N1}
\end{align}
We should now find the $V_R(s_T)=e^{j\phi_R(s_T)}$ that minimize the determinant in the denominator of \eqref{eq:AGMI-3-I1} where (see~\eqref{eq:AGMI-2-MIMO})
\begin{align}
   \tilde {\bf D}
   = \E{\ul u_T(S_T) \, \sigma(S_T) \, e^{-j\phi_R(S_T)}}
   \label{eq:Lemma-AGMI-Dtilde2-M1}
\end{align}
and where each $\ul u_T(s_T)$ is one of the columns of the $N\times N$ unitary matrix ${\bf U}_T(s_T)$.
\end{remark}

\begin{remark} \label{remark:parallel-H}
Consider $M=N$ and the product channel
\begin{align}
    p(\ul y | a)
    = \prod_{m=1}^M p\big( y_m \,\big|\, [ x_m(s_T): s_T \in \set S_T ] \big)
    \label{eq:product-channel}
\end{align}
where $x_m(s_T)$ is the $m$'th entry of $\ul x(s_T)$. We choose ${\bf Q}_{\ul X(s_T)}$ as diagonal with diagonal entries $\sqrt{P_m(s_T)}$, $m=1,\dots,M$. Also choosing ${\bf V}_R(s_T)={\bf I}$ makes the matrix $\tilde {\bf D} \, \tilde {\bf D}^\dag$ diagonal with the diagonal entries (cf.~\eqref{eq:AGMI-P} where $M=N=1$)
\begin{align}
   \left(\sum_{s_T} P_{S_T}(s_T) \left| \E{\left. Y_m U_m(s_T)^\text{*} \right| S_T=s_T} \right| \right)^2
\end{align}
for $m=1,\dots,M$. The GMI \eqref{eq:AGMI-3-I1} is thus (cf.~\eqref{eq:AGMI-2})
\begin{align}
   \sum_{m=1}^M \log \left( \frac{\E{|Y_m|^2}}
   {\E{|Y_m|^2} - \E{\big| \E{\left. Y_m U_m(S_T)^\text{*} \right| S_T} \big|}^2} \right) .
   \label{eq:CSIT-AGMI-3-product}
\end{align}
\end{remark}

\begin{remark} \label{remark:parallel-q}
For general $p(\ul y | a)$, one might wish to choose diagonal ${\bf Q}_{\ul X(s_T)}$ and a product model
\begin{align*}
    q(\ul y | a)
    = \prod_{m=1}^M q_m( y_m | \bar x_m )
    %\label{eq:aux-product-channel}
\end{align*}
where the $q_m(.)$ are scalar AWGN channels
\begin{align*}
    q_m(y| x) = \frac{1}{\pi \sigma_m^2} \exp\left( -|y - h_m \, x |^2 / \sigma_m^2 \right)
\end{align*}
with possibly different $h_m$ and $\sigma_m^2$ for each $m$. Consider also
\begin{align*}
    \bar X_m = \sum_{s_T} w_m(s_T) X_m(s_T)
\end{align*}
for some complex weights $w_m(s_T)$, i.e., $\bar X_m$ is a weighted sum of entries from the list $[X_m(s_T):s_T\in\set S_T]$. The maximum GMI is now the same as~\eqref{eq:CSIT-AGMI-3-product} but without requiring the actual channel to have the form~\eqref{eq:product-channel}.
\end{remark}

\begin{remark} \label{remark:H-Tx}
If the actual channel is $\ul Y = {\bf H}\, \ul X + \ul Z$ then
\begin{align}
  \E{\ul Y \, \ul U(s_T)^\dag | S_T = s_T}
  & = \E{{\bf H} \, \ul X(s_T) \, \ul U(s_T)^\dag | S_T = s_T} \nonumber \\
  & = \E{{\bf H} | S_T = s_T} \, {\bf Q}_{\ul X(s_T) }^{1/2}
  \label{eq:appendix-EYU}
\end{align}
where the final step follows because $\ul U(S_T) - S_T - {\bf H}$ forms a Markov chain. 
The expression \eqref{eq:appendix-EYU} is useful because it separates the effects of the channel and the transmitter.
\end{remark}

\begin{remark}
Combining Remarks~\ref{remark:parallel-H} and~\ref{remark:H-Tx}, suppose the actual channel is $\ul Y = {\bf H}\, \ul X + \ul Z$ with $M=N$ and where $\bf H$ is diagonal with diagonal entries $H_m$, $m=1,\dots,M$. The GMI \eqref{eq:AGMI-3-I1} is then (cf.~\eqref{eq:CSIT-AGMI-3-product})
\begin{align}
   %& I_1(A;\ul Y) \nonumber \\ & =
   \sum_{m=1}^M \log \left( \frac{\E{|Y_m|^2}}
   {\E{|Y_m|^2} - \E{\left| \E{\left. H_m \sqrt{P_m(S_T)} \right| S_T} \right|}^2} \right)
   \label{eq:CSIT-AGMI-3-vector-linear}
\end{align}
where $\E{|Y_m|^2}=1 + \E{|H_m|^2 P_m(S_T)}$.
\end{remark}

%================================================
%================================================
\section{Channels with CSIR and CSIT}
\label{sec:CSIRT}
Shannon's model includes CSIR~\cite{McElieceStark84}. The FDG is shown in Fig.~\ref{fig:Salehi-iBM} where there is a hidden state $S_H$, the CSIR $S_R$ and CSIT $S_T$ are functions\footnote{By defining $S_H=[S_{H1},Z_H]$ and calling $S_{H1}$ the hidden channel state we can include the case where $S_R$ and $S_T$ are noisy functions of $S_{H1}$.} of $S_H$, and the receiver sees the channel outputs
\begin{align}
   [Y_i,S_{Ri}] = [f(X_i,S_{Hi},Z_i),S_{Ri}] \label{eq:Shannon-model2}
\end{align}
for some function $f(.)$ and $i=1,2,\ldots,n$. As before, $M$, $S_H^n$, $Z^n$ are mutually statistically independent, and $S_H^n$ and $Z^n$ are i.i.d. strings of random variables with the same distributions as $S_T$ and $Z$, respectively. Observe that we have changed the notation by writing $Y$ for only part of the channel output. The new $Y$ (without the $S_R$) is usually called the ``channel output''. 

\begin{figure}[t]
      \centering
      \includegraphics[width=0.9\columnwidth]{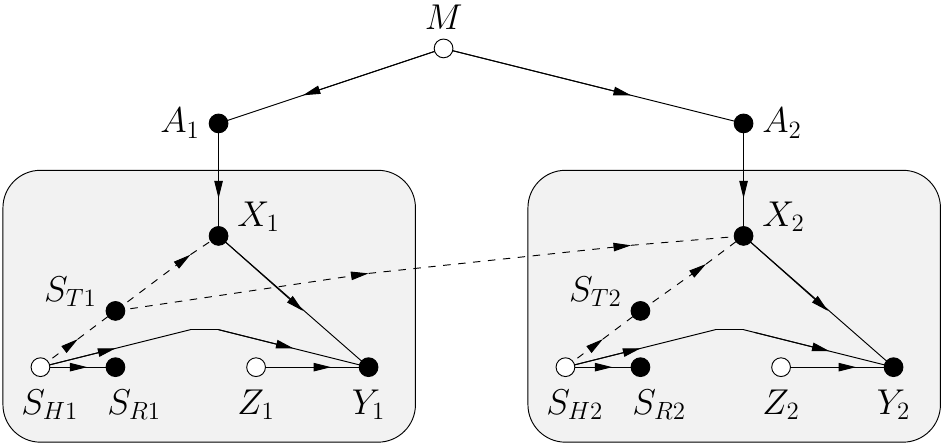}
      \caption{FDG for $n=2$ channel uses with different CSIT and CSIR.
      The hidden channel state $S_{Hi}$ permits dependent $S_{Ri}$ and $S_{Ti}$.}
      \label{fig:Salehi-iBM}
\end{figure}

%================================================
\subsection{Capacity and GMI}
\label{subsec:CSIRT-capacity}
We begin with scalar channels for which \eqref{eq:Shannon-cap} is
\begin{align} 
   C = \max_{A} \, I(A ; Y, S_R) = \max_{A} \, I(A ; Y | S_R)
   \label{eq:Shannon-cap-SR}
\end{align}
where $A$ and $S_R$ are independent.

%------------------
\subsubsection{Reverse Model}
The expression~\eqref{eq:AGMI-r2} with the adaptive symbol~\eqref{eq:adaptive-codeword-conventional} is
\begin{align}
    I_1(A;Y,S_R) = \E{ - \log \Var{U | Y,S_R} }.
   \label{eq:AGMI-r3}
\end{align}

%------------------
\subsubsection{Forward Model}
Consider the expansion
\begin{align}
   I_1(A;Y | S_R) = \int_{\set{S}_R} p(s_R) \, I_1(A;Y | S_R = s_R) \, ds_R
   \label{eq:AGMI-SR0}
\end{align}
where $I_1(A;Y | S_R = s_R)$ is the GMI \eqref{eq:AGMI} with all densities conditioned on $S_R = s_R$. We choose the forward model
\begin{align}
   q(y | a,s_R) = \frac{1}{\pi \sigma(s_R)^2} \exp\left(-\frac{|y - h(s_R)\, \bar x(s_R) |^2}{\sigma(s_R)^2} \right) .
   \label{eq:AGMI-qyasr}
\end{align}
where similar to \eqref{eq:xbar} we define
\begin{align}
   \bar X(s_R) = \sum_{s_T} w(s_T,s_R) \, X(s_T)
   \label{eq:xbar-sR}
\end{align}
for complex weights $w(s_T,s_R)$, i.e.,
$\bar X(s_R)$ is a weighted sum of entries from the list $A=[X(s_T): s_T \in\set{S}_T]$. We have the following straightforward generalization of Lemma~\ref{lemma:AGMI-max}.

%----------------------
\begin{theorem}
\label{theorem:AGMI-SR-max}
The maximum GMI \eqref{eq:AGMI-SR0} for the channel $p(y|a,s_R)$, an adaptive symbol $A$ with jointly CSCG entries, the model \eqref{eq:AGMI-qyasr}, and with fixed $P(s_T)=\E{|X(s_T)|^2}$ is
\begin{align}
   I_1(A;Y|S_R) = \E{ \log\left( 1 + \frac{\tilde P(S_R)}{\E{|Y|^2|S_R} - \tilde P(S_R)} \right) }
   \label{eq:AGMI-SR}
\end{align}
where for all $s_R \in \set{S}_R$ we have
\begin{align}
  \tilde P(s_R) & = \E{\,\left| \E{\left. Y U(S_T)^\text{*} \right| S_T, S_R=s_R} \right| \,}^2 .
  \label{eq:AGMI-P2}
\end{align}
\end{theorem}
%----------------------

\begin{remark} \label{remark:xbar-scaling2}
To establish Theorem~\ref{theorem:AGMI-SR-max}, the receiver may choose $\bar X = \sqrt{P} \, U$ to be independent of $s_R$. Alternatively, the receiver may choose $\bar X(s_R) = \sqrt{\E{|X|^2 | S_R=s_R}} \, U$. Both choices give the same GMI since the expectation in~\eqref{eq:AGMI-P2} does not depend on the scaling of $\bar X$; see Remark~\ref{remark:xbar-scaling}.
\end{remark}

\begin{remark} \label{remark:AGMI-part3}
The partition idea of Lemmas~\ref{lemma:GMI-part} and~\ref{lemma:AGMI-part} carries over to Theorem~\ref{theorem:AGMI-SR-max}. We may generalize \eqref{eq:AGMI-SR} as
\begin{align}
   & I_1(A;Y|S_R) = \int_{\set S_R} p(s_R) \sum_{k=1}^K \Pr{\set E_k|S_R=s_R} \nonumber \\
   & \left[ \log\left( 1 + \frac{|h_k(s_R)|^2 P}{\sigma_k^2(s_R)} \right) 
   + \frac{\E{\left. |Y|^2 \right| \set{E}_k, S_R=s_R}}{\sigma_k^2(s_R) + |h_k(s_R)|^2 P} \right. \nonumber \\
   & \left. \quad - \frac{\E{ \left. |Y - h_k(s_R) \sqrt{P} \, U|^2 \right| \set{E}_k, S_R=s_R} }{\sigma_k^2(s_R)} \right] \, ds_R
   \label{eq:AGMI-SR-part2}
\end{align}
where the $X(s_T)$, $s_T \in \set S_T$, are given by \eqref{eq:barX-max} and the $h_k(s_R)$ and $\sigma_k^2(s_R)$, $k=1,\dots,K$, $s_R \in \set S_R$, can be optimized.
\end{remark}

\begin{remark}
One is usually interested in the optimal power control policy $P(s_T)$ under the constraint $\E{P(S_T)}\le P$.  Taking the derivative of \eqref{eq:AGMI-SR} with respect to $\sqrt{P(s_T)}$ and setting to zero we obtain
\begin{align}
  & \E{ \frac{ \E{|Y|^2 | S_R} \tilde P(S_R)' %\frac{d \tilde P(S_R)}{d\sqrt{P(s_T)}}
  - \tilde P(S_R) \, \E{|Y|^2|S_R}' %\frac{d\E{|Y|^2|S_R}}{d\sqrt{P(s_T)}}
  }{ \E{|Y|^2 | S_R} \left[ \E{|Y|^2 | S_R} - \tilde P(S_R)\right]} } \nonumber \\
  & = 2 \lambda \sqrt{P(s_T)} P_{S_T}(s_T)
  \label{eq:P-policy-deriv}
\end{align}
where $\tilde P(S_R)'$ and $\E{|Y|^2 | S_R}'$ are derivatives with respect to $\sqrt{P(s_T)}$. We use \eqref{eq:P-policy-deriv} below to derive power control policies.
\end{remark}

\begin{remark}
A related model is a \emph{compound} channel where $p(y | a, s_R)$ is indexed by the parameter $s_R$~\cite[Ch.~4]{Wolfowitz64}. The problem is to find the maximum worst-case reliable rate if the transmitter does not know $s_R$. Alternatively, the transmitter must send its message to all $|\set{S}_R|$ receivers indexed by $s_R\in\set{S}_R$. A compound channel may thus be interpreted as a broadcast channel with a common message.
\end{remark}

%================================================
\subsection{CSIT\at R}
\label{subsec:CSITatR}
An interesting specialization of Shannon's model is when the receiver knows $S_T$ and can determine $X(S_T)$. We refer to this scenario as CSIT\at R. The model was considered in~\cite[Sec.~II]{Caire-Shamai-IT99} when $S_T$ is a function of $S_R$. More generally, suppose $S_T$ is a function of $[Y,S_R]$. The capacity \eqref{eq:Shannon-cap-SR} then simplifies to
(see \cite[Prop.~1]{Caire-Shamai-IT99})
\begin{align} 
   C & \overset{(a)}{=} \max_{A} \, I(A ; Y, S_T | S_R) \nonumber \\
   & \overset{(b)}{=} \max_{A} \, I(X ; Y | S_R, S_T) \nonumber \\
   & \overset{(c)}{=} \sum_{s_T} P_{S_T}(s_T) \left[ \max_{X(s_T)} \, I(X(s_T) ; Y | S_R, S_T=s_T) \right]
   \label{eq:Shannon-cap1}
\end{align}
where step $(a)$ follows because $S_T$ is a function of $[Y,S_R]$; step $(b)$ follows because $I(A;S_T|S_R)=0$, $X$ is a function of $[A,S_T]$, and $A-[S_T,X]-Y$ forms a Markov chain; and step $(c)$ follows because one may optimize $X(s_T)$ separately for each $s_T \in \set{S}_T$. 
 
As discussed in~\cite{Caire-Shamai-IT99}, a practical motivation for this model is when the CSIT is based on error-free feedback from the receiver to the transmitter. In this case, where $S_T$ is a function of $S_R$, the expression \eqref{eq:AGMI-P2} becomes
\begin{align}
  \tilde P(s_R) & = \left| \E{\left. Y U(s_T)^\text{*} \; \right| S_R=s_R \, } \right|^2 .
  \label{eq:AGMI-P2-CS}
\end{align}

\begin{remark} \label{remark:CSITatR}
The insight that one can replace adaptive symbols $A$ with channel inputs $X$ when $X$ is a function of $A$ and past $Y$ appeared for two-way channels in~\cite[Sec.~4.2.3]{Kramer98} and networks in~\cite[Sec.~V.A]{Kramer14},~\cite[Sec.~IV.F]{Kramer03}.
\end{remark}

%================================================
\subsection{MIMO Channels and $K$-Partitions}
\label{subsec:GMI-generalizations2}
We consider generalizations to MIMO channels and $K$-partitions as in Sec.~\ref{subsec:GMI-generalizations}.

%------------------
\subsubsection{MIMO Channels}
Consider the average GMI
\begin{align}
   I_1(A; \ul Y | S_R) = \int_{\set{S}_R} p(s_R) I_1(A; \ul Y | S_R = s_R) \, ds_R
   \label{eq:GMI-2-avg2}
\end{align}
and choose the parameters \eqref{eq:AGMI-parameters-2a}--\eqref{eq:AGMI-parameters-2b} for the event $S_R=s_R$. We have
\begin{align}
   \tilde {\bf H}(s_R) & = \E{\left. \ul Y \, \ul {\bar X}^\dag \right| S_R = s_R} \E{\left. \ul {\bar X} \, \ul {\bar X}^\dag  \right| S_R = s_R}^{-1}
   \label{eq:GMI-parameters-c1-2} \\
   {\bf Q}_{\ul{\Zt}}(s_R) & = \E{\left. \ul Y \, \ul Y^\dag \right| S_R=s_R} \nonumber \\
   & \quad - \tilde {\bf H}(s_R) \E{\left. \ul {\bar X} \, \ul {\bar X}^\dag  \right| S_R = s_R} \tilde {\bf H}(s_R)^\dag
   \label{eq:GMI-parameters-c2-2}
\end{align}
and the GMI~\eqref{eq:GMI-2-avg2} is (cf.~\eqref{eq:GMI-2-avg-a} and~\eqref{eq:AGMI-3})
\begin{align}
   \E{\log\det \left( {\bf I} + {\bf Q}_{\ul{\Zt}}(S_R)^{-1} \, \tilde {\bf H}(S_R)
   {\bf Q}_{\ul{\bar X}} \tilde {\bf H}(S_R)^\dag \right) } .
   \label{eq:GMI-2-avg2a}
\end{align}

%------------------
\subsubsection{$K$-Partitions}
Let $\{\ul{\set Y}_k: k=1,\dots,K\}$  be a $K$-partition of $\ul{\set Y}$ and define the events $\set E_k = \{\ul Y \in \ul{\set Y}_k\}$ for $k=1,\dots,K$. As in Remark~\ref{remark:GMI-part1},
$K$-partitioning formally includes \eqref{eq:GMI-2-avg2} as a special case by including $S_R$ as part of the receiver's ``overall'' channel output $\ul{\Yt} = [\ul Y,S_R]$. The following lemma generalizes Lemma~\ref{lemma:GMI-part}.

\begin{lemma} \label{lemma:AGMI-part}
A GMI with $s=1$ for the channel $p(\ul y | a)$ is
\begin{align}
   I_1(A;\ul Y) & = \sum_{k=1}^K \Pr{\set E_k} \left\{ \log\det \left( {\bf I}
   + {\bf Q}_{\ul Z_k}^{-1} \,
    {\bf H}_k {\bf Q}_{\ul{\bar X}} {\bf H}_k^\dag \right) \right. \nonumber \\
   & + \E{ \left. {\ul Y}^\dag \left( {\bf Q}_{\ul Z_k} + {\bf H}_k {\bf Q}_{\ul{\bar X}} {\bf H}_k^\dag \right)^{-1} \ul Y \, \right| \set E_k } \nonumber \\
   & \left. - \E{ \left. \left( \ul Y - {\bf H}_k\, \ul{\bar X} \right)^\dag {\bf Q}_{\ul{Z}_k}^{-1} \left( \ul Y - {\bf H}_k\, \ul{\bar X} \right) \right| \set E_k } \right\}
   \label{eq:AGMI-part}
\end{align}
where the ${\bf H}_k$ and ${\bf Q}_{\ul{Z}_k}$, $k=1,\dots,K$, can be optimized.
\end{lemma}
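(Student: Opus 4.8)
The plan is to follow the template of Lemma~\ref{lemma:GMI-part} together with the $q(\ul y)$ computation preceding Lemma~\ref{lemma:AGMI}, but now with cell-dependent auxiliary parameters. First I would define the piecewise Gaussian forward model
\begin{align}
   q(\ul y | a) = \frac{\exp\left(- \left( \ul y - {\bf H}_k\, \ul{\bar x} \right)^\dag {\bf Q}_{\ul{Z}_k}^{-1} \left( \ul y - {\bf H}_k\, \ul{\bar x} \right) \right)}{\pi^N \det {\bf Q}_{\ul Z_k}}, \quad \ul y \in \ul{\set Y}_k
   \nonumber
\end{align}
which generalizes \eqref{eq:AGMI-qya} by indexing ${\bf H}_k$ and ${\bf Q}_{\ul Z_k}$ by the partition cell. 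As in Lemma~\ref{lemma:AGMI}, this model depends on $a$ only through $\ul{\bar x}$ from \eqref{eq:xbar}, which is independent of $S_T$. Since $\ul{\bar X}$ is CSCG, the integral \eqref{eq:AGMI-qy} with $s=1$ evaluated at $\ul y \in \ul{\set Y}_k$ returns the Gaussian density with covariance ${\bf Q}_{\ul Z_k} + {\bf H}_k {\bf Q}_{\ul{\bar X}} {\bf H}_k^\dag$, the per-cell analog of the $q(\ul y)$ derived just before Lemma~\ref{lemma:AGMI}.

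Next I would decompose the GMI \eqref{eq:AGMI} over the events $\set E_k$ exactly as in \eqref{eq:GMI-2-part0}, writing
\begin{align}
   I_1(A;\ul Y) = \sum_{k=1}^K \Pr{\set E_k} \, \E{ \left. \log \frac{q(\ul Y|A)}{q(\ul Y)} \right| \set E_k }
   \nonumber
\end{align}
and insert the explicit forms above. On each cell the $\pi^N$ factors cancel and the log-ratio splits into three pieces: a constant term $\log\det\left(({\bf Q}_{\ul Z_k}+{\bf H}_k{\bf Q}_{\ul{\bar X}}{\bf H}_k^\dag){\bf Q}_{\ul Z_k}^{-1}\right)$, the quadratic $\ul Y^\dag ({\bf Q}_{\ul Z_k}+{\bf H}_k{\bf Q}_{\ul{\bar X}}{\bf H}_k^\dag)^{-1}\ul Y$ coming from $-\log q(\ul Y)$, and the negative quadratic $(\ul Y - {\bf H}_k\ul{\bar X})^\dag {\bf Q}_{\ul Z_k}^{-1}(\ul Y - {\bf H}_k\ul{\bar X})$ coming from $\log q(\ul Y|A)$. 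Taking the conditional expectation over $\set E_k$ reproduces the three bracketed terms of \eqref{eq:AGMI-part}. Finally I would apply the Sylvester identity \eqref{eq:Sylvester} to rewrite $\log\det\left({\bf I}+{\bf H}_k{\bf Q}_{\ul{\bar X}}{\bf H}_k^\dag{\bf Q}_{\ul Z_k}^{-1}\right)$ as $\log\det\left({\bf I}+{\bf Q}_{\ul Z_k}^{-1}{\bf H}_k{\bf Q}_{\ul{\bar X}}{\bf H}_k^\dag\right)$, which is the leading term of the claimed expression.

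Because the ${\bf H}_k$ and ${\bf Q}_{\ul Z_k}$ are held fixed rather than optimized, no MMSE or equality argument (as in Proposition~\ref{prop:GMI} or Lemma~\ref{lemma:AGMI}) is needed, keeping the computation elementary. The only point requiring care is the per-cell normalization: one must confirm that restricting the integral \eqref{eq:AGMI-qy} to $\ul y \in \ul{\set Y}_k$ still yields the full CSCG density evaluated at $\ul y$, which holds because $q(\ul y|a)$ is, for each fixed $k$, the \emph{unrestricted} Gaussian form in $\ul y$ and because $\ul{\bar X}$ is CSCG and independent of $S_T$. I expect the main obstacle to be purely bookkeeping—carrying the cell index through the determinant and quadratic terms and verifying the $s=1$ Gaussian marginalization on each cell—rather than any conceptual difficulty.
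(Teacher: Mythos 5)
Your proposal is correct and follows essentially the same route the paper takes: the paper states Lemma~\ref{lemma:AGMI-part} as the direct combination of the partition decomposition~\eqref{eq:GMI-2-part0} used for Lemma~\ref{lemma:GMI-part} with the three-term GMI expression~\eqref{eq:AGMI-4} evaluated per cell with cell-dependent ${\bf H}_k$, ${\bf Q}_{\ul Z_k}$. Your per-cell Gaussian marginalization of $q(\ul y)$ and the split of the log-ratio into the determinant and two quadratic terms is exactly the intended computation.
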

%----------------------

\begin{remark} \label{remark:AGMI-part2}
For scalars the GMI~\eqref{eq:AGMI-part} is
\begin{align}
   I_1(A;Y) & = \sum_{k=1}^K \Pr{\set E_k} \left[ \log\left( 1 + \frac{|h_k|^2 \bar P}{\sigma_k^2} \right) \right. \nonumber \\
   & \left. + \frac{\E{|Y|^2 | \set{E}_k}}{\sigma_k^2 + |h_k|^2 \bar P} - \frac{\E{|Y - h_k \bar X|^2 | \set{E}_k}}{\sigma_k^2} \right]
   \label{eq:AGMI-part2}
\end{align}
which is the same as~\eqref{eq:GMI-2-part} except that $\bar X$, $\bar P$ replace $X,P$. If we follow~\eqref{eq:GMI-part-h}--\eqref{eq:GMI-part-sigma2} then \eqref{eq:AGMI-part2} becomes \eqref{eq:GMI-3-part} but with
\begin{align*}
   & h_k = \E{\left. Y \bar X^\text{*} \right| \set E_k}/P_k, \quad
   P_k = \E{\left. \left|\bar X \right|^2 \right| \set E_k}.
   %\label{eq:AGMI-2-part2}
\end{align*}
\end{remark}

\begin{remark} \label{remark:AGMI-part2-2}
Consider Remark~\ref{remark:GMI-part4} and choose $K=2$, $h_1=0$, $\sigma_1^2=1$. The GMI \eqref{eq:AGMI-part2} then has only the $k=2$ term, and it again remains to select $h_2$, $\sigma_2^2$, and $t_R$.
\end{remark}

\begin{remark}
If we define
\begin{align}
   {\bf Q}_{\ul {\bar X}}^{(k)} = \E{ \left. \ul {\bar{X}} \, \ul {\bar{X}}^\dag \right| \set{E}_k }, \quad {\bf Q}_{\ul Y}^{(k)} = \E{ \left. \ul Y \, \ul Y^\dag \right| \set{E}_k }
\end{align}
and choose the LMMSE auxiliary models with
\begin{align}
   {\bf H}_k & = \E{\left. \ul Y \, \ul {\bar X}^\dag \right| \set E_k} \left({\bf Q}_{\ul {\bar X}}^{(k)}\right)^{-1}
   \label{eq:GMI-parameters-part-hk-2} \\
    {\bf Q}_{\ul Z_k} & =
    {\bf Q}_{\ul Y}^{(k)}  - {\bf H}_k {\bf Q}_{\ul {\bar X}}^{(k)} {\bf H}_k^\dag 
   \label{eq:GMI-3-part-denom-2}
\end{align}
for $k=1,\dots,K$ then the expression~\eqref{eq:AGMI-part} is (cf.~\eqref{eq:GMI-3-part})
\begin{align}
   & \sum_{k=1}^K \Pr{\set E_k} \left[ \log\det \left( {\bf I}
   + {\bf Q}_{\ul Z_k}^{-1} \,
    {\bf H}_k {\bf Q}_{\ul{\bar X}} {\bf H}_k^\dag \right) \right. \nonumber \\
   & \left. \qquad - \tr{ \left( {\bf Q}_{\ul Y}^{(k)} + {\bf H}_k {\bf D}_{\ul{\bar X}}^{(k)} {\bf H}_k^\dag \right)^{-1} {\bf H}_k {\bf D}_{\ul{\bar X}}^{(k)} {\bf H}_k^\dag } \right]
   \label{eq:AGMI-3-part}
\end{align}
where ${\bf D}_{\ul{\bar X}}^{(k)}={\bf Q}_{\ul{\bar X}} - {\bf Q}_{\ul{\bar X}}^{(k)}$.
\end{remark}

\begin{remark} \label{remark:large-K2} 
We may proceed as in Remark~\ref{remark:large-K} and consider large $K$. These steps are given in Appendix~\ref{appendix:GMI-MMSE2}.
\end{remark}

%================================================
%================================================
\section{Fading Channels with AWGN}
\label{sec:Gauss-fading}
This section treats scalar, complex-alphabet, AWGN channels with CSIR for which the channel output is
\begin{align}
  [Y, S_R] = [H X + Z, S_R]
  \label{eq:Y-Gauss}
\end{align}
where $H,A,Z$ are mutually independent, $\E{|H|^2}=1$, and $Z\sim\mathcal{CN}(0,1)$. The capacity under the power constraint $\E{|X|^2}\le P$ is (cf.~\eqref{eq:Shannon-cap-SR})
\begin{align}
  C(P) = \max_{A:\, \E{|X|^2}\le P} I(A ; Y | S_R).
  \label{eq:Shannon-cap-SR-AWGN}
\end{align}
However, the optimization in~\eqref{eq:Shannon-cap-SR-AWGN} is often intractable, and we desire expressions with $\log(1+\text{SNR})$ terms to gain insight. We develop three such expressions: an upper bound and two lower bounds.  It will be convenient to write $G=|H|^2$.

%--------------------------------------
\subsubsection{Capacity Upper Bound}
We state this bound as a lemma since we use it to prove Proposition~\ref{proposition:part-CSIT2} below.

\begin{lemma} \label{lemma:Gauss-fading-C-UB}
The capacity~\eqref{eq:Shannon-cap-SR-AWGN} is upper bounded as
\begin{align}
  C(P) \le \max \; \E{ \log\left( 1 +  G P(S_T) \right) }
  \label{eq:Shannon-cap-SR-AWGN-UB}
\end{align}
where the maximization is over $P(S_T)$ with $\E{P(S_T)}=P$.
\end{lemma}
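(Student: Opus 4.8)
The plan is a genie-aided converse: reveal the hidden state $S_H$ to the receiver and then apply the Gaussian maximum-entropy bound for each realization of $S_H$. Recall from the FDG in Fig.~\ref{fig:Salehi-iBM} that $A$ is a function of the message $M$, that $M$ and $S_H$ are independent (so $A \perp S_H$), that $[S_T,S_R]$ is a function of $S_H$, and that $H$ is a function of $S_H$ with $Y = H X + Z$, $Z\sim\mathcal{CN}(0,1)$, $Z\perp(A,S_H)$.

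First I would enlarge the receiver's observation. Since appending outputs cannot decrease mutual information,
\begin{align}
   I(A;Y|S_R) \le I(A;Y,S_H|S_R) = I(A;S_H|S_R) + I(A;Y|S_H,S_R).
\end{align}
Because $A\perp S_H$ and $S_R$ is a function of $S_H$, the variables $A$ and $S_H$ remain independent given $S_R$, so $I(A;S_H|S_R)=0$; moreover conditioning on $(S_H,S_R)$ is the same as conditioning on $S_H$. Hence $I(A;Y|S_R)\le I(A;Y|S_H)$.

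Next I would evaluate the bound for each realization $S_H=s_H$, which fixes $S_T=s_T$ and $H=h$. Then $Y=h\,X(s_T)+Z$, so $Y$ depends on $A$ only through the entry $X(s_T)$; that is, $A-X(s_T)-Y$ is a Markov chain given $S_H=s_H$, giving $I(A;Y|S_H=s_H)=I(X(s_T);Y|S_H=s_H)$. The key point requiring care is that, because $A\perp S_H$, the entry $X(s_T)$ keeps its marginal law under the conditioning, so $\E{\left.|X(s_T)|^2\right|S_H=s_H}=P(s_T)$. Using $h(Y|X,S_H=s_H)=h(Z)=\log(\pi e)$ together with the Gaussian maximum-entropy bound $h(Y|S_H=s_H)\le\log\!\big(\pi e\,\E{\left.|Y|^2\right|S_H=s_H}\big)$ (the scalar form underlying \eqref{eq:h-UB}), I obtain
\begin{align}
   I(X(s_T);Y|S_H=s_H) \le \log\left(1+|h|^2 P(s_T)\right) = \log\left(1+G\,P(s_T)\right).
\end{align}
Averaging over $S_H$ yields $I(A;Y|S_R)\le\E{\log(1+G\,P(S_T))}$.

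Finally I would translate the power constraint. Since $A\perp S_T$, we have $\E{|X|^2}=\E{|X(S_T)|^2}=\E{P(S_T)}$, so the constraint $\E{|X|^2}\le P$ is exactly $\E{P(S_T)}\le P$. Maximizing both sides over admissible $A$ — equivalently over the power profiles $P(S_T)$ satisfying $\E{P(S_T)}\le P$ — gives the claimed bound \eqref{eq:Shannon-cap-SR-AWGN-UB}. The argument is largely routine; the only delicate step is the conditional-law observation that revealing $S_H$ does not change the distribution of the transmitted entry $X(S_T)$, which rests on the independence $A\perp S_H$ from the FDG and on conditioning on $S_H$ subsuming $S_R$.
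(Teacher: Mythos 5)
Your proof is correct and follows essentially the same route as the paper: a genie-aided converse that reveals the state to the receiver, uses the independence of $A$ and the state to drop the extra mutual-information term, and then applies the Gaussian maximum-entropy bound to get $\log(1+GP(S_T))$. The only cosmetic difference is that you condition on $S_H$ (relying on $H$ and $[S_T,S_R]$ being functions of $S_H$) whereas the paper conditions directly on $[S_R,S_T,H]$; both are valid.
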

\begin{proof}
Consider the steps
\begin{align}
  I(A;Y|S_R)
  & \le I(A;Y,S_T,H|S_R) \nonumber \\
  & \overset{(a)}{=} I(A;Y|S_R,S_T,H) \nonumber \\
  & = h(Y | S_R,S_T,H) - h(Z) \nonumber \\
  & \overset{(b)}{\le} \E{\log \Var{ Y | S_R,S_T,H}}
  \label{eq:IAYSR-UB}
\end{align}
where step $(a)$ is because $A$ and $[S_R,S_T,H]$ are independent, and step $(b)$ follows by the entropy bound
\begin{align}
  h(Y | B=b) \le \log \left( \pi e \, \Var{Y | B=b} \right) \label{eq:h-UB4}
\end{align}
which we applied with $B=[S_R,S_T,H]$. Finally, we compute $\Var{Y|S_R,S_T,H}=1 + G P(S_T)$.
\end{proof}

%--------------------------------------
\subsubsection{Reverse Model GMI}
Consider the adaptive symbol~\eqref{eq:adaptive-codeword-conventional} and the GMI~\eqref{eq:AGMI-r3}. We expand the variances in~\eqref{eq:AGMI-r3} as
\begin{align*}
    & \Var{U | Y=y, S_R=s_R} \nonumber \\
    & = \E{|U|^2 \big| Y=y, S_R=s_R} - \big| \E{U \big| Y=y, S_R=s_R} \big|^2 .
   %\label{eq:AGMI-var-expand}
\end{align*}
Appendix~\ref{appendix:c-second-order-statistics} shows that one may write
\begin{align}
    & \E{U \big| Y=y, S_R=s_R} \nonumber \\
    %= \int p(h,s_T | y,s_R) \nonumber \\
    %& \quad \E{U(s_T) \Big| h\sqrt{P(s_T)}e^{j\phi(s_T)} U(s_T)+Z=y} \, ds_T \, dh \nonumber \\
    & = \int_{\mathbb C \times \set S_T} p(h,s_T | y,s_R) \, \frac{h\sqrt{P(s_T)}e^{j\phi(s_T)} y}{1+|h|^2 P(s_T)} \, ds_T \, dh
    \label{eq:AGMI-5-part-Es1a} 
\end{align}
and
\begin{align}
    & \E{|U|^2 \big| Y=y, S_R=s_R}
    = \int_{\mathbb C \times \set S_T} p(h,s_T | y,s_R) \nonumber \\
    & \quad \left( \frac{1}{1 + |h|^2 P(s_T)} + \frac{ |h|^2 P(s_T) |y|^2 }{\left(1 + |h|^2 P(s_T)\right)^2} \right) \, ds_T \, dh.
   \label{eq:AGMI-5-part-Es2a} 
\end{align}
We use the expressions~\eqref{eq:AGMI-5-part-Es1a}--\eqref{eq:AGMI-5-part-Es2a} to compute achievable rates by numerical integration. For example, suppose $S_T=0$ and $S_R=H$, i.e., we have full CSIR and no CSIT. The averaging density is then
\begin{align*}
    p(h,s_T | y,s_R) = \delta(h-s_R) \, \delta(s_T)
    %\label{eq:reverse-avg-density}
\end{align*}
and the variance simplifies to the capacity-achieving form
\begin{align*}
    \Var{U | Y=y, S_R=h}
    = \frac{1}{1 + |h|^2 P}.
\end{align*}

%--------------------------------------
\subsubsection{Forward Model GMI}
A forward model GMI is given by Theorem~\ref{theorem:AGMI-SR-max} where
\begin{align}
  & \tilde P(s_R) = \E{ \, \left| \E{\left. H \sqrt{P(S_T)} \right| S_T, S_R=s_R} \right| \,}^2
  \label{eq:AGMI-Pt}
  \\
  & \E{|Y|^2 \big| S_R=s_R} = 1 + \E{G P(S_T) \big| S_R=s_R}
  \label{eq:AGMI-Y2}
\end{align}
so that $I_1(A;Y|S_R)$ in \eqref{eq:AGMI-SR} becomes
\begin{align}
   \E{ \log \left(1 + \frac{\tilde P(S_R)}
   {1 + \E{G P(S_T) \big| S_R} - \tilde P(S_R)} \right) }.
   \label{eq:AGMI-SR-Gauss}
\end{align}

\begin{remark} \label{remark:variance}
Jensen's inequality implies that the denominator in \eqref{eq:AGMI-SR-Gauss} is greater than or equal to
\begin{align}
   1+\Var{\left. \sqrt{G P(S_T)} \, \right| S_R}.
   \label{eq:AGMI-denominator}
\end{align}
Equality requires that for all $S_R=s_R$ we have
\begin{align}
  \tilde P(s_R) = \E{\left. \sqrt{G P(S_T)} \right| S_R=s_R}^2
  \label{eq:AGMI-desire}
\end{align}
which is valid if $H$ is a function of $[S_R,S_T]$, for example.
However, if there is channel uncertainty after conditioning on $[S_R,S_T]$ then $\tilde P(s_R)$ is usually smaller than the RHS of \eqref{eq:AGMI-desire}.
\end{remark}

\begin{remark} \label{remark:full-CSIR}
Consider $S_R=H$ or $S_R = H \sqrt{P(S_T)}$. 
For both cases, $H$ is a function of $[S_R,S_T]$ and the denominator in \eqref{eq:AGMI-SR-Gauss} is the variance \eqref{eq:AGMI-denominator}. In fact, for $S_R = H \sqrt{P(S_T)}$, the expression \eqref{eq:AGMI-denominator} takes on the minimal value 1. This CSIR is thus the best possible; see Proposition~\ref{proposition:part-CSIT2}.
\end{remark}

\begin{remark}
For MIMO channels we replace \eqref{eq:Y-Gauss} with
\begin{align}
  [\Yv, S_R] = [{\bf H} \Xv + \Zv, S_R]
  \label{eq:MIMO-model}
\end{align}
where ${\bf H}, A,\Zv$ are mutually independent and $\Zv \sim \mathcal{CN}(\ul 0,{\bf I})$. One usually considers the constraint $\E{\|\Xv\|^2}\le P$.
\end{remark}

\begin{remark}
The model~\eqref{eq:MIMO-model} includes block fading. For example, choosing $M=N$ and ${\bf H}=H\, {\bf I}$ gives scalar block fading. Moreover, the capacity per symbol without in-block feedback is the same as for the $M=N=1$ case except that $P$ is replaced with $P/M$; see~\cite{McElieceStark84} and Sec.~\ref{sec:ibf-CSIT}.
\end{remark}

%================================================
\subsection{CSIR and CSIT Models}
\label{subsec:Gauss-fading-channel-models}
We study two classes of CSIR, as shown in Table~\ref{table:model-classes}. The first class has full (or ``perfect'') CSIR, by which we mean either $S_R=H$ or $S_R =H\sqrt{P(S_T)}$. The motivation for studying the latter case is that it models block fading channels with long blocks where the receiver estimates $H\sqrt{P(S_T)}$ using pilot symbols, and the number of pilot symbols is much smaller than the block length~\cite{Caire-Shamai-IT99}. Moreover, one achieves the upper bound \eqref{eq:Shannon-cap-SR-AWGN-UB}, see Proposition~\ref{proposition:part-CSIT2} below.

We coarsely categorize the CSIT as follows:
\begin{itemize}
\item Full CSIT: $S_T=H$;
\item CSIT\at R: $S_T=q_u(G)$ where $q_u(.)$ is the quantizer of Sec.~\ref{subsec:quantizer} with $B=0,1,\infty$;
\item Partial CSIT: $S_T$ is not known exactly at the receiver.
\end{itemize}
The capacity of the CSIT\at R models is given by $\log(1+\text{SNR})$ expressions~\cite{Caire-Shamai-IT99,Kim-Skoglund-07}; see also~\cite{Rosenzweig-IT05}. The partial CSIT model is interesting because achieving capacity generally requires adaptive codewords and closed-form capacity expressions are unavailable. The GMI lower bound of Theorem~\ref{theorem:AGMI-SR-max} and Remark~\ref{remark:AGMI-part3} and the capacity upper bound of Lemma~\ref{lemma:Gauss-fading-C-UB} serve as benchmarks.

%----------------------------
\renewcommand{\arraystretch}{1.1}
\begin{table}[t!]
\centering
\caption{Models Studied in Sec.~\ref{sec:Gauss-fading} (General Fading), \protect\linebreak Sec.~\ref{sec:oof} (On-off Fading), and Sec.~\ref{sec:Rayleigh-Fading} (Rayleigh Fading)}
\setlength{\tabcolsep}{2mm}
\begin{tabular}{ll|cc|}
& & \multicolumn{2}{c|}{\bf CSIR} \\
& & Full & Partial\,/\,No \\
\hline
\multirow{3}{*}{\bf CSIT}
%\cline{2-4}
& Full & Sec.~\ref{subsec:Full-CSIR-CSITatR} & Sec.~\ref{subsec:Partial-CSIR-fullCSIT}  \\ %\cline{2-4}
& \at R & Sec.~\ref{subsec:Full-CSIR-CSITatR} & Sec.~\ref{subsec:Partial-CSIR-CSITatR}  \\ %\cline{2-4}
& Partial\,/\,No & Sec.~\ref{subsec:Full-CSIR-part-CSIT}  & Sec.~\ref{subsec:No-CSIR-No-CSIT}  \\ %\cline{2-4}
\hline
\end{tabular}
\label{table:model-classes}
\end{table}
%----------------------------

The partial CSIR models have $S_R$ being a lossy function of $H$. For example, a common model is based on LMMSE channel estimation with
\begin{align}
    H = \sqrt{\epsb} \, S_R + \sqrt{\epsilon} \, Z_R
    \label{eq:H-mmse-estimate}
\end{align}
where $0\le\epsilon\le1$ and $S_R,Z_R$ are uncorrelated. The CSIT is categorized as above, except that we consider $S_T=f_T(S_R)$ for some function $f_T(.)$ rather than $S_T=q_u(G)$.

To illustrate the theory, we study two types of fading: one with discrete $H$ and one with continuous $H$, namely
\begin{itemize}
\item Sec.~\ref{sec:oof}:\ on-off fading with $P_H(0)=P_H(\sqrt{2})=1/2$;
\item Sec.~\ref{sec:Rayleigh-Fading}:\ Rayleigh fading with $H \sim \mathcal{CN}(0,1)$.
\end{itemize}
For on-off fading we have $p(g)=\frac{1}{2} \delta(g) + \frac{1}{2} \delta(g-2)$ and for Rayleigh fading we have $p(g)=e^{-g} \cdot 1(g\ge0)$.

\begin{remark}
For channels with partial CSIR, we will study the GMI for partitions with $K=1$ and $K=2$. The full CSIT model has received relatively little attention in the literature, perhaps because CSIR is usually more accurate than CSIT~\cite[Sec.~4.2.3]{Keshet-Steinberg-Merhav-08}.
\end{remark}

%================================================
\subsection{No CSIR, No CSIT}
\label{subsec:No-CSIR-No-CSIT}
Without CSIR or CSIT, the channel is a classic memoryless channel~\cite{Shannon48} for which the capacity~\eqref{eq:Shannon-cap-SR-AWGN} becomes the usual expression with $S_R=0$ and $A=X$. For CSCG $X$ and $U=X/\E{|X|^2}$, the reverse and forward model GMIs~\eqref{eq:AGMI-r3} and~\eqref{eq:AGMI-SR-Gauss} are the respective
\begin{align}
    I_1(X;Y) & = \E{-\log \Var{U|Y}}
    \label{eq:noCSIR-noCSIT-Ir} \\
    I_1(X;Y) & = \log\left(1+\frac{P \, |\E{H}|^2}{1+P\,\Var{H}} \right).
    \label{eq:noCSIR-noCSIT-If}
\end{align}
For example, the forward model GMI is zero if $\E{H}=0$.

%================================================
\subsection{Full CSIR, CSIT\at R}
\label{subsec:Full-CSIR-CSITatR}
Consider $S_R=H$ and CSIT\at R. The capacity is given by $\log(1+\text{SNR})$ expressions that we review. 

First, the capacity with $B=0$ (no CSIT) is 
\begin{align}
   C(P) & = \E{\log\left( 1 + G\, P \right)} \nonumber \\
   & = \int_0^{\infty} p(g) \log\left( 1 + g P \right) \, dg .
   \label{eq:no-CSIT}
\end{align}
The wideband derivatives are (see~\eqref{eq:wideband})
\begin{align}
  C'(0)=\E{G}=1, \quad C''(0)=-\E{G^2}
  \label{eq:wideband2}
\end{align}
so that the wideband values \eqref{eq:wideband} are (see~\cite[Thm.~13]{Verdu02})
\begin{align}
   \left.\frac{E_b}{N_0}\right|_{\text{min}} = \log 2, \quad S = \frac{2}{\E{G^2}} .
   \label{eq:wideband-fullCSIR-noCSIT}
\end{align}
The minimal $E_b/N_0$ is the same as without fading, namely $-1.59$ dB. However, Jensen's inequality gives $\E{G^2}\ge \E{G}^2=1$ with equality if and only if $G=1$. Thus, fading reduces the capacity slope $S$.

More generally, the capacity with full CSIR and $S_T=q_u(G)$ is (see~\cite{Caire-Shamai-IT99})
\begin{align}
   & C(P) = \max_{P(S_T):\, \E{P(S_T)} \le P} \E{\log\left( 1 + G\, P(S_T) \right)} \nonumber \\
   & = \max_{P(S_T):\, \E{P(S_T)}\le P} \int_0^{\infty} p(g,s_T) \log\left( 1 + g P(s_T) \right) \, dg \, ds_T.
   \label{eq:full-CSIT}
\end{align}
To optimize the power levels $P(s_T)$, consider the Lagrangian
\begin{align}
   \E{\log\left( 1 + G P(S_T) \right)} + \lambda \left(P - \E{P(S_T)}\right)
   \label{eq:Lagrangian}
\end{align}
where $\lambda\ge0$ is a Lagrange multiplier. Taking the derivative with respect to $P(s_T)$, we have
\begin{align}
   \lambda & = \E{ \left. \frac{G}{1+ G P(s_T) } \right| S_T=s_T } \nonumber \\
   & = \int_0^\infty p(g|s_T) \frac{g}{1+ g P(s_T)} dg
   \label{eq:lambda}
\end{align}
as long as $P(s_T)\ge 0$. If this equation cannot be satisfied, choose $P(s_T)=0$. Finally, set $\lambda$ so that $\E{P(S_T)}=P$.

For example, consider $B=\infty$ and $S_T=G$. We then have $p(g|s_T)=\delta(g-s_T)$ and therefore 
\begin{align}
   P(g) = \left(\frac{1}{\lambda} - \frac{1}{g} \right)^+
   \label{eq:waterfilling}
\end{align}
where $\lambda$ is chosen so that $\E{P(G)}=P$. The capacity \eqref{eq:full-CSIT} is then (see~\cite[Eq.~(7)]{Goldsmith-Varaiya-IT97})
\begin{align}
   C(P) & = \int_{\lambda}^{\infty} p(g) \log\left( g / \lambda \right) \; dg.
   \label{eq:full-CSIT-2}
\end{align}

Consider now the quantizer $q_u(.)$ of Sec.~\ref{subsec:quantizer} with $B=1$. We have two equations for $\lambda$, namely
\begin{align}
   \lambda & = \int_0^{\Delta} \frac{p(g)}{P_{S_T}(\Delta/2)} \cdot \dfrac{g}{1+ g P(\Delta/2)} dg
   \label{eq:lambda-q0} \\
   \lambda &= \int_{\Delta}^{\infty} \frac{p(g)}{P_{S_T}(3\Delta/2)} \cdot \dfrac{g}{1+ g P(3\Delta/2)} dg.
   \label{eq:lambda-q1} 
\end{align}
Observe the following for~\eqref{eq:lambda-q0}--\eqref{eq:lambda-q1}:
\begin{itemize}
\item both $P(\Delta/2)$ and $P(3\Delta/2)$ decrease as $\lambda$ increases;
\item the maximal $\lambda$ permitted by \eqref{eq:lambda-q0} is $\E{G| G\le\Delta}$ which is obtained with $P(\Delta/2)=0$;
\item the maximal $\lambda$ permitted by \eqref{eq:lambda-q1} is $\E{G| G\ge\Delta}$ which is obtained with $P(3\Delta/2)=0$.
\end{itemize}
Thus, if $\E{G| G\ge\Delta} > \E{G| G\le\Delta}$, then at $P$ below some threshold, we have $P(\Delta/2)=0$ and $P(3\Delta/2)=P/P_{S_T}(3\Delta/2)$. The capacity in nats per symbol at low power and for fixed $\Delta$ is thus
\begin{align}
   & C(P) = \int_\Delta^\infty p(g) %\underbrace{
   \log\left( 1 + g P(3\Delta/2) \right)%}_{\approx \left[P(3\Delta/2) g - \frac{1}{2} P(3\Delta/2)^2 g^2 \right] \log(e)}
   dg \nonumber \\ 
   & \approx P\, \E{G | G \ge \Delta } - \frac{P^2}{2 P_{S_T}(3\Delta/2)} \E{G^2 | G \ge \Delta}
\end{align}
where we used
\begin{align*}
  \log(1+x) \approx x - \frac{x^2}{2}
\end{align*}
for small $x$. The wideband values \eqref{eq:wideband} are
\begin{align}
   & \left.\frac{E_b}{N_0}\right|_{\text{min}} = \frac{\log 2}{\E{G | G \ge \Delta}} \label{eq:minEbNo-2} \\
   & S = \frac{2 P_{S_T}(3\Delta/2) \, \E{G | G \ge \Delta}^2}{\E{G^2 | G \ge \Delta}} . \label{eq:S-2}
\end{align}
One can thus make the minimum $E_b/N_0$ approach $-\infty$ if one can make $\E{G | G \ge \Delta}$ as large as desired by increasing $\Delta$.

\begin{remark}
Consider the MIMO model \eqref{eq:MIMO-model} with $S_R={\bf H}$. Suppose the CSIT is $S_T=f_T(S_R)$ for some function $f_T(\cdot)$. The capacity \eqref{eq:full-CSIT} generalizes to
\begin{align}
   & C(P) %& \overset{(a)}{=} \max_{X(S_T):\, \E{|X(S_T)|^2}\le P} I(X ; Y  | S_R, S_T ) \nonumber \\
   = \max_{\Xv(S_T):\, \E{\|\Xv(S_T)\|^2}\le P} I(\Xv ; {\bf H} \Xv + \Zv  | {\bf H}, S_T ) \nonumber \\
   & = \max_{{\bf Q}(S_T):\, \E{\tr{{\bf Q}(S_T)}} \le P} \E{\log\det\left({\bf I} + {\bf H} {\bf Q}(S_T) {\bf H}^\dag \right)} .
   \label{eq:Shannon-cap-MIMO3}
\end{align}
\end{remark}

%================================================
\subsection{Full CSIR, Partial CSIT}
\label{subsec:Full-CSIR-part-CSIT}
Consider first the full CSIR $S_R = H\sqrt{P(S_T)}$ and then the less informative $S_R=H$.

%================================================
\subsubsection{$S_R = H\sqrt{P(S_T)}$}
We have the following capacity result that implies this CSIR is the best possible since one can achieve the same rate as if the receiver sees both $H$ and $S_T$; see the first step of~\eqref{eq:IAYSR-UB}. We could thus have classified this model as CSIT\at R.

%%%%%%%%%%%%%%%%%%%%%%%%%%
\begin{proposition}[{see~\cite[Prop.~3]{Caire-Shamai-IT99}}]
\label{proposition:part-CSIT2}
The capacity of the channel \eqref{eq:Y-Gauss} with $S_R=H\sqrt{P(S_T)}$ and general $S_T$ is
\begin{align}
  C(P)
  & = \max_{P(S_T):\, \E{P(S_T)} \le P} \int_{\mathbb C} p(s_R) \log\left( 1 + |s_R|^2 \right) \, ds_R \nonumber \\ 
  & = \max_{P(S_T):\, \E{P(S_T)} \le P} \E{ \log\left( 1 +  G P(S_T) \right) }.
  \label{eq:GMI-AYH2} 
\end{align}
\end{proposition}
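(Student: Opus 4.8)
The plan is to establish the two matching bounds $C(P) \le V(P)$ and $C(P) \ge V(P)$, where $V(P)$ denotes the common right-hand side of~\eqref{eq:GMI-AYH2}. First note that the two expressions for $V(P)$ coincide because $|S_R|^2 = |H|^2 P(S_T) = G\,P(S_T)$, so $\log(1+|s_R|^2)$ averaged against $p(s_R)$ equals $\E{\log(1+G\,P(S_T))}$. The converse $C(P) \le V(P)$ is then already in hand: it is exactly Lemma~\ref{lemma:Gauss-fading-C-UB}, whose bound~\eqref{eq:Shannon-cap-SR-AWGN-UB} is $\max_{P(S_T):\,\E{P(S_T)}\le P}\E{\log(1+G\,P(S_T))}$. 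Hence the entire substance of the proof is achievability, for which it suffices to exhibit one adaptive codeword that attains $V(P)$ after optimizing the power levels $P(s_T)$.

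For achievability I would use the fully-correlated CSCG symbols $X(s_T)=\sqrt{P(s_T)}\,U$ of Lemma~\ref{lemma:AGMI-max} with a common $U\sim\mathcal{CN}(0,1)$ and no phase adjustment. The crucial observation is that the channel output then collapses to
\begin{align*}
  Y = H\,X(S_T) + Z = H\sqrt{P(S_T)}\,U + Z = S_R\,U + Z ,
\end{align*}
so that, conditioned on $S_R=s_R$, the map $U\mapsto Y$ is a scalar AWGN channel with the \emph{known} complex gain $s_R$. Because the message-bearing $U$ is independent of the state-dependent $S_R$ (recall $A$ and $S_R$ are independent, cf.~\eqref{eq:Shannon-cap-SR}), conditioning on $S_R=s_R$ leaves $U\sim\mathcal{CN}(0,1)$, and $Z\sim\mathcal{CN}(0,1)$ remains independent; the conditional mutual information of this Gaussian channel with Gaussian input is therefore exactly $\log(1+|s_R|^2)$. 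Since $A$ and $U$ are in one-to-one correspondence for fully-correlated symbols, we obtain $I(A;Y\mid S_R) = I(U;Y\mid S_R) = \E{\log(1+|S_R|^2)} = \E{\log(1+G\,P(S_T))}$. Maximizing over $P(S_T)$ with $\E{P(S_T)}\le P$ matches the upper bound and proves the claim. This is also consistent with the forward-model GMI of Theorem~\ref{theorem:AGMI-SR-max}: one checks via~\eqref{eq:AGMI-P2} that $\tilde P(s_R)=|s_R|^2$ and $\E{|Y|^2\mid S_R=s_R}=1+|s_R|^2$, so the denominator in~\eqref{eq:AGMI-SR-Gauss} takes its minimal value $1$, as anticipated in Remark~\ref{remark:full-CSIR}.

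I do not expect a genuine obstacle here; the work is conceptual rather than computational, and the only real step is recognizing the collapse $Y = S_R\,U + Z$. The point that must be handled with care is the independence used in that collapse, namely that $U$ is independent of $S_R=H\sqrt{P(S_T)}$, which follows because the adaptive codeword is independent of the state in the model of Fig.~\ref{fig:Salehi-iBM}. A second minor point is the treatment of states with $P(s_T)=0$: these contribute $\log(1+0)=0$ to the rate and can be omitted from the support of $S_T$, so no division by $\sqrt{P(s_T)}$ is ever needed. Finally, it is worth emphasizing that no phase optimization $\phi(s_T)$ is required, since the unknown gain $H$ is entirely absorbed into the receiver-known quantity $S_R$; this is precisely why this CSIR is the best possible and why the model may equivalently be viewed as CSIT\at R.
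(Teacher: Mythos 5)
Your proposal is correct, and both bounds land exactly where the paper's do. The converse is identical: you cite Lemma~\ref{lemma:Gauss-fading-C-UB}, which is precisely what the paper uses. For achievability the paper simply invokes Theorem~\ref{theorem:AGMI-SR-max} together with Remark~\ref{remark:full-CSIR} (the denominator in the GMI collapses to $1$ because $H$ is a function of $[S_R,S_T]$), whereas you argue directly: with fully-correlated inputs $X(s_T)=\sqrt{P(s_T)}\,U$ the output is $Y=S_R U+Z$, so conditioned on $S_R=s_R$ one has a coherent scalar AWGN channel and $I(U;Y\mid S_R=s_R)=\log(1+|s_R|^2)$ exactly. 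These are two presentations of the same underlying fact — the same input distribution, the same observation that $S_R$ absorbs the unknown gain — but your version computes the mutual information exactly rather than exhibiting a GMI lower bound that happens to be tight, which makes the achievability self-contained and makes transparent \emph{why} this CSIR is the best possible. The GMI route has the advantage of slotting into the paper's general machinery (and of extending to the block-fading version in Proposition~\ref{proposition:part-CSIT2-block-fading}), but for this scalar statement nothing is lost by your direct argument. Your handling of the two side points (independence of $U$ and $S_R$, and states with $P(s_T)=0$) is also correct and matches the footnote caveat attached to Lemma~\ref{lemma:AGMI-max}.
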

%%%%%%%%%%%%%%%%%%%%%%%%%%
\begin{proof}
Achievability follows by Theorem~\ref{theorem:AGMI-SR-max} with Remark~\ref{remark:full-CSIR}. The converse is given by Lemma~\ref{lemma:Gauss-fading-C-UB}. 
\end{proof}

\begin{remark} \label{remark:CS-part}
Proposition~\ref{proposition:part-CSIT2} gives an upper bound and (thus) a target rate when the receiver has partial CSIR. For example, we will use the $K$-partition idea of Lemma~\ref{lemma:GMI-part} (see also Remark~\ref{remark:AGMI-part2}) to approach the upper bound for large SNR.
\end{remark}

\begin{remark} \label{remark:CS-care}
Proposition~\ref{proposition:part-CSIT2} partially generalizes to block-fading channels; see Proposition~\ref{proposition:part-CSIT2-block-fading} in Sec.~\ref{subsec:ibf-full-CSIR-part-CSIT}.
\end{remark}

%================================================
\subsubsection{$S_R = H$}
The capacity is~\eqref{eq:Shannon-cap-SR} with
\begin{align}
  I(A;Y | H) = \E{\log \frac{p(Y|A,H)}{p(Y|H)}}
  \label{eq:IAYH}
\end{align}
where $\E{|X|^2}\le P$ and where
\begin{align}
  p(y | a, h)
  & = \int_{\mathbb C} p(s_T | h) \, \frac{e^{-|y - h \, x(s_T)|^2}}{\pi}  \, ds_T
  \label{eq:pyah-1}
\end{align}
and
\begin{align}
  & p(y | h) = \int_{\mathbb C} p(s_T | h) \left( \int_\set{A} p(a) p(y | a, h, s_T) \, da \right) ds_T \nonumber \\
  & = \int_{\mathbb C} p(s_T | h) \left( \int_{\mathbb C} p(x(s_T)) \frac{e^{-|y - h \, x(s_T)|^2}}{\pi} \, dx(s_T) \right) ds_T.
  \label{eq:pyh-1a}
\end{align}
For example, if each entry $X(s_T)$ of $A$ is CSCG with variance $P(s_T)$
then
\begin{align}
  p(y | h) = \int_{\mathbb C} \, p(s_T | h)
  \frac{\exp\left(- \frac{|y|^2}{1 + g P(s_T)}\right)}{\pi (1 + g P(s_T))} \, ds_T .
  \label{eq:pyh-1b}
\end{align}
In general, one can compute $I(A;Y|H)$ numerically by using \eqref{eq:IAYH}--\eqref{eq:pyh-1a}, but the calculations are hampered if the integrals in \eqref{eq:pyah-1}--\eqref{eq:pyh-1a} do not simplify.

For the reverse model GMI~\eqref{eq:AGMI-r3}, the averaging density in~\eqref{eq:AGMI-5-part-Es1a}--\eqref{eq:AGMI-5-part-Es2a} is here
\begin{align}
    p(h,s_T | y,s_R) = \delta(h-s_R) \,
    \frac{p(s_T|h) \, p(y | h,s_T)}{p(y|h)} .
    \label{eq:reverse-avg-density2}
\end{align}
We use numerical integration to compute the GMI.

To obtain more insight, we state the forward model rates of Theorem~\ref{theorem:AGMI-SR-max}
and Remark~\ref{remark:full-CSIR} as a Corollary.

%%%%%%%%%%%%%%%%%%%%%%%%%%
\begin{corollary} \label{corollary:part-CSIT}
An achievable rate for the fading channels \eqref{eq:Y-Gauss}
with $S_R=H$ and partial CSIT is the forward model GMI
\begin{align}
  I_1(A;Y|H) & = \E{\log\left( 1 + \text{SNR}(H) \right)}
  \label{eq:GMI-AYH}
\end{align}
where
\begin{align}
  \text{SNR}(h) = \frac{|h|^2 \tilde P_T(h)}{1 + |h|^2 \Var{\left. \sqrt{P(S_T)} \right| H=h} }
  \label{eq:SNR-AYH}
\end{align}
and
\begin{align}
    \tilde P_T(h) = \E{ \left. \sqrt{P(S_T)} \right| H=h}^2.
    \label{eq:Pbp}
\end{align}
\end{corollary}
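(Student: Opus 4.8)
The plan is to specialize Theorem~\ref{theorem:AGMI-SR-max}, or more directly its channel-specific form~\eqref{eq:AGMI-SR-Gauss}, to the full-CSIR case $S_R=H$. Since the corollary is essentially a substitution exercise, the only work is to evaluate the two quantities $\tilde P(s_R)$ and $\E{|Y|^2|S_R=s_R}-\tilde P(s_R)$ that appear in~\eqref{eq:AGMI-SR} when we condition on $S_R=h$, and to verify that they collapse to the claimed SNR numerator and denominator.

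First I would compute the numerator $\tilde P(h)$ from~\eqref{eq:AGMI-P2}. Conditioning on $S_R=h$ makes $H=h$ deterministic, so the inner conditional expectation reduces to
\begin{align*}
  \E{\left. H\sqrt{P(S_T)} \,\right| S_T, S_R=h} = h\sqrt{P(s_T)},
\end{align*}
whose magnitude is $|h|\sqrt{P(s_T)}$. Averaging over $S_T$ given $S_R=h$ and squaring then yields $\tilde P(h)=|h|^2\,\tilde P_T(h)$ with $\tilde P_T(h)=\big(\E{\sqrt{P(S_T)}\,|\,H=h}\big)^2$ as in~\eqref{eq:Pbp}.

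Next I would evaluate the denominator. Because $G=|H|^2=|h|^2$ is fixed given $S_R=h$, we have $\E{|Y|^2|S_R=h}=1+|h|^2\,\E{P(S_T)|H=h}$. Subtracting $\tilde P(h)$ and using the elementary identity $\Var{\sqrt{P(S_T)}|H=h}=\E{P(S_T)|H=h}-\big(\E{\sqrt{P(S_T)}|H=h}\big)^2$ gives the denominator $1+|h|^2\Var{\sqrt{P(S_T)}|H=h}$. This is the point where Remark~\ref{remark:full-CSIR} applies: since $H$ is a (trivial) function of $[S_R,S_T]$, equality holds in the Jensen bound~\eqref{eq:AGMI-denominator}, so the denominator is exactly the variance expression rather than merely an upper bound on it. Combining numerator and denominator produces $\text{SNR}(h)$ as in~\eqref{eq:SNR-AYH}, and taking the outer expectation over $S_R=H$ in~\eqref{eq:AGMI-SR} gives~\eqref{eq:GMI-AYH}.

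There is no genuine analytic obstacle here; the result is a direct specialization. The only point requiring care is the bookkeeping of nested conditional expectations---in particular, keeping straight that the inner expectation in~\eqref{eq:AGMI-P2} is over the residual randomness left after conditioning on \emph{both} $S_T$ and $S_R$, which vanishes once $S_R=H$ pins down $H$, while the outer expectation still averages over $S_T\,|\,H=h$. This is exactly the mechanism that turns the general denominator of~\eqref{eq:AGMI-SR-Gauss} into the clean variance form and makes the GMI coincide with the intuitive SNR expression.
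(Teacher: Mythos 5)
Your proposal is correct and follows exactly the route the paper takes: the paper introduces the corollary with the words ``we state the forward model rates of Theorem~\ref{theorem:AGMI-SR-max} and Remark~\ref{remark:full-CSIR} as a Corollary,'' i.e., it is the same specialization of \eqref{eq:AGMI-SR-Gauss} to $S_R=H$, with $\tilde P(h)=|h|^2\tilde P_T(h)$ and the denominator collapsing to $1+|h|^2\Var{\sqrt{P(S_T)}\,|\,H=h}$ because $H$ is determined by $S_R$. Your bookkeeping of the nested conditional expectations and the use of the variance identity match the paper's intended argument, so nothing is missing.
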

%%%%%%%%%%%%%%%%%%%%%%%%%%

\begin{remark}
Jensen's inequality gives
\begin{align}
  \tilde P_T(h) \le \E{P(S_T) | H=h}
\end{align}
by the concavity of the square root. Equality holds if and only if $P(S_T)$ is a constant given $H=h$. 
\end{remark}

\begin{remark}
Choosing $P(s_T)=P$ for all $s_T $ in Corollary~\ref{corollary:part-CSIT} gives $\tilde P_T(h)=P$ for all $h$ and the rate \eqref{eq:GMI-AYH} is the capacity \eqref{eq:no-CSIT} without CSIT.
\end{remark}

\begin{remark}
For large $P$, the $\text{SNR}(h)$ in \eqref{eq:SNR-AYH} saturates unless $P(s_T)/P \rightarrow 1$ for all $s_T$, i.e., the high-SNR capacity is the same as the capacity without CSIT. The CSIT thus must become more accurate as $P$ increases to improve the rate.
\end{remark}

\begin{remark} \label{remark:P-policy-deriv-part-CSIT}
To optimize the power levels, consider~\eqref{eq:P-policy-deriv} and
\begin{align}
  & \tilde P(h)' = 2 |h|^2 \sqrt{\tilde P_T(h)} \, p(s_T | h) \\
  & \E{|Y|^2|H=h}' = 2 |h|^2 \sqrt{P(s_T)} \, p(s_T | h).
\end{align}
However, the resulting equations give little insight due to the expectation over $H$ in \eqref{eq:P-policy-deriv}. An exception is the on-off fading case where the expectation has only one term; see~\eqref{eq:oof-opc-1}--\eqref{eq:oof-opc-2}.
\end{remark}

%================================================
\subsection{Partial CSIR, Full CSIT}
\label{subsec:Partial-CSIR-fullCSIT}
Suppose $S_R$ is a (perhaps noisy) function of $H$; see~\eqref{eq:H-mmse-estimate}.  The capacity is given by \eqref{eq:Shannon-cap-SR-AWGN} for which we need to compute $p(y | a, s_R)$ and $p(y | s_R)$. The GMI with a $K$-partition of the output space  $\set Y \times \set S_R$ can be helpful for these problems. We assume that the CSIR is either $S_R=0$ or $S_R=1(G \ge t)$ for some transmitter threshold $t$; see~\cite{Goldsmith-Varaiya-IT97}. 

Suppose that $S_T=H$. We then have
\begin{align*}
    p(y | a, s_R) & = \int_{\mathbb C} p(h | s_R) \frac{\exp\left( -\left| y - h \, x(h) \right|^2 \right)}{\pi} \, dh
    %\label{eq:pySRa-fullCSIT}
    \\
    p(y | s_R) & = \int_{\mathbb C^2} p(h | s_R) \, p(x(h)) \nonumber \\
    & \qquad \frac{\exp\left( -\left| y - h \, x(h) \right|^2 \right)}{\pi} \, dx(h) \, dh.
    %\label{eq:pySR-fullCSIT}
\end{align*}
Now select the $X(h)$ to be jointly CSCG with variances $\E{|X(h)|^2}=P(h)$ and correlation coefficients 
\begin{align*}
  \rho(h,h') = \frac{\E{ X(h) X(h')^\text{*}}}{\sqrt{P(h) P(h')}}
\end{align*}
and where $\E{P(H)} \le P$. We then have
\begin{align*}
    p(y | s_R) = \int_{\mathbb C} p(h) \frac{e^{-|y|^2/(|h|^2 P(h) + 1)}}{2 \pi (|h|^2 P(h) + 1)} \, dh.
\end{align*}
As in \eqref{eq:py}, $p(y | s_R )$ and therefore $h(Y | S_R)$ depend only on the marginals $p(x(h))$ of $A$ and not on the $\rho(h,h')$. We thus have the problem of finding the $\rho(h,h')$ that minimize
\begin{align*}
    h(Y | S_R, A) = \int_\set{A} p(a) \, h(Y | S_R, A=a) \, da.
\end{align*}
However, we study the conventional $A$ in \eqref{eq:adaptive-codeword-conventional} for simplicity.

For the reverse model GMI~\eqref{eq:AGMI-r3}, the averaging density in~\eqref{eq:AGMI-5-part-Es1a}--\eqref{eq:AGMI-5-part-Es2a} is (cf.~\eqref{eq:reverse-avg-density2})
\begin{align}
    p(h,s_T | y,s_R) = \delta(s_T-h) \,
    \frac{p(h|s_R) \, p(y | h,s_R)}{p(y|s_R)} .
    \label{eq:reverse-avg-density3}
\end{align}
We again use numerical integration to compute the GMI.

For the forward model GMI, consider the same model and CSCG $X$ as in Theorem~\ref{theorem:AGMI-SR-max}. Since $H$ is a function of $S_T$, we use \eqref{eq:AGMI-denominator} in Remark~\ref{remark:variance} to write
\begin{align}
   & I_1(A;Y | S_R) \nonumber \\
   & = \E{ \log\left( 1 + \frac{\tilde P(S_R)}{1 + \Var{\left. \sqrt{G P(H)} \right| S_R}} \right) }
   \label{eq:R-fullCSIT}
\end{align}
where (see~\eqref{eq:AGMI-desire})
\begin{align}
   & \tilde P(s_R) = \E{ \left. \sqrt{G P(H)} \, \right| S_R=s_R}^2
   \label{eq:partialCSIT-PsR} \\
   & \E{|Y|^2 | S_R=s_R} = 1 + \E{G P(H) | S_R=s_R} .
   \label{eq:partialCSIT-EY2}
\end{align}
The transmitter compensates for the phase of $H$, and it remains to adjust the transmit power levels $P(h)$. We study five power control policies and two types of CSIR; see Table~\ref{table:pc-policies}.

%----------------------------
\begin{table}[t!]
\centering
\caption{Power Control Policies and Minimal SNRs}
\begin{tabular}{ll|cc|}
& & \multicolumn{2}{c|}{\bf CSIR} \\
%\cline{2-3}
& & None: $S_R=0$ & $S_R=1(G \ge t)$ \\
\hline
\multirow{5}{*}{\bf Policy}
& TCP     & Eq.~\eqref{eq:minEbNo-fullCSIT-noCSIR-TCP}    & Eq.~\eqref{eq:minEbNo-fullCSIT-CSIR-TCP}  \\
& TMF     & Eq.~\eqref{eq:minEbNo-fullCSIT-noCSIR-TMF}   &  Eq.~\eqref{eq:minEbNo-fullCSIT-CSIR-TMF}    \\
& TCI       & Eq.~\eqref{eq:minEbNo-fullCSIT-noCSIR-TCI}    &  Eq.~\eqref{eq:minEbNo-fullCSIT-CSIR-TCI}  \\
& GMI-Optimal & \multicolumn{2}{c|}{see Theorem~\ref{theorem:Partial-CSIR-Full-CSIT}}  \\
& TMMSE  & \multicolumn{2}{c|}{see Remark~\ref{remark:TMMSE}}  \\
\hline
\end{tabular}
\label{table:pc-policies}
\end{table}
%----------------------------

%================================================
\subsubsection{Heuristic Policies}
The first three policies are reasonable heuristics and have the form
\begin{align}
   P(h) = \left\{ \begin{array}{ll}
     \hat P \, g^a, & g \ge t \\ 0, & \text{else}
   \end{array} \right.
   \label{eq:heuristic-PC-policies}
\end{align}
for some choice of real $a$ and where
\begin{align}
   \hat P= \frac{P}{\int_{t}^\infty p(g) \, g^a \, dg}.
   \label{eq:heuristic-PC-policies-Phat}
\end{align}
In particular, choosing $a=0,+1,-1$, we obtain policies that we call truncated constant power (TCP), truncated matched filtering (TMF), and truncated channel inversion (TCI), respectively; see~\cite[p.~487]{Keshet-Steinberg-Merhav-08}, \cite{Goldsmith-Varaiya-IT97}. For such policies, we compute
\begin{align}
   & \tilde P(s_R) = \hat P \left( \int_{t}^\infty p(g | s_R) \, \sqrt{g^{1+a}} \, dg \right)^2 \label{eq:Pt-fullCSIT} \\
   & \E{G P(H) | S_R=s_R} = \hat P \, \int_{t}^\infty p(g | s_R) \, g^{1+a} \, dg .
   \label{eq:EGP-fullCSIT}
\end{align}
These policies all have the form $P(h)=P \cdot f(h)$ for some function $f(.)$ that is independent of $P$. The minimum SNR in \eqref{eq:wideband} with $C(P)$ replaced with the GMI is thus
\begin{align}
   \left.\frac{E_b}{N_0}\right|_{\text{min}}
   & = \frac{\left(\int_{t}^\infty p(g) \, g^a \, dg \right) \log 2}{\E{ \left( \int_{t}^\infty p(g | S_R) \, \sqrt{g^{1+a}} \, dg \right)^2 }} .
   \label{eq:minEbNo-fullCSIT}
\end{align}

For instance, consider the threshold $t=0$ (no truncation). The TCP ($a=0$) and TMF ($a=1$) policies have $\hat P = P$ while TCI ($a=-1$) has $P = \hat P / \E{G^{-1}}$. For TCP, TMF, and TCI, we compute the respective
\begin{align}
   & \left.\frac{E_b}{N_0}\right|_{\text{min}} = \frac{\log 2}{\E{ \E{\left. \sqrt{G} \,\right| S_R}^2 }}
   \label{eq:minEbNo-fullCSIT-TCP} \\
   & \left.\frac{E_b}{N_0}\right|_{\text{min}} = \frac{\log 2}{\E{ \E{G|S_R}^2 }}
   \label{eq:minEbNo-fullCSIT-TMF} \\
   & \left.\frac{E_b}{N_0}\right|_{\text{min}} = \E{G^{-1}} \log 2 .
   \label{eq:minEbNo-fullCSIT-TCI}
\end{align}
Applying Jensen's inequality to the square root, square, and inverse functions in  \eqref{eq:minEbNo-fullCSIT-TCP}--\eqref{eq:minEbNo-fullCSIT-TCI}, we find that for $t=0$:
\begin{itemize}
\item the minimum $E_b/N_0$ of TCP and TCI is larger (worse) than $-1.59$ dB unless there is no fading;
\item the minimum $E_b/N_0$ of TMF is smaller (better) than $-1.59$ dB unless $\E{G|S_R}=\E{G}=1$.
\end{itemize}
However, we emphasize that these claims apply to the GMI and not necessarily the mutual information; see Sec.~\ref{subsec:Rayleigh-partial-CSIR-full-CSIT} and Figs.~\ref{fig:rf4}--\ref{fig:rf5}.

%================================================
\subsubsection{GMI-Optimal Policy}
The fourth policy is optimal for the GMI \eqref{eq:R-fullCSIT} and has the form of an MMSE precoder. This policy motivates a  truncated MMSE (TMMSE) policy that generalizes and improves TMF and TCI.

Taking the derivative of the Lagrangian
\begin{align}
   I_1(A;Y|S_R) + \lambda \left(P - \E{P(H)}\right)
   \label{eq:Lagrangian2}
\end{align}
with respect to $P(h)$ we have the following result.

%-----------------------
\begin{theorem} \label{theorem:Partial-CSIR-Full-CSIT}
The optimal power control policy for the GMI $I_1(A;Y|S_R)$ for the fading channels \eqref{eq:Y-Gauss} with $S_T=H$ is
\begin{align}
  \sqrt{P(h)} & = \frac{\alpha(h)  |h|}{\lambda + \beta(h) |h|^2}
  \label{eq:P-policy}
\end{align}
where $\lambda>0$ is chosen so that $\E{P(H)}=P$ and
\begin{align}
  & \alpha(h) = \int_{\mathbb C} p(s_R | h) \frac{\sqrt{\tilde P(s_R)}}{\E{|Y|^2|S_R=s_R}-\tilde P(s_R)} \, ds_R
  \label{eq:P-policy-alpha} \\
  & \beta(h) = \int_{\mathbb C} p(s_R | h) \nonumber \\
  & \cdot \frac{\tilde P(s_R)}{\left[ \E{|Y|^2|S_R=s_R} - \tilde P(s_R) \right] \E{|Y|^2|S_R=s_R} } \, ds_R .
  \label{eq:P-policy-beta}
\end{align}
\end{theorem}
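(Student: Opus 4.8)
The plan is to differentiate the GMI expression~\eqref{eq:R-fullCSIT} with respect to the design variable $\sqrt{P(h)}$ and impose the stationarity condition from the Lagrangian~\eqref{eq:Lagrangian2}. The key observation is that both the numerator $\tilde P(s_R)$ and the second-order statistic $\E{|Y|^2|S_R=s_R}$ depend on $P(h)$ through~\eqref{eq:partialCSIT-PsR} and~\eqref{eq:partialCSIT-EY2}. Writing $M(s_R) = \E{|Y|^2|S_R=s_R}$, the summand inside the expectation in~\eqref{eq:R-fullCSIT} has the form $\log\!\left( M(s_R) / (M(s_R) - \tilde P(s_R)) \right)$, so I would first record the partial derivatives of $\tilde P(s_R)$ and $M(s_R)$ with respect to $\sqrt{P(h)}$.

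\textbf{First}, from~\eqref{eq:partialCSIT-PsR} we have $\sqrt{\tilde P(s_R)} = \E{\sqrt{G\,P(H)}\,|\,S_R=s_R}$, so differentiating under the integral that defines this conditional expectation gives
\begin{align}
  \frac{\partial \tilde P(s_R)}{\partial \sqrt{P(h)}}
  = 2 \sqrt{\tilde P(s_R)} \;|h|\, p(h|s_R)
\end{align}
since only the $H=h$ term in the averaging contributes and $\sqrt{G}=|h|$ there. Similarly, from~\eqref{eq:partialCSIT-EY2},
\begin{align}
  \frac{\partial M(s_R)}{\partial \sqrt{P(h)}}
  = 2 |h|^2 \sqrt{P(h)}\, p(h|s_R).
\end{align}
\textbf{Next}, I would substitute these into the chain-rule derivative of the logarithmic summand. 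A short computation shows the derivative of $\log\!\left(M/(M-\tilde P)\right)$ with respect to $\tilde P$ is $1/(M-\tilde P)$, and with respect to $M$ is $-\tilde P / \left[(M-\tilde P)\,M\right]$. Combining these with the two partials above, the stationarity equation $\partial I_1/\partial\sqrt{P(h)} = 2\lambda\sqrt{P(h)}\,p(h)$ from the Lagrangian becomes, after dividing through by the common factor $2\,p(h)$ and using $p(s_R|h)\,p(h)=p(h|s_R)\,p(s_R)$ to rewrite the outer expectation over $S_R$ as an integral $\int p(s_R|h)\,(\cdot)\,ds_R$,
\begin{align}
  \int_{\mathbb R} p(s_R|h)\,\frac{|h|\sqrt{\tilde P(s_R)}}{M(s_R)-\tilde P(s_R)}\,ds_R
  - |h|^2 \sqrt{P(h)} \int_{\mathbb R} p(s_R|h)\,\frac{\tilde P(s_R)}{\left[M(s_R)-\tilde P(s_R)\right]M(s_R)}\,ds_R
  = \lambda \sqrt{P(h)}.
\end{align}
Recognizing the two integrals as $|h|\,\alpha(h)$ and $|h|^2\,\beta(h)$ of~\eqref{eq:P-policy-alpha}-\eqref{eq:P-policy-beta}, this reads $|h|\,\alpha(h) - \beta(h)|h|^2\sqrt{P(h)} = \lambda\sqrt{P(h)}$, which rearranges directly to the claimed form~\eqref{eq:P-policy}.

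\textbf{The main obstacle} I expect is the careful bookkeeping of \emph{which} terms the differentiation reaches: because $\tilde P(s_R)$ and $M(s_R)$ are themselves expectations over the distribution of $H$ given $S_R$, differentiating with respect to the single value $\sqrt{P(h)}$ (treated as a function evaluated at a fixed point $h$) requires a functional/variational derivative, and one must justify that only the $H=h$ contribution survives and that interchanging differentiation with the conditional-expectation integral is valid. The step converting $\E{\,\cdot\,|\,\text{outer }S_R}$ times $p(h)$ into the weighted integral $\int p(s_R|h)(\cdot)\,ds_R$ via Bayes' rule is where the $\alpha(h),\beta(h)$ definitions crystallize, and getting the signs and the isolation of $\sqrt{P(h)}$ right is the only genuinely delicate part; everything else is routine chain rule. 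I would also remark that the positivity of $\lambda$ and the implicit constraint $P(h)\ge 0$ (active where the unconstrained solution would be negative) are handled exactly as in standard waterfilling, so the stated policy holds on the support where $P(h)>0$.
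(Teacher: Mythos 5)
Your proposal is correct and follows essentially the same route as the paper: the paper first records the generic stationarity condition \eqref{eq:P-policy-deriv} (which is exactly your chain-rule computation on $\log(M/(M-\tilde P))$), then inserts the same two partial derivatives $\tilde P(s_R)'=2|h|\sqrt{\tilde P(s_R)}\,p(h|s_R)$ and $\E{|Y|^2|S_R=s_R}'=2|h|^2\sqrt{P(h)}\,p(h|s_R)$ and rearranges via Bayes' rule, just as you do. No gaps.
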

%-----------------------
\begin{proof}
Apply \eqref{eq:P-policy-deriv} with \eqref{eq:partialCSIT-PsR}--\eqref{eq:partialCSIT-EY2} to obtain
\begin{align}
  & \tilde P(s_R)' = 2 |h| \sqrt{\tilde P(s_R)} \, p(h | s_R) \\
  & \E{|Y|^2|S_R=s_R}' = 2 |h|^2 \sqrt{P(h)} \, p(h | s_R).
\end{align}
Inserting into \eqref{eq:P-policy-deriv} and rearranging terms we obtain \eqref{eq:P-policy} with \eqref{eq:P-policy-alpha} and \eqref{eq:P-policy-beta}.
\end{proof}

\begin{remark} \label{remark:optimal-policy}
The expressions \eqref{eq:P-policy-alpha} and \eqref{eq:P-policy-beta} are self-referencing, as $\tilde P(s_R)$ itself depends on $\alpha(h)$ and $\beta(h)$. However, one simplification occurs if $S_R$ is a function of $H$: $\alpha(h)$ and $\beta(h)$ are functions of $s_R$ only since the $p(s_R|h)$ in \eqref{eq:P-policy-alpha}--\eqref{eq:P-policy-beta} is a Dirac generalized function.
\end{remark}

\begin{remark}
Consider the expression \eqref{eq:P-policy}. We effectively have a matched filter for small $|h|$; for large $|h|$, we effectively have a channel inversion. Recall that LMMSE filtering has similar behavior for low and high SNR, respectively.
\end{remark}

\begin{remark} \label{remark:TMMSE}
A heuristic based on the optimal policy is a TMMSE policy where the transmitter sets $P(h)=0$ if $G<t$, and otherwise uses~\eqref{eq:P-policy} but where $\alpha(h)$, $\beta(h)$ are independent of $h$. There are thus four parameters to optimize: $\lambda$, $\alpha$, $\beta$, and $t$. This TMMSE policy will outperform TMF and TCI in general, as these are special cases where $\beta=0$ and $\lambda=0$, respectively.
\end{remark}

%================================================
\subsubsection{$S_R=0$}
For this CSIR, the GMI~\eqref{eq:R-fullCSIT} simplifies to $I_1(A;Y)$ and the heuristic policy (TCP, TMF, TCI) rates are
\begin{align}
   I_1(A;Y) = \log\left( 1 + \frac{\hat P\, \E{\sqrt{G^{1+a}} \cdot 1(G \ge t)}^2}
   {1 + \hat P\, \Var{\sqrt{G^{1+a}} \cdot 1(G \ge t)}} \right).
   \label{eq:R-fullCSIT-noCSIR}
\end{align}
Moreover, the expression \eqref{eq:minEbNo-fullCSIT} gives
\begin{align}
   \left.\frac{E_b}{N_0}\right|_{\text{min}}
   & = \frac{\E{G^a \cdot 1(G \ge t)}}
   {\E{\sqrt{G^{1+a}} \cdot 1(G \ge t)}^2} \log 2 .
   \label{eq:minEbNo-fullCSIT-noCSIR}
\end{align}

For TCP, TMF, and TCI, we compute the respective
\begin{align}
   & \left.\frac{E_b}{N_0}\right|_{\text{min}} = \frac{\log 2}{\Pr{G \ge t} \E{\left. \sqrt{G}\, \right| G \ge t}^2}
   \label{eq:minEbNo-fullCSIT-noCSIR-TCP} \\
   & \left.\frac{E_b}{N_0}\right|_{\text{min}} = \frac{\log 2}{\int_{t}^\infty p(g) \, g \, dg}
   \label{eq:minEbNo-fullCSIT-noCSIR-TMF} \\
   & \left. \frac{E_b}{N_0}\right|_{\text{min}} = \frac{\E{\left. G^{-1} \right| G \ge t} }{\Pr{G \ge t}} \log 2 .
   \label{eq:minEbNo-fullCSIT-noCSIR-TCI}
\end{align}
Again applying Jensen's inequality to the various functions in~\eqref{eq:minEbNo-fullCSIT-noCSIR-TCP}--\eqref{eq:minEbNo-fullCSIT-noCSIR-TCI}, we find that:
\begin{itemize}
\item the minimum $E_b/N_0$ of TMF is smaller (better) than that of TCP and TCI unless there is no fading, or if the minimal $E_b/N_0$ is $-\infty$;
\item the best threshold for TMF is $t=0$ and the minimal $E_b/N_0$ is $-1.59$ dB.
\end{itemize}
For the optimal policy, the parameters $\alpha(h)$ and $\beta(h)$ in \eqref{eq:P-policy-alpha}--\eqref{eq:P-policy-beta} are constants independent of $h$, see Remark~\ref{remark:optimal-policy}, and the TMMSE policy with $t=0$ is the GMI-optimal policy.

\begin{remark} \label{remark:TCI-densities}
The TCI channel densities are
\begin{align*}
    & p(y | a) = \Pr{G<t} \frac{e^{-| y |^2}}{\pi} + \Pr{G\ge t} \frac{e^{-\left| y - \sqrt{\hat P} \, u \right|^2}}{\pi}
    %\label{eq:pySRa2-fullCSIT}
    \\
    & p(y) = \Pr{G<t} \frac{e^{-| y |^2}}{\pi} + \Pr{G\ge t} \frac{e^{-| y |^2 / (1+\hat P)}}{\pi(1+\hat P)} .
    %\label{eq:pySR2-fullCSIT}
\end{align*}
\end{remark}

\begin{remark} \label{remark:high-SNR-oof}
At high SNR, one might expect that the receiver can estimate $P(S_T)$ precisely even if $S_R=0$. We show that this is indeed the case for on-off fading by using the $K=2$ partition~\eqref{eq:AGMI-part2} of Remark~\ref{remark:AGMI-part2}. Moreover, the results prove that at high SNR one can approach $I(A;Y)$; see Sec.~\ref{subsec:oof-Partial-CSIR-full-CSIT}.
\end{remark}

\begin{remark} \label{remark:high-SNR-Rayleigh}
For Rayleigh fading, the GMI with $K=2$ in~\eqref{eq:AGMI-part2} is helpful for both high and low SNR. For instance, for $S_R=0$ and TCI, the $K=2$ GMI approaches the mutual information for $S_R=1(G\ge t)$ as the SNR increases; see Remark~\ref{remark:TCI-high-SNR} in Sec.~\ref{subsec:Rayleigh-partial-CSIR-full-CSIT}. We further show that for $S_R=0$, the TCI policy can achieve a minimal $E_b/N_0$ of $-\infty$ dB, see~\eqref{eq:Rayleigh-EbNo-scaling} in Sec.~\ref{subsec:Rayleigh-partial-CSIR-full-CSIT}.
\end{remark}

%================================================
\subsubsection{$S_R=1(G \ge t)$}
The heuristic policy rates are now (cf.~\eqref{eq:R-fullCSIT-noCSIR} and note the $\Pr{G \ge t}$ term and conditioning)
\begin{align}
   & I_1(A;Y|S_R) \nonumber \\
   & = \Pr{G \ge t} \log\left( 1 + \frac{\hat P \, \E{\left. \sqrt{G^{1+a}} \right| G \ge t}^2}
   {1 + \hat P \, \Var{\left. \sqrt{G^{1+a}} \right| G \ge t}} \right).
   \label{eq:R-fullCSIT-CSIR}
\end{align}
Moreover, the expression \eqref{eq:minEbNo-fullCSIT} is
\begin{align}
   & \left.\frac{E_b}{N_0}\right|_{\text{min}} 
   = \frac{\E{\left. G^a \right| G \ge t}}{\E{\left. \sqrt{G^{1+a}} \right| G \ge t}^2} \log 2 .
   \label{eq:minEbNo-fullCSIT-CSIR} 
\end{align}

For TCP, TMF, and TCI we compute the respective
\begin{align}
   & \left.\frac{E_b}{N_0}\right|_{\text{min}} = \frac{\log 2}{\E{\left. \sqrt{G} \right| G \ge t}^2}
   \label{eq:minEbNo-fullCSIT-CSIR-TCP} \\
   & \left.\frac{E_b}{N_0}\right|_{\text{min}} = \frac{\log 2}{\E{G | G \ge t}}
   \label{eq:minEbNo-fullCSIT-CSIR-TMF} \\
   & \left. \frac{E_b}{N_0}\right|_{\text{min}} = \E{\left. G^{-1} \right| G \ge t} \log 2 .
   \label{eq:minEbNo-fullCSIT-CSIR-TCI}
\end{align}
Again applying Jensen's inequality to the various functions in~\eqref{eq:minEbNo-fullCSIT-CSIR-TCP}--\eqref{eq:minEbNo-fullCSIT-CSIR-TCI}, we find that:
\begin{itemize}
\item the minimum $E_b/N_0$ of all policies can be better than $-1.59$ dB by choosing $t>0$;
\item the minimum $E_b/N_0$ of TMF is smaller (better) than that of TCP and TCI unless there is no fading or the minimal $E_b/N_0$ is $-\infty$.
\end{itemize}

For the optimal policy, Remark~\ref{remark:optimal-policy} points out that $\alpha(h)$ and $\beta(h)$ depend on $s_R$ only. We compute
\begin{align}
  \sqrt{P(h)} = \left\{ \begin{array}{ll}
   \dfrac{\alpha_0 \, |h|}{\lambda + \beta_0 |h|^2}, & g < t \vspace{0.2cm} \\
   \dfrac{\alpha_1 \, |h|}{\lambda + \beta_1 |h|^2}, & g \ge t
   \end{array} \right.
\end{align}
where for $s_R \in \{0,1\}$ we have
\begin{align*}
  & \alpha_{s_R} = \dfrac{\sqrt{\tilde P(s_R)}}{\E{|Y|^2|S_R=s_R}-\tilde P(s_R)} \\
  & \beta_{s_R} = \dfrac{\sqrt{\tilde P(s_R)}}{\left[\E{|Y|^2|S_R=s_R}-\tilde P(s_R)\right] \E{|Y|^2|S_R=s_R}}.
\end{align*}

\begin{remark} \label{remark:TCI-MI}
The GMI \eqref{eq:R-fullCSIT-CSIR} for TCI ($a=-1$) is the mutual information $I(A;Y|S_R)$.
To see this, observe that the model $q(y|a,s_R)$ has
\begin{align*}
   q(y|a,0) = \frac{e^{-| y |^2}}{\pi}, \quad
   q(y|a,1) = \frac{e^{-\left| y - \sqrt{\hat P} \, u \right|^2}}{\pi}
\end{align*}
and thus we have $q(y|a,s_R)=p(y|a,s_R)$ for all $y,a,s_R$.
\end{remark}

%================================================
\subsection{Partial CSIR, CSIT\at R}
\label{subsec:Partial-CSIR-CSITatR}
Suppose next that $S_R$ is a noisy function of $H$ (see for instance~\eqref{eq:H-mmse-estimate}) and $S_T=f_T(S_R)$. The capacity is given by \eqref{eq:Shannon-cap1} and we compute
\begin{align}
  I(X;Y|S_R) = \E{\log \frac{p(Y|X,S_R)}{p(Y|S_R)}}
  \label{eq:IXYH}
\end{align}
where writing $s_T=f_T(s_R)$ we have
\begin{align}
  & p(y | s_R, x) = \int_{\mathbb C} p(h | s_R) \frac{e^{-|y -h \, x(s_T)|^2}}{\pi} \, dh
  \label{eq:pysRx} \\
  & p(y | s_R) = \int_{{\mathbb C}^2} p(h | s_R) \, p(x(s_T)) \, \frac{e^{-|y - h \, x(s_T)|^2}}{\pi} \, dx(s_T) \, dh .
  \label{eq:pysR}
\end{align}
For example, if $X(s_T)$ is CSCG with variance $P(s_T)$ then 
\begin{align}
  p(y | s_R) = \int_{\mathbb C} p(h | s_R) \frac{\exp\left(- \frac{|y|^2}{1 + |h|^2 P(s_T)}\right)}{\pi (1 + |h|^2 P(s_T))} \, dh .
  \label{eq:pysR-2}
\end{align}
One can compute $I(X;Y|S_R)$ numerically using \eqref{eq:pysRx}--\eqref{eq:pysR}. However, optimizing over $X(s_T)$ is usually difficult.

For the reverse model GMI~\eqref{eq:AGMI-r3}, the averaging density in~\eqref{eq:AGMI-5-part-Es1a}--\eqref{eq:AGMI-5-part-Es2a} is now (cf.~\eqref{eq:reverse-avg-density2} and~\eqref{eq:reverse-avg-density3})
\begin{align}
    p(h,s_T | y,s_R) = \delta\left(s_T - f_T(s_R)\right) \,
    \frac{p(h|s_R) \, p(y | h,s_R)}{p(y|s_R)} .
    \label{eq:reverse-avg-density4}
\end{align}
We use numerical integration to compute the rates.

The forward model GMI again gives more insight. Define the channel gain and variance as the respective
\begin{align}
   \tilde g(s_R) & = \left|\E{H | S_R = s_R}\right|^2 \label{eq:partial-CSIR-gt} \\
   \tilde \sigma^2(s_R) & = \Var{H | S_R = s_R}. \label{eq:partial-CSIR-sigmat} 
\end{align}

%---------------------
\begin{theorem} \label{theorem:partial-CSIR}
An achievable rate for AWGN fading channels \eqref{eq:Y-Gauss} with power constraint $\E{|X|^2}\le P$ and with partial CSIR $S_R$ and $S_T=f_T(S_R)$ is
\begin{align}
  I_1(X;Y | S_R)
  = \E{ \log\left( 1 + \frac{\tilde g(S_R) P(S_T)}
  {1 + \tilde \sigma^2(S_R) P(S_T)} \right) }
  \label{eq:GMI-XYH}
\end{align}
where $\E{P(S_T)}=P$. The optimal power levels $P(s_T)$ are obtained by solving
\begin{align}
   & \lambda = \int_{\mathbb R} p(s_R | s_T) \nonumber \\
   & \cdot
   \frac{\tilde g(s_R)}{\left[1 + \left( \tilde g(s_R) + \tilde \sigma^2(s_R) \right) P(s_T) \right]
   \left[1+ \tilde \sigma^2(s_R) P(s_T) \right]} \, ds_R .
   \label{eq:lambda-part-CSIR}
\end{align}
In particular, if $S_T$ determines $S_R$ (CSIR\at T) then we have the quadratic waterfilling expression
\begin{align}
    f\left( P(s_T),\, \tilde g(s_R),\, \tilde \sigma^2(s_R) \right) = \left( \frac{1}{\lambda} - \frac{1}{\tilde g(s_R)} \right)^+
   %\frac{\tilde g + 2 \tilde \sigma^2}{2 \tilde \sigma^2 (\tilde g + \tilde \sigma^2)} 
   %\left[ \sqrt{1 + 4 \tilde \sigma^2 \left(\frac{\tilde g}{\lambda} - 1 \right) \frac{\tilde g + \tilde \sigma^2}{(\tilde g + 2 \tilde \sigma^2)^2}} - 1 \right]^+
   \label{eq:waterfilling-noCSIR}
\end{align}
where
\begin{align}
  f\left( Q,g,\sigma^2 \right) = \left(1 + 2 \frac{\sigma^2}{g} \right) Q
  +  \left( 1 + \frac{\sigma^2}{g} \right ) \sigma^2 Q^2
\end{align}
and where $\lambda$ is chosen so that $\E{P(H_R)}=P$. 
\end{theorem}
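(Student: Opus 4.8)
The plan is to specialize Theorem~\ref{theorem:AGMI-SR-max} to the channel \eqref{eq:Y-Gauss}, optimize the power levels by a Lagrangian argument, and then collapse the optimality condition in the CSIR@R/T case. First I would write $X(s_T)=\sqrt{P(s_T)}\,U(s_T)$ with $U(s_T)\sim\mathcal{CN}(0,1)$ and apply the CSIT@R simplification \eqref{eq:AGMI-P2-CS}. Since $S_T=f_T(S_R)$ is deterministic, conditioning on $S_R=s_R$ fixes $s_T=f_T(s_R)$, and because $U(s_T)$ is independent of $H,Z$ one finds $\E{Y\,U(s_T)^\text{*}\,|\,S_R=s_R}=\sqrt{P(s_T)}\,\E{H|S_R=s_R}$, so that $\tilde P(s_R)=\tilde g(s_R)\,P(s_T)$ with $\tilde g$ as in \eqref{eq:partial-CSIR-gt}. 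Similarly $\E{|Y|^2\,|\,S_R=s_R}=1+\E{|H|^2\,|\,S_R=s_R}\,P(s_T)$, and the bias-variance split $\E{|H|^2\,|\,S_R=s_R}=\tilde g(s_R)+\tilde\sigma^2(s_R)$ (see \eqref{eq:partial-CSIR-gt} and \eqref{eq:partial-CSIR-sigmat}) gives $\E{|Y|^2\,|\,S_R=s_R}=1+(\tilde g(s_R)+\tilde\sigma^2(s_R))P(s_T)$. Substituting both into \eqref{eq:AGMI-SR} and cancelling the $\tilde g(s_R)P(s_T)$ contribution in the denominator yields \eqref{eq:GMI-XYH}.

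For the optimal power levels I would form the Lagrangian $I_1(X;Y|S_R)+\lambda(P-\E{P(S_T)})$ and differentiate with respect to $P(s_T)$, using the per-realization derivative
\begin{align}
\frac{\partial}{\partial P(s_T)}\log\!\left(1+\frac{\tilde g\,P(s_T)}{1+\tilde\sigma^2 P(s_T)}\right)
=\frac{\tilde g}{\left[1+(\tilde g+\tilde\sigma^2)P(s_T)\right]\left[1+\tilde\sigma^2 P(s_T)\right]}
\label{eq:plan-deriv}
\end{align}
which follows from $\mathrm{d}/\mathrm{d}Q\,[\tilde g Q/(1+\tilde\sigma^2 Q)]=\tilde g/(1+\tilde\sigma^2 Q)^2$. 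Because a given $s_T$ may arise from several $s_R$, the stationarity condition averages \eqref{eq:plan-deriv} against $p(s_R|s_T)$, and after cancelling the common factor $p(s_T)$ this reproduces \eqref{eq:lambda-part-CSIR}. Equivalently, this is the general power-control identity \eqref{eq:P-policy-deriv} evaluated with the present $\tilde P(S_R)$ and $\E{|Y|^2|S_R}$.

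Finally, in the CSIR@T case $S_R$ is a deterministic function of $S_T$, so $p(s_R|s_T)$ in \eqref{eq:lambda-part-CSIR} is a Dirac generalized function and the integral collapses to the single equation $\lambda\left[1+(\tilde g+\tilde\sigma^2)P(s_T)\right]\left[1+\tilde\sigma^2 P(s_T)\right]=\tilde g$. I would then expand the product, divide through by $\lambda\tilde g$, and regroup to obtain $f(P(s_T),\tilde g,\tilde\sigma^2)=1/\lambda-1/\tilde g$, recognizing the left-hand side as the quadratic in $P(s_T)$ defined in the statement. The $(\cdot)^+$ encodes complementary slackness: whenever $\lambda\ge\tilde g(s_R)$ the target is non-positive and, since $f$ is increasing in $Q$ with $f(0,\cdot,\cdot)=0$, the optimal choice is $P(s_T)=0$. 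This gives \eqref{eq:waterfilling-noCSIR}.

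The GMI substitution and the Lagrangian differentiation are routine; the main grind is the algebra in the last step, namely verifying that the stationarity equation $\lambda[1+(\tilde g+\tilde\sigma^2)Q][1+\tilde\sigma^2 Q]=\tilde g$ rearranges exactly into the stated quadratic waterfilling form $f(Q,\tilde g,\tilde\sigma^2)=1/\lambda-1/\tilde g$, together with handling the many-to-one structure of $f_T$ correctly when differentiating the averaged objective so that the $p(s_R|s_T)$ weighting in \eqref{eq:lambda-part-CSIR} appears as written.
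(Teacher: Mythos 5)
Your proposal is correct and follows essentially the same route as the paper: apply Theorem~\ref{theorem:AGMI-SR-max} with $\tilde P(s_R)=\tilde g(s_R)P(s_T)$ and $\E{|Y|^2|S_R=s_R}=1+(\tilde g(s_R)+\tilde\sigma^2(s_R))P(s_T)$, then obtain \eqref{eq:lambda-part-CSIR} from the Lagrangian stationarity condition (the paper phrases this via \eqref{eq:P-policy-deriv}, i.e., differentiating in $\sqrt{P(s_T)}$, but as you note this is equivalent to your direct derivative in $P(s_T)$), and finally collapse the integral in the CSIR\at T case and rearrange into the quadratic waterfilling form. Your explicit treatment of the $(\cdot)^+$ via complementary slackness and monotonicity of $f$ in $Q$ is a welcome detail the paper leaves implicit.
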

%---------------------
\begin{proof}
Apply Theorem~\ref{theorem:AGMI-SR-max} with
\begin{align}
  & \tilde P(s_R) = \tilde g(s_R) P(s_T) \\
  & \E{|Y|^2|S_R=s_R} = 1 + \left( \tilde g(s_R) + \tilde \sigma^2 (s_R) \right) P(s_T)
\end{align}
to obtain \eqref{eq:GMI-XYH}. To optimize the power levels $P(s_T)$ with \eqref{eq:P-policy-deriv}, consider the derivatives
\begin{align}
  & \tilde P(s_R)' = 2 \tilde g(s_R) \sqrt{P(s_T)} 1(s_T=f_T(s_R)) \\
  & \E{|Y|^2|S_R=s_R}' \nonumber \\
  & \quad = 2 \left( \tilde g(s_R) + \tilde \sigma^2(s_R) \right) \sqrt{P(s_T)} 1(s_T=f_T(s_R)).
\end{align}
The expression \eqref{eq:P-policy-deriv} thus becomes \eqref{eq:lambda-part-CSIR}. If $S_T$ determines $S_R$ then the expression simplifies to
\begin{align*}
   \lambda =
   \frac{\tilde g(s_R)}{\left[1 + \left( \tilde g(s_R) + \tilde \sigma^2(s_R) \right) P(s_T) \right]
   \left[1+ \tilde \sigma^2(s_R) P(s_T) \right]}
   %\label{eq:lambda-noCSIR}
\end{align*}
from which we obtain \eqref{eq:waterfilling-noCSIR}.
\end{proof}

\begin{remark}
The optimal power control policy with CSIT\at R and CSIR\at T can be written explicitly by solving the quadratic in \eqref{eq:waterfilling-noCSIR}. The result is that $P(s_T)$ is
\begin{align}
    %& P(s_T) = \nonumber \\
    %\frac{1}{\lambda} - \frac{1}{\tilde g}
    %= \left(1 + 2 \frac{\tilde \sigma^2}{\tilde g} \right) P(h_R)
    %+  \left( 1 + \frac{\tilde \sigma^2}{\tilde g} \right ) \tilde \sigma^2 P(h_R)^2
   & \frac{\tilde g + 2 \tilde \sigma^2}{2 \tilde \sigma^2 (\tilde g + \tilde \sigma^2)} 
   \left[ \sqrt{1 + 4 \tilde \sigma^2 \left(\frac{1}{\lambda} - \frac{1}{\tilde g} \right)^+
   \frac{\tilde g \, (\tilde g + \tilde \sigma^2)}{(\tilde g + 2 \tilde \sigma^2)^2}} - 1 \right]
   \label{eq:waterfilling-noCSIR2}
\end{align}
where we have discarded the dependence on $s_R$ for convenience. The alternative form \eqref{eq:waterfilling-noCSIR} relates to the usual waterfilling where the left-hand side of \eqref{eq:waterfilling-noCSIR} is $P(s_T)$. Observe that $\tilde \sigma^2=0$ gives conventional waterfilling.
\end{remark}

\begin{remark} \label{remark:Partial-CSIR-CSITatR-2}
As in Sec.~\ref{subsec:GMI-oof}, we show that at high SNR the $K=2$ GMI of Remark~\ref{remark:AGMI-part3} approaches the upper bound of Proposition~\ref{proposition:part-CSIT2} in some cases; see Sec.~\ref{subsec:oof-Partial-CSIR-CSITatR}. The channel parameters depend on $s_R$, and we choose $h_1(s_R)=0$ and $\sigma_1^2(s_R)=\sigma_2^2(s_R)=1$ for all $s_R$.
\end{remark}

%================================================
\section{On-Off Fading}
\label{sec:oof}
Consider again on-off fading with $P_G(0)=P_G(2)=1/2$. We study the scenarios listed in Table~\ref{table:model-classes}. The case of no CIR and no CSIT was studied in Sec.~\ref{subsec:GMI-oof}.

%================================================
\subsection{Full CSIR, CSIT\at R}
Consider $S_R=H$. The capacity with $B=0$ (no CSIT) is given by \eqref{eq:no-CSIT} (cf.~\eqref{eq:GMI-2-avg-oof}):
\begin{align}
   C(P) = \frac{1}{2} \log\left( 1 + 2P \right)
\end{align}
and the wideband values are given by~\eqref{eq:wideband-fullCSIR-noCSIT} (cf.~\eqref{eq:wideband-fullCSIR-noCSIT-oof}); the minimal $E_b/N_0$ is $\log 2$ and the slope is $S=1$.

The capacity with $B=\infty$ (or $S_T=G$) increases to
\begin{align}
   C(P) & = \frac{1}{2} \log\left( 1 + 4P \right)
   \label{eq:oof-cap2}
\end{align}
where $P(0)=0$ and $P(2)=2P$. This capacity is also achieved with $B=1$ since there are only two values for $G$. We compute $C'(0)=2$ and $C''(0)=-8$, and therefore
\begin{align}
   \left.\frac{E_b}{N_0}\right|_{\text{min}} = \frac{\log 2}{2}, \quad S = 1.
\end{align}
The power gain due to CSIT compared to no fading is thus 3.01 dB, but the capacity slope is the same. The rate curves are compared in Fig.~\ref{fig:oof2}.

\begin{figure}[t]
      \centering
      \includegraphics[width=0.95\columnwidth]{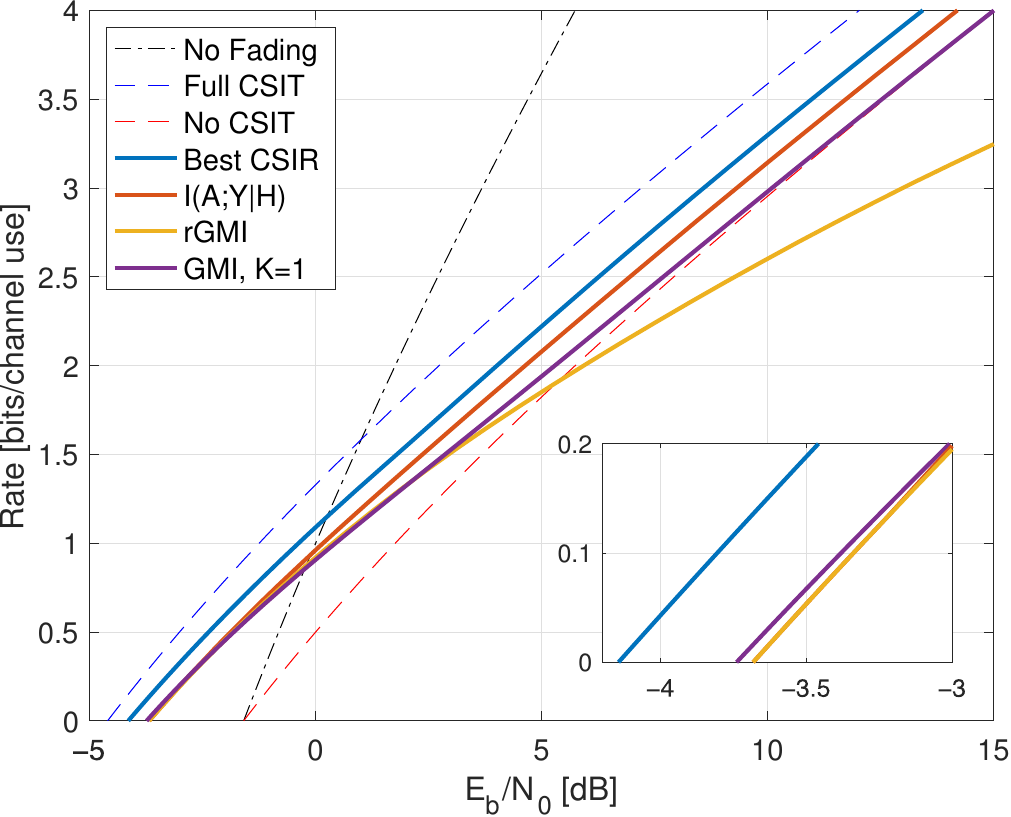}
      \caption{Rates for on-off fading with full CSIR and partial CSIT with noise parameter $\eps=0.1$. The curve ``Best CSIR'' shows the capacity with $S_R =H \sqrt{P(S_T)}$. The curves for $I(A;Y|H)$, the reverse model GMI (rGMI), and the forward model GMI (GMI, K=1) are for $S_R=H$ with CSCG inputs $X(s_T)$. The $I(A;Y|H)$ and rGMI curves are indistinguishable in the inset.}
      \label{fig:oof2}
\end{figure}

%================================================
\subsection{Full CSIR, Partial CSIT}
\label{subsec:oof-full-CSIR-partial-CSIT}
Consider next noisy CSIT with $0\le\epsilon\le\frac{1}{2}$ and
\begin{align*}
  \Pr{S_T = G} = \epsb, \quad
  \Pr{S_T \ne G} = \epsilon .
  %\label{eq:ST-part-CSIT}
\end{align*}

%================================================
\subsubsection{$S_R = H\sqrt{P(S_T)}$}
The capacity $C(P)$ of Proposition~\ref{proposition:part-CSIT2} is
\begin{align}
  \max_{P(0)+P(2)=2P} \frac{\eps}{2} \log\left( 1 +  2 P(0) \right) + \frac{\epsb}{2} \log\left( 1 +  2 P(2) \right).
  \label{eq:GMI-AYH2-oof} 
\end{align}
Optimizing the power levels, we have
\begin{align}
   P(0) = \left( 2 \eps \, P - \frac{\epsb-\eps}{2}\right)^+, \quad
   P(2) = 2P-P(0).
   \label{eq:powers-fullCSIR-parCSIT}
\end{align}
Fig.~\ref{fig:oof2} shows $C(P)$ for $\eps=0.1$ as the curve labeled ``Best CSIR''. For $P \ge (\epsb-\eps)/(4 \eps)$, we compute
\begin{align}
  C(P) = \frac{1}{2} \log(1 + 2P) + \frac{1}{2} [1 - H_2(\eps)] \log 2
  \label{eq:oof-C-high-SNR}
\end{align}
where $H_2(\eps) = -\eps\log_2\eps - \epsb\log_2\epsb$ is the binary entropy function. For example, if $\epsilon=0.1$ then for $P \ge 2$ one gains $\Delta C=[1-H_2(0.1)]/2 \approx 0.27$ %0.265502203205359 bits
bits over the capacity without CSIT. This translates to an SNR gain of $2\Delta C \cdot 10 \log_{10}(2) \approx 1.60$ dB. %1.598482541593736 dB
On the other hand, for $P \le (\epsb-\eps)/(4 \eps)$ we have $P(0)=0$, $P(2)=2P$, and the capacity is
\begin{align}
  C(P) = \frac{\epsb}{2} \log\left( 1 +  4 P \right).
  \label{eq:GMI-AYH2-oof-2} 
\end{align}
We have $C'(0) = 2 \, \epsb$ and lose a fraction of $\epsb$ of the power as compared to having full CSIT ($\epsilon=0$). For example, if $\epsilon=0.1$, the minimal $E_b/N_0$ is approximately $-4.14$ dB.

%================================================
\subsubsection{$S_R = H$}
To compute $I(A;Y|H)$ in \eqref{eq:IAYH}, we write
\eqref{eq:pyah-1} and \eqref{eq:pyh-1b} for CSCG $X(s_T)$ as
\begin{align*}
  & p_{Y|A,H}(y|a,0) = p_{Y|H}(y|0) = \frac{e^{-|y|^2}}{\pi} \\
  & p_{Y|A,H}\big(y | a, \sqrt{2}\,\big) = \eps\, \frac{e^{-\left|y - \sqrt{2} x(0)\right|^2}}{\pi} + \epsb\, \frac{e^{-\left|y - \sqrt{2} x\big(\sqrt{2}\,\big)\right|^2}}{\pi}
  %\label{eq:pyah-oo}
  \\
  & p_{Y|H}\big(y | \sqrt{2}\,\big) = \epsilon \frac{\exp\left(- \frac{|y|^2}{1 + 2 P(0)}\right)}{\pi (1 + 2 P(0))}
  + \epsb\,\frac{\exp\left(- \frac{|y|^2}{1 + 2 P(2)}\right)}{\pi (1 + 2 P(2))} .
  %\label{eq:pyh2-oo}
\end{align*}
Fig.~\ref{fig:oof2} shows the rates as the curve labeled ``$I(A;Y|H)$''. This curve was computed by Monte Carlo integration with $P(0)=0.1\cdot P$ and $P(2)=1.9 \cdot P$, which is near-optimal for the range of SNRs depicted.

The reverse model GMI~\eqref{eq:AGMI-r3} requires $\Var{U|Y,H}$. We show how to compute this variance in Appendix~\ref{appendix:csos-2} by applying~\eqref{eq:AGMI-5-part-Es1a}--\eqref{eq:AGMI-5-part-Es2a}. Fig.~\ref{fig:oof2} shows the GMIs as the curve labeled ``rGMI'', where we used the same power levels as for the $I(A;Y|H)$ curve. The two curves are indistinguishable for small $P$, but the ``rGMI'' rates are poor at large $P$. This example shows that the forward model GMI with optimized powers can be substantially better than the reverse model GMI with a reasonable but suboptimal power policy.

The forward model GMI \eqref{eq:GMI-AYH} is
\begin{align}
  & I_1(A;Y|H) = \frac{1}{2} \log\left( 1 + \text{SNR}\left(\sqrt{2}\right) \right)
\end{align}
where $\text{SNR}\left(\sqrt{2}\right)$ is given by \eqref{eq:SNR-AYH} with
\begin{align*}
  & \tilde P_T\left(\sqrt{2}\right) = \left( \eps\sqrt{P(0)} + \epsb\sqrt{P(2)} \right)^2
  %\label{eq:oof-AYH-PT}
  \\
  & \Var{\left. \sqrt{P(S_T)} \right| H=h} = 1 + 2 \, \eps \, \epsb \left( \sqrt{P(2)} - \sqrt{P(0)} \right)^2 .
  %\label{eq:oof-AYH-EY2}
\end{align*}
Applying Remark \ref{remark:P-policy-deriv-part-CSIT}, the optimal power control policy is
\begin{align}
   \sqrt{P(s_T)}
   & = \frac{p_{H|S_T}\left(\sqrt{2}\,\big| s_T\right)}{\gamma + \beta \, p_{H|S_T}\left(\sqrt{2}\,\big|s_T\right)} \nonumber \\
   & = \left\{ \begin{array}{ll}
   \dfrac{\eps}{\gamma + \beta \, \eps}, & s_T=0 \vspace{0.2cm} \\
   \dfrac{\epsb}{\gamma + \beta \, \epsb}, & s_T=2
   \end{array} \right. 
   \label{eq:oof-opc-1}
\end{align}
where
\begin{align}
   \beta & =  \frac{2 \sqrt{\tilde P_T\left(\sqrt{2}\,\right)}}{\E{|Y|^2 | H=\sqrt{2}\, }}
   \label{eq:oof-opc-2}
   %\alpha & = \frac{\text{SNR}(\sqrt{2})}{2 \left( \eps\sqrt{P(0)} + \epsb\sqrt{P(2)} \right)} \\
   %\beta & = \frac{\text{SNR}(\sqrt{2})}{1 + \eps P(0) + \epsb P(2) }
\end{align}
and $\gamma \ge 0$ is chosen so that $P(0)+P(2)=2P$. Fig.~\ref{fig:oof2} shows the resulting GMI as the curve labeled ``GMI, K=1''. At low SNR, we achieve the rate $\tilde P_T\left(\sqrt{2}\,\right)$ and the optimal power control has $\beta\rightarrow0$ so that
\begin{align}
   & P(0) = \frac{2 P \eps^2}{\eps^2 + \epsb^2}, \quad
   P(2) = \frac{2 P \epsb^2}{\eps^2 + \epsb^2}
\end{align}
and therefore
\begin{align}
   & \tilde P_T(\sqrt{2}) = 2 \left( \eps^2 + \epsb^2 \right) P.
\end{align}
We have $C'(0)=2\left( \eps^2 + \epsb^2 \right)$ and lose a fraction of $(\eps^2+\epsb^2)$ of the power as compared to having full CSIT ($\epsilon=0$). For example, if $\epsilon=0.1$, the minimal $E_b/N_0$ is approximately $-3.74$ dB.

We remark that the $I(A;Y|H)$ and reverse model GMI curves lie above the forward model curve if we choose the same power policy as for the forward channel.

%================================================
\subsection{Partial CSIR, Full CSIT}
\label{subsec:oof-Partial-CSIR-full-CSIT}
This section studies $S_T=H$. The capacity with partial CSIR is given by \eqref{eq:Shannon-cap-SR} for which we need to compute $p(y | a, s_R)$ and $p(y | s_R)$. We consider two cases.

%--------
\subsubsection{$S_R=1(G \ge t)$}
Here we recover the case with full CSIR by choosing $t$ to satisfy $0<t\le 2$.

%--------
\subsubsection{$S_R=0$}
The best power policy clearly has $P(0)=0$ and $P(\sqrt{2}\,)=2P$. The mutual information is thus $I(A;Y)=I\big(X\big(\sqrt{2}\big);Y\big)$ and the channel densities are
(cf.~\eqref{eq:GMI-example1} and \eqref{eq:GMI-example2})
\begin{align*}
    & p(y | a) = \frac{e^{-| y |^2}}{2\pi} + \frac{e^{-\left| y - 2 \sqrt{P} u\left(\sqrt{2}\,\right) \right|^2}}{2\pi} \\
    & p(y) = \frac{e^{-| y |^2}}{2\pi} + \frac{e^{-| y |^2 / (1+4P)}}{2\pi(1+4P)} .
    %\label{eq:pySR-fullCSIT-oof}
\end{align*}
The rates $I(A;Y)$ are shown in Fig.~\ref{fig:oof3}. Observe that the low-SNR rates are larger than without fading; this is a consequence of the slightly bursty nature of transmission.

\begin{figure}[t]
      \centering
      \includegraphics[width=0.95\columnwidth]{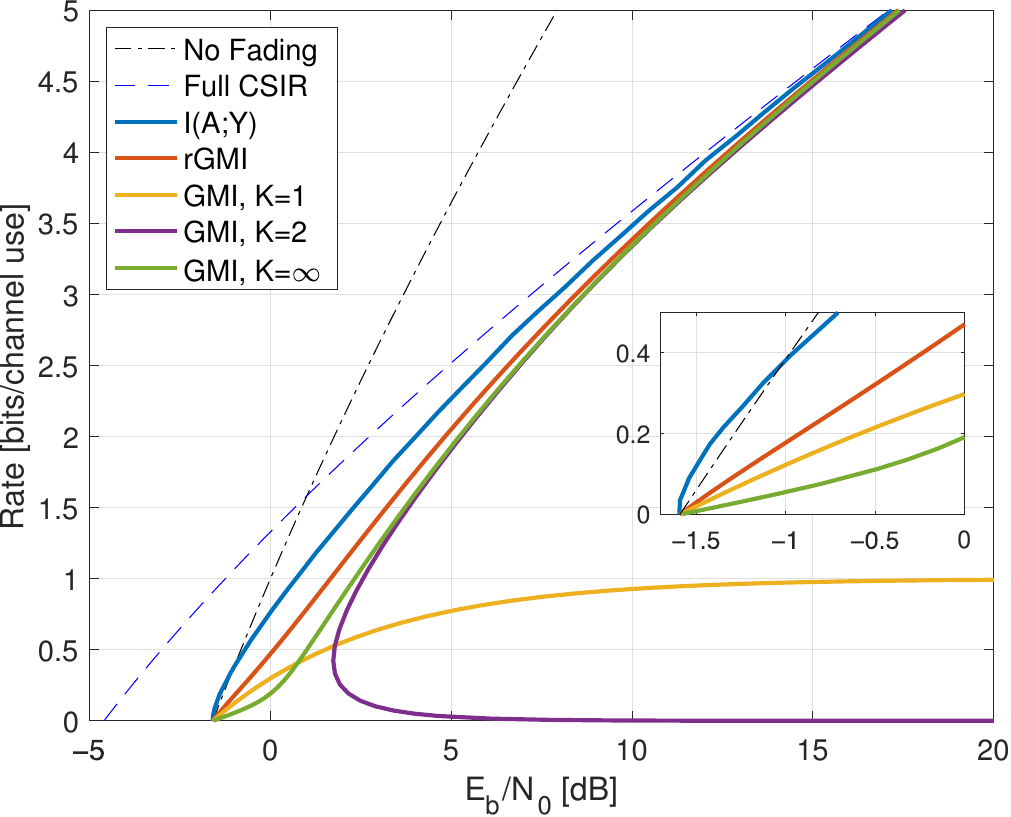}
      \caption{Rates for on-off fading with $S_T=H$ and $S_R=0$.
      The GMI for the $K=2$ partition uses the threshold $t_R=\sqrt{P}+3$.}
      \label{fig:oof3}
\end{figure}

The reverse model GMI~\eqref{eq:AGMI-r3} requires $\Var{U|Y}$. We compute this variance in Appendix~\ref{appendix:csos-3} by using~\eqref{eq:AGMI-5-part-Es1a}--\eqref{eq:AGMI-5-part-Es2a} with~\eqref{eq:reverse-avg-density3} and $\phi(s_T)=0$. Fig.~\ref{fig:oof3} shows the GMIs as the curve labeled ``rGMI''.

Next, the TCP, TMF, TCI, and TMMSE policies are the same for $0<t\le 2$, since they use $P(0)=0$ and $P\left(\sqrt{2}\,\right)=2P$. The resulting rate is given by \eqref{eq:R-fullCSIT}--\eqref{eq:partialCSIT-EY2} with $\tilde P(0) = 0$, $\tilde P(1) = P$, and $\Var{\left. \sqrt{G P(S_T)} \right| S_R=1}=P$ and
\begin{align}
   & I_1(A;Y) = \log\left( 1 + \frac{P}{1 + P} \right).
   \label{eq:AGMI-oof-MF}
\end{align}
The rates are plotted in Fig.~\ref{fig:oof3} as the curve labeled ``GMI, K=1''. This example again shows that choosing $K=1$ is a poor choice at high SNR.

To improve the auxiliary model at high SNR, consider the GMI~\eqref{eq:AGMI-part2} with $K=2$ and the subsets \eqref{eq:2partition}. We further choose the parameters $h_1=0$, $\sigma_1^2=0$, $h_2=2$, $\sigma_2^2=1$, and  adaptive coding with $X(0)=0$, $X\left(\sqrt{2}\,\right)=\sqrt{2P}\,U$, $\bar{X}=\sqrt{P}\,U$, where $U\sim\mathcal{CN}(0,1)$. The GMI \eqref{eq:AGMI-part2} is
\begin{align}
   I_1(A;Y) = \Pr{\set E_2} & \left[
   \log(1+4P) + \frac{\E{|Y|^2 | \set{E}_2}}{1+4P} \right. \nonumber \\
   & \quad \left. - \E{\left. \left| Y - \sqrt{4P} \, U \right|^2 \right| \set E_2} \right].
   \label{eq:GMI-2-part-oof}
\end{align}
In Appendix~\ref{appendix:gmi-oof2}, we show that choosing $t_R = P^{\lambda_R}+b$ where $0<\lambda_R<1$ and $b$ is a real constant makes all terms behave as desired as $P$ increases:
\begin{align}
\begin{array}{l}
   \Pr{\set E_2} \rightarrow 1/2  \vspace{0.1cm} \\
   \E{\left. |Y|^2 \right| \set E_2}/(1+4P) \rightarrow 1  \vspace{0.1cm} \\
   \E{\left. \left|Y - \sqrt{4P} U \right|^2 \right| \set E_2} \rightarrow 1.
\end{array}
\label{eq:gmi-scaling-terms-oof}
\end{align}
We thus have 
\begin{align}
   \lim_{P \rightarrow \infty} \left[ \frac{1}{2} \log(1+4 P) - I_1(X;Y) \right] = 0 .
   \label{eq:GMI-2-example-oof}
\end{align}
Fig.~\ref{fig:oof3} shows the behavior of $I_1(A;Y)$ for $\lambda_R=1/2$ and $b=3$ as the curve labeled ``GMI, K=2''. As for the case without CSIT, the receiver can estimate $H$ accurately at large SNR, and one approaches the capacity with full CSIR.

Finally, the large-$K$ forward model rates are computed using \eqref{eq:GMI-5-part} but where $\bar X$ replaces $X$. One may again use the results of Appendix~\ref{appendix:csos-3} and the relations
\begin{align*}
    \E{\left. \bar X \right| Y=y}
    & =  \sqrt{P} \, \E{U|Y=y} \\
    \E{\left. |\bar X|^2 \right| Y=y}
    & = P \, \E{|U|^2 \big| Y=y} \\
    \Var{\bar X \big | Y=y}
    & = P \, \Var{U | Y=y} .
\end{align*}
The rates are shown as the curve labeled ``GMI, K=$\infty$'' in Fig.~\ref{fig:oof3}. So again, the large-$K$ forward model is good at high SNR but worse than the best $K=1$ model at low SNR.

%================================================
\subsection{Partial CSIR, CSIT\at R}
\label{subsec:oof-Partial-CSIR-CSITatR}
Consider partial CSIR with $S_T=S_R$ and 
\begin{align}
    \Pr{S_R = H} = \epsb, \quad
    \Pr{S_R \ne H} = \epsilon
\end{align}
where $0\le\epsilon\le\frac{1}{2}$. We thus have both CSIT\at R and CSIR\at T. To compute $I(X;Y|S_R)$ in \eqref{eq:IXYH}, we write \eqref{eq:pysRx}--\eqref{eq:pysR} as
\begin{align*}
  p_{Y|S_R,X}(y | 0, x)
  & = \epsb\, \frac{e^{-|y|^2}}{\pi} + \eps\, \frac{e^{-\left|y - \sqrt{2}\, x(0)\right|^2}}{\pi} \\
  p_{Y|S_R,X}(y | \sqrt{2}, x)
  & = \epsb\, \frac{e^{-\left|y - \sqrt{2}\, x\left(\sqrt{2}\,\right) \right|^2}}{\pi} + \eps\, \frac{e^{-|y|^2}}{\pi} \\
  p_{Y|S_R}(y | 0)
  & = \epsb\, \frac{e^{-|y|^2}}{\pi} + \eps\, \frac{e^{-|y|^2/[1+2P(0)]}}{\pi[1+2P(0)]} \\
  p_{Y|S_R}(y | \sqrt{2})
  & = \epsb\, \frac{e^{-|y|^2/\left[1+2P\left(\sqrt{2}\,\right)\right]}}{\pi\left[1+2P\left(\sqrt{2}\,\right)\right]} + \eps\, \frac{e^{-|y|^2}}{\pi}
\end{align*}
where $X(s_T)$ is CSCG. We choose the transmit powers $P(0)$ and $P\left(\sqrt{2}\,\right)$ as in~\eqref{eq:powers-fullCSIR-parCSIT} to compare with the best CSIR. Fig.~\ref{fig:oof4} shows the resulting rates for $\eps=0.1$ as the curve labeled ``Partial CSIR, $I(X;Y|S_R)$''. Observe that at high SNR, the curve seems to approach the best CSIR curve from Fig.~\ref{fig:oof2} with $S_R =H \sqrt{P(S_T)}$. We prove this by studying a forward model GMI with $K=2$.

\begin{figure}[t]
      \centering
      \includegraphics[width=0.95\columnwidth]{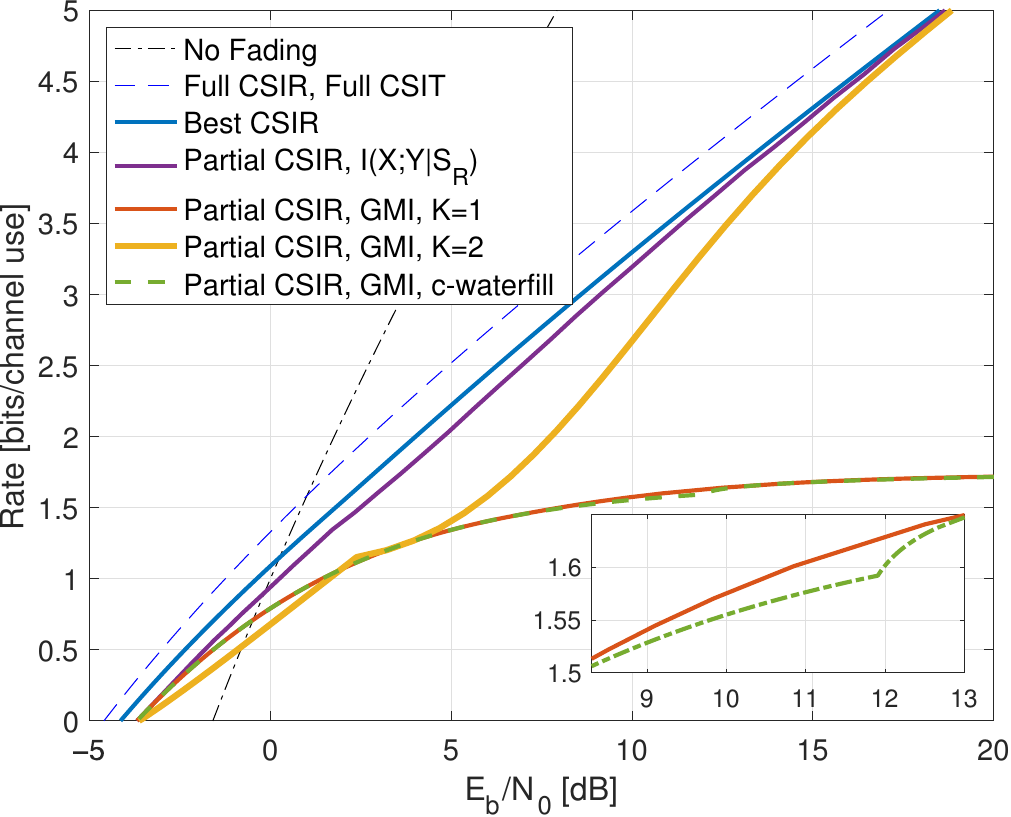}
      \caption{Rates for on-off fading with partial CSIR and CSIT\at R.
      The curve ``Best CSIR'' shows the capacity with $S_R =H \sqrt{P(S_T)}$.
      The mutual information $I(X;Y|S_R)$ and the GMI are for $\Pr{S_R \ne H}=0.1$
      and with CSCG inputs $X(s_T)$.
      The GMI for the $K=2$ partition uses $t_R=P^{0.4}$.  The curve labeled `c-waterfill' shows the conventional waterfilling rates.}
      \label{fig:oof4}
\end{figure}

The reverse model GMI requires $\Var{U|Y,S_R}$, which can be computed by simulation; see Appendix~\ref{appendix:csos-4}. However, optimizing the powers seems difficult. We instead focus on the forward model GMI of Theorem~\ref{theorem:partial-CSIR} 
for which we compute
\begin{align*}
  & \tilde g(0) = 2\,\eps^2, \quad \tilde g\left(\sqrt{2}\right) = 2\,\epsb^2 \\
  & \tilde \sigma^2(0) = \tilde \sigma^2\left(\sqrt{2}\right) = 2\,\eps\,\epsb
\end{align*}
and therefore~\eqref{eq:GMI-XYH} is
\begin{align}
  I_1(X;Y|S_R)
  & = \frac{1}{2} \log\left( 1 + \frac{2\,\eps^2 P(0)}{1 + 2\,\eps\,\epsb\,P(0)} \right) \nonumber \\
  & \quad + \frac{1}{2} \log\left( 1 + \frac{2\,\epsb^2 P\left(\sqrt{2}\,\right)}{1 + 2\,\eps\,\epsb\,P\left(\sqrt{2}\,\right)} \right).
  \label{eq:GMI-partCSIR-K1}
\end{align}
For CSIR\at T, the optimal power control policy is given by the quadratic waterfilling specified by \eqref{eq:waterfilling-noCSIR} or \eqref{eq:waterfilling-noCSIR2}:
\begin{align*}
   P(0) & = \frac{1+\epsb}{4\, \eps\, \epsb} 
   \left[ \sqrt{1 + 8\, \eps\, \epsb \left(\frac{1}{\lambda} - \frac{1}{2\,\eps^2} \right)^+
   \frac{\eps}{(1+\epsb)^2}} - 1 \right] %\label{eq:waterfilling-noCSIR2-oof1}
   \\
   P\left(\sqrt{2}\right) & = \frac{1+\eps}{4\, \eps\, \epsb} 
   \left[ \sqrt{1 + 8\, \eps\, \epsb \left(\frac{1}{\lambda} - \frac{1}{2\,\epsb^2} \right)^+
   \frac{\epsb}{(1+\eps)^2}} - 1 \right] . %\label{eq:waterfilling-noCSIR2-oof2}
\end{align*}
The rates are shown in Fig.~\ref{fig:oof4} as the curve labeled ``Partial CSIR, GMI, K=1''.
Observe that at high SNR the GMI \eqref{eq:GMI-partCSIR-K1} saturates at
\begin{align}
    \frac{1}{2} \log\left(1+\frac{\eps}{\epsb}\right) + \frac{1}{2} \log\left(1+\frac{\epsb}{\eps}\right) .
    \label{eq:oof-GMI-XYH-highSNR}
\end{align}
For example, for $\eps=0.1$, we approach $1.74$ bits at high SNR. On the other hand, at low SNR, the rate is maximized with $P(0)=0$ and $P\left(\sqrt{2}\right)=2P$ so that $I_1(X;Y | S_R) \approx 2\,\epsb^2 P$. We thus achieve a fraction of $\epsb^2$ of the power compared to full CSIT. For example, if $\epsilon=0.1$, the minimal $E_b/N_0$ is approximately $-3.69$ dB. %-3.686895534974926 dB

Fig.~\ref{fig:oof4} also shows the conventional waterfilling rates as the curve labeled ``Partial CSIR, GMI, c-waterfill''. These rates are almost the same as the quadratic waterfilling rates except for the range of $E_b/N_0$ between 9 to 13 dB shown in the inset.

To improve the auxiliary model at high SNR, we use a $K=2$ GMI with (see Remark~\ref{remark:Partial-CSIR-CSITatR-2})
\begin{align*}
   h_1(s_R)=0, \quad h_2(s_R)=\sqrt{2}, \quad \sigma_1^2(s_R)=\sigma_2^2(s_R)=1
\end{align*}
for $s_R=0,\sqrt{2}$. The receiver chooses $\bar X(s_R) = \sqrt{P(s_R)}\, U$ (see Remark~\ref{remark:xbar-scaling2}) and we have (see Remark~\ref{remark:AGMI-part3})
\begin{align}
   & I_1(X;Y|S_R) = \frac{1}{2} \Pr{\set E_2|S_R=0} \nonumber \\
   & \quad \left\{ \log\left( 1 + 2 P(0) \right) 
   + \frac{\E{\left. |Y|^2 \right| \set{E}_2, S_R=0}}{1 + 2 P(0)} \right. \nonumber \\
   & \left. \qquad - \E{ \left. \big|Y - \sqrt{2} \, X(0) \big|^2 \right| \set{E}_2, S_R=0} \right\} \nonumber \\
   & \quad + \frac{1}{2} \Pr{\set E_2|S_R=\sqrt{2}\,} \nonumber \\
   & \quad \left\{ \log\left( 1 + 2 P\big(\sqrt{2}\,\big) \right) 
   + \frac{\E{\left. |Y|^2 \right| \set{E}_2, S_R=\sqrt{2}\,}}{1 + 2 P\left(\sqrt{2}\,\right)} \right. \nonumber \\
   & \left. \qquad - \E{ \left. |Y - \sqrt{2} \, X\big(\sqrt{2}\,\big)|^2 \right| \set{E}_2, S_R=\sqrt{2}} \right\}
   \label{eq:AGMI-SR-part2-oof}
\end{align}
where the $X(s_T)$, $s_T \in \set S_T$, are given by \eqref{eq:barX-max}. We consider $P(0)$ and $P\big(\sqrt{2}\,\big)$ that scale in proportion to $P$. In this case, Appendix~\ref{appendix:gmi-oof3} shows that choosing $t_R = P^{\lambda_R}+b$ where $0<\lambda_R<1$ gives the (best) full-CSIR capacity for large $P$, which is the rate specified in \eqref{eq:GMI-AYH2-oof}:
\begin{align}
   \lim_{P\rightarrow\infty} & \left[
   \frac{\eps}{2} \log\left( 1 + 2 P(0) \right)
   + \frac{\epsb}{2} \log\left( 1 + 2 P\left(\sqrt{2}\,\right) \right) \right. \nonumber \\
   & \quad - I_1(X;Y|S_R) \Big] = 0.
   \label{eq:GMI-2-example-oof2}
\end{align}
In other words, by optimizing $P(0)$ and $P\left(\sqrt{2}\,\right)$, at high SNR the $K=2$ GMI can approach the capacity of Proposition~\ref{proposition:part-CSIT2}. This is expected since the receiver can estimate $H \sqrt{P(S_T)}$ reliably at high SNR.

Fig.~\ref{fig:oof4} shows the behavior of this GMI and $t_R=P^{0.4}$, and where we have chosen $P(0)$ and $P\big(\sqrt{2}\,\big)$ according to \eqref{eq:powers-fullCSIR-parCSIT}. The abrupt change in slope at approximately 2.5 dB is because $P(0)$ becomes positive beyond this $E_b/N_0$. Keeping $P(0)=0$ for $E_b/N_0$ up to about 12 dB gives better rates, but for high SNR one should choose the powers according to \eqref{eq:powers-fullCSIR-parCSIT}.

%================================================
\section{Rayleigh Fading}
\label{sec:Rayleigh-Fading}
Rayleigh fading has $H \sim \mathcal{CN}(0,1)$. The random variable $G=|H|^2$ thus has the density $p(g)=e^{-g} \cdot 1(g\ge0)$. Sec.~\ref{subsec:Rayleigh-noCSIR-noCSIT} and Sec.~\ref{subsec:Rayleigh-Fading-quantizer} review known results.

%================================================
\subsection{No CSIR, No CSIT}
\label{subsec:Rayleigh-noCSIR-noCSIT}
Suppose $S_R=S_T=0$ and $X\sim\mathcal{CN}(0,P)$. The densities to compute $I(X;Y)$ for CSCG $X$ are
\begin{align}
   p(y|x) & = \frac{e^{-|y|^2/(|x|^2+1)}}{\pi(|x|^2+1)}
   \label{eq:r-pygx} \\
   p(y) & = \int_0^{\infty} \frac{e^{-g/P}}{P} \, \frac{e^{-|y|^2/(g+1)}}{\pi(g+1)} \, dg .
     \label{eq:r-py}
\end{align}
The minimum $E_b/N_0$ is approximately 9.2 dB, and the forward model GMI~\eqref{eq:noCSIR-noCSIT-If} is zero. The capacity is achieved by discrete and finite $X$~\cite{Abou-Faycal-IT01}, and at large SNR, the capacity behaves as $\log \log P$~\cite{Lapidoth:03}. Further results are derived in~\cite{Taricco-EL97,Marzetta-Hochwald-IT99,Zheng-IT-02,Gursoy-IT05,Chowdhury-ISIT16}.

%================================================
\subsection{Full CSIR, CSIT\at R}
\label{subsec:Rayleigh-Fading-quantizer}
The capacity \eqref{eq:no-CSIT} for $B=0$ (no CSIT) is
\begin{align}
   C(P) & = \int_0^\infty e^{-g} \log\left( 1 + g\,P \right) dg \nonumber \\
   & = e^{1/P} E_1\left(1/P\right) \log(e)
   \label{eq:Shannon-cap4}
\end{align}
where the exponential integral $E_1(.)$ is given by \eqref{eq:expint} below. The wideband values are given by~\eqref{eq:wideband-fullCSIR-noCSIT}:
\begin{align*}
   \left.\frac{E_b}{N_0}\right|_{\text{min}} = \log 2, \quad S = 1 .
   %\label{eq:wideband-fullCSIR-noCSIT-Rayleigh}
\end{align*}
The minimal $E_b/N_0$ is $-1.59$ dB, but the fading reduces the capacity slope. At high SNR, we have
\begin{align*}
   C(P) \approx \log(P) - \gamma
\end{align*}
where $\gamma \approx 0.57721$ is Euler's constant. The capacity thus behaves as for the case without fading but with an SNR loss of approximately 2.5 dB.

The capacity \eqref{eq:full-CSIT-2} with $B=\infty$ (or $S_T=G$) is (see~\cite[Eq.~(7)]{Goldsmith-Varaiya-IT97})
\begin{align}
   C(P) %& = \int_{|h|^2 \ge \lambda} p(h) \log\left( |h|^2 / \lambda \right) dh \nonumber \\
   & = \int_\lambda^\infty e^{-g} \log\left( g / \lambda \right) dg = E_1(\lambda).
   \label{eq:Shannon-cap5}
\end{align}
where $P(g)$ is given by \eqref{eq:waterfilling} and $\lambda$ is chosen so that
\begin{align*}
   P & = \int_\lambda^\infty e^{-g} P(g) \, dg = \frac{e^{-\lambda}}{\lambda} - E_1(\lambda).
\end{align*}
At low SNR we have large $\lambda$ and using the approximation \eqref{eq:E1-approx-large-x} below we compute
\begin{align}
   C(P)\approx e^{-\lambda}/\lambda \text{ and }
   P \approx e^{-\lambda}/\lambda^2.
   \label{eq:full-CSIR-full-CSIT-low-SNR}
\end{align}
We thus have $E_b/N_0 \approx \log(2) /\lambda$
and the minimal $E_b/N_0$ is $-\infty$. 

Consider now $B=1$ for which $P_{S_T}(3\Delta/2)=e^{-\Delta}$ and
\begin{align}
  & \E{G | G \ge \Delta }=1+\Delta \label{eq:EG-Rayleigh1} \\
  & \E{G ^2| G \ge \Delta }=2+2\Delta+\Delta^2. \label{eq:EG2-Rayleigh1}
\end{align}
We thus have the wideband quantities in \eqref{eq:minEbNo-2}--\eqref{eq:S-2}:
\begin{align}
   & \left.\frac{E_b}{N_0}\right|_{\text{min}} = \frac{\log 2}{1+\Delta} \label{eq:EbNo-Rayleigh1} \\
   & S = \frac{2 e^{-\Delta}(1+\Delta)^2}{2+2\Delta+\Delta^2}. \label{eq:S-Rayleigh1}
\end{align}

Fig.~\ref{fig:rf1} shows the capacities for $B=1$ and $\Delta=1,2,1/2$. The minimum $E_b/N_0$ value is 
\begin{align}
   -1.59\text{ dB} - 10 \log_{10}\left(1 + \Delta\right)
\end{align}
and for $\Delta=1,2,1/2$ we gain 3 dB, 4.8 dB, 1.8 dB, respectively, over no CSIT at low power. Note that one bit of feedback allows one to approach the full CSIT rates closely.

\begin{figure}[t]
      \centering
      \includegraphics[width=0.95\columnwidth]{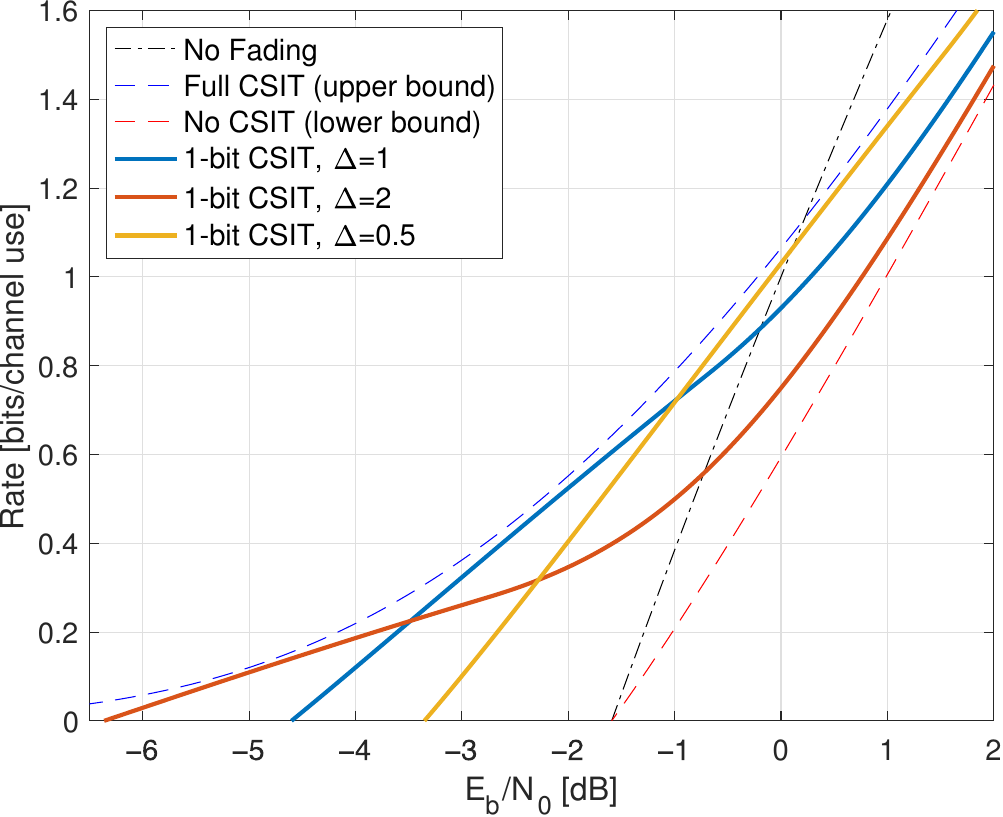}
      \caption{Capacities for Rayleigh fading with full CSIR, a one-bit quantizer with threshold $\Delta$, and CSIT\at R.}
      \label{fig:rf1}
\end{figure}

\begin{remark}
For the scalar channel~\eqref{eq:Y-Gauss}, knowing $H$ at both the transmitter and receiver provides significant gains at low SNR~\cite{Verdu02} but small gains at high SNR~\cite[Fig.~4]{Goldsmith-Varaiya-IT97} as compared to knowing $H$ at the receiver only. Furthermore, the reliability can be improved~\cite[Fig.~5-7]{Lau-Liu-Chen-IT04}. Significant gains are also possible for MIMO channels.
\end{remark}

\begin{remark}
An alternative way to derive \eqref{eq:EG-Rayleigh1}--\eqref{eq:S-Rayleigh1} is as follows. Define $\hat P=Pe^\Delta$ so for small $P$ the capacity is
\begin{align*}
   C(P) & = \int_\Delta^\infty e^{-g} \log\left( 1 +  g\, \hat P \right) dg \nonumber \\ 
   & = e^{1/\hat P} E_1\left( \frac{1}{\hat P} + \Delta \right)  + e^{-\Delta} \log(1+\hat P \Delta) \\
   & \approx P\, (1+\Delta) - \frac{1}{2} P^2 e^\Delta \left(2+2\Delta+\Delta^2\right).
    %\label{eq:lowSNR-approx}
\end{align*}
\end{remark}

%================================================
\subsection{Full CSIR, Partial CSIT}
\label{subsec:Rayleigh-full-CSIR-partial-CSIT}
Consider noisy CSIT with 
\begin{align*}
    \Pr{S_T = 1(G\ge \Delta)} = \epsb, \quad
    \Pr{S_T \ne 1(G\ge \Delta)} = \epsilon.
\end{align*}
We begin with the most informative CSIR.

%================================================
\subsubsection{$S_R=\sqrt{P(S_T)}H$}
Proposition~\ref{proposition:part-CSIT2} gives the capacity
\begin{align}
   & C(P) = \int_0^\infty e^{-g} \sum_{s_T} P(s_T|g) \log\left( 1 + g\, P(s_T) \right) \, dg \nonumber \\
   & = \int_0^\Delta e^{-g} \left[ \epsb\, \log\left( 1 + g\, P(0) \right) + \eps\, \log\left( 1 + g\, P(1) \right) \right] \, dg \nonumber \\
   & \quad + \int_\Delta^\infty e^{-g} \left[ \epsb\, \log\left( 1 + g\, P(1) \right) + \eps\, \log\left( 1 + g\, P(0) \right) \right] \, dg.
  \label{eq:GMI-AYH2-Rayleigh} 
\end{align}
It remains to optimize $P(0)$, $P(1)$ and $\Delta$. The two equations for the Lagrange multiplier $\lambda$ are
\begin{align}
   \lambda \cdot P_{S_T}(0) & = \int_0^{\Delta} e^{-g} \cdot \dfrac{\epsb\, g}{1+ g P(0)} dg \nonumber \\
   & \qquad + \int_{\Delta}^{\infty} e^{-g} \cdot \dfrac{\eps\, g}{1+ g P(0)} dg
   \label{eq:lambda-R-q0} \\
   \lambda \cdot P_{S_T}(1)  &= \int_0^{\Delta} e^{-g} \cdot \dfrac{\eps\,g}{1+ g P(1)} dg \nonumber \\
   & \qquad + \int_{\Delta}^{\infty} e^{-g} \cdot \dfrac{\epsb\,g}{1+ g P(1)} dg 
   \label{eq:lambda-R-q1} 
\end{align}
where $P_{S_T}(0)=\epsb - (\epsb - \eps) e^{-\Delta}$ and $P_{S_T}(1)=\eps + (\epsb - \eps) e^{-\Delta}$. The rates are shown in Fig.~\ref{fig:rf2}.

\begin{figure}[t]
      \centering
      \includegraphics[width=0.95\columnwidth]{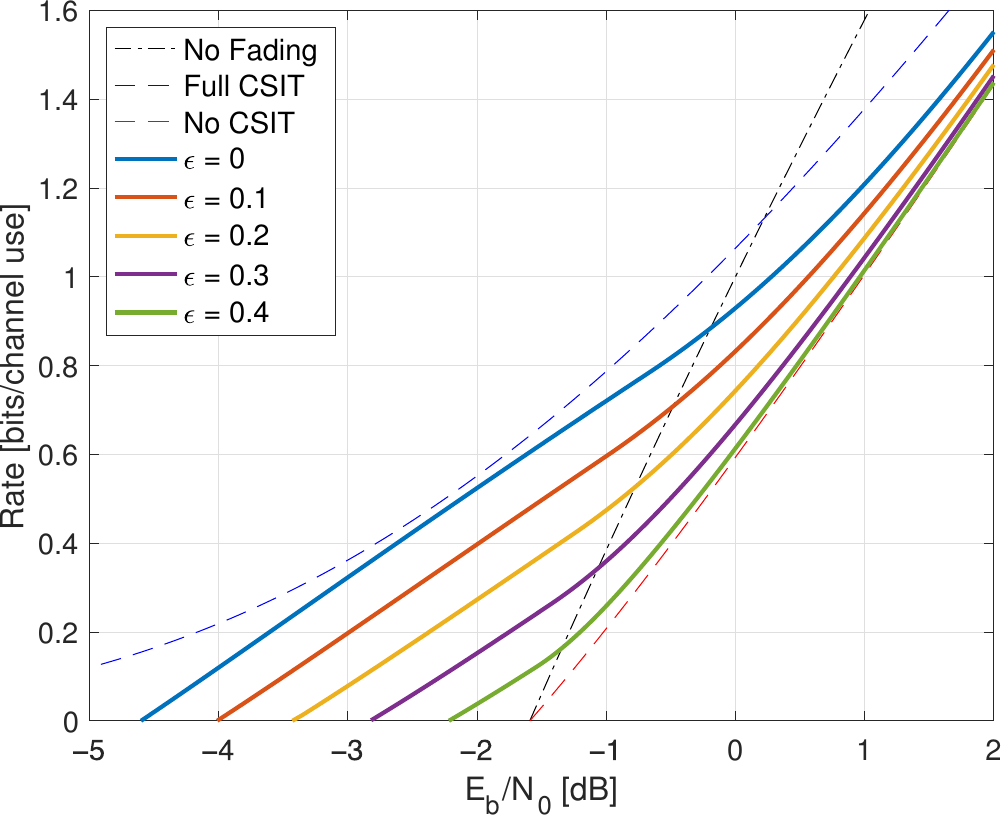}
      \caption{Capacities for Rayleigh fading, $S_R=\sqrt{P(S_T)} H$, and a one-bit quantizer with threshold $\Delta=1$, and various CSIT error probabilities $\eps$.}
      \label{fig:rf2}
\end{figure}

For fixed $\Delta$ and large $P$, we have $1/\lambda \approx P(0) \approx P(1)\approx P$ and approach the capacity~\eqref{eq:Shannon-cap4} without CSIT. In contrast, for small $P$ we may use similar steps as for \eqref{eq:lambda-q0}--\eqref{eq:lambda-q1}. Observe the following for \eqref{eq:lambda-R-q0} and \eqref{eq:lambda-R-q1}:
\begin{itemize}
\item both $P(0)$ and $P(1)$ decrease as $\lambda$ increases;
\item the maximal $\lambda$ in \eqref{eq:lambda-R-q0} is obtained with $P(0)=0$; this value is
\begin{align}
   \E{G| S_T=0} = \frac{\epsb - (\epsb - \eps)(1+\Delta) \, e^{-\Delta}}{P_{S_T}(0)}
   \label{eq:EG-ST1-0}
\end{align}
\item the maximal $\lambda$ in \eqref{eq:lambda-R-q1} is obtained with $P(1)=0$; this value is
\begin{align}
   \E{G| S_T=1} = \frac{\eps + (\epsb - \eps)(1+\Delta) \, e^{-\Delta}}{P_{S_T}(1)}.
   \label{eq:EG-ST1-1}
\end{align}
\end{itemize}
Thus, if $\E{G|S_T=0}<\E{G|S_T=1}$ and $0\le\epsilon<1/2$, then for $P$ below some threshold we have $P(0)=0$, $P(1)=P/P_{S_T}(1)$ and the capacity is
\begin{align}
    C(P)
   & = \int_0^\Delta e^{-g} \, \eps\, \log\left( 1 + \frac{g\,P}{P_{S_T}(1)} \right) \, dg \nonumber \\
   & \quad + \int_\Delta^\infty e^{-g} \, \epsb\, \log\left( 1 + \frac{g\,P}{P_{S_T}(1)} \right) \, dg.
  \label{eq:GMI-AYH2-Rayleigh-2} 
\end{align}
We compute $C'(0)=\E{G| S_T=1}$ which is given by~\eqref{eq:EG-ST1-1} so that $1 \le C'(0) \le 1+\Delta$, as expected from \eqref{eq:EbNo-Rayleigh1}. For example, for $\epsilon=0.1$ and $\Delta=1$ we have $C'(0)\approx1.75$ and therefore the minimal $E_b/N_0$ is approximately $-4.01$ dB. 

The best $\Delta$ is the unique solution $\hat \Delta$ of the equation
\begin{align}
  e^{-\Delta} = \frac{\eps}{\epsb - \eps} (\Delta - 1)
  \label{eq:solution}
\end{align}
and the result is $C'(0) = \hat \Delta \ge 1$. We have the simple bounds
\begin{align}
  1 + \frac{1}{2} \log\left( \frac{1}{\eps} -2 \right) \le C'(0) \le 1 + \frac{1}{e} \left(\frac{1}{\eps} - 2 \right)
\end{align}
where the left inequality follows by taking logarithms and using $\log(\Delta-1) \le \Delta-2$, and the right inequality follows by using $e^{-\Delta} \le e^{-1}$ in \eqref{eq:solution}. For example, for $\eps \rightarrow 0$ we have $C'(0) \rightarrow \infty$, and for $\eps \rightarrow 1/2$ we have $C'(0) \rightarrow 1$.

%================================================
\subsubsection{$S_R=H$}
For the less informative CSIR, one may use \eqref{eq:pyah-1} and \eqref{eq:pyh-1b} to compute $I(A;Y|H)$. The reverse model GMI requires $\Var{U|Y,S_R}$, which can be computed by simulation; see Appendix~\ref{appendix:csos-2}. Again, however, optimizing the powers seems difficult. We instead focus on the forward model GMI of Corollary~\ref{corollary:part-CSIT}, which is
\begin{align}
  I_1(A;Y|H) = \int_0^\infty e^{-g} \log\left( 1 + \text{SNR}(g) \right) \, dg
  \label{eq:iayh-q2-r}
\end{align}
where
\begin{align}
  \text{SNR}(g) = \frac{g \tilde P_T(g)}
  {1 + g \, \eps \, \epsb \left( \sqrt{P(0)} - \sqrt{P(1)} \right)^2 }
  %\label{eq:SNR-AYH}
\end{align}
and
\begin{align}
    \tilde P_T(g) & = \left\{ \begin{array}{ll}
     \left(\epsb\sqrt{P(0)} + \eps\sqrt{P(1)} \right)^2, & g < \Delta \\
     \left(\eps\sqrt{P(0)} + \epsb\sqrt{P(1)} \right)^2, & g \ge \Delta.
     \end{array} \right.
\end{align}
It remains to optimize $P(0)$, $P(1)$ and $\Delta$. Computing the derivatives seems complicated, so we use numerical optimization for fixed $\Delta=1$ as in Fig.~\ref{fig:rf2}. The results are shown in Fig.~\ref{fig:rf3}. For fixed $\Delta$ and large $P$, it is best to choose $P(0) \approx P(1)$ so that $SNR(g) \approx g P$ and we approach the rate of no CSIT. For small $P$, however, the best $P(0)$ is no longer zero and $C'(0)$ is smaller than~\eqref{eq:EG-ST1-1}.

\begin{figure}[t]
      \centering
      \includegraphics[width=0.95\columnwidth]{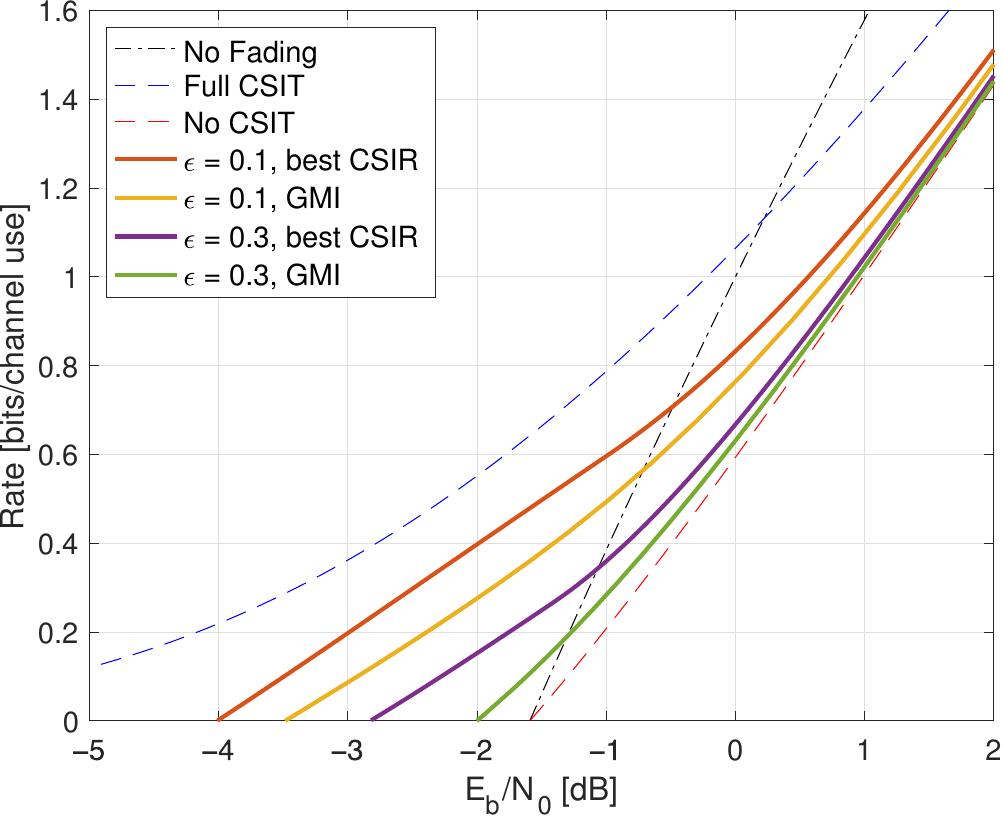}
      \caption{Rates for Rayleigh fading, $S_R=H$ and $S_R =H \sqrt{P(S_T)}$, a one-bit quantizer with threshold $\Delta=1$,  and various $\eps$. The curves labeled ``best CSIR'' show the capacities with $S_R =H \sqrt{P(S_T)}$. The curves labeled ``GMI'' show the rates \eqref{eq:iayh-q2-r} for the optimal powers $P(0)$ and $P(1)$.}
      \label{fig:rf3}
\end{figure}

%================================================
\subsection{Partial CSIR, Full CSIT}
\label{subsec:Rayleigh-partial-CSIR-full-CSIT}
Consider $S_T=H$ and suppose we choose the $X(h)$ to be jointly CSCG with variances $\E{|X(h)|^2}=P(h)$ and correlation coefficients 
\begin{align*}
  \rho(h,h') = \frac{\E{ X(h) X(h')^\text{*}}}{\sqrt{P(h) P(h')}}
\end{align*}
and where $\E{P(H)} \le P$. We then have
\begin{align*}
    p(y | s_R) = \int_{\mathbb C} p(h|s_R) \frac{e^{-|y|^2/(|h|^2 P(h) + 1)}}{\pi (|h|^2 P(h) + 1)} \, dh.
\end{align*}
As in \eqref{eq:py}, $p(y | s_R )$ and $h(Y | S_R)$ depend only on the marginals of $A$ and not on the $\rho(h,h')$. We thus have the problem of finding the $\rho(h,h')$ that minimize
\begin{align*}
    h(Y | A, S_R) = \int_\set{A} p(a) \, h(Y | S_R, A=a) \, da.
\end{align*}
We will use fully-correlated $X(h)$ as discussed in Sec.~\ref{subsec:Partial-CSIR-fullCSIT}. We again consider $S_R=0$ and $S_R=1(G \ge t)$.

%================================================
\subsubsection{$S_R=0$}
For the heuristic policies, the power \eqref{eq:heuristic-PC-policies-Phat} is
\begin{align}
   \hat P = \frac{P}{\Gammafun{1+a,t}}
   \label{eq:heuristic-PC-policies-Phat-noCSIR}
\end{align}
and the rate \eqref{eq:R-fullCSIT-noCSIR} is
\begin{align}
   & I_1(A;Y) = \log \Bigg( 1 + \nonumber \\
   & \left. \frac{P\, \Gammafun{\frac{3+a}{2},t}^2}
   {\Gammafun{1+a,t} + P\left[ \Gammafun{2+a,t} - \Gammafun{\frac{3+a}{2},t}^2 \right]} \right)
   \label{eq:R-fullCSIT-noCSIR-Rayleigh}
\end{align}
where $\Gamma(s,x)$ is the upper incomplete gamma function; see Appendix~\ref{appendix:gamma}. Moreover, the expression \eqref{eq:minEbNo-fullCSIT-noCSIR} is
\begin{align}
   \left.\frac{E_b}{N_0}\right|_{\text{min}}
   & = \frac{\Gammafun{1+a,t}}{\Gammafun{\frac{3+a}{2},t}^2} \cdot \log 2 .
   \label{eq:minEbNo-fullCSIT-noCSIR-Rayleigh}
\end{align}
We remark that $\Gamma(s,0)=\Gamma(s)$ where $\Gamma(x)$ is the gamma function. We further have
\begin{align*}
   \begin{array}{ll}
   \Gamma(0,t) = E_1(t), & \Gamma(1,t)=e^{-t}, \\
   \Gamma(2,t)=e^{-t} (t+1), & \Gamma(3,t)=e^{-t} (t^2+2t+2) .
   \end{array}
\end{align*}

For example, the TCP policy ($a=0$) has $\hat P = P \, e^t$.
%\begin{align}
%   I_1(A;Y) = \log\left( 1 + \frac{P\, e^t \, \Gammafun{\frac{3}{2},t}^2}
%   {1 + P\left[ t + 1 - e^t \, \Gammafun{\frac{3}{2},t}^2 \right]} \right)
%   \label{eq:R-fullCSIT-noCSIR-Rayleigh-TCP}
%\end{align}
At low SNR, it turns out that the best choice is $t=0.283$ %0.2825
for which we have $\Gamma(1,t) / \Gamma(3/2,t)^2 \approx 1.174$. % 1/0.8521 = 1.1736
The minimum $E_b/N_0$ in \eqref{eq:minEbNo-fullCSIT-noCSIR-TMF} is thus $-0.90$ dB. At high SNR, the best choice is $t=0$ so that \eqref{eq:R-fullCSIT-noCSIR-Rayleigh} with $\Gamma(3/2,0)=\Gamma(3/2)=\sqrt{\pi}/2$ gives 
\begin{align}
   I_1(A;Y) = \log\left( 1 + \frac{P \, \pi/4}{1+P(1-\pi/4)} \right) .
   \label{eq:AGMI-Rayleigh-TCP}
\end{align}
The TCP rate thus saturates at $2.22$ bits per channel use; see the curve labeled ``TCP, GMI, K=1'' in Fig.~\ref{fig:rf4}.

\begin{figure}[t]
      \centering
      \includegraphics[width=0.95\columnwidth]{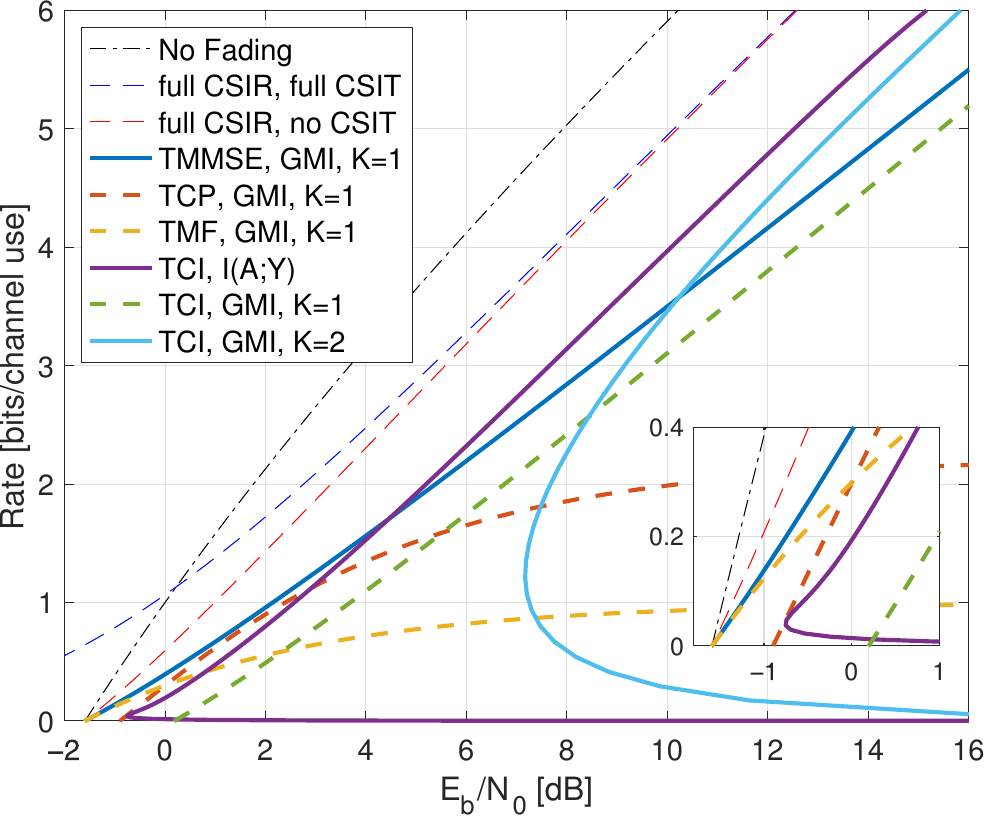}
      \caption{Rates for Rayleigh fading with $S_T=H$ and $S_R=0$. The threshold $t$ was optimized for the $K=1$ curves, while $t=P^{-0.4}$ for the $I(A;Y)$ and $K=2$ curves. The $K=2$ GMI uses $t_R=P^{0.4}$.
      }
      \label{fig:rf4}
\end{figure}

The TMF policy ($a=1$) has $\hat P = P\, e^t / (t+1)$. The best choice is $t=0$ for which we have $\Gamma(2)=1$ and $\Gamma(3)=2$ and therefore \eqref{eq:R-fullCSIT-noCSIR-Rayleigh} is
\begin{align}
   I_1(A;Y) = \log\left( 1 + \frac{P}{1+P} \right).
   \label{eq:AGMI-Rayleigh-TMF}
\end{align}
The minimum $E_b/N_0$ in \eqref{eq:minEbNo-fullCSIT-noCSIR-TMF} is $-1.59$ dB, and at high SNR, the TMF rate saturates at $1$ bit per channel use.  The rates are shown as the curve labeled ``TMF, GMI, K=1'' in Fig.~\ref{fig:rf4}.

The TCI policy ($a=-1$) has $\hat P = P / E_1(t)$ and using $\Gamma(0,t)=E_1(t)$ and $\Gamma(1,t)=e^{-t}$ gives
\begin{align}
   I_1(A;Y) = \log\left( 1 + \frac{P}
   {e^{2t} \, E_1(t) + P \left( e^t - 1 \right)} \right) .
   \label{eq:R-fullCSIT-noCSIR-Rayleigh-TCI}
\end{align}
The minimum $E_b/N_0$ in \eqref{eq:minEbNo-fullCSIT-noCSIR-Rayleigh} is
\begin{align}
   \left.\frac{E_b}{N_0}\right|_{\text{min}} = E_1(t) \, e^{2t} \cdot \log 2 .
\end{align}
Optimizing over $t$ by taking derivatives (see \eqref{eq:expint-derivative} below), the best $t$ satisfies the equation $2 t e^t E_1(t)=1$ which gives $t\approx 0.61$ and the minimal $E_b/N_0$ is approximately 0.194 dB. On the other hand, for large SNR, we may choose $t=1/P$ and using $E_1(t) \approx \log(1/t)$ for small $t$ gives
\begin{align*}
   I_1(A;Y) \approx \log\left( 1 + \frac{P}{1 + \log P} \right).
   %\label{eq:AGMI-Rayleigh-TCI1}
\end{align*}
Since the pre-log is at most 1, the capacity grows with pre-log 1 for large $P$.
We see that TMF is best at small $P$ while TCI is best at large $P$.
The rates are shown as the curve labeled ``TCI, GMI, K=1'' in Fig.~\ref{fig:rf4}.

The simple channel output of TCI permits further analysis. Using Remark~\ref{remark:TCI-densities}, we compute the mutual information $I(A;Y)$ by numerical integration; see the curve labeled ``TCI, $I(A;Y)$'' in Fig.~\ref{fig:rf4}. We see that at high SNR, the TCI mutual information is larger than the GMI for TCP, TMF, and (of course) TCI. Moreover, as we show, the TCI mutual information can work well at low SNR.

Motivated by Sec.~\ref{subsec:oof-Partial-CSIR-full-CSIT} and Fig.~\ref{fig:oof3}, we again use the GMI~\eqref{eq:AGMI-part2} with $K=2$ and \eqref{eq:2partition}. We further choose $h_1=0$, $\sigma_1^2=\sigma_2^2=1$, and \begin{align*}
    \bar{X}=\frac{\sqrt{\hat P}}{h_2} \, U,\quad U\sim\mathcal{CN}(0,1) .
\end{align*}
The expression \eqref{eq:AGMI-part2} simplifies to
\begin{align}
   I_1(A;Y) = \Pr{\set E_2} & \left[
   \log\left(1 + \hat P\right) + \frac{\E{|Y|^2 | \set{E}_2}}{1 + \hat P} \right. \nonumber \\
   & \quad \left. - \; \E{ \left. \left| Y - \sqrt{\hat P} \, U \right|^2 \right| \set E_2 } \right].
   \label{eq:AGMI-2-part-Rayleigh}
\end{align}
The GMI~\eqref{eq:AGMI-2-part-Rayleigh} exhibits interesting high and low SNR scaling by choosing the following thresholds $t,t_R$.
\begin{itemize}
\item For high SNR, we choose
\begin{align}
   t=P^{-\lambda} \text{ and } t_R=\hat P^{\lambda_R}
   \label{eq:gmi-rf1-parameters1}
\end{align}
where  $0 < \lambda < 1$ and $0 < \lambda_R < 1$. As $P$ increases, $t$ decreases and Appendix~\ref{appendix:gmi-rf1} shows that
\begin{align}
\begin{array}{l}
   \Pr{\set E_2} \rightarrow 1  \vspace{0.1cm} \\
   \E{|Y|^2 | \set E_2} \Big/ \left( 1+ \hat P \right) \rightarrow 1  \vspace{0.1cm} \\
   \E{\left|Y - \sqrt{\hat P} \, U\right|^2 | \set E_2} \rightarrow 1.
\end{array}
\label{eq:gmi-scaling-terms-Rayleigh1}
\end{align}
Inserting $\hat P = P/E_1(t)$, we thus have
\begin{align}
    \lim_{P \rightarrow \infty} \left[ I_1(A;Y) - \log\left( 1 + \frac{P}{E_1(t)} \right) \right] = 0.
    \label{eq:TCI-high-SNR-scaling}
\end{align}
We further have $E_1(t)\approx \lambda \log P$ by using~\eqref{eq:E1-approx-small-x} in Appendix~\ref{appendix:cHandIndep}, and the high-SNR slope of the GMI matches the slope of $\log P$ but the additive gap to $\log P$ increases. The high SNR rates are shown as the curve labeled ``TCI, GMI, K=2'' in Fig.~\ref{fig:rf4}
for $\lambda=\lambda_R=0.4$.
\item
For low SNR, we choose
\begin{align}
   t=-\log (P/c) \text{ and } t_R=\hat P
   \label{eq:gmi-rf1-parameters2}
\end{align}
for a constant $c>0$.
As $P$ decreases, both $t$ and $\hat P = P/E_1(t)$ increase and Appendix~\ref{appendix:gmi-rf1} shows that
\begin{align}
\begin{array}{l}
   \Pr{\set E_2} \approx e^{-t-1}  \vspace{0.1cm} \\
   \E{|Y|^2 | \set E_2} \Big/ \left( 1+ 2\hat P \right) \rightarrow 1  \vspace{0.1cm} \\
   \E{\left|Y - \sqrt{\hat P} \, U\right|^2 | \set E_2} \rightarrow 1.
\end{array}
\label{eq:gmi-scaling-terms-Rayleigh2}
\end{align}
Using~\eqref{eq:E1-approx-large-x}, we have $I_1(A;Y) \approx e^{-t-1}\log t$ which vanishes as $t$ grows. But we also have
\begin{align}
   \frac{E_b}{N_0} = \frac{P}{R} \log 2
   & \approx \frac{c\,e^{-t} \log 2}{e^{-t-1} \log t} \approx \frac{c\, e \log2}{\log(- \log P)}
   \label{eq:Rayleigh-EbNo-scaling}
\end{align}
which decreases (very slowly) as $P$ decreases. The minimal $E_b/N_0$ is therefore $-\infty$. The low SNR rates are shown as the curve labeled ``TCI, GMI, K=2'' in Fig.~\ref{fig:rf5} for $c=1.4$.
\end{itemize}

\begin{figure}[t]
    \centering
    \includegraphics[width=0.95\columnwidth]{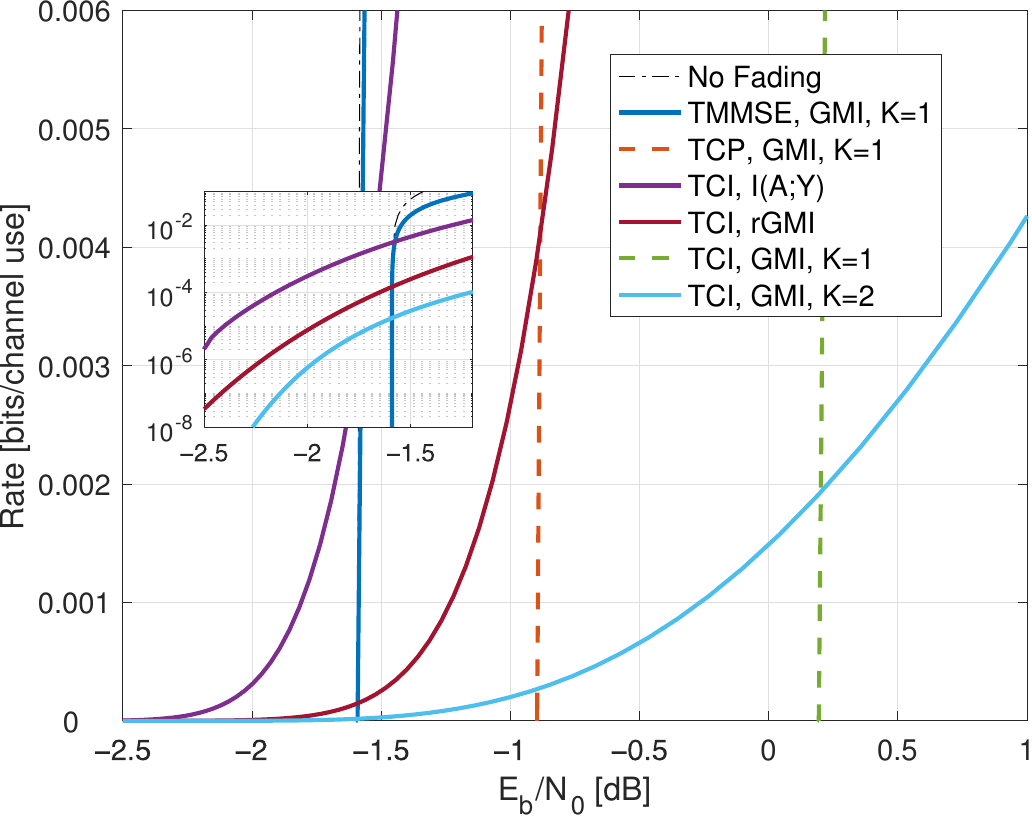}
    \caption{Low-SNR rates for Rayleigh fading with $S_T=H$ and $S_R=0$. The threshold $t$ was optimized for the $K=1$ curves, while
    $t=-\log (P/1.4)$ for the $I(A;Y)$, rGMI, and $K=2$ curves. The $K=2$ GMI uses $t_R=\hat P$. The TMF and TMMSE GMIs are indistinguishable for this range of rates. 
    }
    \label{fig:rf5}
\end{figure}

Fig.~\ref{fig:rf5} shows that the TCI mutual information achieves a minimal $E_b/N_0$ below $-1.59$ dB. At $E_b/N_0=-2$ dB, we computed $I_1(A;Y)\approx 6 \times 10^{-7}$ and $I(A;Y)\approx 3 \times 10^{-4}$. The $K=2$ partition is thus useful to prove that TCI can achieve $E_b/N_0$ arbitrarily close to zero. Fig.~\ref{fig:rf5} also shows the reverse model GMI as the curve labeled ``TCI, rGMI'' which has the rate $I_1(A;Y)\approx 8 \times 10^{-6}$ at $E_b/N_0=-2$ dB.

We compare the full CSIR and full CSIT rates. At high SNR, the GMI for $S_R=0$ achieves the same capacity pre-log as $S_R=H$. At low SNR, recall from \eqref{eq:full-CSIR-full-CSIT-low-SNR} that with full CSIR/CSIT we have $E_b/N_0 \approx \log(2)/\lambda$. To compare the rates for similar $E_b/N_0$, we set $\lambda = \log t$, where $t$ is as in \eqref{eq:gmi-rf1-parameters2} and $c \approx 1$. The TCI $K=2$ GMI without CSIR is approximately $e^{-t} \log t$ while the full CSIR rate~\eqref{eq:full-CSIR-full-CSIT-low-SNR} is approximately $e^{-\lambda}/\lambda \approx 1/(t \log(t))$. Thus, the $K=2$ GMI with no CSIR is a fraction $t e^{-t} \log(t)^2$ of the full CSIR capacity.

%================================================
\subsubsection{$S_R=1(G \ge t)$}
The power in \eqref{eq:heuristic-PC-policies-Phat} is again \eqref{eq:heuristic-PC-policies-Phat-noCSIR}
and the rate~\eqref{eq:R-fullCSIT-CSIR} is
\begin{align}
   & I_1(A;Y|S_R) = e^{-t} \cdot \log \Bigg( 1 + \nonumber \\
   & \; \left. \frac{P\, e^{2t} \, \Gammafun{\frac{3+a}{2},t}^2}
   {\Gammafun{1+a,t} + P\left[ e^t \, \Gammafun{2+a,t} - e^{2t} \,\Gammafun{\frac{3+a}{2},t}^2 \right]} \right) .
   \label{eq:R-fullCSIT-CSIR-Rayleigh}
\end{align}
Moreover, the expression \eqref{eq:minEbNo-fullCSIT-CSIR} is
\begin{align}
   & \left.\frac{E_b}{N_0}\right|_{\text{min}} 
   = \frac{\Gammafun{1+a,t}}{e^t \cdot \Gammafun{\frac{3+a}{2},t}^2} \cdot \log 2
   \label{eq:minEbNo-fullCSIT-CSIR-Rayleigh} 
\end{align}
which is the same as \eqref{eq:minEbNo-fullCSIT-noCSIR-Rayleigh} except for the factor $e^t$ in the denominator. This implies that the minimal $E_b/N_0$ can be improved for $t>0$.

The TCP, TMF, and TCI rates \eqref{eq:R-fullCSIT-CSIR-Rayleigh} are the respective
\begin{align}
   & I_1(A;Y|S_R) \nonumber \\
   & \; = e^{-t} \log\left( 1 + \frac{P \, e^{2t} \, \Gammafun{\frac{3}{2},t}^2}
   {e^{-t} + P \left[t+1 - e^{2t} \, \Gammafun{\frac{3}{2},t}^2 \right]} \right)
   \label{eq:AGMI-fullCSIT-CSIR-TCP} \\
   & I_1(A;Y|S_R) = e^{-t} \log\left( 1 + \frac{P \, (t+1)^2}{e^{-t} (t+1) + P} \right)
   \label{eq:AGMI-fullCSIT-CSIR-TMF} \\
   & I_1(A;Y|S_R) = e^{-t} \log\left( 1 + \frac{P}{E_1(t)} \right).
   \label{eq:AGMI-fullCSIT-CSIR-TCI}
\end{align}

\begin{remark}
As pointed out in Remark~\ref{remark:TCI-MI}, the TCI GMI \eqref{eq:AGMI-fullCSIT-CSIR-TCI} is $I(A;Y|S_R)$. One can also understand this by observing that the receiver knows $\sqrt{G P(G)}$ for all $G$. The mutual information is thus related to the rate \eqref{eq:GMI-AYH2} of Proposition~\ref{proposition:part-CSIT2}.
\end{remark}

\begin{figure}[t]
      \centering
      \includegraphics[width=0.95\columnwidth]{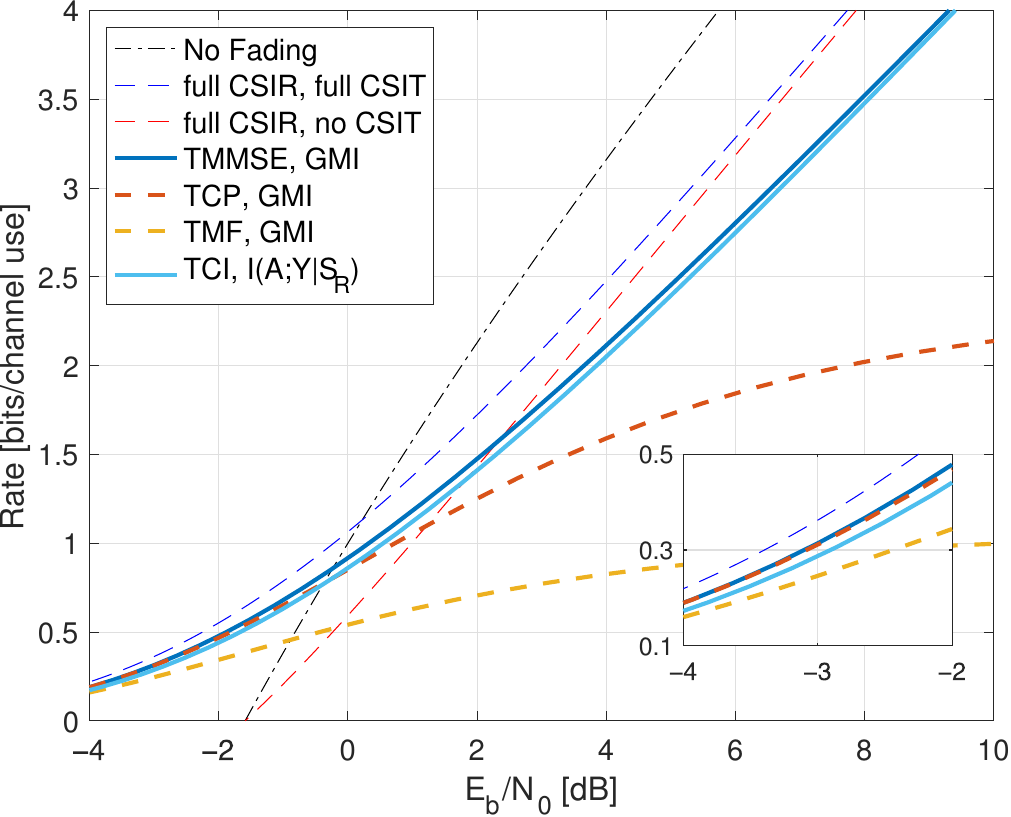}
      \caption{Rates for Rayleigh fading with full CSIT and $S_R=1(G\ge t)$.}
      \label{fig:rf6}
\end{figure}

The minimal $E_b/N_0$ in \eqref{eq:minEbNo-fullCSIT-CSIR-Rayleigh} are the respective
\begin{align}
   & \left.\frac{E_b}{N_0}\right|_{\text{min}}
   = \frac{1}{e^{2t} \cdot \Gammafun{\frac{3}{2},t}^2} \cdot \log 2
   \label{eq:minEbNo-fullCSIT-CSIR-Rayleigh-TCP} \\ 
   & \left.\frac{E_b}{N_0}\right|_{\text{min}}
   = \frac{1}{t+1} \cdot \log 2
   \label{eq:minEbNo-fullCSIT-CSIR-Rayleigh-TMF} \\ 
   & \left.\frac{E_b}{N_0}\right|_{\text{min}}
   = e^t \, E_1(t) \cdot \log 2 .
   \label{eq:minEbNo-fullCSIT-CSIR-Rayleigh-TCI}
\end{align}
The above expressions mean that, for all three policies, we can make the minimal $E_b/N_0$ as small as desired by increasing $t$. For example, for TCI, we can bound (see~\eqref{eq:expint-bounds} below)
\begin{align}
   \frac{1}{t+1} < e^t E_1(t) < \frac{1}{t}.
\end{align}
TCI thus has a slightly larger (slightly worse) minimal $E_b/N_0$ than TMF for the same $t$, as discussed after \eqref{eq:minEbNo-fullCSIT-TCI}.

For large $P$, the TCP rate \eqref{eq:AGMI-fullCSIT-CSIR-TCP} is optimized by $t \approx 0.163$ %0.162804
and the rate saturates at $\approx 2.35$ bits per channel use. %2.353484762120452
The TMF rate \eqref{eq:AGMI-fullCSIT-CSIR-TMF} is optimized with $t=0$, and the rate saturates at 1 bit per channel use. For the TCI rate \eqref{eq:AGMI-fullCSIT-CSIR-TCI}, we again choose $t=1/P$ and use $E_1(t) \approx \log(1/t)$ for small $t$ to show that the capacity grows with pre-log 1:
\begin{align*}
   I_1(A;Y|S_R) \approx \log\left( 1 + \frac{P}{\log P} \right) .
   %\label{eq:AGMI-Rayleigh-TCI2}
\end{align*}
Again, TMF is best at small $P$ while TCI is best at large $P$.

\begin{remark} \label{remark:TCI-high-SNR}
Comparing~\eqref{eq:TCI-high-SNR-scaling} and~\eqref{eq:AGMI-fullCSIT-CSIR-TCI}, the $S_R=0$, $K=2$, TCI GMI in~\eqref{eq:AGMI-2-part-Rayleigh} approaches the $S_R=1(G\ge t)$ mutual information $I(A;Y|S_R)$ in~\eqref{eq:AGMI-fullCSIT-CSIR-TCI} at high SNR.
\end{remark}

%================================================
\subsubsection{Optimal Policy}
Consider now the optimal power control policy. Suppose first that $S_R=0$ for which Theorem~\ref{theorem:Partial-CSIR-Full-CSIT} gives the TMMSE policy with $t=0$:
\begin{align}
  \sqrt{P(h)} & = \frac{\alpha |h|}{\beta + |h|^2} .
\end{align}
For Rayleigh fading, we thus have (see~\eqref{eq:int4} below)
\begin{align}
   P = \int_0^\infty e^{-g} \frac{\alpha^2 g}{(\beta + g)^2} \, dg
   = \alpha^2\left[ (\beta + 1 ) e^\beta E_1(\beta) -1 \right] \label{eq:AGMI-2-Rayleigh-P}
\end{align}
with the two expressions (see~\eqref{eq:int3} and~\eqref{eq:int5} below)
\begin{align}
   & \tilde P = \int_0^\infty e^{-g} \frac{\alpha^2 g}{\beta + g} \, dg 
   = \alpha^2\left[ 1 - \beta e^\beta E_1(\beta) \right]^2 \label{eq:AGMI-2-Rayleigh-Pt}
\end{align}
\begin{align}
  \E{G P(H)} & = \int_0^\infty e^{-g} \frac{\alpha^2 g^2}{(\beta + g)^2} \, dg \nonumber \\
  & = \alpha^2\left[ 1 + \beta - \beta (\beta + 2 ) e^\beta E_1(\beta) \right].
\end{align}
Given $P$ and $\beta$, we may compute $\alpha^2$ from \eqref{eq:AGMI-2-Rayleigh-P}. We then search for the optimal $\beta$ for fixed $P$. The rates are shown as the curve labeled ``TMMSE, GMI, K=1'' in Figs.~\ref{fig:rf4}--\ref{fig:rf5} and we see that the TMMSE strategy has the best $K=1$ rates.

Consider next $S_R=1(G\ge t)$ and the TMMSE policy. We compute (see~\eqref{eq:int4} below)
\begin{align}
   P & = \int_t^\infty e^{-g} \frac{\alpha^2 g}{(\beta + g)^2} \, dg \nonumber \\
   & = \alpha^2\left[ (\beta + 1 ) e^\beta E_1(t+\beta) - e^{-t} \frac{\beta}{t+\beta} \right] \label{eq:AGMI-2-Rayleigh-P2}
\end{align}
and (see~\eqref{eq:int3} and~\eqref{eq:int5} below)
\begin{align}
   & \sqrt{\tilde P(1)} %& = \E{ \left. \sqrt{G P(H)} \right| S_R=1} \nonumber \\
   = \int_t^\infty \frac{e^{-g}}{e^{-t}} \frac{\alpha g}{\beta + g} \, dg \nonumber \\
   & \quad = \alpha \left[ 1 - \beta e^{t+\beta} E_1(t+\beta) \right] \label{eq:AGMI-2-Rayleigh-Pt2} \\
   & \E{|Y|^2|S_R=1} = \int_t^\infty \frac{e^{-g}}{e^{-t}} \left(1 + \frac{\alpha^2 g^2}{(\beta + g)^2} \right) \, dg \nonumber \\
   & \quad = 1 + \alpha^2\left[ 1 + \frac{\beta^2}{t+\beta} - \beta (\beta + 2 ) e^{t+\beta} E_1(t+\beta) \right].
\end{align}
We optimize as for the $S_R=0$ case: given $P$, $\beta$, $t$, we compute $\alpha^2$ from \eqref{eq:AGMI-2-Rayleigh-P2}. We then search for the optimal $\beta$ for fixed $P$ and $t$. The optimal $t$ is approximately a factor of 1.1 smaller than for the TCI policy. The rates are shown in Fig.~\ref{fig:rf6} as the curve labeled ``TMMSE, GMI''.

%================================================
\subsection{Partial CSIR, CSIT\at R}
\label{subsec:Rayleigh-partialCSIR-CSITatR}
Suppose $S_R$ is defined by (see~\eqref{eq:H-mmse-estimate})
\begin{align*}
   H= \sqrt{\epsb}\, S_R + \sqrt{\eps}\, Z_R
\end{align*}
where $0 \le \epsilon \le 1$ and $S_R,Z_R$ are independent with distribution $\mathcal{CN}(0,1)$. We further consider the CSIT $S_T=|S_R|^2$. 

The reverse model GMI again requires $\Var{U|Y,S_R}$, which can be computed by simulation; see Appendix~\ref{appendix:csos-4}. However, as in Sec.~\ref{subsec:oof-Partial-CSIR-CSITatR} and~\ref{subsec:Rayleigh-full-CSIR-partial-CSIT}, optimizing the powers seems difficult, and we instead focus on forward models. The expressions \eqref{eq:partial-CSIR-gt}--\eqref{eq:partial-CSIR-sigmat} are
\begin{align}
   \tilde g(s_R) = \epsb \, s_T, \quad %\label{eq:partial-CSIR-gt2} \\
   \tilde \sigma^2(s_R) = \epsilon. %\label{eq:partial-CSIR-sigmat2}.
\end{align}
The GMI \eqref{eq:GMI-XYH} of Theorem~\ref{theorem:partial-CSIR} is
\begin{align}
  I_1(X;Y | S_R) % = \E{\log\left( 1 + \frac{|\bar H|^2 P(H_R)}{\bar \Sigma^2} \right)} \nonumber \\
  & = \int_{\lambda/\epsb}^\infty e^{-s_T} \log\left( 1 + \frac{\epsb \, s_T P(s_T)}
  {1 + \epsilon \, P(s_T)} \right) \, d s_T
  \label{eq:GMI-XYH-Rayleigh}
\end{align}
where the power control policy $P(s_T)$ is given by~\eqref{eq:waterfilling-noCSIR2}. The parameter $\lambda$ is chosen so that $\E{P(S_T)}=P$. For example, for $\eps \rightarrow 0$ we recover the waterfilling solution \eqref{eq:waterfilling}. Fig.~\ref{fig:rf7} shows the quadratic and conventional waterfilling rates, which lie almost on top of each other. For example, the inset shows the rates for $\epsilon=0.2$ and a small range of $E_b/N_0$.

\begin{figure}[t]
      \centering
      \includegraphics[width=0.95\columnwidth]{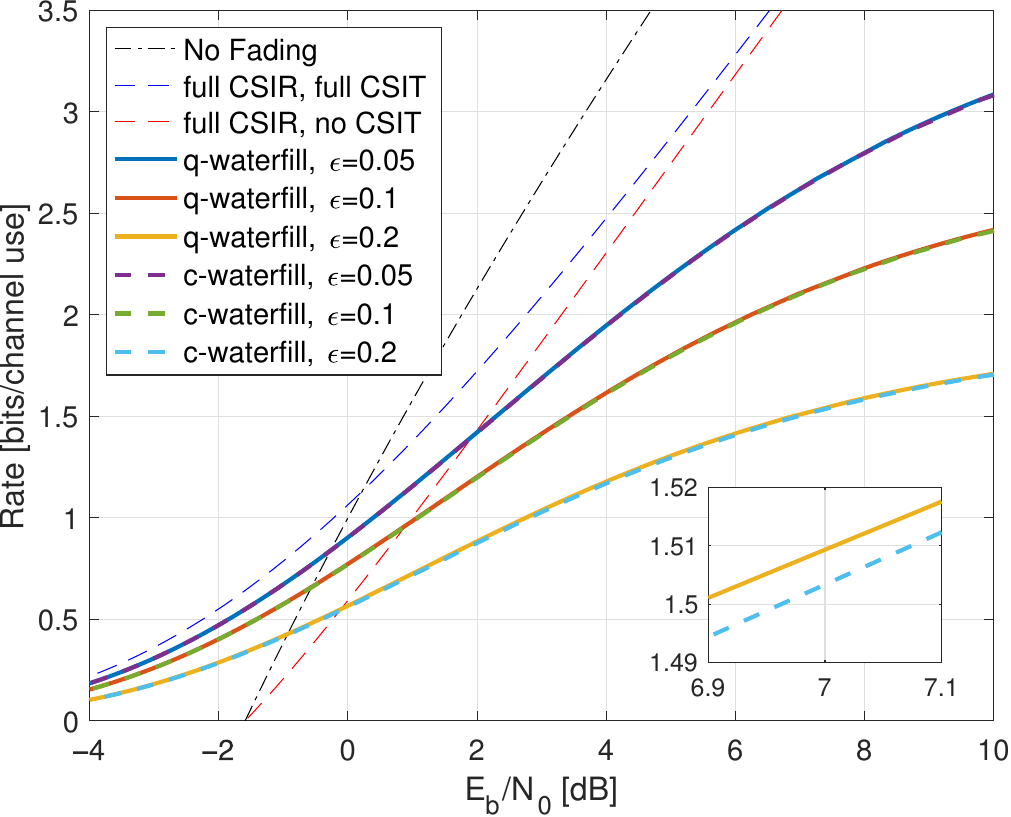}
      \caption{Rates for Rayleigh fading with partial CSIR and CSIT\at R.
      The curves labeled `q-waterfill' and `c-waterfill' are the quadratic and
      conventional waterfilling rates, respectively.}
      \label{fig:rf7}
\end{figure}

%%%%%%%%%%%%%%%%%%%%%%%%%%%%%%%%%%%%%%%%%%%%%%%%%%%%%%
\section{Channels with In-Block Feedback}
\label{sec:ibf-CSIT}
This section generalizes Shannon's model described in Sec.~\ref{subsec:CSIT-model} to include block fading with in-block feedback. For example, the model lets one include delay in the CSIT and permits many other generalizations for network models~\cite{Kramer14}.

%================================================
\subsection{Model and Capacity}
\label{subsec:ibf-model-capacity}
The problem is specified by the FDG in Fig.~\ref{fig:ptp-iBM-2}.  The model has a message $M$, and the channel input and output strings
\begin{align*}
 X_i^L   & = (X_{i1},\ldots,X_{iL}) \\
 Y_i^L & =(Y_{i1},\ldots,Y_{iL})
\end{align*}
for blocks $i=1,\ldots,n$. The channel is specified by a string $S_H^n=(S_{H1},\ldots,S_{Hn})$ of i.i.d. hidden channel states. The CSIR $S_{Ri\ell}$ is a (possibly noisy) function of $S_{Hi}$ for all $i$ and $\ell$. The receiver sees the channel outputs (see~\eqref{eq:Y-Gauss})
\begin{align}
   & (Y_{i\ell}, S_{Ri\ell}) = \left( f_\ell\left( X_{i}^{\ell}, S_{Hi}, Z_i^L\right), S_{Ri\ell} \right)
   \label{eq:ibm-Y}
\end{align}
for some functions $f_\ell(\cdot)$, $\ell=1,\dots,L$. Observe that the $X_{i}^{\ell}$ influence the $Y_{i\ell}$ in a causal fashion. The random variables $M,S_{H1},\ldots,S_{Hn},Z_1^L,\ldots,Z_n^L$ are mutually independent.

\begin{figure}[t]
      \centering
      \includegraphics[width=0.95\columnwidth]{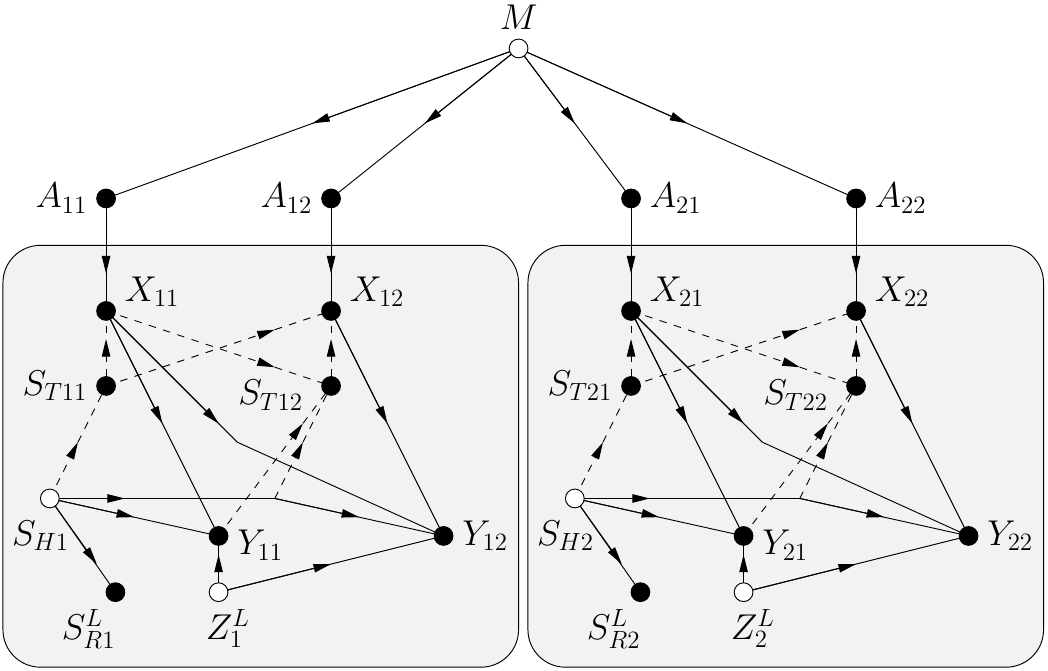}
      \caption{FDG for a block fading model with $n=2$ blocks of length $L=2$ and in-block feedback.
      Across-block dependence via past $S_{Ti\ell}$ is not shown.}
      \label{fig:ptp-iBM-2}
\end{figure}

We now permit past channel symbols to influence the CSIT; see Sec.~\ref{subsec:feedback}.  Suppose the CSIT has the form
\begin{align}
   S_{Ti\ell}=f_{T\ell}\left( S_{Hi}, X_i^{\ell-1},Y_i^{\ell-1} \right)
   \label{eq:ibm-ST}
\end{align}
for some function $f_{T\ell}(.)$ and
for all $i$ and $\ell$. The motivation for~\eqref{eq:ibm-ST} is that useful CSIR may not be available until the end of a block or even much later. In the meantime, the receiver can, e.g., quantize the $Y_i^{\ell-1}$ and transmit the quantization bits via feedback. This lets one study fast power control and beamforming without precise knowledge of the channel coefficients.

Define the string of past and current states as
\begin{align}
   s_T^{i\ell}=\left( s_{T1}^L,\dots,s_{T(i-1)}^L, s_{Ti}^\ell \right).
   \label{eq:past-present-state}
\end{align}
The channel input at time $i\ell$ is $X(s_T^{i\ell})$ and the adaptive codeword $A^{nL}$ is defined by the ordered lists
\begin{align}
  A_{i\ell} & = \left[ X(s_T^{i\ell}), \;\forall\; s_T^{i\ell} \right]
\end{align}
for $1\le i \le n$ and $1\le\ell\le L$. The adaptive codeword $A^{nL}$ is a function of $M$ and is thus independent of $S_H^{n}$ and $S_R^{nL}$.
%\begin{align}
%   X^{i(\ell-1)} & = \left( X_1^L,\dots,X_{i-1}^L, X_{i}^{\ell-1} \right) \label{eq:past-inputs} \\
%   Y^{i(\ell-1)} & = \left( Y_1^L,\dots,Y_{i-1}^L, Y_{i}^{\ell-1} \right) \label{eq:past-outputs} 
%\end{align}

The model under consideration is a special case of the channels introduced in~\cite[Sec.~V]{Kramer14}. However, the model in~\cite{Kramer14} has transmission and reception begin at time $\ell=2$ rather than $\ell=1$. To compare the theory, one must thus shift the time indexes by 1 unit and increase $L$ to $L+1$. The capacity for our model is given by~\cite[Thm.~2]{Kramer14} which we write as
\begin{align}
   C & \overset{(a)}{=} \max_{A^L} \frac{1}{L} I(A^L ; Y^L, S_R^L )  \nonumber \\
   & \overset{(b)}{=} \max_{A^L} \frac{1}{L} I(A^L ; Y^L \big| S_R^L ).
   \label{eq:ibm-C}
\end{align}
where $(a)$ follows by normalizing by $L$ rather than $L+1$, and step $(b)$ follows by the independence of $A^L$ and $S_R^L$.

%================================================
\subsection{GMI for Scalar Channels}
\label{subsec:ibf-GMI-scalar}
We will study scalar block fading channels; extensions to vector channels follow as described in Sec.~\ref{subsec:CSIT-GMI}. Let $\ul Y=[Y_1,\dots,Y_L]^T$ be the vector form of $Y^L$ and similarly for other strings with $L$ symbols. The GMI with parameter $s$ is
\begin{align}
    I_s(A^L; Y^L \big| S_R^L)
    = \E{\log \frac{q(\ul Y \big| \ul A, \ul S_R)^s}
    {q(\ul Y \big| \ul S_R)} }
   \label{eq:AGMI-SRL}
\end{align}

%----------------------------------------------
\subsubsection{Reverse Model}
For the reverse model, let $\ul A$ be a column vector that stacks the $X_\ell(s_T^{\ell})$ for all $s_T^{\ell}$ and $\ell$. Consider a reverse density as in~\eqref{eq:I-LB-q2-v}:
\begin{align*}
    q\big(a^L|y^L\big) = \frac{\exp\left(- \ul z(\ul y, \ul s_R)^\dag \, {\bf Q}_{\ul A | \ul Y = \ul y, \ul S_R = \ul s_R}^{-1} \, \ul z(\ul y, \ul s_R) \right)}{\pi^N \det {\bf Q}_{\ul A | \ul Y = \ul y, \ul S_R = \ul s_R}}
    %\label{eq:I-LB-q2-v-ibf}
\end{align*}
where 
\begin{align*}
    \ul z(\ul y, \ul s_R) = \ul a - \E{\ul A| \ul Y = \ul y, \ul S_R = \ul s_R}.
    %\label{eq:z-qyasr-L1}
\end{align*}
Using the forward model $q(y^L|a^L)=q(a^L|y^L)/p(a^L)$, the GMI with $s=1$ becomes
\begin{align}
   I_1(A^L ; Y^L,S_R^L) = \E{ \log \frac{\det {\bf Q}_{\ul A}}{\det {\bf Q}_{\ul A | \ul Y, \ul S_R}} }.
   \label{eq:AGMI-r-bif}
\end{align}
To simplify, consider adaptive symbols as in~\eqref{eq:adaptive-codeword-conventional2} (cf.~\eqref{eq:adaptive-codeword-conventional2-v}):
\begin{align}
   X_\ell(S_T^\ell) = \sqrt{P_\ell(S_T^\ell)} \, e^{j\phi_\ell(S_T^\ell)} \, U_\ell
   \label{eq:adaptive-codeword-conventional2-ibf}
\end{align}
where $\ul U \sim \mathcal{CN}(\ul 0,{\bf I})$. In other words, consider a conventional codebook represented by the $U_\ell$ and adapt the power and phase based on the available CSIT. The mutual information becomes $I(A^L;Y^L,S_R^L) = I(U^L;Y^L,S_R^L)$ (cf.~\eqref{eq:Shannon-cap2}) and the GMI with $s=1$ is (cf.~\eqref{eq:AGMI-r2})
\begin{align}
   I_1(A^L;Y^L \big| S_R^L) = \E{ - \log \det {\bf Q}_{\ul U \left| \ul Y, \ul S_R \right.} }.
   \label{eq:AGMI-r2-ibf}
\end{align}
In fact, one may also consider choosing $U_\ell = U$ for all $\ell$ in which case we compute (cf.~\eqref{eq:AGMI-r3})
\begin{align}
   I_1(A^L;Y^L \big| S_R^L) = \E{ - \log \Var{U \big | \ul Y, \ul S_R} }.
   \label{eq:AGMI-r2-ibf2}
\end{align}

%----------------------------------------------
\subsubsection{Forward Model}
Consider the following forward model (cf.~\eqref{eq:AGMI-qya} and \eqref{eq:AGMI-qyasr}):
\begin{align}
   q(\ul y \big| \ul a, \ul s_R) 
   = \frac{\exp
   \left(- \ul z(\ul s_R)^\dag {\bf Q}_{\ul{Z}}(\ul s_R)^{-1} \ul z(\ul s_R) \right)}
   {\pi^L \det {\bf Q}_{\ul Z}(\ul s_R)} .
   \label{eq:AGMI-qyasr-L}
\end{align}
with 
\begin{align*}
    \ul z(\ul s_R) = \ul y - {\bf H}(\ul s_R)\, \ul{\bar x}(\ul s_R)
    %\label{eq:z-qyasr-L}
\end{align*}
and where similar to \eqref{eq:xbar-sR} we define
\begin{align}
   \ul{\bar X}(\ul s_R) = \sum_{\ul s_T} {\bf W}(\ul s_T, \ul s_R) \, \ul X(\ul s_T)
   \label{eq:xbar-sRL}
\end{align}
where the ${\bf W}(\ul s_T, \ul s_R)$ are $L \times L$ complex matrices. Note that
\begin{align}
    \ul X(\ul s_T) = [X_1(s_{T1}), X_2(s_T^2),\dots,X_2(s_T^L)]^T
\end{align}
so $X_\ell$ is a function of $A^L$ and $S_T^\ell$, $\ell=1,\dots,L$.

We have the following generalization of Lemma~\ref{lemma:AGMI-max-MIMO} (see also Theorem~\ref{theorem:AGMI-SR-max}) where the novelty is that $S_T$ is replaced with $\ul S_T$. Define $\ul U(\ul s_T)\sim \mathcal{CN}(\ul 0,{\bf I})$ and $\ul X(\ul s_T) = {\bf Q}_{\ul X(\ul s_T)}^{1/2} \, \ul U(\ul s_T)$ for all $\ul s_T$.

%----------------------
\begin{theorem}
\label{theorem:AGMI-SR-max-block-fading}
A GMI \eqref{eq:AGMI-SRL} for the scalar block fading channel $p(y^L|a^L,s_R^L)$, an adaptive codeword $A^L$ with jointly CSCG entries, the auxiliary model \eqref{eq:AGMI-qyasr-L}, and with fixed ${\bf Q}_{X(\ul s_T)}$ is
\begin{align}
   & I_1(A^L; Y^L | S_R^L) \nonumber \\
   & = \E{ \log \left( \frac{\det {\bf Q}_{\ul Y}(\ul S_R)}
   {\det\left( {\bf Q}_{\ul Y}(\ul S_R) - \tilde {\bf D}(\ul S_R) \, \tilde {\bf D}(\ul S_R)^\dag \right)} \right) }.
   \label{eq:AGMI-3-I1-block-fading}
\end{align}
where
\begin{align}
  {\bf Q}_{\ul Y}(\ul s_R) & = \E{ \left.\ul Y\, \ul Y^\dag \right| \ul S_R = \ul s_R}
\end{align}
and for $M \times M$ unitary ${\bf V}_R(\ul s_T , \ul s_R)$ the matrix $\tilde {\bf D}(\ul s_R)$ is
\begin{align}
  \E{ \left. {\bf U}_T(\ul S_T, \ul s_R) \, {\bf \Sigma}(\ul S_T, \ul s_R) \, {\bf V}_R(\ul S_T, \ul s_R)^\dag \right| \ul S_R = \ul s_R}
   \label{eq:AGMI-2-block-fading}
\end{align}
and ${\bf U}_T(\ul s_T, \ul s_R)$ and ${\bf \Sigma}(\ul s_T, \ul s_R)$ are $N\times N$ unitary and $N\times M$ rectangular diagonal matrices, respectively, of the SVD
\begin{align}
   & \E{\left. \ul Y \, {\ul U(\ul s_T)}^\dag \right| \ul S_T = \ul s_T, \ul S_R = \ul s_R} \nonumber \\
   & \quad =  {\bf U}_T(\ul s_T, \ul s_R) \, {\bf \Sigma}(\ul s_T, \ul s_R) \, {\bf V}_T(\ul s_T, \ul s_R)^\dag
   \label{eq:CSIT-SVD-decomposition-block-fading}
\end{align}
for all $\ul s_T$, $\ul s_R$ and the ${\bf V}_T(\ul s_T,\ul s_R)$ are $M\times M$ unitary matrices. One may maximize \eqref{eq:AGMI-3-I1-block-fading} over the unitary ${\bf V}_R(\ul s_T, \ul s_R)$. 
\end{theorem}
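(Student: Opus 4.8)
The plan is to reduce the block-fading GMI to the MIMO result of Lemma~\ref{lemma:AGMI-max-MIMO} by conditioning on the realized CSIR path $\ul S_R = \ul s_R$ and then averaging. First I would write the conditional GMI \eqref{eq:AGMI-SRL} with $s=1$ as
\begin{align}
   I_1(A^L; Y^L | S_R^L) = \E{ I_1\big( A^L; Y^L \,\big|\, \ul S_R = \ul s_R \big) } \nonumber
\end{align}
where the inner quantity is the ordinary GMI for the channel $p(y^L | a^L, \ul s_R)$ with all densities conditioned on $\ul S_R = \ul s_R$, and the outer expectation is over $\ul S_R$. The claimed identity \eqref{eq:AGMI-3-I1-block-fading} will follow once the inner GMI is shown to equal the bracketed log-determinant, since the averaging is then immediate.

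Next I would observe that, for each fixed $\ul s_R$, the data match the hypotheses of Lemma~\ref{lemma:AGMI-max-MIMO} with the scalar state $S_T$ replaced by the vector path $\ul S_T$. Concretely, the auxiliary model \eqref{eq:AGMI-qyasr-L} is precisely the Gaussian forward model \eqref{eq:AGMI-qya} with channel ${\bf H}(\ul s_R)$ and noise covariance ${\bf Q}_{\ul Z}(\ul s_R)$; the linear combination $\ul{\bar X}(\ul s_R)$ of \eqref{eq:xbar-sRL} plays the role of $\ul{\bar X}$ in \eqref{eq:xbar}; and the adaptive-codeword entries $\ul X(\ul s_T) = {\bf Q}_{\ul X(\ul s_T)}^{1/2}\,\ul U(\ul s_T)$ are jointly CSCG with fixed covariances. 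Applying Lemma~\ref{lemma:AGMI} conditioned on $\ul s_R$ therefore yields the log-determinant \eqref{eq:AGMI-3} with $\tilde{\bf H}(\ul s_R)$ and ${\bf Q}_{\ul{\Zt}}(\ul s_R)$ as in \eqref{eq:AGMI-parameters-2a}-\eqref{eq:AGMI-parameters-2b}. Setting $\tilde{\bf D}(\ul s_R)\tilde{\bf D}(\ul s_R)^\dag = \tilde{\bf H}(\ul s_R)\,{\bf Q}_{\ul{\bar X}}(\ul s_R)\,\tilde{\bf H}(\ul s_R)^\dag$ and noting ${\bf Q}_{\ul{\Zt}}(\ul s_R) = {\bf Q}_{\ul Y}(\ul s_R) - \tilde{\bf D}(\ul s_R)\tilde{\bf D}(\ul s_R)^\dag$, the same determinant rearrangement that turns \eqref{eq:AGMI-3} into \eqref{eq:AGMI-3-I1} produces the bracketed ratio of \eqref{eq:AGMI-3-I1-block-fading}.

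The optimization step then mirrors the proof of Lemma~\ref{lemma:AGMI-max-MIMO} verbatim, carried out separately for each $\ul s_R$. Maximizing over the weights ${\bf W}(\ul s_T, \ul s_R)$ and over the input unitaries amounts to aligning all entries $\ul X(\ul s_T)$ with a single common $\ul U$ through the fully-correlated choice that generalizes \eqref{eq:barX-max-vec}. Using Remark~\ref{remark:H-Tx} to separate channel and transmitter, the conditional cross-correlation $\E{\ul Y\,\ul U(\ul s_T)^\dag \mid \ul S_T = \ul s_T, \ul S_R = \ul s_R}$ admits the SVD \eqref{eq:CSIT-SVD-decomposition-block-fading}, the optimal $\tilde{\bf D}(\ul s_R)$ collapses to the averaged form \eqref{eq:AGMI-2-block-fading}, and the residual unitary freedom ${\bf V}_R(\ul s_T,\ul s_R)$ is left to be optimized. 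Taking the outer expectation over $\ul S_R$ gives \eqref{eq:AGMI-3-I1-block-fading}.

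I expect the main obstacle to be bookkeeping the conditional independence structure induced by the in-block feedback \eqref{eq:ibm-ST}, where the realized path $\ul S_T$ depends on past inputs and outputs and hence on the codeword, so that $A^L$ and $\ul S_T$ are \emph{not} independent as they are in the memoryless model of Sec.~\ref{sec:CSIT}. The resolution, which I would state with care, is that the entire result is expressed through the conditional second-order statistics \eqref{eq:CSIT-SVD-decomposition-block-fading} computed directly from the joint law $p(a^L, y^L, s_R^L, s_T^L)$; since the derivation of Lemma~\ref{lemma:AGMI-max-MIMO} only ever uses these conditional correlations together with the existence of an SVD for each realized pair $(\ul s_T, \ul s_R)$, the algebra transfers unchanged. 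The sole genuinely new element is that $\ul S_T$ is a vector path rather than a scalar state, which alters indexing and dimensions but not the structure of the argument.
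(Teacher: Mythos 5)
Your proposal matches the paper's own treatment: the paper gives no separate proof of Theorem~\ref{theorem:AGMI-SR-max-block-fading}, presenting it as the generalization of Lemma~\ref{lemma:AGMI-max-MIMO} (proved in Appendix~\ref{appendix:proof-lemma-AGMI-max-vec}) obtained by replacing the scalar state $S_T$ with the path $\ul S_T$ and conditioning on $\ul S_R$ as in Theorem~\ref{theorem:AGMI-SR-max}, which is exactly your reduction. Your closing remark correctly identifies the one point the paper leaves implicit, namely that with in-block feedback $A^L$ and $S_T^L$ are dependent so the Markov-chain step of Appendix~\ref{appendix:proof-lemma-AGMI-max-vec} must be re-justified through the conditional second-order statistics of the true joint law, and your handling of it is consistent with the paper's.
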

%----------------------

Suppose next that the actual channel is $\ul Y= H \ul X + \ul Z$ where $\ul Z \sim \mathcal{CN}(\ul 0,{\bf I})$. The extension of \eqref{eq:CSIT-AGMI-3-vector-linear} and~\eqref{eq:AGMI-SR-Gauss} to block fading channels with CSIR is
\begin{align}
    & I_1(A^L; Y^L | S_R^L) \nonumber \\
    & = \sum_{\ell=1}^L \E{\log \left( 1 + \frac{\tilde P_\ell(\ul S_R)} {1 + \E{G P_\ell(S_T^\ell) \big| \ul S_R} - \tilde P_\ell(\ul S_R)} \right)}
    \label{eq:AGMI-3-vector-linear-block-fading}
\end{align}
where (cf.~\eqref{eq:AGMI-Pt}--\eqref{eq:AGMI-Y2})
\begin{align*}
    & \tilde P_\ell(\ul s_R) = \E{\,\left| \E{\left. H \sqrt{P_\ell(S_T^\ell)} \, \right| S_T^\ell, \ul S_R = \ul s_R} \right| \,}^2
    %\label{eq:AGMI-3-vector-linear-block-fading-Pt}
    \\
    & \E{|Y_\ell|^2 \big| \ul S_R = \ul s_R}=1 + \E{G P_\ell(S_T^\ell) \big| \ul S_R = \ul s_R}.
\end{align*}

%================================================
\subsection{CSIT\at R}
\label{subsec:ibf-CSITatR}
Continuing as in Sec.~\ref{subsec:CSITatR}, suppose the CSIT in~\eqref{eq:ibm-ST} can be written by replacing $S_{Hi}$ with $S_{Ri}^\ell$ for all $i$ and $\ell$:
\begin{align}
    S_{Ti\ell}=f_{T\ell}\left(S_{Ri}^\ell,X_i^{\ell-1},Y_i^{\ell-1}\right) .
   \label{eq:ibm-f}
\end{align}
The capacity \eqref{eq:ibm-C} then simplifies to a directed information. To see this, expand the mutual information in \eqref{eq:ibm-C} as
\begin{align}
   I(A^L ; Y^L \big| S_R^L)
   & \overset{(a)}{=} \sum_{\ell=1}^L I\left( A^L, X^\ell ; Y_\ell \big| S_R^L, Y^{\ell-1} \right) \nonumber \\
   & \overset{(b)}{=} \sum_{\ell=1}^L I( X^\ell ; Y_\ell \big| S_R^L, Y^{\ell-1} ) \label{eq:ibm-f-information}
\end{align}
where step $(a)$ follows because $X^\ell$ is a function of $A^L$ and $S_T^\ell$ in~\eqref{eq:ibm-f}, and step $(b)$ follows by the Markov chains
\begin{align}
   & A^L - [ S_R^L, X^\ell, Y^{\ell-1} ] - Y_\ell .
\end{align}
The capacity is therefore (see the definition~\eqref{eq:DI})
\begin{align}
   C = \max_{X_\ell(S_T^\ell), \;\ell=1,\ldots,L} \frac{1}{L} I(X^L \rightarrow Y^L \big| S_R^L) .
   \label{eq:ibm-cap-f}
\end{align}
The maximization in \eqref{eq:ibm-cap-f} under a cost constraint becomes a constrained maximization for which $\E{c(X^L,Y^L)}\le LP$ for some cost function $c(\cdot)$.

\begin{remark} \label{remark:shift}
As outlined at the end of Sec.~\ref{subsec:ibf-model-capacity}, the capacity \eqref{eq:ibm-cap-f} is a special case of the theory in~\cite[eq.~(48)]{Kramer14}. To see this, define the extended and time-shifted strings
\begin{align*}
    \hat A^{L+1} = (0,A^L), \;\;
    \hat X^{L+1} = (0,X^L), \;\;
    \hat Y^{L+1} = (0,Y^L).
\end{align*}
Since $A^L$ and $S_R^L$ are independent, one may expand \eqref{eq:ibm-f-information} as
\begin{align}
   & I(A^L ; Y^L \big| S_R^L ) = I(A^L \,;\,  (S_{R2},\dots,S_{RL},0), Y^L \big| S_{R1} ) \nonumber \\
   & \quad \overset{(a)}{=} \sum_{\ell=1}^L I(A^L, X^\ell \,;\, S_{R(\ell+1)}, Y_\ell \big| S_R^\ell, Y^{\ell-1} ) \nonumber \\
   & \quad \overset{(b)}{=} \sum_{\ell=1}^L I(X^{\ell} \,;\, S_{R(\ell+1)}, Y_\ell \big| S_R^\ell, Y^{\ell-1} ) \nonumber \\
   & \quad = \sum_{\ell=2}^{L+1} I(\hat X^\ell \,;\, S_{R\ell}, \hat Y_\ell \big| S_R^{\ell-1}, \hat Y^{\ell-1} )
   \label{eq:ibm-f-information2}
\end{align}
where step $(a)$ follows because $X^\ell$ is a function of $A^L$ and $S_T^\ell$ in~\eqref{eq:ibm-f}, and where $S_{R(L+1)}=0$, and step $(b)$ follows by the Markov chains
\begin{align}
   & A^L - [ X^{\ell}, Y^{\ell-1}, S_R^\ell ] - [Y_{\ell},S_{R(\ell+1)}].
\end{align}
The expression \eqref{eq:ibm-f-information2} is the desired directed information
\begin{align}
   I(A^L ; Y^L, S_R^L ) = I( \hat X^{L+1} \rightarrow \hat Y^{L+1}, S_R^{L+1} ).
   \label{eq:ibm-f-information3}
\end{align}
\end{remark}

\begin{remark}
Consider the basic CSIT model
\begin{align}
   S_{Ti\ell}=f_T(S_{Ri\ell})
   \label{eq:ibm-f0}
\end{align}
for some function $f_T(\cdot)$ and for $\ell=1,\ldots,L$ and $i=1,\ldots,n$.
This model was studied in~\cite[Sec.~III.C]{Goldsmith-Medard-IT07} and its
capacity is given as (see~\cite[eq.~(35) with eq.~(13)]{Goldsmith-Medard-IT07})
\begin{align} 
   C & = \max_{X_\ell(S_T^\ell), \;\ell=1,\ldots,L} \frac{1}{L} I(X^L ; Y^L \big| S_R^L, S_T^L).
   \label{eq:ibm-cap2} 
\end{align}
To see that \eqref{eq:ibm-cap2} is a special case of \eqref{eq:ibm-cap-f},
observe that
\begin{align} 
   I(X^L \rightarrow Y^L | S_R^L)
   %& \overset{(a)}{=} \sum_{\ell=1}^L I(X^\ell ; Y_\ell | S_R^L, Y^{\ell-1}) \nonumber \\
   & \overset{(a)}{=} \sum_{\ell=1}^L I(X^\ell ; Y_\ell \big| S_R^L, S_T^L, Y^{\ell-1}) \nonumber \\
   & \overset{(b)}{=} \sum_{\ell=1}^L I(X^L ; Y_\ell \big| S_R^L, S_T^L, Y^{\ell-1}) % \nonumber \\
   %& = I(X^L ; Y^L | S_R^L, S_T^L)
   \label{eq:ibm-cap-f2} 
\end{align}
where step $(a)$ follows by~\eqref{eq:ibm-f-information}, and step $(b)$ follows by the Markov chains
\begin{align}
   [X_{\ell+1},\ldots,X_L ] - [ S_R^L, S_T^L, Y^{\ell-1}, X^\ell ] - Y_\ell.
\end{align}
The expression \eqref{eq:ibm-cap-f2} gives \eqref{eq:ibm-cap2}. Related results are available in~\cite[Sec.~III]{Caire-Shamai-IT99} and~\cite{Jelinek65,Das-Narayan-CISS98}.
\end{remark}

\begin{remark} \label{remark:CSITatR2}
The capacity \eqref{eq:ibm-cap-f} has only $S_R^L$ in the conditioning while \eqref{eq:ibm-cap2} has both $S_R^L$ and $S_T^L$ in the conditioning.  This subtle difference is due to permitting $X^{\ell-1}$ to influence the $S_{T\ell}$ in \eqref{eq:ibm-f}, and it complicates the analysis. On the other hand, if we remove only $X^{\ell-1}$ from \eqref{eq:ibm-f} then the receiver knows $S_{T\ell}$ at time $\ell$ and the capacity \eqref{eq:ibm-cap-f} can be written as (see the definition \eqref{eq:ccDI})
\begin{align}
   C = \max_{X_\ell(S_T^\ell), \;\ell=1,\ldots,L} \frac{1}{L} I(X^L \rightarrow Y^L \big\| S_T^L \big| S_R^L).
   \label{eq:ibm-cap-f3}
\end{align}
We treat such a model in Sec.~\ref{subsec:ibf-Rayleigh-Feedback} below.
\end{remark}

%================================================
\subsection{Fading Channels with AWGN}
\label{subsec:ibf-fading-AWGN}
The expression~\eqref{eq:ibm-cap-f} is valid for general statistics. We next specialize to the block-fading AWGN model
\begin{align}
    Y_\ell = H X_\ell + Z_\ell
    \label{eq:Y-Gauss-block-fading}
\end{align}
where $\ell=1,\dots,L$, $Z^L\sim\mathcal{CN}(\ul 0,{\bf I})$, and $(H,S_R^L)$, $A^L$, $Z^L$ are mutually independent. Consider the power constraint 
\begin{align}
    \sum_{\ell=1}^L \E{P_\ell\left(S_T^\ell\right)} \le LP
    \label{eq:ibm-power}
\end{align}
where $P_\ell(s_T^\ell)=\E{|X_\ell(s_T^\ell)|^2}$. The optimization of~\eqref{eq:ibm-cap-f} under the constraint \eqref{eq:ibm-power} is usually intractable, and we again desire expressions with $\log(1+\text{SNR})$ terms to obtain insight.

%--------------------------------------
\subsubsection{Capacity Upper Bound}
Using similar steps as in~\eqref{eq:IAYSR-UB}, we have
\begin{align}
  & I( A^L ; Y^L | S_R^L) \le  I( A^L ; Y^L, H \,|\, S_R^L) \nonumber \\
  & = \sum_{\ell=1}^L  I\left( A^L ; Y_\ell \,\big|\, S_R^L, H, Y^{\ell-1}\right) \nonumber \\
  & \le \sum_{\ell=1}^L \left[ h( Y_\ell | S_R^L, H, Y^{\ell-1}) - h(Z_\ell) \right] \nonumber \\
  & \overset{(a)}{\le} \sum_{\ell=1}^L \E{\log \left( 1 + \E{ G P_\ell(S_T^\ell) \big| S_R^L, H, Y^{\ell-1}} \, \right)}
  \label{eq:A-cap-SR-AWGN-UB}
\end{align}
where $G=|H|^2$ and step $(a)$ follows by~\eqref{eq:h-UB4}. However, CSCG inputs do not necessarily maximize the RHS of~\eqref{eq:A-cap-SR-AWGN-UB} because the inputs affect the CSIT.

\begin{remark} \label{remark:ibf-special}
The expectation inside the logarithm in~\eqref{eq:A-cap-SR-AWGN-UB} becomes $G P_\ell(S_T^\ell)$ if  $S_T^\ell$ is a function of $S_R^L,H,Y^{\ell-1}$; see~\eqref{eq:Shannon-cap-SR-AWGN-UB}, Remark~\ref{remark:CSITatR2}, and Proposition~\ref{proposition:part-CSIT2-block-fading} below. 
\end{remark}

%--------------------------------------
\subsubsection{Achievable Rates}
Deriving achievable rates is more subtle than in Sec.~\ref{sec:Gauss-fading}. Consider the CSIT model \eqref{eq:ibm-f} where for each block, we have
\begin{align*}
   S_{T\ell}=f_{T\ell}( H, X^{\ell-1}, Y^{\ell-1} )
\end{align*}
for all $\ell$. The capacity \eqref{eq:ibm-cap-f} is
\begin{align}
   C(P) & = \max_{X_\ell(S_T^\ell), \;\ell=1,\ldots,L}  \frac{1}{L} I(X^L \rightarrow Y^L \big| H)
    \label{eq:ibm-cap-f4} \\
   & = \max_{X_\ell(S_T^\ell), \;\ell=1,\ldots,L}  \left[ \frac{1}{L} h(Y^L \big| H) \right] - \log(\pi e).
   \label{eq:ibm-cap-f41}
\end{align}
However, CSCG inputs are not necessarily optimal since the inputs affect the CSIT.

Instead of trying to optimize the input, consider $X_\ell$ that are CSCG. We may write
\begin{align}
   I(X^L \rightarrow Y^L | H) = \sum_{\ell=1}^L \E{\log\left( 1 + G P_\ell(S_T^{\ell}) \right)}
   \label{eq:ibm-cap-f5}
\end{align}
and the Lagrangians to maximize \eqref{eq:ibm-cap-f5} are
\begin{align}
   \sum_{\ell=1}^L \E{\log\left( 1 + G P_\ell(S_T^\ell) \right)} + \lambda \left(LP - \sum_{\ell=1}^L \E{P_\ell(S_T^{\ell})} \right).
   \label{eq:ibm-Lagrange}
\end{align}
Suppose the $S_{T\ell}$ are discrete random variables. Taking the derivative with respect to $P_\ell(s_T^\ell)$, we obtain
\begin{align}
   & \lambda %& = \E{ \left. \frac{|H|^2}{1+ P_\ell(s_T^\ell) |H|^2} \right| S_T^\ell=s_T^\ell } \nonumber \\
%   & = \int p(h | s_T^\ell) \frac{|h|^2}{1+ P_\ell(s_T^\ell) |h|^2} dh \label{eq:ibm-lambda}
   = \int_0^\infty p(g | s_T^\ell) \frac{g}{1+ g P_\ell(s_T^\ell)} dg \nonumber \\
   & \; + \sum_{k=\ell+1}^L \sum_{s_T^k} \int_0^\infty p(g)
   \frac{dP_{S_T^k | G}( s_T^k | g)}{dP_\ell(s_T^\ell)}
   \frac{\log\left( 1 + g P_k(s_T^k) \right)}{P_{S_T^\ell}(s_T^\ell)} dg \label{eq:ibm-lambda}
\end{align}
as long as $P_\ell(s_T^\ell)>0$. This expression is complicated
because the choice of transmit powers $P_\ell(s_T^\ell)$ influences the statistics of the future CSIT $S_{T(\ell+1)},\dots,S_{TL}$.  If \eqref{eq:ibm-lambda} cannot be satisfied, choose $P_\ell(s_T^\ell)=0$. Finally, set $\lambda$ so that $\sum_{\ell=1}^L \E{P_\ell(S_T^\ell)}=LP$.

Instead of the above, consider the simpler CSIT model with $S_{T\ell}=f_{T\ell}( H )$ for all $\ell$, cf.~\eqref{eq:ibm-f0}. The capacity \eqref{eq:ibm-cap2} is now given by \eqref{eq:ibm-cap-f5} with CSCG inputs and \eqref{eq:ibm-lambda} simplifies because the derivatives with respect to $P_\ell(s_T^\ell)$ are zero, i.e., the double sum in~\eqref{eq:ibm-lambda} disappears and for all $\ell$ and $s_T^\ell$ we have
\begin{align}
   & \lambda %& = \E{ \left. \frac{|H|^2}{1+ P_\ell(s_T^\ell) |H|^2} \right| S_T^\ell=s_T^\ell } \nonumber \\
%   & = \int p(h | s_T^\ell) \frac{|h|^2}{1+ P_\ell(s_T^\ell) |h|^2} dh \label{eq:ibm-lambda}
   = \int_0^\infty p(g | s_T^\ell) \frac{g}{1+ g P_\ell(s_T^\ell)} dg.
   \label{eq:ibm-lambda2}
\end{align}
We use~\eqref{eq:ibm-lambda2} for~\eqref{eq:dqf-1}--\eqref{eq:dqf-3} in Sec.~\ref{subsec:ibf-Rayleigh-Feedback} below.

%================================================
\subsection{Full CSIR, Partial CSIT}
\label{subsec:ibf-full-CSIR-part-CSIT}
We next generalize Proposition~\ref{proposition:part-CSIT2} in Sec.~\ref{subsec:Full-CSIR-part-CSIT} to the block-fading AWGN model \eqref{eq:Y-Gauss-block-fading} with the CSIR 
\begin{align}
    S_{R\ell} = H\sqrt{P(S_T^\ell)}, \quad \ell=1,\dots,L
    \label{eq:ibm-full-CSIR}
\end{align}
and where $S_{T\ell}=f_{T\ell}(S_H)$, i.e., we have discarded $X_i^{\ell-1}$ and $Y_i^{\ell-1}$ in~\eqref{eq:ibm-ST}.  We then have the following capacity result that implies this CSIR is the best possible since one achieves a capacity upper bound similar to \eqref{eq:Shannon-cap-SR-AWGN-UB}.

%%%%%%%%%%%%%%%%%%%%%%%%%%
\begin{proposition}
\label{proposition:part-CSIT2-block-fading}
The capacity of the channel \eqref{eq:Y-Gauss-block-fading} with the CSIR~\eqref{eq:ibm-full-CSIR} and $S_{T\ell}=f_{T\ell}(S_H)$ for $\ell=1,\dots,L$ is
\begin{align}
  C(P)
  & = \max \frac{1}{L} \sum_{\ell=1}^L \E{\log \left( 1 + G P_\ell(S_T^\ell) \right)}
  \label{eq:GMI-AYH2-block-fading} 
\end{align}
where the maximization is over the power control policies $P_\ell(S_T^\ell)$ such that $\sum_{\ell=1}^L \E{P_\ell(S_T^\ell)}\le LP$. One may use \eqref{eq:ibm-lambda2} to compute the $P_\ell(S_T^\ell)$.
\end{proposition}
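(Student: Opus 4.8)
The plan is to establish \eqref{eq:GMI-AYH2-block-fading} as a matching pair of bounds, exactly paralleling the scalar argument for Proposition~\ref{proposition:part-CSIT2}: a converse from a genie-aided version of the capacity \eqref{eq:ibm-C}, and achievability from the forward-model GMI of Theorem~\ref{theorem:AGMI-SR-max-block-fading}. Throughout I would lean on the two structural hypotheses, namely that $S_{T\ell}=f_{T\ell}(S_H)$ carries no in-block feedback (so the model is the no-feedback special case whose capacity is \eqref{eq:ibm-C}) and that the adaptive codeword $A^L$ is a function of $M$ alone, hence independent of $[S_R^L,S_T^L,H]$.

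For the converse, I would first reveal $[S_T^L,H]$ to the receiver as a genie. Since $A^L\perp[S_R^L,S_T^L,H]$, adding this genie is free in the mutual-information sense, so
\begin{align}
   I(A^L;Y^L|S_R^L) &\le I(A^L;Y^L|S_R^L,S_T^L,H) \nonumber \\
   &= h(Y^L|S_R^L,S_T^L,H) - h(Z^L)
   \label{eq:proposal-converse}
\end{align}
where the equality uses that $X^L$ is determined by $[A^L,S_T^L]$, so $h(Y^L|A^L,S_R^L,S_T^L,H)=h(Z^L)=L\log(\pi e)$. Applying the chain rule, dropping the past outputs, and invoking the entropy bound $h(Y_\ell|\cdot)\le\log(\pi e\,\Var{Y_\ell|\cdot})$ term by term as in \eqref{eq:IAYSR-UB}, the crux is the conditional-variance computation: given $S_T^\ell=s_T^\ell$ and $H=h$, the symbol $X_\ell=X_\ell(s_T^\ell)$ has second moment $P_\ell(s_T^\ell)$ independently of the remaining conditioning variables (again by $A^L\perp[S_R^L,S_T^L,H]$), whence $\Var{Y_\ell|S_R^L,S_T^L,H}\le 1+G P_\ell(S_T^\ell)$. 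This yields $I(A^L;Y^L|S_R^L)\le\sum_\ell\E{\log(1+G P_\ell(S_T^\ell))}$ for every input, and maximizing over power policies obeying $\sum_\ell\E{P_\ell(S_T^\ell)}\le LP$ gives the right-hand side of \eqref{eq:GMI-AYH2-block-fading}. I would remark that this bound in fact holds for any CSIR that is a function of $S_H$, which is what makes $S_R^L=H[\sqrt{P_1(S_{T1})},\dots,\sqrt{P_L(S_T^L)}]$ ``best possible.''

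Achievability would come from evaluating the GMI \eqref{eq:AGMI-3-vector-linear-block-fading} for the present CSIR, in the spirit of Remark~\ref{remark:full-CSIR}. Because $S_{R\ell}=H\sqrt{P_\ell(S_T^\ell)}$, conditioning on $\ul S_R=\ul s_R$ fixes $H\sqrt{P_\ell(S_T^\ell)}=s_{R\ell}$, so $\tilde P_\ell(\ul s_R)=|s_{R\ell}|^2=\E{G P_\ell(S_T^\ell)\big|\ul S_R=\ul s_R}$ and the denominator in \eqref{eq:AGMI-3-vector-linear-block-fading} collapses to $1$. Hence $I_1(A^L;Y^L|S_R^L)=\sum_\ell\E{\log(1+G P_\ell(S_T^\ell))}$, which, divided by $L$ and optimized over the power policy (solving \eqref{eq:ibm-lambda2}), meets the converse. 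The main obstacle I anticipate is the conditional-variance step in the converse: one must confirm that the genie $[S_T^L,H]$ together with the independence of $A^L$ from the state really pins $\E{|X_\ell(S_T^\ell)|^2\,\big|\,S_R^L,S_T^L,H}$ to $P_\ell(S_T^\ell)$, i.e.\ that no residual correlation survives the conditioning. It is precisely the absence of feedback, $S_{T\ell}=f_{T\ell}(S_H)$, that prevents the past from sharpening this variance below $1+G P_\ell(S_T^\ell)$ and thereby keeps the converse tight against achievability.
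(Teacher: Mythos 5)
Your proposal is correct and follows essentially the same route as the paper: the converse is the genie-aided bound $I(A^L;Y^L|S_R^L)\le I(A^L;Y^L,S_T^L,H|S_R^L)$ followed by the per-symbol Gaussian entropy/variance bound giving $1+GP_\ell(S_T^\ell)$, and achievability evaluates the forward-model GMI \eqref{eq:AGMI-3-vector-linear-block-fading} with $\tilde P_\ell(\ul S_R)=GP_\ell(S_T^\ell)$ so the denominator collapses to $1$. The only cosmetic difference is that you drop the $Y^{\ell-1}$ conditioning before bounding the variance, whereas the paper retains it; both yield the same bound.
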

%%%%%%%%%%%%%%%%%%%%%%%%%%
\begin{proof}
For achievability, apply \eqref{eq:AGMI-3-vector-linear-block-fading} with
\begin{align*}
    \tilde P_\ell(\ul S_R)=G P_\ell(S_T^\ell) \text{ and } \E{|Y_\ell|^2|\ul S_R}=1+\tilde P_\ell(\ul S_R).
\end{align*}
The converse follows by applying similar steps as in~\eqref{eq:IAYSR-UB}:
\begin{align}
  & I( A^L ; Y^L | S_R^L) \le I( A^L ; Y^L, S_T^L, H | S_R^L) \nonumber \\
  & = \sum_{\ell=1}^L  I\left( A^L ; Y_\ell \,\big|\, S_R^L, S_T^L, H, Y^{\ell-1} \right) \nonumber \\
  & \le \sum_{\ell=1}^L \left[ h( Y_\ell | S_R^L, S_T^L, H, Y^{\ell-1}) - h(Z_\ell) \right] \nonumber \\
  & \overset{(a)}{\le} \sum_{\ell=1}^L \E{\log \Var{Y_\ell | S_R^L, S_T^L, H, Y^{\ell-1}}}.
  \label{eq:A-cap-SR-AWGN-UB2}
\end{align}
Finally, insert $\Var{Y_\ell | S_R^L, S_T^L, H, Y^{\ell-1}}=1+G P_\ell(S_T^\ell)$.
\end{proof}

The RHS of~\eqref{eq:A-cap-SR-AWGN-UB2} is at most the RHS of~\eqref{eq:A-cap-SR-AWGN-UB}, and hence~\eqref{eq:A-cap-SR-AWGN-UB2} gives a better bound. However, the bound~\eqref{eq:A-cap-SR-AWGN-UB2} is valid only for particular CSIT, as in Remark~\ref{remark:ibf-special}.

\vspace{-0.15cm}
%================================================
\subsection{On-Off Fading with Delayed CSIT}
\label{subsec:ibf-Simple-Fading}
Consider on-off fading where the CSIT is delayed by $D$ symbols, i.e., we have $S_{T\ell}=0$ for $\ell=1,\dots,D$ and $S_{T(D+1)}=H$. Define the transmit powers as $P_\ell(s_T^\ell)=\E{|X(s_T^\ell)|^2}$ for $\ell=1,\dots,L$. The capacity is
\begin{align*}
   C(P) & =\frac{D}{2L} \log\left( 1 + 2P_1 \right) +  \frac{L-D}{2L} \log\left( 1 + 2P_{D+1} \right)
   %\label{eq:oof-cap3}
\end{align*}
where we write $P_{D+1}=P_{D+1}\big(s_T^{D+1}\big)$.
Optimizing the powers, we obtain 
\begin{align*}
   & \left\{
   \begin{array}{l}
   P_1 = P - \frac{L-D}{4L} \\
   P_{D+1} = 2P + \frac{D}{2L}
   \end{array} \right\} \quad \text{if } P \ge \frac{L-D}{4L} \\
   & \left\{
   \begin{array}{l}
   P_1 = 0 \\
   P_{D+1} = \frac{2LP}{L-D}
   \end{array} \right\} \quad \text{else.} 
\end{align*}
For large $P$, we thus have $C(P)\approx \frac{1}{2} \log(P)$ for all $0\le D\le L$.
For small $P$, we have 
\begin{align*}
  C(P) & = \left\{ \begin{array}{ll}
  \frac{L-D}{2L} \log\left( 1 + \frac{4LP}{L-D} \right), & \text{if $0\le D<L$} \\
  \log(1+2P)/2, & \text{if $D=L$}
  \end{array} \right. \nonumber \\
  & \approx \left\{ \begin{array}{ll}
  \left( 2P - \frac{4L}{L-D} P^2 \right) \log(e), & \text{if $0\le D<L$} \\
  \left( P - P^2 \right) \log(e), & \text{if $D=L$}.
  \end{array} \right.
\end{align*}
The CSIT thus gives a 3 dB power gain at low SNR since $C(P) \approx 2P \log(e)$ for $0\le D<L$ and $C(P) \approx P \log(e)$ for $D=L$. Furthermore, using~\eqref{eq:wideband}, the slope of the capacity versus
$E_b/N_0$ in bits/s/Hz/(3 dB) is
\begin{align*}
\begin{array}{ll}
   1-D/L& \text{if $0\le D<L$}  \\
   1 & \text{if $D=L$.} 
\end{array}
\end{align*}
In other words, the delay reduces the low-SNR rate by a factor of $1-D/L$ for $0 \le D<L$.

\vspace{-0.15cm}
%================================================
\subsection{Rayleigh Fading and One-Bit Feedback}
\label{subsec:ibf-Rayleigh-Feedback}
Let $q_u(.)$ be the one-bit ($B=1$) quantizer in Sec.~\ref{subsec:quantizer}. We study Rayleigh fading for two scenarios with $S_R^L=H$,
i.e., the receiver knows $H$ after the $L$ transmissions of each block.
\begin{itemize}
\item For the CSIT \eqref{eq:ibm-f0}, we study delayed feedback where $S_{T\ell}=0$ for $\ell=1,\dots,L-1$ and $S_{TL}=q_u(G)$. The delay is thus $D=L-1$ in the sense of Sec.~\ref{subsec:ibf-Simple-Fading}.
\item For the CSIT \eqref{eq:ibm-f}, we study  the case $S_{T1}=0$, $S_{T2}=q_u(|Y_1|)$, and $S_{T\ell}=0$ for $\ell=3,\dots,L$. The delay is thus $D=1$ in the sense of Sec.~\ref{subsec:ibf-Simple-Fading}.
\end{itemize}

%-------------------------------------------------------------
\subsubsection{Delayed Quantized CSIR Feedback}
Consider $S_{T\ell}=0$ for $\ell=1,\dots,L-1$ and $S_{TL}=q_u(G)$. CSCG inputs are optimal, and \eqref{eq:ibm-cap-f2} has the same form as \eqref{eq:GMI-AYH2-block-fading}. The Lagrangians are given by \eqref{eq:ibm-Lagrange}, and we again obtain \eqref{eq:ibm-lambda2}. For the case at hand, we have $L+1$ equations for $\lambda$, namely
\begin{align}
   \lambda & = \int_0^{\infty} e^{-g} \frac{g}{1+ g P_{\ell}} \,dg, \quad \ell=1,\dots,L-1 \label{eq:dqf-1} \\
   \lambda & = \int_0^{\Delta} \frac{e^{-g}}{1-e^{-\Delta}} \frac{g}{1+ g P_L(\Delta/2)} \,dg \label{eq:dqf-2} \\
   \lambda &= \int_{\Delta}^{\infty} \frac{e^{-g}}{e^{-\Delta}} \dfrac{g}{1+ g P_L(3\Delta/2)} \,dg \label{eq:dqf-3} 
\end{align}
where we used \eqref{eq:PST}--\eqref{eq:pgs2} and abused notation by writing $P_L(s_{TL})$ for $P_L(s_T^L)$. We thus have $P_1=\dots=P_{L-1}$ and obtain three equations. We now search for $\lambda$ such that
\begin{align*}
	(L-1)P_1 + \sum_{s} P_{S_{TL}}(s) P_L(s)=LP 
    %\label{eq:dqf-P}
\end{align*}
and the capacity \eqref{eq:ibm-cap-f4} is
\begin{align}
   C(P) & =  \frac{L-1}{L} e^{1/P_1} E_1\left(1/P_1 \right) \nonumber \\
   & \quad + \frac{1}{L} \sum_{s} \int_{\set{I}(s)} e^{-g} \log\left( 1 + g P_L(s) \right) dg
   \label{eq:Shannon-cap7}
\end{align}
where the sums are over $s=\Delta/2,3\Delta/2$ and
\begin{align*}
    \set{I}(\Delta/2)=[0,\Delta), \quad \set{I}(3\Delta/2)=[\Delta,\infty).
\end{align*}
We remark that, if $P_1=0$, then we set $e^{1/P_1} E_1\left(1/P_1 \right)=0$ since
$\lim_{x\rightarrow\infty} e^x E_1(x) = 0$.

Fig.~\ref{fig:rf8} shows these capacities for $L=1,2,3$ and $\Delta=1$. At low SNR (e.g. for $L=3$ below $-2.97$ dB) we have $P_1=0$ and $P_L(\Delta/2)=0$, i.e., the transmitter is silent unless $S_{TL}=3\Delta/2$ and it uses power at time $\ell=L$ only. Observe that, as in Sec.~\ref{subsec:ibf-Simple-Fading}, a delay of $L$ steps reduces the low-SNR slope, and therefore the low-SNR rates, by a factor of $L$. Delay can thus be costly at low SNR.

\begin{figure}[t]
      \centering
      \includegraphics[width=0.95\columnwidth]{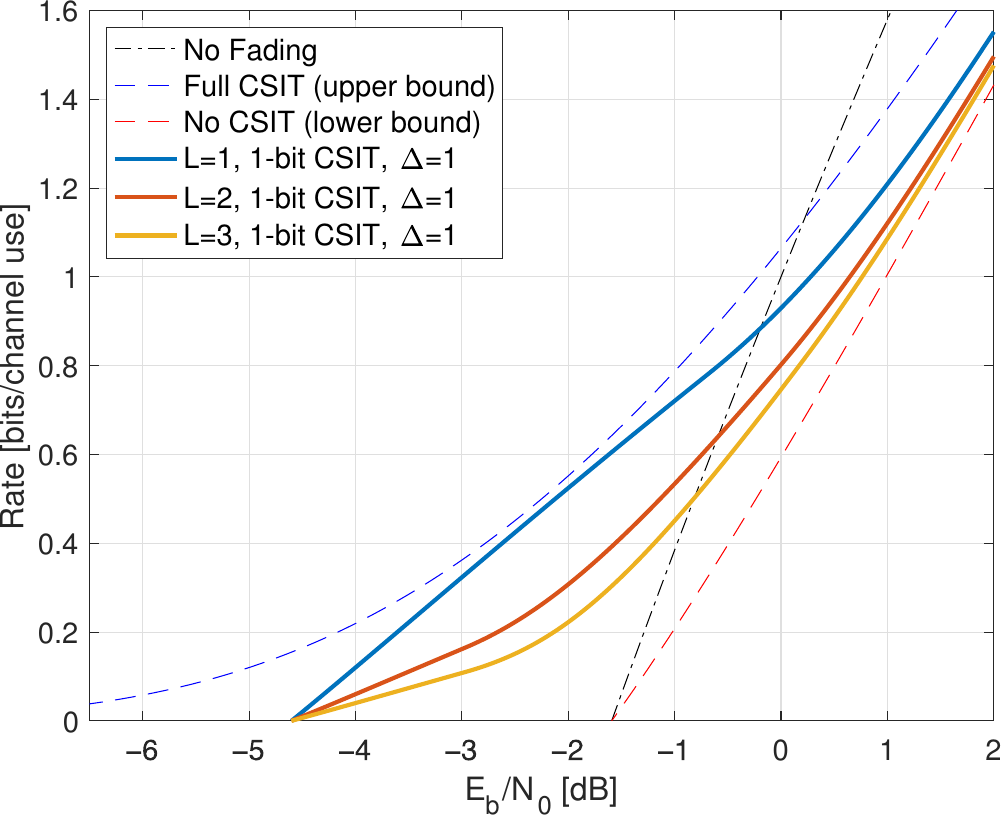}
      \caption{Capacities for Rayleigh block fading with $L=1,2,3$ and a CSIT delay of $D=L-1$.
      The CSIT at symbol $L$ is $S_{TL}=q_u(G)$.}
      \label{fig:rf8}
\end{figure}

%================================================
\subsubsection{Quantized Channel Output Feedback}
%\label{subsec:ibf-fading-AWGN}
%
Consider $S_{T1}=0$, $S_{T2}=q_u(|Y_1|)$, and $S_{T\ell}=0$ for $\ell=3,\dots,L$. As discussed in Remark~\ref{remark:CSITatR2}, the capacity is given by the directed information expression~\eqref{eq:ibm-cap-f3}. However, optimizing the input statistics seems difficult, i.e., CSCG inputs are not necessarily optimal. Instead, we compute achievable rates for a strategy where one symbol partially acts as a pilot.

Suppose the transmitter sends $X_1=\sqrt{P_1} e^{j\Phi}$ as the first symbol of each block, where $\Phi$ is uniformly distributed in $[0,2\pi)$. The idea is that $|X_1|=\sqrt{P_1}$ is known at the receiver, and thus $X_1$ acts as a pilot to test the channel amplitude.
Next, we choose a variation of flash signaling. Define the event $\set E =\{|Y_1| \ge \Delta \} = \{S_{T2}=3\Delta/2\}$. If this event does not occur, the transmitter sends $X_\ell=0$ for $\ell=2,\dots,L$. Otherwise, the transmitter sends independent CSCG $X_\ell$ with variance $P_2/\Pr{\set E}$ for $\ell=2,\dots,L$. Define $P_\ell(s_T^\ell)=\E{|X(s_T^\ell)|^2}$.  We have $P_\ell=P_2$ for $\ell\ge2$ and the power constraint is $P_1+(L-1)P_2 \le LP$.

We use \eqref{eq:ibm-cap-f2} to write
\begin{align}
   & C(P) \ge \frac{1}{L} I(X_1 ; Y_1 | H) + \frac{L-1}{L} I(X^2 ; Y_2 | H, Y_1).
   \label{eq:ibm-cap-LB}
\end{align}
The first mutual information in \eqref{eq:ibm-cap-LB} is
\begin{align*}
   I(X_1 ; Y_1 | H) & = h(Y_1 | H) - \log(\pi e)
   %\label{eq:ibm-h}
\end{align*}
and we compute (see~\cite[App.~A]{essiambre2010jlt})
\begin{align*}
   p(y_1| h) = \frac{1}{\pi} e^{-(|y_{1}|^2 + P_1 |h|^2) } I_0\left( 2 \, |y_{1}| \, |h| \, \sqrt{P_1} \right)
\end{align*}
where $I_0(.)$ is the modified Bessel function of the first kind of order zero. The Jacobian of the mapping from Cartesian coordinates $[\Re(y_1),\Im(y_1)]$ to polar coordinates $[|y_1|,\arg y_1]$ is $|y_1|$, so we have
\begin{align*}
   h(Y_1|H=h) = \int_0^\infty - \, p(y_1 | h)  \log(p(y_1|h)) \, 2 \pi |y_1| \, d|y_1| .
\end{align*}
We further compute
\begin{align}
  & I\left(X^2 ; Y_2 | H, Y_1 \right) %= \Pr{\set{E}} \cdot I\left(X^2 ; Y_2 | H, \set{E} \right)
  \nonumber \\
  & = \int_0^\infty e^{-g} \Pr{\set{E}|G=g}
      \log\left( 1 + \frac{g P_2}{\Pr{\set{E}}} \right) dg.
\end{align}
The conditional probability of a high-energy $Y_1$ is
\begin{align*}
  \Pr{\set{E} | G=g} = Q_1\left( \sqrt{2g P_1}, \sqrt{2} \Delta \right)
\end{align*} 
where $Q_1(.)$ is the Marcum Q-function of order 1; see \eqref{eq:Marcum-Q2} in Appendix~\ref{appendix:non-central-chi-squared-2}.
For Rayleigh fading, we compute
\begin{align*}
  \Pr{\set{E}} %& = \int_0^\infty e^{-g} Q_1\left( \sqrt{2 g P_1/p}, \sqrt{2} \Delta \right) dg \nonumber \\
  = \Pr{\left| H\sqrt{P_1}e^{j\Phi}+ Z_1 \right|^2 \ge \Delta^2} = e^{-\Delta^2 / (P_1+1)}.
\end{align*} 

The resulting rates are shown in Fig.~\ref{fig:rf9} for the block lengths $L=10,20,100$. Observe that each curve turns back on itself, which reflects the non-concavity of the directed information rates in $P$; see~\cite[Sec.~III]{kramer_SE_2003}. All rates below the curves are achievable by ``time-wasting'', i.e., by transmitting for some fraction of the time only. This suggests that flash signaling~\cite{Verdu02} will improve the rates since one sends information by choosing whether to transmit energy.

\begin{figure}[t]
      \centering
      \includegraphics[width=0.95\columnwidth]{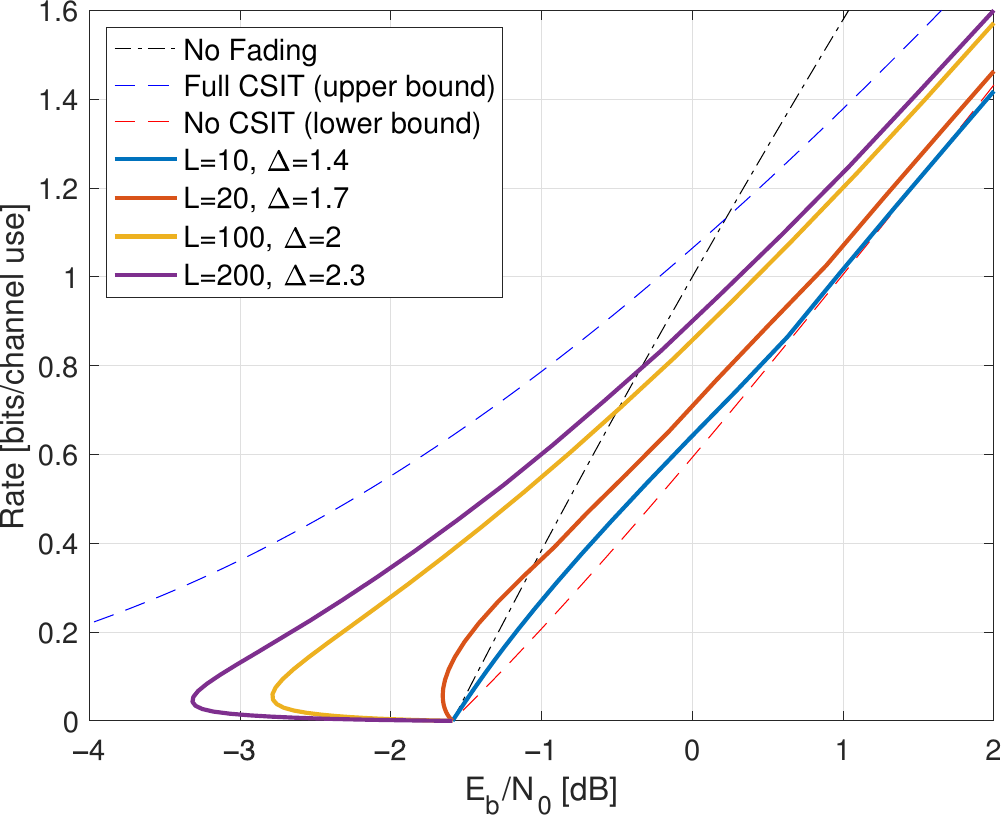}
      \caption{Rates for Rayleigh block fading with block lengths $L=10,20,100$.
      The CSIT at symbol 2 is $S_{T2}=q_u(|Y_1|)$.}
      \label{fig:rf9}
\end{figure}

%%%%%%%%%%%%%%%%%%%%%%%%%%%%%%%%%%%%%%%%%%%%%%%%%%%%%%
\section{Conclusions}
\label{sec:Conclusion}

This paper reviewed and derived achievable rates for channels with CSIR, CSIT, block fading, and in-block feedback. GMI expressions were developed for adaptive codewords and two classes of auxiliary channel models with AWGN and CSCG inputs: reverse and forward channel models. The forward model inputs were chosen as linear functions of the adaptive codeword's symbols. We showed that, for scalar channels, an input distribution that maximizes the GMI generates a conventional codebook, where the codeword symbols are multiplied by a complex number that depends on the CSIT. The GMI increases by partitioning the channel output alphabet and modifying the auxiliary model parameters for each partition subset. The partitioning helps to determine the capacity scaling at high and low SNR. Power control policies were developed for full CSIT, including TMMSE policies. The theory was applied to channels with on-off fading and Rayleigh fading. The capacities with in-block feedback simplify to directed information expressions if the CSIT is a function of the CSIR and past channel inputs and outputs.

There are many possible applications and extensions of this work. For example, adaptive coding and modulation are important for all practical communication systems, including wireless, copper, and fiber-optic networks. Shannon's adaptive codewords can improve current systems since the CSIT is usually a noisy version of the CSIR; see Remark~\ref{remark:MLC-MSD}. Moreover, the information theory for in-block feedback~\cite{Kramer14} applies to beamforming~\cite{VanVeen-ASSP88} and intelligent reflecting surfaces~\cite{Lisakos-CM18,DiRenzo-JWCN19}. One may also apply GMI to multi-user channels with in-block feedback, such as multi-access and broadcast channels. Finally, it is important to develop improved capacity upper bounds. The standard approach here is the duality framework described in~\cite{Lapidoth:03,Thangaraj-IT17}; see also~\cite[p.~128]{Csiszar81}.

%%%%%%%%%%%%%%%%%%%%%%%%%%%%%%%%%%%%%%%%%%%%%%%%%%%%%%
\section*{Acknowledgements}
The author wishes to thank the reviewers for their helpful comments and W. Zhang for sending his recent paper~\cite{Wang-Zhang-IT22}.
%The author gratefully acknowledges 

%\clearpage
%%%%%%%%%%%%%%%%%%%%%%%%%%%%%%%%%%%%%%%%%%%%%%%%%%%%%%
% Appendices
%%%%%%%%%%%%%%%%%%%%%%%%%%%%%%%%%%%%%%%%%%%%%%%%%%%%%%
\setcounter{section}{0}
\renewcommand{\thesection}{\Alph{section}}
\renewcommand{\appendix}[1]{%
  \refstepcounter{section}%
  \par\begin{center}%
    \begin{sc}%
      Appendix \thesection\par\nobreak%
      #1%
    \end{sc}%
  \end{center}\nobreak%
}

%\clearpage
%%%%%%%%%%%%%%%%%%%%%%%%%%%%%%%%%%%%%%%%%%%%%%%%%%%%%%
\nopagebreak
\appendix{Special Functions}
\label{appendix:special-functions}

This appendix reviews three classes of functions that we use to analyze information rates: the non-central chi-squared distribution, the exponential integral, and gamma functions.

%%%%%%%%%%%%%%%%%%%%%%%%%%%%%%%%%%%%%%%%%%%%%%%%%%%%%%
\subsection{Non-Central Chi-Squared Distribution}
\label{appendix:non-central-chi-squared-2}

The non-central chi-squared distribution with two degrees of freedom is the probability distribution of $Y=|x+Z|^2$ where $x\in\mathbb C$ and $Z\sim\mathcal{CN}(0,2)$. The density is
\begin{align}
   p(y) = \frac{1}{2} e^{-(y+|x|^2)/2} I_0(|x| \sqrt{y}) \cdot 1(y \ge 0)
   \label{eq:chi-2-zero}
\end{align}
where $I_0(.)$ is the modified Bessel function of the first kind of order zero. The cumulative distribution function is
\begin{align}
  \Pr{Y\le t} = 1 - Q_1\left(|x|, \sqrt{t} \right)
  \label{eq:Marcum-Q}
\end{align} 
where $Q_1(.)$ is the Marcum Q-function of order 1. Observe that if we change $Z$ to $Z\sim\mathcal{CN}(0,\sigma^2)$ then for $Y=|x+Z|^2$ we instead have
\begin{align}
  \Pr{Y\le t} = 1 - Q_1\left(\sqrt{2 |x|^2 / \sigma^2}, \sqrt{2t / \sigma^2} \right).
  \label{eq:Marcum-Q2}
\end{align}

%%%%%%%%%%%%%%%%%%%%%%%%%%%%%%%%%%%%%%%%%%%%%%%%%%%%%%
\subsection{Exponential Integral}
\label{appendix:cHandIndep}

The exponential integral is defined for $x>0$ as
\begin{align}
   E_1(x)=\int_x^{\infty} \frac{e^{-t}}{t} dt.
   \label{eq:expint}
\end{align}
The derivative of $E_1(x)$ is
\begin{align}
   \frac{d E_1(x)}{dx} = \frac{-e^{-x}}{x}.
   \label{eq:expint-derivative}
\end{align}

For small $x$ one may apply~\cite[Eq.~(3)]{Cody68}
\begin{align}
  E_1(x) \approx -\gamma - \log x + x
  \label{eq:E1-approx-small-x}
\end{align}
where $\gamma \approx 0.57721$ is Euler's constant. For large $x$ we have
\begin{align}
  E_1(x) \approx \frac{e^{-x}}{x}\left(1 - \frac{1}{x} + \frac{2}{x^2} - \frac{6}{x^3} \right).
  \label{eq:E1-approx-large-x}
\end{align}
% See p. 571 of the Table of Integrals
We have the bounds~\cite{Nantomah_2021}
\begin{align}
   \frac{1}{2} \log\left(1+\frac{2}{x} \right) & < e^x E_1(x) < \log\left(1+\frac{1}{x} \right) 
   \label{eq;expint-bounds0} \\
   \frac{1}{x+1} & < e^x E_1(x) < \frac{x+1}{x(x+2)}.
   \label{eq:expint-bounds}
\end{align}

Using integration by parts, for $x>0$, we have
\begin{align}
   & \int_x^{\infty} e^{-t} \log t \; dt = E_1(x) + e^{-x} \log(x)
   \label{eq:int1} \\
   & \int_x^{\infty} e^{-t} \frac{1}{t^2} \; dt = \frac{e^{-x}}{x} - E_1(x).
   \label{eq:int2}
\end{align}
Using the translation $\tilde t = t+y$ we also have
\begin{align}
   & \int_{x}^{\infty} e^{-t} \frac{t}{t+ y} dt = e^{-x} - y \, e^{y} \, E_1(x + y) 
   \label{eq:int3} \\
   & \int_{x}^{\infty} e^{-t} \frac{t}{(t+ y)^2} dt =  - e^{-x} \frac{y}{x+y} + (y+1) \, e^{y} \, E_1(x + y) 
   \label{eq:int4} \\
   & \int_{x}^{\infty} e^{-t} \frac{t^2}{(t+ y)^2} dt = e^{-x} \left( 1+ \frac{y^2}{x+y} \right) \nonumber \\
   & \qquad  \qquad \qquad  \qquad \qquad - y(y+2) \, e^{y} \, E_1(x + y) .
   \label{eq:int5}
\end{align}

%%%%%%%%%%%%%%%%%%%%%%%%%%%%%%%%%%%%%%%%%%%%%%%%%%%%%%
\subsection{Gamma Functions}
\label{appendix:gamma}

The upper and lower incomplete gamma functions are the respective
\begin{align}
   \Gamma(s,t) = \int_{t}^{\infty} e^{-g} \, g^{s-1}  \, dg
   \label{eq:upper-incomplete-gamma} \\
  \gamma(s,t) = \int_{0}^{t} e^{-g} \, g^{s-1}  \, dg.
   \label{eq:lower-incomplete-gamma}
\end{align}
For instance, we have $\Gamma(1,t)=e^{-t}$ and $\gamma(1,t)=1-e^{-t}$. We further have $\Gamma(0,t)=E_1(t)$ where $E_1(x)$ is the exponential integral defined in Appendix~\ref{appendix:cHandIndep}.

The Gamma function is $\Gamma(s)=\Gamma(s,0)=\gamma(s,\infty)$
and for positive integers $n$ we have
\begin{align*}
   \Gamma(n) = (n-1)!, \quad
   \Gamma\left(n - \frac{1}{2} \right) = \frac{(2n-2)!}{4^{n-1} \, (n-1)!} \sqrt{\pi}.
\end{align*}
For example, the following cases are used in Sec.~\ref{subsec:Rayleigh-partial-CSIR-full-CSIT}:
\begin{align*}
   & \Gamma(1) = \Gamma(2) = 1 , \\
   & \Gamma\left(\frac{1}{2}\right) = \sqrt{\pi}, \quad 
   \Gamma\left(\frac{3}{2}\right) = \frac{\sqrt{\pi}}{2}, \quad 
   \Gamma\left(\frac{5}{2}\right) = \frac{3}{4} \sqrt{\pi} .
\end{align*}
The value $\Gamma(0)$ is undefined but we have $\lim_{x\rightarrow 0^+} \Gamma(x) = \infty$.

%%%%%%%%%%%%%%%%%%%%%%%%%%%%%%%%%%%%%%%%%%%%%%%%%%%%%%
\appendix{Forward Model GMIs with $K=2$}
\label{appendix:2-gmi}

This appendix studies $K=2$ GMIs to develop high and low SNR capacity scaling results. Consider the independent random variables $Z \sim \mathcal{CN}(0,1)$ and $X \sim \mathcal{CN}(0,P)$. We need the following expression for the event $\set E=\{|X + Z|^2  \ge t_R\}$:
\begin{align}
    & \E{\left. |Z|^2 \right| \set E}
    = \int_{\mathbb C} p_{Z|\set E}(z) \, |z|^2 \, dz \nonumber \\
    & = \frac{1}{\Pr{\set E}} \int_{\mathbb C} \frac{e^{-|z|^2}}{\pi} \, |z|^2 \, \Pr{|X+z|^2 \ge t_R} \, dz \nonumber \\
   & = e^{t_R/(1+P)} \int_0^\infty e^{-g} \, g \, Q_1\left(\sqrt{\frac{2g}{P}},\sqrt{\frac{2t_R}{P}}\right) \, dg .
    \label{eq:expectation1}
\end{align}
The integral can be computed directly using~\cite[Eq.~(12)]{Sofotasios-SP15} with $k=2$, $m=1$, $p=1$, the Gamma functions above, and the following identities for Kummer's confluent hypergeometric function:
\begin{align*}
    _1F_1(1;2;z) = (e^z-1)/z, \quad _1F_1(2;2;z) = e^z .
\end{align*}
The result is
\begin{align}
    \E{\left. |Z|^2 \right| |X + Z|^2  \ge t_R } = 1 + \frac{t_R}{(1+P)^2}.
    \label{eq:expectation2}
\end{align}
Alternatively, define $Y=X+Z$ and $\Zt\sim\mathcal{CN}\left(0,\frac{P}{1+P}\right)$ independent of $Y$ so that $Z=Y/(1+P)+\Zt$. The expectation~\eqref{eq:expectation2} can then be written as
\begin{align*}
    \frac{\E{\left. |Y|^2 \right| |Y|^2  \ge t_R }}{(1+P)^2} + \E{\left| \Zt \right|^2}
    = \frac{1 + P + t_R}{(1+P)^2} + \frac{P}{1+P}.
\end{align*}

%%%%%%%%%%%%%%%%%%%%%%%%%%%%%%%%%%%%%%%%%%%%%%%%%%%%%%
\subsection{On-Off Fading}
\label{appendix:gmi-oof}

Consider on-off fading as in Sec.~\ref{subsec:GMI-oof} and the $K=2$ partition in Remark~\ref{remark:GMI-part4} with $h_2=\sqrt{2}$. We compute
\begin{align}
   \Pr{\set E_2} & = \sum_{h=0,\sqrt{2}} \Pr{H=h} \Pr{\set E_2 | H=h }  \nonumber \\
   & = \frac{1}{2} e^{-t_R} + \frac{1}{2} e^{-t_R / (1+2P)}.
   \label{eq:app-gmi-off-PE2}
\end{align}
If $t_R = P^{\lambda_R}+b$ where $0<\lambda_R<1$ and $b$ is a real constant then $\Pr{\set E_2} \rightarrow 1/2$ as $P \rightarrow \infty$, as desired. We further have
\begin{align}
   \Pr{\left. H = 0 \, \right| \set E_2} & = \frac{e^{-t_R}}{2 \Pr{\set E_2}}  \\
   \Pr{\left. H = \sqrt{2} \, \right| \set E_2} & = \frac{e^{-t_R/ (1+2P)}}{2 \Pr{\set E_2}}.
\end{align}
The choice $t_R = P^{\lambda_R}+b$ gives $\Pr{\left. H = \sqrt{2} \, \right| \set E_2} \rightarrow 1$ as $P \rightarrow \infty$. In other words, the receiver can reliably determine $H$ by choosing $t_R$ to grow with $P$, but not too fast.

We next compute
\begin{align}
   & \E{|Y|^2 | \set E_2} = \sum_{h=0,\sqrt{2}} \Pr{H=h | \set E_2} \E{|Y|^2 | \set E_2, H=h } \nonumber \\
   & = \frac{e^{-t_R} (t_R+1) + e^{-t_R / (1+2P)} (t_R + 1 + 2P)}{2 \Pr{\set E_2}} .
   \label{eq:app-gmi-off-EY2}
\end{align}
The choice $t_R = P^{\lambda_R}+b$ makes $\E{|Y|^2 | \set E_2}/(1+2P) \rightarrow 1$ as $P \rightarrow \infty$. Finally, we compute
\begin{align}
   & \E{|Y - \sqrt{2} X|^2 | \set E_2} \nonumber \\
   & = \sum_{h=0,\sqrt{2}} \Pr{H=h | \set E_2} \E{\left. \left|Y - \sqrt{2} X\right|^2 \right|  \set E_2, H=h }  \nonumber \\
   & = \frac{1}{2 \Pr{\set E_2}} \bigg\{ e^{-t_R} (t_R+1+2P)  \nonumber \\
   & \qquad \qquad \quad \left. + \, e^{-t_R / (1+2P)} \left(1 + \frac{t_R}{(1+2P)^2}\right) \right\}
   \label{eq:app:gmi-oof-EY21}
\end{align}
where the last step uses~\eqref{eq:expectation2}. The choice $t_R = P^{\lambda_R}+b$ makes $\E{|Y - \sqrt{2} X|^2 | \set E_2} \rightarrow 1$ as $P \rightarrow \infty$.

%%%%%%%%%%%%%%%%%%%%%%%%%%%%%%%%%%%%%%%%%%%%%%%%%%%%%%
\subsection{On-Off Fading, Partial CSIR, and Full CSIT}
\label{appendix:gmi-oof2}

The analysis for Sec.~\ref{subsec:oof-Partial-CSIR-full-CSIT} is similar to that of Appendix~\ref{appendix:gmi-oof}. Consider the GMI \eqref{eq:GMI-2-part-oof} and observe that we can replace $2P$ with $4P$ in \eqref{eq:app-gmi-off-PE2}--\eqref{eq:app-gmi-off-EY2}. We also have
\begin{align}
   & \E{\left. |Y - \sqrt{4P} \, U|^2 \right| \set E_2} \nonumber \\
   & = \sum_{h=0,\sqrt{2}} \Pr{H=h | \set E_2} \E{\left. |Y - \sqrt{4P} \, U |^2 \right.|  \set E_2, H=h }  \nonumber \\
   & = \frac{1}{2 \Pr{\set E_2}} \bigg\{ e^{-t_R} (t_R+1+4P) \nonumber \\
   & \qquad \qquad \quad \left. + \, e^{-t_R / (1+4P)} \left(1 + \frac{t_R}{(1+4P)^2}\right) \right\}.
   \label{eq:app:gmi-oof-EY21-b}
\end{align}
The choice $t_R = P^{\lambda_R}+b$ as in Appendix~\ref{appendix:gmi-oof} gives \eqref{eq:gmi-scaling-terms-oof}.

%%%%%%%%%%%%%%%%%%%%%%%%%%%%%%%%%%%%%%%%%%%%%%%%%%%%%%
\subsection{On-Off Fading, Partial CSIR, and CSIT\at R}
\label{appendix:gmi-oof3}

The analysis for Sec.~\ref{subsec:oof-Partial-CSIR-CSITatR} is similar to that of Appendices~\ref{appendix:gmi-oof} and~\ref{appendix:gmi-oof2}. We compute
\begin{align}
   \Pr{\set E_2 | S_R=0} & = \epsb\, e^{-t_R} + \eps\, e^{-t_R / [1+2P(0)]} \\
   \Pr{\set E_2 | S_R=\sqrt{2}} & = \eps\, e^{-t_R} + \epsb\, e^{-t _R/ [1+2P(\sqrt{2})]} \;.
\end{align}
Suppose $P(0)$ and $P(\sqrt{2})$ both scale in proportion to $P$. If we choose $t_R = P^{\lambda_R}+b$ as in Appendix~\ref{appendix:gmi-oof} then $\Pr{\set E_2 | S_R=0} \rightarrow \eps$ and $\Pr{\set E_2 | S_R=\sqrt{2}} \rightarrow \epsb$ as $P \rightarrow \infty$. We also have
\begin{align}
   \Pr{\left. H = 0 \, \right| \set E_2, S_R=0} & = \frac{\epsb\,e^{-t_R}}{\Pr{\set E_2 | S_R=0}}  \\
   \Pr{\left. H = \sqrt{2} \, \right| \set E_2, S_R=0} & = \frac{\eps\,e^{-t_R / [1+2P(0)]}}{\Pr{\set E_2 | S_R=0}}
\end{align}
and similarly for the probabilities $\Pr{H=0 \,| \set E_2, S_R=\sqrt{2}}$ and $\Pr{H=\sqrt{2} \,| \set E_2, S_R=\sqrt{2}}$. Choosing $t_R = P^{\lambda_R}+b$ gives the desired behavior $\Pr{H = \sqrt{2} \,| \set E_2, S_R=0} \rightarrow 1$ and $\Pr{H=\sqrt{2} \,| \set E_2, S_R=\sqrt{2}} \rightarrow 1$ as $P \rightarrow \infty$. Again, the receiver can reliably determine $H$ by choosing $t_R$ to grow with $P$, but not too fast.

We next write $\E{|Y|^2 | \set E_2, S_R=0}$ as
\begin{align}
%   & \E{|Y|^2 | \set E_2, S_R=0} \nonumber \\ & =
   \frac{\epsb\,e^{-t_R} (t_R+1) + \eps\,e^{-t_R / [1+2P(0)]} (t_R + 1 + 2P(0))}{\Pr{\set E_2 | S_R=0}} .
\end{align}
The expression for $\E{|Y|^2 | \set E_2, S_R=\sqrt{2}}$ is similar but $\eps$ and $\epsb$ are swapped and $P(0)$ is replaced with $P(\sqrt{2})$. We also have
\begin{align}
   & \E{\left. |Y - \sqrt{2} \, X(0)|^2 \right| \set E_2, S_R=0} \nonumber \\
   & = \frac{1}{\Pr{\set E_2 | S_R=0}} \bigg\{ \epsb\,e^{-t_R} (t_R+1+2P(0)) \,+ \nonumber \\
   & \qquad \left. \eps\,e^{-t_R / [1+2P(0)]} \left(1 + \frac{t_R}{(1+2P(0))^2}\right) \right\} .
   \label{eq:app:gmi-oof-EY21-c}
\end{align}
The expression for $\E{\left. |Y - \sqrt{2} \, X(0)|^2 \right| \set E_2, S_R=\sqrt{2}}$ is similar: swap $\eps$ and $\epsb$ and replace $P(0)$ with $P(\sqrt{2})$. The choice $t = P^{\lambda_R}+b$ makes all terms in \eqref{eq:AGMI-SR-part2-oof} behave as desired. We thus obtain \eqref{eq:GMI-2-example-oof2}.

%%%%%%%%%%%%%%%%%%%%%%%%%%%%%%%%%%%%%%%%%%%%%%%%%%%%%%
\subsection{Rayleigh Fading, No CSIR, full CSIT, and TCI}
\label{appendix:gmi-rf1}

The analysis for Sec.~\ref{subsec:Rayleigh-partial-CSIR-full-CSIT} is similar to that of Appendices~\ref{appendix:gmi-oof} to \ref{appendix:gmi-oof3}, but we now have a continuous $H$. Recall that $\set E_2 = \{ |Y|^2 \ge t_R \}$ and $Y=\sqrt{P(h)} \, U + Z$ where $P(h)=0$ for $g<t$ and $P(h)=\hat P$ otherwise. We compute
\begin{align}
   \Pr{\set E_2} & = \Pr{G<t} \Pr{\set E_2 | G<t} \nonumber \\
   & \quad + \Pr{G\ge t} \Pr{\set E_2 | G\ge t} \nonumber \\
   & = (1 - e^{-t}) e^{-t_R} + e^{-t} e^{-t_R/(1+\hat P)}
   \label{eq:app-gmi-rf1-PE2}
\end{align}
where we used $\Pr{\set E_2 | G<t}=\Pr{|Z|^2 \ge t_R}$ and similarly  for $\Pr{\set E_2 | G\ge t}$. For example, for the $t$ and $t_R$ in \eqref{eq:gmi-rf1-parameters1} we find that $\Pr{\set E_2}\rightarrow 1$ as $P$ grows. Similarly, for the $t$ and $t_R$ in \eqref{eq:gmi-rf1-parameters2} we find that $\Pr{\set E_2}\approx e^{-t-1}$ as $P$ decreases.

We write
\begin{align}
   & \E{|Y|^2 | \set E_2} =
   \Pr{G<t | \set E_2} \E{\left. |Z|^2 \right| \set E_2, G<t} \nonumber \\
   & \qquad + \Pr{G\ge t | \set E_2} \E{\left. \left| \sqrt{\hat P}\, U + Z \right|^2 \right| \set E_2, G \ge t}  \nonumber \\
   & = \frac{(1-e^{-t}) e^{-t_R} (t_R+1) + e^{-t} e^{-t_R /(1+\hat P)} (t_R + 1 + \hat P)}{\Pr{\set E_2}}.
   \label{eq:app-gmi-rf1-EY2}
\end{align}
For the $t$ and $t_R$ in \eqref{eq:gmi-rf1-parameters1} we have $\E{|Y|^2 | \set E_2}/(1+\hat P) \rightarrow 1$ as $P$ grows. Similarly, for the $t$ and $t_R$ in \eqref{eq:gmi-rf1-parameters2} we find that $\E{|Y|^2 | \set E_2}/(1+2\hat P) \rightarrow 1$ as $P$ decreases. Next, we write
\begin{align}
   & \E{\left. \left| Y - \sqrt{\hat P}\, U \right|^2 \right| \set E_2} \nonumber \\
   & = \Pr{G<t | \set E_2} \E{\left. \left| Z - \sqrt{\hat P}\, U \right|^2 \right| |Z|^2 \ge t_R} \nonumber \\
   & \quad + \Pr{G\ge t | \set E_2} \E{|Z|^2 \left| \left|\sqrt{\hat P}\,U + Z\right|^2  \ge t_R \right.}  \nonumber \\
   & = \frac{1}{\Pr{\set E_2}} \bigg\{ (1-e^{-t}) e^{-t_R} (t_R+1+\hat P) + \nonumber \\
   & \quad \left. e^{-t} e^{-t_R /(1+\hat P)} \left(1 + \frac{t_R}{\left(1+\hat P\right)^2}\right) \right\} . \label{eq:app-gmi-rf1-EYPU2}
\end{align}
For the $t$ and $t_R$ in \eqref{eq:gmi-rf1-parameters1} the expression \eqref{eq:app-gmi-rf1-EYPU2} approaches 1 as $P$ grows. Similarly, for the $t$ and $t_R$ in \eqref{eq:gmi-rf1-parameters2} we find that \eqref{eq:app-gmi-rf1-EYPU2} approaches $1$ as $P$ decreases.

%%%%%%%%%%%%%%%%%%%%%%%%%%%%%%%%%%%%%%%%%%%%%%%%%%%%%%
\appendix{Conditional Second-Order Statistics}
\label{appendix:c-second-order-statistics}

This appendix shows how to compute conditional second-order statistics for the reverse model GMIs and the forward model GMIs with $K=\infty$. Suppose $U,Y$ are jointly CSCG given $H=h$. Using \eqref{eq:LMMSE-est}--\eqref{eq:LMMSE-indep}, we have
\begin{align}
  & \E{U | Y=y, H=h} = \frac{ \E{U Y^\text{*} \big| H=h } }{ \E{|Y|^2 \big| H=h} } \cdot y  \label{eq:GMI-5-part-Es1} \\
  & \Var{U | Y=y, H=h} \nonumber \\
  & \quad = \E{|U|^2 \big| H=h } - \frac{ \left| \E{U Y^\text{*} \big| H=h} \right|^2 }{ \E{|Y|^2 \big| H=h} } .
  \label{eq:GMI-5-part-Es2}
\end{align}
Now consider the channel $Y=H X + Z$ where $X=\sqrt{P(S_T)}\, e^{j\phi(S_T)} U$ with $U\sim\mathcal{CN}(0,1)$. We may write
\begin{align}
    & \E{U \big| Y=y, S_R=s_R} \nonumber \\
    %= \int p(h,s_T | y,s_R) \nonumber \\
    %& \quad \E{U(s_T) \Big| h\sqrt{P(s_T)}e^{j\phi(s_T)} U(s_T)+Z=y} \, ds_T \, dh \nonumber \\
    & = \int_{\mathbb C \times \set S_T} p(h,s_T | y,s_R) \, \frac{h^\text{*}\sqrt{P(s_T)}e^{j\phi(s_T)} y}{1+|h|^2 P(s_T)} \, ds_T \, dh
    \label{eq:AGMI-5-part-Es1a-app} 
\end{align}
and
\begin{align}
    & \E{|U|^2 \big| Y=y, S_R=s_R}
    = \int_{\mathbb C \times \set S_T} p(h,s_T | y,s_R) \nonumber \\
    & \quad \left( \frac{1}{1 + |h|^2 P(s_T)} + \frac{ |h|^2 P(s_T) |y|^2 }{\left(1 + |h|^2 P(s_T)\right)^2} \right) \, ds_T \, dh.
   \label{eq:AGMI-5-part-Es2a-app} 
\end{align}

%%%%%%%%%%%%%%%%%%%%%%%%%%%%%%%%%%%%%%%%%%%%%%%%%%%%%%
\subsection{No CSIR, No CSIT}
\label{appendix:csos-1}

Consider $S_R=S_T=0$. The expectations in~\eqref{eq:AGMI-5-part-Es1a-app}--\eqref{eq:AGMI-5-part-Es2a-app} are computed via
\begin{align}
    p(h | y) = \frac{p(h) \, p(y|h)}{p(y)} .
    \label{eq:reverse-avg-density0}
\end{align}
The expression~\eqref{eq:AGMI-5-part-Es1a-app} with $\phi(0)=0$ gives
\begin{align}
  \E{U | Y=y} = \int_{\mathbb C} p(h | y) \, \frac{ h^\text{*} \sqrt{P} \, y }{1 + |h|^2 P} \, dh .
   \label{eq:GMI-5-part-Es1a} 
\end{align}
Similarly, the expression~\eqref{eq:AGMI-5-part-Es2a-app} gives
\begin{align}
  & \E{|U|^2\big|Y=y}  \nonumber \\
  & = \int_{\mathbb C} p(h | y) \, \E{\left. |X|^2 \right| Y=y, H=h} \, dh \nonumber \\
  & = \int_{\mathbb C} p(h | y) \left(\frac{1}{1 + |h|^2 P} + \frac{ |h|^2 P |y|^2 }{\left(1 + |h|^2 P\right)^2} \right) dh
   \label{eq:GMI-5-part-Es2a} 
\end{align}
We may now compute $\Var{U|Y=y}$ using~\eqref{eq:GMI-5-part-Es1a} and~\eqref{eq:GMI-5-part-Es2a}. For the expressions~\eqref{eq:I-LB2a} and~\eqref{eq:GMI-5-part}, one may use
\begin{align*}
    \E{X | Y=y} & = \sqrt{P} \, \E{U | Y=y}  \\
    \E{|X|^2\big|Y=y} & = P \, \E{|U|^2\big|Y=y}.
\end{align*}

For example, for on-off fading as in Sec.~\ref{subsec:GMI-oof} we compute
\begin{align}
    & \E{X | Y=y} =  P_{H|Y}\left(\left. \sqrt{2} \, \right | y \right) \frac{ \sqrt{2} P}{1 + 2 P} \cdot y \label{eq:GMI-5-part-Es1b}
    \\
    & \E{\left. |X|^2 \right| Y=y} = P_{H|Y}(0 | y ) \, P \nonumber \\
    & \qquad + P_{H|Y}\left(\left. \sqrt{2} \, \right | y \right) \left( \frac{P}{1 + 2 P} + \frac{2 P^2 |y|^2}{(1+2P)^2} \right)
  \label{eq:GMI-5-part-Es2b}
\end{align}
and therefore
\begin{align}
    & \Var{X | Y=y} =  P_{H|Y}(0 | y ) \, P \nonumber \\
    & + P_{H|Y}\left(\left. \sqrt{2} \, \right | y \right) \left( \frac{P}{1 + 2 P} + \frac{2 P^2 |y|^2}{(1+2P)^2} \, P_{H|Y}(0|y) \right)
  \label{eq:GMI-5-part-Es2c} 
\end{align}
where $P_{H|Y}\left(\left. \sqrt{2} \, \right | y \right) = 1 - P_{H|Y}(0 | y )$ and
\begin{align}
  P_{H|Y}(0 | y ) = \frac{e^{-|y|^2}}{e^{-|y|^2} + \frac{1}{1+2P} e^{-|y|^2/(1+2P)}} .
  \label{eq:GMI-5-part-PHY0}
\end{align}

For Rayleigh fading as in Sec.~\ref{subsec:Rayleigh-noCSIR-noCSIT}, the density~\eqref{eq:reverse-avg-density0} is
\begin{align*}
    p(h | y) = \frac{e^{-g} \, e^{-|y|^2/(1+g P)}}{\pi^2 (1+g P)} \cdot \frac{1}{p(y)}
\end{align*}
where $g=|h|^2$. Moreover, $p(y)$ in~\eqref{eq:r-py} depends on $g$ only. We thus have $\E{U|Y=y}=0$ and the integrand in~\eqref{eq:GMI-5-part-Es2a} depends on $g$ and $|y|^2$ only.

%%%%%%%%%%%%%%%%%%%%%%%%%%%%%%%%%%%%%%%%%%%%%%%%%%%%%%
\subsection{Full CSIR, Partial CSIT}
\label{appendix:csos-2}
Consider $S_R=H$ and partial $S_T$. The expectations in~\eqref{eq:AGMI-5-part-Es1a-app}--\eqref{eq:AGMI-5-part-Es2a-app} are computed via~\eqref{eq:reverse-avg-density2} that we repeat here:
\begin{align*}
    p(h,s_T | y,s_R) = \delta(h-s_R) \,
    \frac{p(s_T|h) \, p(y | h,s_T)}{p(y|h)} .
    %\label{eq:reverse-avg-density2-app}
\end{align*}

For on-off fading as in Sec.~\ref{subsec:oof-full-CSIR-partial-CSIT}, the expression~\eqref{eq:AGMI-5-part-Es1a-app} with $\phi(0)=0$ gives $\E{U|Y=y,H=0}=0$ and
\begin{align*}
    & \E{U \big| Y=y,H=\sqrt{2}} \nonumber \\
    & = \sum_{s_T=0,2} P_{S_T|Y,H}(s_T | y, \sqrt{2}\,) \,\frac{\sqrt{2P(s_T)} \, y}{1+2P(s_T)}
    %\label{eq:oof-fullCSIR-E}
\end{align*}
and, similarly,~\eqref{eq:AGMI-5-part-Es2a-app} gives $\E{|U|^2|Y=y,H=0}=1$ and
\begin{align*}
    & \E{|U|^2 \big| Y=y,H=\sqrt{2}} = \sum_{s_T=0,2} \nonumber \\
    &  P_{S_T|Y,H}(s_T | y, \sqrt{2}\,) \, \left(\frac{1}{1+2P(s_T)} + \frac{2P(s_T) |y|^2}{\left(1+2P(s_T)\right)^2} \right)
    %\label{eq:oof-fullCSIR-E2}
\end{align*}
where $P_{S_T|Y,H}(2 | y, \sqrt{2}\,) = 1 - P_{S_T|Y,H}(0 | y, \sqrt{2}\,)$ and
\begin{align*}
  & P_{S_T|Y,H}(0 | y, \sqrt{2}\,) \\
  & = \frac{\frac{\epsilon}{1+2P(0)} \, e^{-|y|^2/(1+2P(0))}}{\frac{\epsilon}{1+2P(0)} \, e^{-|y|^2/(1+2P(0))} + \frac{\epsb}{1+2P(2)} \, e^{-|y|^2/(1+2P(2))}} .
\end{align*}

For Rayleigh fading as in Sec.~\ref{subsec:Rayleigh-full-CSIR-partial-CSIT}, the sums over $s_T=0,2$ become sums over $s_T=0,1$ and the probabilities $P(s_T | y, h)$ take on similar forms as above.

%%%%%%%%%%%%%%%%%%%%%%%%%%%%%%%%%%%%%%%%%%%%%%%%%%%%%%
\subsection{Partial CSIR, Full CSIT}
\label{appendix:csos-3}
Consider $S_T=H$ and partial $S_R$. The expectations in~\eqref{eq:AGMI-5-part-Es1a-app}--\eqref{eq:AGMI-5-part-Es2a-app} are computed via~\eqref{eq:reverse-avg-density3} that we repeat here:
\begin{align*}
    p(h,s_T | y,s_R) = \delta(s_T-h) \,
    \frac{p(h|s_R) \, p(y | h,s_R)}{p(y|s_R)} .
    %\label{eq:reverse-avg-density3-app}
\end{align*}

For on-off fading with $S_R=0$ as in Sec.~\ref{subsec:oof-Partial-CSIR-full-CSIT}, the expression~\eqref{eq:AGMI-5-part-Es1a-app} with $\phi(0)=0$ gives
\begin{align*}
    & \E{U \big| Y=y} = P_{H|Y}\left(\left. \sqrt{2} \, \right | y \right) \, \frac{\sqrt{4P} \, y}{1+4P}
    %\label{eq:oof-fullCSIT-E}
\end{align*}
and~\eqref{eq:AGMI-5-part-Es2a-app} gives
\begin{align*}
    & \E{|U|^2 \big| Y=y} = P_{H|Y}(0 | y ) \nonumber \\
    & \qquad + \: P_{H|Y}\left(\left. \sqrt{2} \, \right | y \right) \, \left(\frac{1}{1+4P} + \frac{4P |y|^2}{(1+4P)^2} \right)
    %\label{eq:oof-fullCSIT-E2}
\end{align*}
where $P_{H|Y}\left(\left. \sqrt{2} \, \right | y \right) = 1 - P_{H|Y}(0 | y )$ and
\begin{align*}
  & P_{H|Y}(0 | y ) = \frac{e^{-|y|^2}}{e^{-|y|^2} + \frac{1}{1+4P} e^{-|y|^2/(1+4P)}} .
\end{align*}

For Rayleigh fading with $S_R=0$ and TCI as in Sec.~\ref{subsec:Rayleigh-partial-CSIR-full-CSIT}, the expressions~\eqref{eq:AGMI-5-part-Es1a-app}--\eqref{eq:AGMI-5-part-Es2a-app} give (cf.~\eqref{eq:GMI-5-part-Es1b}--\eqref{eq:GMI-5-part-Es2b})
\begin{align*}
    & \E{U \big| Y=y} = \Pr{G \ge t | Y=y} \frac{\sqrt{\hat P}\, y}{1+\hat P} \\
    & \E{|U|^2 \big| Y=y} = \Pr{G < t | Y=y} \nonumber \\
    & \qquad + \: \Pr{G \ge t | Y=y} \left(\frac{1}{1+\hat P} + \frac{\hat P\, |y|^2}{(1+\hat P)^2} \right)
\end{align*}
and therefore (cf.~\eqref{eq:GMI-5-part-Es2c})
\begin{align*}
    \Var{U | Y=y} & =  \Pr{G < t | Y=y} + \Pr{G \ge t | Y=y} \nonumber \\
    & \quad \cdot \left( \frac{1}{1 + \hat P} + \frac{\hat P \, |y|^2}{(1+\hat P)^2} \, \Pr{G < t | Y=y} \right)
    %\label{eq:GMI-5-part-Es2c-2} 
\end{align*}
where (cf.~\eqref{eq:GMI-5-part-PHY0})
\begin{align*}
  & \Pr{G < t | Y=y} = \frac{\big( 1-e^{-t} \big)\, e^{-|y|^2}}{\big( 1-e^{-t} \big)e^{-|y|^2} + e^{-t} \frac{1}{1+\hat P} e^{-|y|^2/(1+ \hat P)}} .
\end{align*}

%%%%%%%%%%%%%%%%%%%%%%%%%%%%%%%%%%%%%%%%%%%%%%%%%%%%%%
\subsection{Partial CSIR, CSIT\at R}
\label{appendix:csos-4}
Consider $S_T=S_R$ and partial $S_R$. The expectations in~\eqref{eq:AGMI-5-part-Es1a-app}--\eqref{eq:AGMI-5-part-Es2a-app} are computed via~\eqref{eq:reverse-avg-density4} that we repeat here:
\begin{align*}
    p(h,s_T | y,s_R) = \delta\big(s_T - f(s_R)\big) \,
    \frac{p(h|s_R) \, p(y | h,s_R)}{p(y|s_R)} .
    %\label{eq:reverse-avg-density4-app}
\end{align*}

For on-off fading as in Sec.~\ref{subsec:oof-Partial-CSIR-full-CSIT}, the expression~\eqref{eq:AGMI-5-part-Es1a-app} with $\phi(0)=0$ gives
\begin{align*}
    \E{U \big| Y=y,S_R=0} & = P_{H|Y,S_R}(1 | y,0) \, \frac{\sqrt{2P(0)} \, y}{1+2P(0)} \\
    \E{U \big| Y=y,S_R=\sqrt{2}} & = P_{H|Y,S_R}(1 | y, \sqrt{2}\,) \, \frac{\sqrt{2P(\sqrt{2}\,)} \, y}{1+2P(\sqrt{2}\,)}
    %\label{eq:oof-partialCSIR-E}
\end{align*}
and~\eqref{eq:AGMI-5-part-Es2a-app} gives
\begin{align*}
    & \E{|U|^2 \big| Y=y,S_R=0} = P_{H|Y,S_R}(0 | y,0) \nonumber \\
    & \qquad + P_{H|Y,S_R}(1 | y,0) \left(\frac{1}{1+2P(0))} + \frac{2P(0) |y|^2}{\left(1+2P(0)\right)^2} \right) \\
    & \E{|U|^2 \big| Y=y,S_R=\sqrt{2}} = P_{H|Y,S_R}(0 | y, \sqrt{2}\,) \nonumber \\
    & \qquad + P_{H|Y,S_R}(1 | y, \sqrt{2}\,) \left(\frac{1}{1+2P(\sqrt{2}\,)} + \frac{2P(\sqrt{2}\,) |y|^2}{\left(1+2P(\sqrt{2}\,)\right)^2} \right)
    %\label{eq:oof-partialCSIR-E2}
\end{align*}
where $P_{H|Y,S_R}(\sqrt{2}\, | y,s_R) = 1 - P_{H|Y,S_R}(0 | y,s_R)$ and
\begin{align*}
  & P_{H|Y,S_R}(0 | y,0)  = \frac{\epsb \, e^{-|y|^2}}{\epsb \, e^{-|y|^2} + \frac{\epsilon}{1+2P(0)} \, e^{-|y|^2/(1+2P(0))}} \\
  & P_{H|Y,S_R}(0 | y, \sqrt{2}\,)  = \frac{\epsilon \, e^{-|y|^2}}{\epsilon \, e^{-|y|^2} + \frac{\epsb}{1+2P(\sqrt{2}\,)} \, e^{-|y|^2/(1+2P(\sqrt{2}\,))}} .
\end{align*}
For Rayleigh fading as in Sec.~\ref{subsec:Rayleigh-partialCSIR-CSITatR}, the probabilities $P(h | y,s_R)$ take on similar forms as above.

%%%%%%%%%%%%%%%%%%%%%%%%%%%%%%%%%%%%%%%%%%%%%%%%%%%%%%
\newpage
\appendix{Proof of Lemma~\ref{lemma:AGMI} and ~\eqref{eq:AGMI-matrix-3-part}}
\label{appendix:proof-lemma-AGMI}
We prove Lemma~\ref{lemma:AGMI} by using
the same steps as in the proof of Proposition~\ref{prop:GMI}. The GMI \eqref{eq:AGMI} with a vector $\ul Y$ is
\begin{align}
   & I_s(A;\ul Y) = \log\det \left( {\bf I} + \left( {\bf Q}_{\ul Z} / s \right)^{-1} {\bf H} {\bf Q}_{\ul{\bar X}} {\bf H}^\dag \right) \nonumber \\
   & \quad + \E{ {\ul Y}^\dag \left( {\bf Q}_{\ul Z}/s + {\bf H} {\bf Q}_{\ul{\bar X}} {\bf H}^\dag \right)^{-1} \ul Y } \nonumber \\
   & \quad - \E{ \left( \ul Y - {\bf H}\, \ul{\bar X} \right)^\dag \left( {\bf Q}_{\ul{Z}} / s \right)^{-1} \left( \ul Y - {\bf H}\, \ul{\bar X} \right) }.
   \label{eq:AGMI-4}
\end{align}
One can again set $s=1$. Choosing ${\bf H}=\tilde {\bf H}$ and ${\bf Q}_{\ul Z}=\tilde {\bf Q}_{\ul{\Zt}} $ then gives \eqref{eq:AGMI-3}.

Next, consider the channel $\ul Y_a = \tilde {\bf H} \ul{\bar X} + \ul{\Zt}$ where $\ul{\Zt}$ is CSCG with covariance matrix ${\bf Q}_{\ul{\Zt}}$ and $\ul{\Zt}$ is independent of $\ul {\bar X}$. Generalizing~\eqref{eq:aux-equiv1}--\eqref{eq:aux-equiv2}, we compute ${\bf Q}_{\ul Y_a} = {\bf Q}_{\ul Y}$ and
\begin{align}
   %& {\bf Q}_{\ul Y_a} = {\bf Q}_{\ul Y} \label{eq:AGMI-second-order-stats-21} \\
   & \E{\left( \ul Y_a - \tilde {\bf H}\, \ul{\bar X} \right) \left( \ul Y_a - \tilde {\bf H}\, \ul{\bar X} \right)^\dag } \nonumber \\
   & \qquad = \E{\left( \ul{Y} - {\bf H}\, \ul{\bar X} \right) \left( \ul{Y} - {\bf H}\, \ul{\bar X} \right)^\dag }.
   \label{eq:AGMI-second-order-stats-22}
\end{align}
In other words, the second-order statistics for the two channels with outputs $\ul Y$ (the actual channel output) and $\ul Y_a$ are the same. Moreover, the GMI \eqref{eq:AGMI-3} is the mutual information $I(A;\ul Y_a)$. Using \eqref{eq:AGMI-identity} and \eqref{eq:AGMI-4}, for any $s$, $\bf H$ and ${\bf Q}_{\ul Z}$ we have
 \begin{align}
   I(A;\ul Y_a) & = \log\det \left( {\bf I} + {\bf Q}_{\ul{\Zt}}^{-1} \tilde {\bf H}
   {\bf Q}_{\ul{\bar X}} \tilde {\bf H}^\dag \right) \nonumber \\
   & \ge I_s(A;\ul Y_a) = I_s(A;\ul{Y})
   \label{eq:AGMI-bound}
\end{align}
and equality holds if ${\bf H}= \tilde {\bf H}$ and ${\bf Q}_{\ul Z}/s={\bf Q}_{\ul{\Zt}}$.

To prove~\eqref{eq:AGMI-matrix-3-part}, recall that $\tr{\bf AB}=\tr{\bf BA}$ for matrices $\bf A$ and $\bf B$ with appropriate dimensions. Furthermore, for Hermitian matrices $\bf A, B,C$ with the same dimensions we have
\begin{align}
    \tr{\bf A B C} = \tr{ ({\bf A B C})^\dag} = \tr{\bf C B A} = \tr{\bf A C B}.
    \label{eq:trace-triple}
\end{align}
For notational convenience, consider the covariance matrix~\eqref{eq:AGMI-matrix-parameters-part} with $s=1$ and use
\begin{align*}
    {\bf A} & = {\bf Q}_{\ul{\bar Z}}, \quad
    {\bf B} = \left( {\bf H} {\bf Q}_{\ul{\bar X}} {\bf H}^\dag \right)^{-1/2} \left( {\bf Q}_{\ul Y} 
    - {\bf Q}_{\ul{\bar Z}} \right)^{1/2} \\
    {\bf C} & = {\bf Q}_{\ul{\bar Z}}^{-1} \left( {\bf Q}_{\ul Y}
    - {\bf Q}_{\ul{\bar Z}} \right)^{1/2} \left( {\bf H} {\bf Q}_{\ul{\bar X}} {\bf H}^\dag \right)^{-1/2}
\end{align*}
to compute (cf.~\eqref{eq:AGMI-4})
\begin{align}
    & \E{\left( \ul Y - {\bf H}\, \ul{\bar X} \right)^\dag {\bf Q}_{\ul{Z}}^{-1} \left( \ul Y - {\bf H}\, \ul{\bar X} \right)}  = \tr{ {\bf Q}_{\ul{\bar Z}} {\bf Q}_{\ul{Z}}^{-1} } \nonumber \\
    & \overset{(a)}{=} \tr{\left( {\bf Q}_{\ul Y} - {\bf Q}_{\ul{\bar Z}} \right) \left( {\bf H} {\bf Q}_{\ul{\bar X}} {\bf H}^\dag \right)^{-1}}
\end{align}
where step $(a)$ follows by \eqref{eq:trace-triple}. Next, by using~\eqref{eq:AGMI-matrix-parameters-part} we have
\begin{align}
    & \left( {\bf Q}_{\ul Z} + {\bf H} {\bf Q}_{\ul{\bar X}} {\bf H}^\dag \right)^{-1} \nonumber \\
    & \quad = \left( {\bf Q}_{\ul Y} - {\bf Q}_{\ul{\bar Z}} \right)^{1/2} \left( {\bf H} {\bf Q}_{\ul{\bar X}} {\bf H}^\dag \right)^{-1/2} {\bf Q}_{\ul{Y}}^{-1} \nonumber \\
    & \qquad \cdot \left( {\bf H} {\bf Q}_{\ul{\bar X}} {\bf H}^\dag \right)^{-1/2} \left( {\bf Q}_{\ul Y}
    - {\bf Q}_{\ul{\bar Z}} \right)^{1/2}
\end{align}
and therefore (cf.~\eqref{eq:AGMI-4})
\begin{align}
    & \E{ {\ul Y}^\dag \left( {\bf Q}_{\ul Z} + {\bf H} {\bf Q}_{\ul{\bar X}} {\bf H}^\dag \right)^{-1} \ul Y } \nonumber \\
    & \quad = \tr{ {\bf Q}_{\ul Y} \left( {\bf Q}_{\ul Z} + {\bf H} {\bf Q}_{\ul{\bar X}} {\bf H}^\dag \right)^{-1} } \nonumber \\
    & \quad \overset{(a)}{=} \tr{\left( {\bf Q}_{\ul Y}
    - {\bf Q}_{\ul{\bar Z}} \right) \left( {\bf H} {\bf Q}_{\ul{\bar X}} {\bf H}^\dag \right)^{-1}}
\end{align}
where step $(a)$ again follows by \eqref{eq:trace-triple}. We are thus left with the logarithm term in \eqref{eq:AGMI-4}. Finally, the determinant in~\eqref{eq:AGMI-4} is
\begin{align}
    & \det \left( {\bf I}
    + {\bf Q}_{\ul Z}^{-1} \,
    {\bf H} {\bf Q}_{\ul{\bar X}} {\bf H}^\dag \right)  = \det \left( 
   {\bf Q}_{\ul{\bar Z}}^{-1} {\bf Q}_{\ul Y} 
   \right)
\end{align}
where we applied~\eqref{eq:AGMI-matrix-parameters-part} and Sylvester's identity~\eqref{eq:Sylvester}.

%%%%%%%%%%%%%%%%%%%%%%%%%%%%%%%%%%%%%%%%%%%%%%%%%%%%%%
%\clearpage
\appendix{Proof of Lemma~\ref{lemma:AGMI-max}}
\label{appendix:proof-lemma-AGMI-max}

Let $\bar P = \E{|\bar X|^2}$ and write
\begin{align}
   \bar X = \sqrt{\bar P} \, \bar U, \quad
   X(s_T) = \sqrt{P(s_T)} \, U(s_T) .
\end{align}
Since the $U(s_T)$ are CSCG we have
\begin{align}
  U(s_T') = \rho(s_T',s_T) \, U(s_T) + Z(s_T')
\end{align}
where $\rho(s_T',s_T)=\E{U(s_T') \, U(s_T)^\text{*}}$ and
\begin{align}
   Z(s_T')\sim\mathcal{CN}(0,1-|\rho(s_T',s_T)|^2)
\end{align}
is independent of $U(s_T)$. As in \eqref{eq:xbar}, define
\begin{align}
  \bar X & = \sum_{s_T'} w(s_T' ) X(s_T') \nonumber \\
  & = \sum_{s_T} w(s_T') \sqrt{P(s_T')} \left[ U(s_T)  \rho(s_T',s_T) + Z(s_T') \right] \nonumber \\
  & = \sqrt{\bar P} \, \bar \rho(s_T) \, U(s_T)  + \sum_{s_T'} w(s_T') \sqrt{P(s_T')} Z(s_T')
  \label{eq:xbar-expression}
\end{align}
where, assuming that $\bar P>0$, we have
\begin{align}
   \bar \rho(s_T) = \E{\bar U \, U(s_T)^\text{*}}
   = \sum_{s_T'} w(s_T') \sqrt{\frac{P(s_T')}{\bar P}} \, \rho(s_T',s_T) . \label{eq:rhobar} 
\end{align}
Observe that  $\sqrt{\bar P} \, \bar \rho(s_T) \, U(s_T)$ is the LMMSE estimate of $\bar X$ given $U(s_T)$.

Using Lemma~\ref{lemma:AGMI}, we have the auxiliary variables
\begin{align}
    \tilde h = \frac{\E{Y \bar X^\text{*}}}{\bar P}, \quad
   \tilde \sigma^2 = \E{|Y|^2} - | \tilde h |^2 \bar P
   \label{eq:AGMI-parameters-RkF}
\end{align}
and the GMI
\begin{align}
   I_1(A;Y) = \log \left( \frac{\E{|Y|^2}}{\E{|Y|^2} - | \tilde h |^2 \bar P} \right).
   \label{eq:appendix-AGMI}
\end{align}
If the $P(s_T)$ are fixed, then so is $\E{|Y|^2}$ because $U(s_T)$ is CSCG and independent of $Z$ given $S_T=s_T$.
The GMI \eqref{eq:appendix-AGMI} is thus maximized by maximizing $|\tilde h|^2 \bar P$. We compute
\begin{align}
   & | \tilde h |^2 \bar P
   = \left| \sum_{s_T} P_{S_T}(s_T) \frac{\E{\left. Y \bar X^\text{*} \right| S_T = s_T}}{\sqrt{\bar P}} \right|^2 \nonumber \\
   & \overset{(a)}{=}  \left| \sum_{s_T} P_{S_T}(s_T) \E{\left. Y \, U(s_T)^\text{*} \right| S_T=s_T}
   \bar \rho(s_T)^\text{*}  \right|^2 \label{eq:Lemma-AGMI-Ptilde1} \\
   & \le \left( \sum_{s_T} P_{S_T}(s_T) \left| \E{\left. Y \, U(s_T)^\text{*} \right| S_T=s_T} \right| \right)^2
   \label{eq:Lemma-AGMI-Ptilde2}
\end{align}
where step $(a)$ follows because we have the Markov chain $A-[U(S_T),S_T]-Y$ which implies that $Y$ and the $Z(s_T')$ in \eqref{eq:xbar-expression} are independent give $S_T=s_T$.

Equality holds in \eqref{eq:Lemma-AGMI-Ptilde2} if the summands in \eqref{eq:Lemma-AGMI-Ptilde1} all have the same phase and $|\bar \rho(s_T)|=1$ for all $s_T$. But this is possible by choosing $X(s_T)$ as given in \eqref{eq:barX-max} so that $U(s_T) = e^{j\phi(s_T)} \, U$. Moreover, choose the receiver weights as
\begin{align}
   w(\tilde s_T) = \sqrt{\frac{\bar P}{P(\tilde s_T)}} \, e^{-j\phi(\tilde s_T)}
   \label{eq:w-sT}
\end{align}
for one $\tilde s_T \in \set S_T$ with $P(\tilde s_T)>0$, and $w(s_T)=0$ otherwise. We then have $\bar X = \sqrt{\bar P}\, U$ and
\begin{align}
   \rho(s_T',s_T) = e^{j(\phi(s_T')-\phi(s_T))}, \quad %\label{eq:rho_sT_sT} \\
   \bar \rho(s_T) = e^{-j\phi(s_T)}  \label{eq:rho_sT}
\end{align}
and the resulting maximal $I_1(A;Y)$ is given by \eqref{eq:AGMI-2}--\eqref{eq:AGMI-P}.

%----------------------
\begin{remark} \label{remark:independent}
The full correlation permits many choices for the $w(s_T)$; hence, these weights do not seem central to the design. However, including weights can be useful if the codebook is not designed for the CSIR. For example, suppose $A$ has independent entries $X(s_T)$ for which we compute
\begin{align}
   \bar \rho(s_T) & = \frac{w(s_T) \sqrt{P(s_T)}}{\sqrt{\sum_{s_T'}  |w(s_T')|^2 P(s_T') }}
\end{align}
and thus \eqref{eq:Lemma-AGMI-Ptilde1} becomes
\begin{align}
   \frac{\left| \sum_{s_T} P_{S_T}(s_T) \E{\left. Y X(s_T)^\text{*} \right| S_T=s_T} \, w(s_T)^\text{*} \right|^2}
   {\sum_{s_T}  |w(s_T)|^2 P(s_T)} .
   \label{eq:P-indep-codewords}
\end{align}
Using Bergstr\"om's inequality (or the Cauchy-Schwarz inequality), the expression
\eqref{eq:P-indep-codewords} is maximized by
\begin{align}
   w(s_T) = P_{S_T}(s_T) \, \frac{\E{\left. Y X(s_T)^\text{*} \right| S_T = s_T}}{P(s_T)} \cdot c
\end{align}
for some constant $c\ne 0$. The expression \eqref{eq:Lemma-AGMI-Ptilde1} is therefore
\begin{align}
   \sum_{s_T}  P_{S_T}(s_T|h)^2 \left|\E{ \left. Y U(s_T)^\text{*} \right| S_T=s_T} \right|^2
   \label{eq:P-indep-codewords2}
\end{align}
which is generally smaller than $\E{ \, \left| \E{\left. Y U(S_T)^\text{*} \right| S_T} \right| \, }^2$
(apply $\sum_i a_i^2 \le (\sum_i a_i)^2$ for non-negative $a_i$).
\end{remark}

%----------------------
\begin{remark}
\label{remark:CSIT-asymmetric}
The following example shows that more general signaling and more general $\bar X$ can be useful. Consider the channel with two equally-likely states $\set{S}_T=\{+1,-1\}$ and $Y = |X| \exp(j s_T \arg(X))+Z$. We compute
\begin{align*}
   & \E{Y \, U(+1)^\text{*} | S_T=+1} = \sqrt{P(1)} \\
   & \E{Y \, U(-1)^\text{*} | S_T=-1} = 0 \\
   & {\bar \rho}(+1) = \frac{w(1) \sqrt{P(1)} + w(-1) \sqrt{P(-1)} \rho(-1,+1)}{\sqrt{\bar P}}
\end{align*}
and one should choose $P(-1)=0$ and $P(1)=2P$ if the power constraint is $\E{P(S_T)}\le P$. We thus have
\begin{align*}
   \E{|Y|^2} = P+1, \quad 
   \tilde P = \frac{P}{2}
\end{align*}
and therefore \eqref{eq:AGMI-2} gives
\begin{align*}
   I_1(A;Y) = \log\left( 1 + \frac{P}{2 + P} \right).
\end{align*}
However, one can achieve the rate $\log(1+P)$ with other Gaussian $\bar X$, namely linear combinations of both the $X(s_T)$ and the $X(s_T)^\text{*}$ in \eqref{eq:xbar-expression}. This idea permits circularly asymmetric $\bar X$, also known as \emph{improper} $\bar X$ \cite{Neeser-Massey-IT93}. Alternatively, the transmitter can send the complex-conjugate symbols if $S_T=-1$.
\end{remark}

%================================================
\appendix{Large $K$ for Sec.~\ref{subsec:GMI-generalizations2}}
\label{appendix:GMI-MMSE2}
We complete Remark~\ref{remark:large-K2} by proceeding as in Appendix~\ref{appendix:csos-1}. To generalize~\eqref{eq:GMI-5-part}, we must deal with unit-rank matrices $\ul y\, \ul y^\dag$ that do not have inverses. Consider first finite $K$. Conditioned on the event $\set E_k$, we may write
\begin{align}
    \ul Y = \ul y_k + \epsilon^{1/2} \ul{\Zt}_k
\end{align}
where $\ul y_k=\E{\ul Y | \set E_k}$ and $\E{\left. \ul{\Zt}_k \right| \set E_k}=\ul 0$. We abuse notation and write the conditional covariance matrix of $\ul{\Zt}_k$ as ${\bf Q}_{\ul{\Zt}_k}$, and we assume that ${\bf Q}_{\ul{\Zt}_k}$ is invertible. Define $\ul{\yt}_k={\bf Q}_{\ul{\Zt}_k}^{-1/2}\ul y_k$ and compute
\begin{align}
   {\bf Q}_{\ul Y}^{(k)} & = \epsilon \, {\bf Q}_{\ul{\Zt}_k}^{1/2} \left[ {\bf I} + \frac{1}{\epsilon} \, \ul{\yt}_k \ul{\yt}_k^\dag \right] {\bf Q}_{\ul{\Zt}_k}^{1/2} \\
   \left({\bf Q}_{\ul Y}^{(k)}\right)^{-1} & = \frac{1}{\epsilon} {\bf Q}_{\ul{\Zt}_k}^{-1/2} \left[ {\bf I} - \frac{\ul{\yt}_k \ul{\yt}_k^\dag}{\epsilon+\|\ul{\yt}\|^2}\right] {\bf Q}_{\ul{\Zt}_k}^{-1/2}.
\end{align}
We further compute approximations for small $\epsilon$:
\begin{align}
    & \ul y_k^\dag \left({\bf Q}_{\ul Y}^{(k)}\right)^{-1} \ul y_k
    = \frac{\| \ul{\yt}_k \|^2}{\epsilon + \| \ul{\yt}_k \|^2} \approx 1 \\
    & {\bf H}_k = \left(\ul y_k \E{\left. \ul{\bar X}^\dag \right| \set E_k} + \epsilon^{1/2} \E{\left. \ul{\Zt}_k \ul{\bar X}^\dag \right| \set E_k} \right) \left( {\bf Q}_{\ul{\bar X}}^{(k)} \right)^{-1} \nonumber \\
    & \quad \;\; \approx \ul y_k \E{\left. \ul{\bar X}^\dag \right| \set E_k} \left( {\bf Q}_{\ul{\bar X}}^{(k)} \right)^{-1} .
\end{align}

We can now treat the limit of large $K$ for which $\epsilon$ approaches zero, i.e., we choose a different auxiliary model for each $\ul Y = \ul y$. Applying the Woodbury and Sylvester identities \eqref{eq:Woodbury}--\eqref{eq:Sylvester} several times, \eqref{eq:AGMI-3-part} becomes
\begin{align}
    & I_1(A ; \ul Y) = \int_{\mathbb C^N} p(\ul y) \nonumber \\
    & \left[
    \log \det \left( {\bf I}  + \left( {\bf Q}_{\ul{\bar X}}^{(\ul y)} - \ul E_{\ul y} \, \ul E_{\ul y}^\dag \right)^{-1} {\bf Q}_{\ul{\bar X}}
    \left( {\bf Q}_{\ul{\bar X}}^{(\ul y)} \right)^{-1} \ul E_{\ul y} \, \ul E_{\ul y}^\dag \right) \right. \nonumber \\
    & \left. - \tr{ \left( {\bf Q}_{\ul{\bar X}}^{(\ul y)} \left( {\bf D}_{\ul{\bar X}}^{(\ul y)}\right)^{-1} {\bf Q}_{\ul{\bar X}}^{(\ul y)} -  \ul E_{\ul y} \, \ul E_{\ul y}^\dag \right)^{-1} \ul E_{\ul y} \, \ul E_{\ul y}^\dag  } \right] d\ul y
   \label{eq:GMI-5-part-2}
\end{align}
where
\begin{align*}
    \ul E_{\ul y} & = \E{\ul{\bar X} | \ul Y = \ul y} \\ 
    {\bf Q}_{\ul{\bar X}}^{(\ul y)} & = \E{\left. \ul{\bar X}\,\ul{\bar X}^\dag \right| \ul Y = \ul y} \\
    {\bf D}_{\ul{\bar X}}^{(\ul y)} & = {\bf Q}_{\ul{\bar X}} - {\bf Q}_{\ul{\bar X}}^{(\ul y)} .
\end{align*}
If $\ul{\bar X},\ul Y$ are jointly CSCG, then using \eqref{eq:LMMSE-est}--\eqref{eq:LMMSE-indep} we have
\begin{align}
  & \ul E_{\ul y} = \E{\ul{\bar X} \, \ul Y^\dag} {\bf Q}_{\ul Y}^{-1} \cdot \ul y  \label{eq:GMI-5-part-Es1-2} \\
   & {\bf Q}_{\ul{\bar X}}^{(\ul y)} - \ul E_{\ul y} \, \ul E_{\ul y}^\dag = {\bf Q}_{\ul{\bar X}} -  \E{\ul{\bar X} \, \ul Y^\dag} {\bf Q}_{\ul Y}^{-1} \E{\ul{\bar X} \, \ul Y^\dag}^\dag \label{eq:GMI-5-part-Es2-2} .
\end{align}
For example, if $\ul Y = {\bf H} \ul X + \ul Z$ where ${\bf H},A, \ul Z$ are mutually independent and $\E{\ul Z}={\bf 0}$, then we have (cf.~\eqref{eq:GMI-5-part-Es1a})
\begin{align}
  \ul E_{\ul y}
  & =  \int_{\mathbb C^{N \times M}} p({\bf h} | \ul y) \, \E{\ul{\bar X} | \ul Y = \ul y, {\bf H} = {\bf h} }  \, d {\bf h}  \nonumber \\
  & = \int_{\mathbb C^{N \times M}} p({\bf h} | \ul y) \, {\bf Q}_{\ul{\bar X}} \, {\bf h}^\dag 
  \left( {\bf I} + {\bf h} {\bf Q}_{\ul X} {\bf h}^\dag \right)^{-1} \ul y \, d{\bf h} \label{eq:GMI-5-part-Es1a-2-step} \\
  & = \E{ \left. {\bf Q}_{\ul{\bar X}} {\bf H}^\dag
  \left( {\bf I} + {\bf H} {\bf Q}_{\ul X} {\bf H}^\dag \right)^{-1} \right| \ul Y = \ul y} \cdot \ul y
   \label{eq:GMI-5-part-Es1a-2} 
\end{align}
where we have applied \eqref{eq:GMI-5-part-Es1-2} with conditioning on the event ${\bf H}={\bf h}$. Similarly, we apply  a conditional version of~\eqref{eq:GMI-5-part-Es2-2} and the step~\eqref{eq:GMI-5-part-Es1a-2-step} to compute  (cf.~\eqref{eq:GMI-5-part-Es2a})
\begin{align}
  {\bf Q}_{\ul{\bar X}}^{(\ul y)}
  & = \int_{\mathbb C^{N \times M}} p({\bf h} | \ul y) \, \E{\left. \ul{\bar X}\,\ul{\bar{X}}^\dag \right| \ul Y = \ul y, {\bf H}={\bf h}} \, d{\bf h} \nonumber \\
  & = \int_{\mathbb C^{N \times M}} p({\bf h} | \ul y) \left( {\bf Q}_{\ul{\bar X}}^{(\ul y, {\bf h})} + \ul E_{\ul y,{\bf h}}\, \ul E_{\ul y,{\bf h}}^\dag \right) \, d{\bf h} \nonumber \\
  & = \E{ \left. {\bf Q}_{\ul{\bar X}}^{(\ul y, {\bf H})} + \ul E_{\ul y,{\bf H}}\, \ul E_{\ul y,{\bf H}}^\dag \right| \ul Y = \ul y}
\end{align}
where
\begin{align*}
    {\bf Q}_{\ul{\bar X}}^{(\ul y, {\bf h})} 
    & = {\bf Q}_{\ul{\bar X}} - {\bf Q}_{\ul{\bar X}} {\bf h}^\dag
     \left( {\bf I} + {\bf h} {\bf Q}_{\ul X} {\bf h}^\dag \right)^{-1} {\bf h} {\bf Q}_{\ul{\bar X}} 
   %\label{eq:GMI-5-part-Es2a-2a}
   \\
    \ul E_{\ul y,{\bf h}} 
    & = \E{\ul{\bar X} | \ul Y = \ul y, {\bf H}={\bf h}} \nonumber \\
    & = {\bf Q}_{\ul{\bar X}} \, {\bf h}^\dag 
  \left( {\bf I} + {\bf h} {\bf Q}_{\ul X} {\bf h}^\dag \right)^{-1} \ul y .
    %\label{eq:GMI-5-part-Es2a-2b}
\end{align*}

%%%%%%%%%%%%%%%%%%%%%%%%%%%%%%%%%%%%%%%%%%%%%%%%%%%%%%
\appendix{Proof of Lemma~\ref{lemma:AGMI-max-MIMO}}
\label{appendix:proof-lemma-AGMI-max-vec}

We mimic the steps of Appendix~\ref{appendix:proof-lemma-AGMI-max}. Consider the SVDs
\begin{align*}
   {\bf Q}_{\ul{\bar X}} & = {\bf V}_{\ul{\bar X}} \, {\bf \Sigma}_{\ul{\bar X}} \, {\bf V}_{\ul{\bar X}}^\dag \\
   {\bf Q}_{\ul X(s_T)} & = {\bf V}_{\ul X(s_T)} \, {\bf \Sigma}_{\ul X(s_T)} \, {\bf V}_{\ul X(s_T)}^\dag.
\end{align*}
Let $\ul{\bar U}\sim\mathcal{CN}(0,I)$ and write
\begin{align*}
   \ul{\bar X} & =  {\bf Q}_{\ul{\bar X}}^{1/2} \, \ul{\bar U} .
\end{align*}
Since the $\ul U(s_T)$ are CSCG, we have
\begin{align}
  \ul U(s_T') = {\bf R}(s_T',s_T) \, \ul U(s_T) + \ul Z(s_T')
\end{align}
where ${\bf R}(s_T',s_T)=\E{\ul U(s_T') \, \ul U(s_T)^\dag}$ and
\begin{align}
  \ul Z(s_T')\sim\mathcal{CN}(0,{\bf I}-{\bf R}(s_T',s_T) {\bf R}(s_T',s_T)^\dag)
\end{align}
is independent of $\ul U(s_T)$. As in \eqref{eq:xbar}, define
\begin{align}
  & \ul{\bar X} = \sum_{s_T'} {\bf W}(s_T' ) \, \ul X(s_T') \nonumber \\
  & = \sum_{s_T'} {\bf W}(s_T') \, {\bf Q}_{\ul X(s_T')}^{1/2}
   \left[ {\bf R}(s_T',s_T) \, \ul U(s_T) + \ul Z(s_T') \right] \nonumber \\
  %& = \sum_{s_T'} {\bf W}(s_T') \, {\bf V}_{\ul X(s_T')} \, {\bf \Sigma}_{\ul X(s_T')}^{1/2} \, 
  %\left[ {\bf R}(s_T',s_T) \, \ul U(s_T) + \ul Z(s_T') \right] \nonumber \\
  & = {\bf Q}_{\ul{\bar X}}^{1/2} \, \bar {\bf R}(s_T) \, \ul U(s_T)
  + \sum_{s_T'} {\bf W}(s_T') \, {\bf Q}_{\ul X(s_T')}^{1/2} \, \ul Z(s_T') 
  \label{eq:xbar-vec}
\end{align}
where as in \eqref{eq:rhobar}, and assuming ${\bf Q}_{\ul{\bar X}} \succ {\bf 0}$, we write
\begin{align}
   \bar {\bf R}(s_T) & = \E{\ul{\bar U} \, \ulU(s_T)^\dag} \nonumber \\
   & = \sum_{s_T'} {\bf Q}_{\ul{\bar X}}^{-1/2} \, {\bf W}(s_T') \, {\bf Q}_{\ul X(s_T')}^{1/2} \, {\bf R}(s_T',s_T) .
\end{align}
Observe that the vector ${\bf Q}_{\ul{\bar X}}^{1/2} \, \bar {\bf R}(s_T) \, \ul U(s_T)$ is the LMMSE estimate of $\ul{\bar X}$ given $\ul U(s_T)$.

Using Lemma~\ref{lemma:AGMI}, we have (see \eqref{eq:AGMI-parameters-RkF})
\begin{align}
   \tilde {\bf H} = \E{\ul Y \, \ul {\bar X}^\dag} {\bf Q}_{\ul{\bar X}}^{-1}, \quad
   {\bf Q}_{\tilde{\ul Z}} = {\bf Q}_{\ul Y} - \tilde {\bf H}\, {\bf Q}_{\ul{\bar X}} \tilde {\bf H}^\dag
   \label{eq:AGMI-parameters-RkF-vec}
\end{align}
and we have the GMI \eqref{eq:AGMI-3-I1} that we repeat here:
\begin{align}
   I_1(A;\ul Y) = \log \left( \frac{\det {\bf Q}_{\ul Y}}
   {\det\left( {\bf Q}_{\ul Y} - \tilde {\bf H}\, {\bf Q}_{\ul{\bar X}} \tilde {\bf H}^\dag\right)} \right).
   \label{eq:appendix-AGMI-3}
\end{align}
As in Appendix~\ref{appendix:proof-lemma-AGMI-max}, if the $ {\bf Q}_{\ul X(s_T)}$ are fixed, then so is ${\bf Q}_{\ul Y}$ because $\ul U(s_T) \sim \mathcal{CN}(\ul 0,{\bf I})$ is independent of $\ul Z$ given $S_T=s_T$. We want to maximize the GMI \eqref{eq:appendix-AGMI-3}. Similar to~\eqref{eq:Lemma-AGMI-Ptilde1}, we have the decomposition
\begin{align}
   \tilde {\bf H} {\bf Q}_{\ul{\bar X}} \tilde {\bf H}^\dag
   = \tilde {\bf D}\, \tilde {\bf D}^\dag
   \label{eq:Lemma-AGMI-HQH}
\end{align}
where
\begin{align}
  \tilde {\bf D} = \sum_{s_T} P_{S_T}(s_T)
   \E{\left. \ul Y \, \ul U(s_T)^\dag \right| S_T=s_T} \bar {\bf R}(s_T)^\dag .
   \label{eq:Lemma-AGMI-Dtilde}
\end{align}
As in \eqref{eq:Lemma-AGMI-Ptilde1}, we have the Markov chain $A-[\ul U(S_T),S_T]-\ul Y$ which implies that $\ul Y$ and the $\ul Z(s_T')$ in \eqref{eq:xbar-vec} are independent give $S_T=s_T$. It is natural to expect that the matrix  $\bar {\bf R}(s_T)$ of correlation coefficients should be ``maximized'' somehow. Indeed, the Cauchy-Schwarz inequality gives
\begin{align*}
   & \ul{v}_1^\dag \, \bar {\bf R}(s_T) \, \ul{v}_2
   = \E{\ul{v}_1^\dag \, \ul{\bar U} \cdot \ulU(s_T)^\dag \, \ul{v}_2} \nonumber \\
   & \le \sqrt{\E{\left| \ul{\bar U}^\dag \, \ul{v}_1 \right|^2}}
      \cdot \sqrt{\E{\left| \ul U(s_T)^\dag \, \ul{v}_2 \right|^2}} = \left\| \ul{v}_1 \right\| \cdot \left\| \ul{v}_2 \right\|
   %\label{eq:R-bound}
\end{align*}
for any complex $M$-dimensional vectors $\ul v_1$ and $\ul v_2$. The singular values of ${\bf R}(s_T)$ are thus at most 1. We will choose the $\ul U(s_T)$ so that the ${\bf R}(s_T) $ are unitary matrices, and thus all singular values are 1.

Consider the SVD decompositions \eqref{eq:CSIT-SVD-decomposition} and a codebook based on scaling and rotating a common $\ul U\sim\mathcal{CN}(\ul 0,{\bf I})$ of dimension $N$ (see \eqref{eq:barX-max}):
\begin{align}
   \ul U(s_T) = {\bf V}_T(s_T) \, \ul U .
   \label{eq:UsT-max-vec}
\end{align}
The receiver chooses $M\times M$ unitary matrices ${\bf V}_R(s_T)$ for all $s_T$ and uses the weighting matrix (cf.~\eqref{eq:w-sT})
\begin{align}
   {\bf W}(\tilde s_T) = {\bf Q}_{\ul{\bar X}}^{1/2} \, {\bf V}_R(\tilde s_T)
   \, {\bf V}_T(\tilde s_T)^\dag
   \, {\bf Q}_{\ul X(\tilde s_T)}^{-1/2}
   %\label{eq:W-sT}
\end{align}
for one $\tilde s_T \in \set S_T$ with $ {\bf Q}_{\ul X(\tilde s_T)} \succ 0$, and ${\bf W}(\tilde s_T)={\bf 0}$ otherwise. These choices give $\ul{\bar X} = {\bf Q}_{\ul{\bar X}}^{1/2} \, \ul U$ and (cf.~\eqref{eq:rho_sT})
\begin{align}
   {\bf R}(s_T',s_T) & = {\bf V}_T(s_T') \, {\bf V}_T(s_T)^\dag \nonumber \\
   \bar {\bf R}(s_T) & = {\bf V}_R(s_T) {\bf V}_T(s_T)^\dag.
   \label{eq:Rbarmatrix}
\end{align}
Using \eqref{eq:CSIT-SVD-decomposition}, \eqref{eq:Lemma-AGMI-Dtilde}, and \eqref{eq:Rbarmatrix}, we have
\begin{align}
  \tilde {\bf D} = 
  \sum_{s_T} P_{S_T}(s_T) \, {\bf U}_T(s_T) \, {\bf \Sigma}(s_T) \, {\bf V}_R(s_T)^\dag.
   \label{eq:Lemma-AGMI-Dtilde2}
\end{align}

%%%%%%%%%%%%%%%%%%%%%%%%%%%%%%%%%%%%%%%%%%%%%%%%%%%%%%
%%%%%%%%%%%%%%%%%%%%%%%%%%%%%%%%%%%%%%%%%%%%%%%%%%%%%%
%%%%%%%%%%%%%%%%%%%%%%%%%%%%%%%%%%%%%%%%%%%%%%%%%%%%%%
%\footnotesize
\bibliographystyle{IEEEtran}
%\bibliographystyle{IEEE} %{plain} {usrt}
 %\bibliography{muIT-Dec15} %{ITgeneral,mu}

% Generated by IEEEtran.bst, version: 1.13 (2008/09/30)

\end{document}